\documentclass[11pt]{article} 

\usepackage{geometry} 
\usepackage{fullpage}
\usepackage{setspace}

\usepackage{graphicx, amsmath,amsfonts,cite,appendix, amssymb,xcolor,enumerate, mathrsfs, amsthm,hyperref, mathtools,bbm,standalone,tikz} 
\usetikzlibrary{arrows, patterns}
\usepackage{booktabs} 
\usepackage{array} 
\usepackage{paralist} 
\usepackage{verbatim} 
\usepackage{subfig} 


\theoremstyle{plain}
\newtheorem{thm}{Theorem}

\newcommand{\be}{\begin{equation}}
\newcommand{\ee}{\end{equation}}
\newcommand{\ben}{\begin{equation*}}
\newcommand{\een}{\end{equation*}}
\newcommand{\mc}{\mathcal}

\newcommand{\snr}{\textsf{snr}}
\newcommand{\reals}{\mathbb{R}}

\newtheorem{lem}{Lemma}[section]
\newtheorem{lemma}{Lemma}[section]
\newtheorem{applem}{Lemma}[section]
\newtheorem{defi}{Definition}[section]

\newtheorem{rem}{Remark}[section]
\newtheorem{prop}{Proposition}[section]
\newtheorem{corr}{Corollary}[section]
\newcommand{\e}{\epsilon}

\newcommand{\abs}[1]{\left \lvert #1\right \rvert}
\newcommand{\norm}[1]{\left \lVert #1\right \rVert}
\newcommand{\ol}[1]{\overline{#1}}

\newcommand{\expec}{\mathbb{E}}

\newcommand{\mscrs}{\mathscr{S}}
\newcommand{\sfp}{\mathsf{u}}
\newcommand{\sfq}{\mathsf{v}}

\newcommand{\delt}{\widetilde{\delta}}

\newcommand{\normal}{\mc{N}}

\parindent 5mm

\DeclarePairedDelimiter\ceil{\lceil}{\rceil}

\newcommand{\M}{M}
\newcommand{\Lc}{\textsf{C}}
\newcommand{\Lr}{\textsf{R}}
\newcommand{\Mc}{N_{\textsf{C}}}
\newcommand{\Mr}{N_{\textsf{R}}}

\newcommand{\Wricj}{W_{\textsf{r}(i)\textsf{c}(j)}}
\newcommand{\sfr}{\textsf{r}}

\newcommand{\sfc}{\textsf{c}}

\newcommand{\stdnorm}{\mathcal{N}(0,1)}
\newcommand{\vth}{\vartheta}
\newcommand{\Qmat}{\breve{\boldsymbol{Q}}}
\newcommand{\Mmat}{\breve{\boldsymbol{M}}}
\newcommand{\Qadjmat}{\boldsymbol{Q}}
\newcommand{\Madjmat}{\boldsymbol{M}}

\newcommand{\iden}{\boldsymbol{I}}
\newcommand{\bzero}{\boldsymbol{0}}
\newcommand{\Z}{\boldsymbol{Z}}
\newcommand{\bZ}{\boldsymbol{Z}}
\newcommand{\Zprime}{\boldsymbol{Z}'}
\newcommand{\Zsupr}{\boldsymbol{Z}^{\sf{r}}}
\newcommand{\Zprimesupc}{\boldsymbol{Z}^{'\sf{c}}}

\newcommand{\qadj}{\boldsymbol{q}}

\newcommand{\madj}{\boldsymbol{m}}


\newcommand{\indic}{\mathbb{I}}

\newcommand{\sfv}{\textsf{v}}

\newcommand{\avec}{\boldsymbol{a}}
\newcommand{\hvec}{\boldsymbol{h}}
\newcommand{\qvec}{\boldsymbol{q}}
\newcommand{\bvec}{\boldsymbol{b}}
\newcommand{\mvec}{\boldsymbol{m}}
\newcommand{\zvec}{\boldsymbol{z}}
\newcommand{\yvec}{\boldsymbol{y}}
\newcommand{\xvec}{\boldsymbol{x}}
\newcommand{\svec}{\boldsymbol{s}}
\newcommand{\wvec}{\boldsymbol{w}}
\newcommand{\Amat}{\boldsymbol{A}}
\newcommand{\Ymat}{\boldsymbol{Y}}
\newcommand{\Xmat}{\boldsymbol{X}}
\newcommand{\Hmat}{\boldsymbol{H}}
\newcommand{\Bmat}{\boldsymbol{B}}
\newcommand{\sfAmat}{\boldsymbol{\mathsf{A}}}
\newcommand{\Wmat}{\boldsymbol{W}}
\newcommand{\Smat}{\boldsymbol{S}}
\newcommand{\Zvec}{\boldsymbol{Z}}
\newcommand{\Dvec}{\boldsymbol \Delta}
\newcommand{\betavec}{\boldsymbol \beta}
\newcommand{\upvec}{\boldsymbol \upsilon}
\newcommand{\brqvec}{\breve{\boldsymbol{q}}}
\newcommand{\brmvec}{\breve{\boldsymbol{m}}}
\newcommand{\alphavec}{\boldsymbol \alpha}
\newcommand{\gammavec}{\boldsymbol \gamma}
\newcommand{\tCmat}{\widetilde{\boldsymbol{C}}}
\newcommand{\brCmat}{\breve{\boldsymbol{C}}}
\newcommand{\proj}{\boldsymbol{\mathsf{P}}}
\newcommand{\Upmat}{\boldsymbol \Upsilon}

\newcommand{\innerM}{\pmb{\mathbb{M}}}
\newcommand{\innerQ}{\pmb{\mathbb{Q}}}

\newcommand{\tv}{\text{v}}
\newcommand{\PC}{\Pi}
\newcommand{\pc}{\pi}




\usepackage{sectsty}
\allsectionsfont{\sffamily\mdseries\upshape} 

\title{\vspace{-35pt} Capacity-achieving Spatially Coupled Sparse Superposition Codes  with AMP Decoding } 
\author{ Cynthia Rush\thanks{Department of Statistics, Columbia University, New York, NY 10027, USA. Email: {\tt cynthia.rush@columbia.edu}} 
 \and Kuan Hsieh\thanks{Department of Engineering, University of Cambridge, Cambridge CB2 1PZ, UK. Email: {\tt kh525@cam.ac.uk}} 
  \and Ramji Venkataramanan\thanks{Department of Engineering, University of Cambridge, Cambridge CB2 1PZ, UK. Email: {\tt rv285@cam.ac.uk}  This work was supported in part by an EPSRC Doctoral Training Award and  a Turing Fellowship from the Alan Turing Institute, the National Science Foundation (NSF CCF \#1849883), the Simons Institute for the Theory of Computing, and NTT Research.  This paper was presented in part at the 2018 IEEE International Symposium on Information Theory and at the 2018 IEEE Information Theory Workshop.} 
}

\begin{document}
\maketitle
\vspace{-10pt}
\begin{abstract}
Sparse superposition codes, also referred to as sparse regression codes (SPARCs), are a  class of codes for efficient communication over the AWGN channel at rates approaching the channel capacity. In a standard SPARC, codewords are sparse linear combinations of columns of an i.i.d.~Gaussian design matrix, while in a spatially coupled SPARC  the design matrix has a block-wise structure,  where the variance of the Gaussian entries can be varied across blocks. A well-designed spatial coupling structure can significantly enhance the error performance of iterative decoding algorithms such as Approximate Message Passing (AMP). 

In this paper, we obtain a non-asymptotic bound on the probability of error of  spatially coupled SPARCs with AMP decoding. Applying this bound to a simple band-diagonal design matrix, we prove that spatially coupled SPARCs with AMP decoding achieve the capacity of the AWGN channel. The bound also highlights how the  decay of error probability depends on each design parameter of the spatially coupled SPARC. 

An attractive feature of  AMP decoding is that its asymptotic mean squared error (MSE) can be predicted via a deterministic recursion called state evolution. Our result provides the first proof that the  MSE  concentrates on the state evolution prediction for spatially coupled designs. Combined with the state evolution prediction, this result implies that spatially coupled SPARCs with the proposed band-diagonal design are capacity-achieving. Using the proof technique used to establish the main result, we also obtain a concentration inequality for the MSE of AMP applied to compressed sensing with spatially coupled design matrices.  Finally, we provide numerical simulation results that demonstrate the finite length error performance of spatially coupled SPARCs. The performance is compared with coded modulation schemes that use LDPC codes from the DVB-S2 standard. 
\end{abstract}

\tableofcontents

\newpage

\section{Introduction} \label{sec:intro}

We consider communication over the memoryless additive white Gaussian noise (AWGN) channel, where the  output symbol $y$ is generated from input symbol $u$ as $y = u + w$. The noise  $w$ is Gaussian with zero mean and variance $\sigma^2$. The input has an average power constraint $P$: for a codeword $\boldsymbol{x}= x_1, x_2, \ldots ,x_n$  transmitted over $n$ uses of the channel,
\be\label{eq:average_power_constraint}
\frac{1}{n}\sum_{i=1}^n x_i^2 \leq P.
\ee
The Shannon capacity of the channel is  $\mc{C} = \frac{1}{2}\ln \left(1+\frac{P}{\sigma^2}\right)$ nats/transmission.

Sparse superposition codes, or sparse regression codes (SPARCs), were  introduced by Joseph and Barron\cite{joseph2012, joseph2014} for efficient communication over the AWGN channel.  These codes have been proven to be reliable at rates approaching $\mc{C}$ with various low complexity iterative decoders \cite{joseph2014,cho2013,rush2017}.  As shown in Fig.\ \ref{fig:sparc_code_construction}, a SPARC is defined by a design matrix $\Amat$ of dimension $n\times \M L$, where $n$ is the code length and $\M$, $L$ are integers such that $\Amat$ has $L$ sections with $\M$ columns each. Codewords are generated as linear combinations of $L$ columns of $\Amat$, with one column from each section. Thus a codeword can be represented as $\boldsymbol{x} = \Amat\betavec$, with $\betavec$ being an $\M L\times 1$ \emph{message vector} with exactly one non-zero entry in each of its $L$ sections.  The message is indexed by the locations of the non-zero entries in $\betavec$. The values of the non-zero entries are fixed a priori. 

Since there are $M$ choices for the location of the non-zero entry in each of the $L$ sections, there are $M^L$ codewords.  To achieve a communication rate of $R$ nats/transmission, we therefore require
\begin{equation}\label{eq:rate_eq}
\M^L = e^{nR} \quad \text{or} \quad nR = L\ln \M.
\end{equation}
The decoding task is to recover the message vector $\betavec$ from the received sequence $\yvec \in \reals^n$ given by
\begin{equation}\label{eq:linear_model}
\yvec = \Amat\betavec + \wvec.
\end{equation}

In the standard SPARC construction introduced in \cite{joseph2012, joseph2014}, the design matrix $\Amat$ is constructed with i.i.d.\ standard Gaussian entries. The values of the non-zero coefficients in the message vector $\betavec$ then define a \emph{power allocation} across sections.  With an appropriately chosen power allocation (e.g., one that is exponentially decaying across sections),  the feasible decoders proposed in \cite{joseph2014,cho2013,rush2017}  have been shown to be asymptotically capacity-achieving. The choice of power allocation has also been shown to be crucial for obtaining good finite length performance  with the standard SPARC construction  \cite{GreigV17}.  A detailed discussion of the error rates of optimal versus feasible decoders for standard SPARCs can be found in \cite{venkataramanan19monograph}.

In a \emph{spatially coupled SPARC}, the design matrix $\Amat$ is composed of blocks with different variances, i.e., the entries of $\Amat$ are still independent and Gaussian but not identically distributed. Spatially coupled SPARCs were introduced by Barbier and co-authors in \cite{barbier2015,barbier2016,
barbier2017,barbier2019universal}. In these works, an approximate message passing (AMP)  algorithm was used for decoding, whose performance can be predicted via a deterministic recursion called \emph{state evolution}.  Empirical results indicate that spatially coupled SPARCs can have better error performance than power allocated SPARCs at finite code lengths (see, e.g., \cite{barbier2017,HsiehRV18}). Moreover, both standard SPARCs and power allocated SPARCs can be viewed as special cases of spatially coupled SPARCs (see Section \ref{sec:sc_AMP}). It is therefore of interest to rigorously characterize the achievable rates and the decay of error probability for spatially coupled SPARCs.

Two key steps are required to prove that spatially coupled SPARCs achieve vanishingly small error probability   with AMP decoding  for  rates $R < \mc{C}$:
\begin{enumerate}
\item Prove that in a suitable limit (corresponding to increasing code length),  the fixed point of the  state evolution recursion is one that corresponds to vanishing probability of decoding error;
\item Prove that the error rate of the AMP decoder is closely tracked by the state evolution parameters for sufficiently large code length.
\end{enumerate}

\begin{figure}[t]
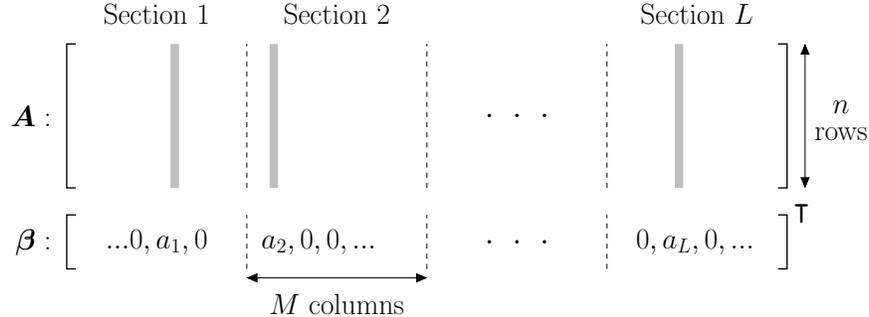

\centering
\includestandalone[width=0.7\textwidth]{fig_sparc_code_construction_gen}
\caption{\small $\Amat$ is an $n \times \M L$ design matrix and $\betavec$ is an $\M L\times 1$ message vector with one non-zero entry in each of its $L$ sections.  Codewords are of the form $\Amat\betavec$. The non-zero values  $a_1, \ldots, a_L$ are fixed a priori.}
\label{fig:sparc_code_construction}
\vspace{-6pt}
\end{figure}

The first step was proved by   Barbier  et al.\  \cite{barbier2016}, using the potential function method \cite{yedla2014simple}. The result in \cite{barbier2016} shows `threshold saturation'  for a class of spatially coupled SPARCs with AMP decoding. For any fixed rate $R < \mc{C}$, this implies that  state evolution predicts vanishing probability of decoding error  in the large system limit. (Throughout, the terminology `large system limit' or `asymptotic limit' refers to $(L,M,n)$ all tending to infinity with $L \ln M = nR$.) 

In this paper, we rigorously prove the second step. We also  provide an alternative proof of the first step which gives insight into how the parameters defining the spatially coupled matrix influence the decoding progression.  These two steps together yield  the main result (Theorem \ref{thm:main}), which is a non-asymptotic bound on the probability of  decoding error of the AMP decoder. To our knowledge, this is the first complete proof that spatially coupled SPARCs are capacity-achieving with efficient decoding.

\textbf{Related work}: Approximate message passing (AMP) refers to a class of iterative algorithms for statistical  estimation in models corresponding to  dense factor graphs. AMP algorithms,  which are obtained via Gaussian and quadratic approximations of standard message passing techniques like belief propagation, have been successfully applied to compressed sensing \cite{donoho2009,bayati2011,krzakala2012,bayati2012} and its applications in communications such as  MIMO detection \cite{jeon2015optimality} and unsourced random access \cite{fengler2019sparcs}. Other applications of AMP include estimation in  generalized linear models \cite{rangan2011,schniter2014compressive}, robust estimation \cite{donohoMest16,bradic2016robustness}, and several variants of low-rank matrix estimation \cite{rangan2012iterative,kab16MF,Deshpande2014,parkerSch2014B,barbier2016mutual,montanari2017estimation}.

The idea of spatial coupling was introduced in the context of LDPC codes  \cite{felstrom1999,lentmaier2010iterative,kudekar2011, kudekar2013,kumar2014threshold,mitchell2015}, and first used for compressed sensing in \cite{kudekar2010}.   AMP algorithms for compressed sensing with spatially coupled matrices were first  proposed by Krzakala et al.\ in  \cite{krzakala2012X,krzakala2012}. 
Takeuchi et al. \cite{takeuchi2012phenomenological,Kawabata2015} proposed a method for analyzing the  state evolution recursion of spatially coupled systems using a potential function defined for the uncoupled system. This method characterizes the fixed points of the spatially coupled state evolution in terms of the stationary points of the potential function. The potential  function method was made rigorous by Yedla et al. in \cite{yedla2014simple}.
In \cite{donoho2013}, Donoho et al.\  proved that a class of spatially coupled Gaussian designs achieve the optimal information-theoretic limit for compressed sensing by analyzing the spatially coupled state evolution recursion in the continuum limit.
In a complementary work \cite{javanmard2013state},  Javanmard  and Montanari proved that the mean-squared error of AMP algorithm for spatially coupled compressed sensing converges (almost surely) to the state evolution prediction in each iteration. 

Though the SPARC model is similar to compressed sensing, the result of \cite{javanmard2013state} cannot be directly applied to AMP decoder since the SPARC message vector has a section-wise i.i.d.\ structure, with a growing section size $M$  in the large system limit. Moreover, our result is non-asymptotic and generalizes the proof technique developed for power-allocated SPARCs \cite{RushV19} to the present setting of spatially coupled designs.

\subsection{Structure of the paper and main contributions}

After describing the construction of spatially coupled SPARCs (SC-SPARCs) in 
Section~\ref{sec:sc_AMP}, we explain the AMP decoder and the associated state evolution recursion in Section \ref{sec:AMP}. In Section~\ref{sec:AMP_dec_prog}, we obtain upper and lower bounds on the state evolution parameters (Lemma~\ref{lem:psi_nonasymp}) which help explain the decoding progression of the AMP decoder for large  $M$ (Proposition~\ref{prop:nonasymp_se}). The main theoretical results of the paper are stated in Section~\ref{sec:main_result}:
\begin{compactitem}
\item    Theorem~\ref{thm:main} gives a non-asymptotic bound on the probability of excess section error rate of the AMP decoder for any fixed rate $R < \mc{C}$. The bound shows how the error performance depends on each parameter of the SC-SPARC, and highlights the tradeoffs involved in choosing these parameters.
\item Theorem~\ref{thm:main1} gives a concentration inequality for the mean-squared error (MSE) of the AMP decoder in each iteration. Theorem~\ref{thm:main} is a straightforward consequence of Theorem~\ref{thm:main1}.
\item With the same technique used for proving Theorem~\ref{thm:main1}, we can obtain a concentration inequality for the MSE of AMP for  compressed sensing with spatially coupled measurement matrices. This result, stated in Theorem~\ref{thm:sc_cs}, refines the asymptotic result for this model obtained in \cite{javanmard2013state}, and makes explicit how the  probability of deviation of the MSE  (from the state evolution prediction) depends on the problem dimension as well as the parameters defining the spatial coupling.
\end{compactitem}

Section~\ref{sec:emp_perf} presents numerical simulation results showing the finite length error performance of SC-SPARCs over the complex AWGN channel. The error performance is compared with coded modulation schemes which use Quadrature Amplitude Modulation (QAM) with LDPC codes from the DVB-S2 standard. We observe that at moderately high rates (around 1.5 bits/dimension), SC-SPARCs have better error performance than the coded modulation schemes considered.

In Section~\ref{sec:proof}, we prove  Theorem~\ref{thm:main1}. The proof has two key technical ingredients. The first is a conditional distribution lemma (Lemma \ref{lem:hb_cond}) which provides a non-asymptotic characterization of the conditional distribution of the AMP iterates. This characterization is then used to prove that various scalar products involving the AMP iterates concentrate around deterministic values (Lemma \ref{lem:main_lem}). In Section \ref{sec:proof}, we give an overview of the key ideas in the proof of Theorem~\ref{thm:main1}, then state the main technical lemmas and use them to prove Theorem~\ref{thm:main1}. The full proofs of the lemmas are deferred to Section \ref{sec:cond_conc_lemmas}.

Though our approach to proving Theorems~\ref{thm:main} and  \ref{thm:main1} is broadly similar to that used for power-allocated SPARCs in \cite{RushV19}, the block-wise structure of the spatially coupled design matrix introduces several new technical challenges. For example, we need to define block-adjusted versions of the AMP iterates (vectors) to obtain the appropriate linear constraints for the conditional distribution lemma (see \eqref{eq:tildem_def}-\eqref{eq:Alin_consts}). Similarly, in the main concentration lemma (Lemma~\ref{lem:main_lem}) we establish concentration results for  scalar products scaled block-wise by the entries of the base matrix.

\vspace{3pt}
\textbf{Notation}:
 For a positive integer $m$, we use $[m]$ to denote the set  $\{1, \dots, m \}$. For $x \in \reals$, we let $x_+ = \max\{ x, 0 \}$.  Throughout the paper, we use  plain font for scalars, bold font for vectors and matrices, and  subscripts to denote entries of a vector or matrix.   For example, if $\xvec$ is a vector,  we write $x_i$ for its $i^{th}$ component. Similarly,  if $\Xmat$ is a matrix, we write $X_{i j}$ for  its $(i,j)^{th}$ entry. 
The  transpose of $\Xmat$ is denoted by $\Xmat^*$. The Gaussian distribution with mean $\mu$ and variance $\sigma^2$ is denoted by $\normal(\mu, \sigma^2)$.

We write $\iden_t$ for the $t \times t$ identity matrix; the subscript is dropped when the dimension is clear from context. The indicator function of an event $\mc{A}$ is denoted by $\mathbb{I}\{ \mc{A}\}$. For deterministic  sequences $(s_n)_{n \geq 0}, (x_n)_{n \geq 0}$, we write $s_n = \Theta(x_n)$  if $s_n/x_n$ is bounded above and below by  strictly positive constants for all sufficiently large $n$.


\section{Spatially coupled SPARC construction} \label{sec:sc_AMP}

\begin{figure}[t]
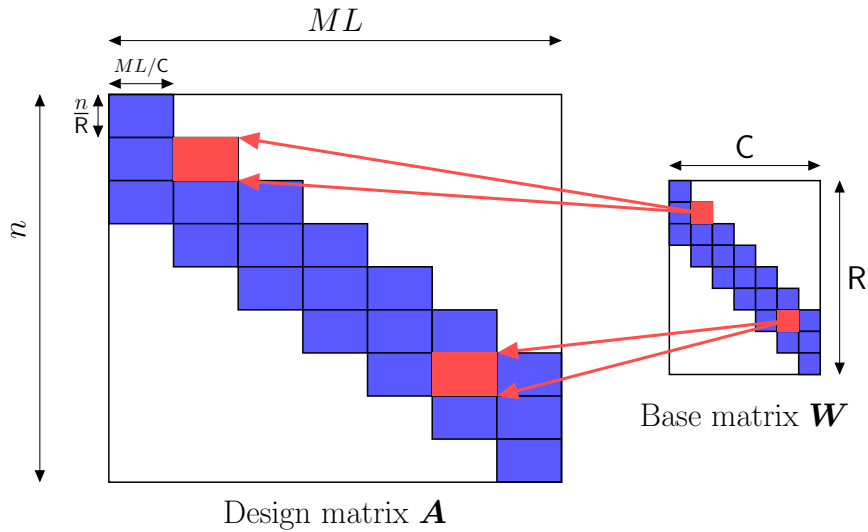

\centering
\includestandalone[width=.7\textwidth]{fig_sparc_scmatrix}
\caption{\small An $n \times ML$ spatially coupled design matrix $\Amat$ consists of $\Lr \times \Lc$ blocks, each of size $\frac{n}{\Lr} \times \frac{ML}{\Lc}$.
The  entries of $\Amat$ are independent and normally distributed with $A_{ij} \sim \mathcal{N}(0,\frac{1}{L}\Wricj)$, where $\Wmat$ is the base matrix. The base matrix shown here is an $(\omega, \Lambda, \rho)$ base matrix with parameters $\omega=3, \Lambda=7$ and $\rho=0$. The white parts of $\Amat$ and $\Wmat$ correspond to zeros.}
\label{fig:sparc_scmatrix}
\vspace{-6pt}
\end{figure}

As in the standard construction, a spatially coupled (SC) SPARC  is defined by a design matrix $\Amat$ of dimension  $n\times \M L$, where $n$ is the code length.  The codeword is $\xvec = \Amat\betavec$, where $\betavec$ has one non-zero entry in each of the $L$ sections.  In an SC-SPARC, since the variances of the entries of  different blocks of $\Amat$ can be varied, without loss of generality we will set the value of each non-zero entry of $\betavec$ to $1$. 

 In an SC-SPARC, the matrix $\Amat$ consists of independent zero-mean normally distributed entries whose variances are specified by a \emph{base matrix} $\Wmat$ of dimension $\Lr \times \Lc$. The design matrix $\Amat$ is obtained from the base matrix $\Wmat$ by replacing each entry $W_{\sfr \sfc}$ by an $(n/\Lr) \times (ML/\Lc)$ block with i.i.d.\ entries $\sim \mc{N}(0, W_{\sfr \sfc}/{L} )$, for $\sfr \in[\Lr]$, $\sfc \in [\Lc]$.  This is analogous to the ``graph lifting'' procedure for constructing spatially coupled LDPC codes from protographs \cite{mitchell2015}. See Fig.\ \ref{fig:sparc_scmatrix} for an example. 

From the construction, the design matrix $\Amat$ has independent normal entries
\be\label{eq:construct_Aij}
A_{ij} \sim \mc{N}\Big(0,\frac{1}{L} \Wricj \Big) \quad   i \in [n], \ j\in[\M L].
\ee
The operators $\sfr(\cdot):[n]\rightarrow[\Lr]$ and $\sfc(\cdot):[\M L]\rightarrow[\Lc]$ in \eqref{eq:construct_Aij} map a particular row or column index in $\Amat$ to its corresponding \emph{row block} or \emph{column block} index in 
$\Wmat$. We require  $\Lc$ to divide $L$, resulting in $\frac{L}{\Lc}$ sections per column block.

The non-zero coefficients of $\betavec$  are all set to $1$.
Then it can be verified that  $\expec[ \norm{\xvec}^2] = nP$ (and the power constraint is  satisfied with high probability) if  the entries of the base matrix $\Wmat$ satisfy 
\be
\label{eq:W_power_constraint}
\frac{1}{\Lr \Lc}\sum_{\sfr=1}^{\Lr}  \sum_{\sfc=1}^{\Lc} W_{\sfr \sfc} = P
\ee

The trivial  base matrix with $\Lr=\Lc=1$  corresponds to a standard (non-coupled) SPARC  with uniform power allocation, while a base matrix consisting of a single row $\Lr=1$, $\Lc=L$   is equivalent to a standard SPARC with power allocation. In this paper, we will use the following base matrix  inspired by the coupling structure of SC-LDPC codes constructed from protographs \cite{mitchell2015}.
\begin{defi}
\label{def:ome_lamb}
An $(\omega , \Lambda, \rho)$ base matrix $\Wmat$ for SC-SPARCs is  described by three parameters: coupling width $\omega\geq1$ coupling length $\Lambda\geq 2\omega-1$, and $\rho \in [0,1)$ which determines the fraction of power allocated to the coupled entries in each column. The matrix has $\Lr=\Lambda+\omega-1$ rows,  $\Lc=\Lambda$ columns, with each column having $\omega$ identical non-zero entries. For an average power constraint $P$, the  $(\sfr,\sfc)$th entry of the base matrix, for  $\sfr \in [\Lr], \sfc\in[\Lc]$, is given by
\begin{equation}\label{eq:W_rc}
W_{\sfr \sfc} =
\begin{cases}
 	\ (1-\rho)P \cdot \frac{\Lambda+\omega-1}{\omega} \quad &\text{if} \ \sfc \leq \sfr \leq \sfc+\omega-1,\\
	\ \rho P \cdot \frac{\Lambda + \omega -1}{\Lambda-1} \quad &\text{otherwise}.
\end{cases}
\end{equation}
\end{defi}
It is easy to verify that this definition satisfies the power allocation constraint in \eqref{eq:W_power_constraint}. For example, the base matrix in Fig.\ \ref{fig:sparc_scmatrix} has parameters $\omega=3$ and $\Lambda=7$.
 For our simulations in Section \ref{sec:emp_perf}, we use $\rho= 0$, whereas for our main theoretical result (Theorem \ref{thm:main}) we choose $\rho$ to be a small positive value proportional to the rate gap from capacity. (Choosing $\rho =0$ causes some technical difficulties in the proof, which can be addressed by picking a suitable $\rho >0$.)
The $(\omega, \Lambda, \rho=0)$ base matrix construction was previously used for SC-SPARCs in \cite{liang2017}. Other base matrix constructions can be found in \cite{krzakala2012,donoho2013,barbier2016,barbier2017}.

Each non-zero entry in a base matrix $\Wmat$ corresponds to an $(n/\Lr) \times (\M L/\Lc)$ block in the design matrix $\Amat$. Each block can be viewed as a standard (non-coupled) SPARC with $\frac{L}{\Lc}$ sections (with $\M$ columns in each section), code length $n/\Lr$, and rate $R_\text{inner} = \frac{(L/\Lc) \ln \M}{(n/\Lr)}$ nats. Using \eqref{eq:rate_eq}, the overall rate of the SC-SPARC is related to $R_\text{inner}$ according to
\be\label{eq:R_Rsparc}
R  = \frac{\Lc}{\Lr} \, R_\text{inner}= \frac{\Lambda}{\Lambda + \omega -1} \, R_\text{inner},
\ee
where the last equality holds for an $(\omega, \Lambda, \rho)$  base matrix. 

With spatial coupling, $\omega$ is an integer greater than 1, so $R < R_\text{inner}$.  The difference $(R_{\text{inner}} -R)$ is sometimes referred to as the rate loss due to spatial coupling. From \eqref{eq:R_Rsparc}, we see that rate loss depends on the ratio $(\omega-1)/\Lambda$, which becomes negligible when $\Lambda$ is large w.r.t.\ $\omega$. For our theoretical results, we will be interested in the regime where $L \gg  \Lc = \Lambda \gg \omega$. Without loss of generality, we will henceforth assume that $\omega < \sqrt{\Lambda}$.

\begin{rem}
\normalfont
SC-SPARC constructions usually have a `seed' to jumpstart decoding. In \cite{barbier2016}, a small fraction of $\betavec$'s sections  are fixed a priori --- this pinning condition is used to analyze the state evolution equations via the potential function method. Analogously, the construction in \cite{barbier2017}  introduces additional rows in the design matrix for blocks corresponding to the first row of the base matrix. In an $(\omega, \Lambda, \rho)$ base matrix,  the fact that the number of rows in the base matrix exceeds the number of columns by $(\omega - 1)$ helps decoding start from both ends. The asymptotic state evolution equations  in Sec.\ \ref{sec:AMP_dec_prog}  describe  how AMP decoding progresses in an $(\omega, \Lambda, \rho)$ base matrix.
\end{rem}

In the remainder of the paper, we use subscripts in sans-serif font ($\sfr$ or $\sfc$) to denote row or column block indices. Thus, $\betavec_\sfc \in \reals^{ML/\Lc}$ denotes the $\sfc$-th column block of $\betavec \in \reals^{ML}$, for $\sfc \in [\Lc]$.

\section{The AMP decoder for spatially coupled SPARCs}\label{sec:AMP}

Recall that the decoding task is to recover the message vector $\betavec\in \reals^{\M L}$ from the channel output sequence $\yvec\in \reals^n$  produced according to \eqref{eq:linear_model}. An Approximate Message Passing (AMP) decoding algorithm can be derived using an approach similar to the one for standard SPARCs \cite[Appendix A]{rush2017}, with modifications to account for the different variances for the blocks of $\Amat$ specified by the base matrix.
The AMP decoder can also be derived from the Generalized AMP algorithm in  \cite{rangan2011}.

  The AMP decoder initializes  $\betavec^0$ to the all-zero vector, and for $t \geq 0$, iteratively computes:
\begin{align}
\zvec^t &= \yvec - \Amat\betavec^t + {\upvec}^t \odot \zvec^{t-1}, \label{eq:scamp_decoder_z} \\
\betavec^{t+1} &= \eta^t(\betavec^t + (\overline{\Smat}^t \odot \Amat)^* \zvec^t).  \label{eq:scamp_decoder_beta}
\end{align}
Here $\odot$ denotes the Hadamard (entry-wise) product. The vector  $\upvec^t \in \reals^n$, the matrix $\overline{\Smat}^t \in \reals^{n \times ML}$, and the  denoising function $\eta^t$  are defined below in terms of the state evolution parameters.  Quantities with negative time indices are set to zero. 

For any rate $R < \mathcal{C}$, the AMP decoder is run for a finite number of iterations $T$, where $T$ is specified later in Section~\ref{sec:AMP_dec_prog}.  After $T$ iterations, the maximum value in each section $\ell \in [L]$ of $\betavec^T$ is set to $1$ and remaining entries are set to $0$ to obtain the decoded message $\widehat{\betavec}$.

\subsection{State evolution} 
Given a base matrix $\Wmat$, state evolution (SE) iteratively defines a sequence of scalars $(\phi^t_\sfr)_{\sfr \in \Lr}$ and $(\psi^t_\sfc)_{\sfc \in \Lc}$, for $t \geq 0$. Initialize $\psi_\sfc^{0} = 1$ for  $\sfc \in[\Lc]$, and for $t=0,1,\ldots$, compute
\begin{align}
\sigma_{\sfr}^t & =  \frac{1}{\Lc}\sum_{\sfc=1}^{\Lc}W_{\sfr \sfc}\psi_\sfc^t \, , \qquad \phi_\sfr^t = \sigma^2 + \sigma_{\sfr}^t, \qquad \sfr \in [\Lr], \label{eq:se_phi} \\
\psi_\sfc^{t+1} &= 1 - \mathcal{E}(\tau_\sfc^t), \qquad  \qquad \sfc \in [\Lc],  \label{eq:se_psi}
\end{align}
where \be 
\tau_\sfc^t = \frac{R}{\ln{\M}}\left[\frac{1}{\Lr}\sum_{\sfr=1}^{\Lr} \frac{W_{\sfr \sfc}}{\phi_\sfr^t}\right]^{-1}, 
\label{eq:tau_ct_def}
\ee and
$\mathcal{E}(\tau_\sfc^t)$ is defined  with $U_1,\ldots,U_{\M} \stackrel{\text{i.i.d.}}{\sim} \stdnorm$ as 
\be
\label{eq:E_tau}
\mathcal{E}(\tau_\sfc^t) 
= \mathbb{E}\left[ \frac{e^{U_1/\sqrt{\tau_\sfc^t}}}{e^{U_1/\sqrt{\tau_\sfc^t}} + e^{ {-1}/{\tau_\sfc^t}}\sum_{j=2}^{\M} e^{U_j/\sqrt{\tau_\sfc^t}} }\right].
\ee
%

For $t \geq 1$, the vector $\upvec^t \in\mathbb{R}^{n}$ in \eqref{eq:scamp_decoder_z} has a block-wise structure, with the $i$th entry defined as
\be
\upsilon^t_{i} = \frac{\sigma_{\sf{r}}^{t}}{\phi_{\sf{r}}^{t-1}}, \quad \text{ if } \sfr(i) = \sfr,
\label{eq:onsager_def}
\ee 
where we recall that $\sfr(i)$ denotes the row block index of the $i$th entry. (The vector $\upvec^0$ is defined to be all-zeros.)
Similarly,  $\overline{\Smat}^t \in \mathbb{R}^{n \times ML}$ in \eqref{eq:scamp_decoder_beta} has a block-wise structure, with entries defined as follows. For $i \in [n], j \in [ML]$,
\be
\overline{S}^t_{ij} = \frac{\tau^t_{\sf{c}}}{ \phi_{\sf{r}}^t}, \quad \text{ if } \sfr(i) = \sfr \text{ and } 
\sfc(j) = \sfc.
\label{eq:olS_def}
\ee

 The  function  $\eta^t= (\eta^t_1, \ldots, \eta^t_{ML}): \,  \mathbb{R}^{\M L}  \to \mathbb{R}^{\M L}$ in \eqref{eq:scamp_decoder_beta} is defined as follows, for $j \in [\M L]$.  For $j$ in section $\ell \in [L]$, with section $\ell$  in column block $\sfc \in [\Lc]$, 
\be\label{eq:eta_function}
\eta^t_j(\svec)
= \frac{e^{s_j/\tau^t_\sfc}}{\sum_{j'\in \text{sec}(\ell)}e^{s_{j'}/\tau^t_\sfc }},
\ee
where $\text{sec}(\ell) := \{ (\ell-1)M+1, \ldots, \ell M \}$ refers to the set of indices in section $\ell$.
We note that $\eta^t_j(\svec)$ depends on all the components of $\svec$ in the section containing $j$.

\subsection{Interpretation of the AMP decoder} \label{subsec:AMP_int}
The input to  $\eta^t_j(\cdot)$ in \eqref{eq:eta_function} can be viewed as a noisy version of $\betavec$. In particular, consider an index $j$ in section $\ell \in [L]$ which belongs to column block $\sfc \in [\Lc]$.  Recall that $\betavec_{\ell} \in\mathbb{R}^{M}$ is section $\ell$ of the message vector, and let  $\svec^t_\ell = \svec_\ell$ denote section $\ell$ of the input vector to the function $\eta^t_j(\cdot)$. Then, $\svec^t_\ell$  is approximately distributed as 
$\betavec_\ell + \sqrt{\tau_\sfc^t}\Zvec_\ell$, where $\Zvec_\ell \in \reals^{M}$ is a standard normal random vector independent of $\betavec_\ell$. 
   Under the above distributional assumption, the denoising function $\eta^t_j$ in \eqref{eq:eta_function} is the  minimum mean squared error (MMSE) estimator  for  $\betavec_{j}$, i.e., 
\be
 \eta^t_j(\svec )=\mathbb{E}\left[\betavec_j \mid \betavec_\ell + \sqrt{\tau_\sfc^t} \, \Zvec_\ell = \svec_\ell \right], \qquad \text{ for } j \in [ML],
 \label{eq:MMSE_etaj}
 \ee
where the expectation is calculated over and  $\Zvec_\ell \sim \normal(\bzero, \iden_{M})$ and $\betavec_\ell$, which is uniformly distributed over the $M$ vectors with a single non-zero entry equal to 1.

The entries of the modified residual $\zvec^t$ in \eqref{eq:scamp_decoder_z} are approximately Gaussian and independent, with the variance determined by the block index. (A precise characterization of the distribution is given in Lemmas \ref{lem:hb_cond} and \ref{lem:ideal_cond_dist}.) For $\sfr \in [\Lr]$,  the SE parameter $\phi^t_\sfr$ approximates the variance of 
$\zvec^t_{\sfr}$, the $\sfr$th block of the residual.  The `Onsager' term $\upvec^t \odot \zvec^{t-1}$  in \eqref{eq:scamp_decoder_z} reflects the block-wise structure of $\zvec^t$.  To summarize, the key difference from the  state evolution parameters for standard SPARCs  is that here the variances of the effective observation $\svec^t$ and the modified residual $\zvec^t$ depend on their column- and row-block indices, respectively.   These variances are captured by $\{ \tau_\sfc^t \}_{\sfc \in [\Lc]}$ and 
$\{ \phi_\sfr^t \}_{\sfr \in [\Lr]}$.


\subsection{Measuring the performance of the AMP decoder} \label{subsec:MSE_SE}
The performance of a SPARC decoder is  measured by the \emph{section error rate},  defined as
\begin{equation}\label{eq:ser_def}
\mathcal{E}_{\text{sec}} := \frac{1}{L}\sum_{\ell=1}^L \indic\{\widehat{\betavec}_{\text{sec}(\ell)} \neq \betavec_{\text{sec}(\ell)} \}.
\end{equation}
 If the AMP decoder is run for $T$ iterations, the section error rate can be bounded in terms of the squared error $\| \betavec^T - \betavec \|^2$ as follows.  Since the unique non-zero entry in any section $\ell \in [L]$ of $\betavec$ equals $1$ and $\widehat{\betavec}_{\text{sec}(\ell)} \neq \betavec_{\text{sec}(\ell)}$ implies that the corresponding element of $\betavec^{T}_{\text{sec}(\ell)}$ is less than or equal to $1/2$,
\begin{equation}
\label{eq:SER_to_MSE}
\widehat{\betavec}_{\text{sec}(\ell)} \neq \betavec_{\text{sec}(\ell)} \quad \Rightarrow \quad \|\betavec^{T}_{\text{sec}(\ell)} - \betavec_{\text{sec}(\ell)} \|_2^2 \geq \frac{1}{4}.
\end{equation}
We recall that $\betavec_{\sfc}$ is the part of the message vector corresponding to column block $\sfc$ of the design matrix. There are $\frac{L}{\Lc}$ sections in $\betavec_{\sfc}$, with the non-zero entry in each section being equal to  $1$; we denote by $\betavec_{\sfc_\ell}$ the $\ell$th of these sections, for $\ell \in [L/\Lc]$. Then, \eqref{eq:SER_to_MSE} implies
\begin{align}
\mathcal{E}_{\text{sec}} & =   \frac{1}{L}\sum_{\ell=1}^L \indic\{\widehat{\betavec}_{\text{sec}(\ell)} \neq \betavec_{\text{sec}(\ell)} \} \nonumber 
  =   \frac{1}{L} \sum_{\sfc=1}^{\Lc} \sum_{\ell=1}^{L/\Lc} \indic\left\{\widehat{\betavec}_{\sfc_\ell} \neq \betavec_{\sfc_\ell} \right\}  \nonumber \\
&  \leq   \frac{4}{L} \sum_{\sfc=1}^{\Lc} \sum_{\ell=1}^{L/\Lc} \|\betavec^{T}_{\sfc_\ell} - \betavec_{\sfc_\ell}\|_2^2  
  =  4 \left[ \frac{1}{\Lc}\sum_{\sfc=1}^{\Lc} \frac{\|\betavec^{T}_{\sfc} - \betavec_{\sfc} \|_2^2}{L/\Lc}  \right]
  = 4 \left[ \frac{1}{L}\| \betavec^T - \betavec \|^2 \right].\label{eq:mse_LB_w_ser}
 \end{align}

We can therefore focus on bounding the bracketed term on the RHS of \eqref{eq:mse_LB_w_ser} which is the  overall  \emph{normalized mean square error} (NMSE). Fig.\  \ref{fig:ser_waveprop} shows that $\psi^t$ (\eqref{eq:se_psi}) closely tracks the NMSE of each block of the message vector, i.e.,  $\psi^t_\sfc \approx \frac{\|\betavec_{\sfc}^t-\betavec_{\sfc}\|_2^2}{L/\Lc}$ for $\sfc\in[\Lc]$.   
We additionally observe from the figure that as AMP iterates, the NMSE  reduction propagates from the ends towards the center blocks.  

\begin{figure}[t]
\centering
\includegraphics[width=4in]{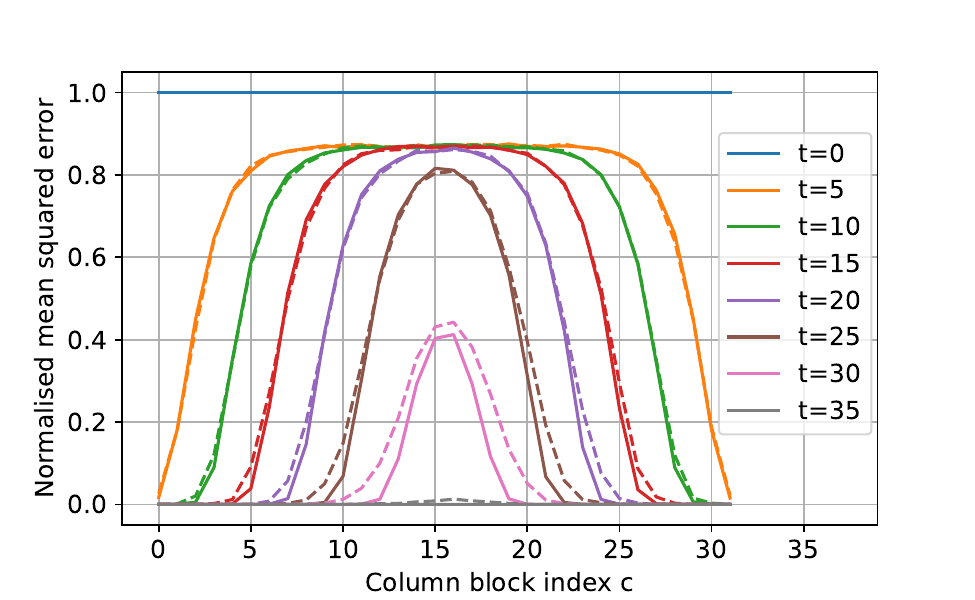}
\caption{ \small NMSE $\frac{\|\betavec_{\sfc}^t-\betavec_{\sfc}\|_2^2}{L/\Lc}$ vs.\ column block index $\sfc \in [\Lc]$ for several iteration numbers. The SC-SPARC with an $(\omega,\Lambda, \rho=0)$ base matrix has the following parameters: $R=1.5$ bits, $\mc{C}=2$ bits, $\omega=6$, $\Lambda=32$, $M=512$, $L=2048$ and $n=12284$. The solid lines are the state evolution predictions from \eqref{eq:se_psi}, and the dotted lines are the average NMSE over 100 instances of AMP decoding.}
\label{fig:ser_waveprop}
\vspace{-5pt}
\end{figure}


\section{Decoding progression according to state evolution} \label{sec:AMP_dec_prog}

In this section, we derive bounds for the state evolution parameters which help explain the decoding propagation illustrated in Fig.\ \ref{fig:ser_waveprop}.  These bounds lead to a succinct asymptotic characterization of state evolution (as $M \to \infty$). The non-asymptotic version of these bounds (for large but finite $M$) will be used to establish the main result in Theorems \ref{thm:main} and \ref{thm:main1}.

\begin{lemma}
Let $\Wmat \in \reals^{\Lr \times \Lc}$ be a base matrix having row and column averages that are bounded above and below by strictly positive constants. That is, there exist constants
$\kappa_{\rm L}, \kappa_{\rm U} >0$ such that 
\[ \kappa_{\rm L} \leq  \  \frac{1}{\Lc} \sum_{\sfc'} W_{\sfr \sfc'}, \ \frac{1}{\Lr} \sum_{\sfr '} W_{\sfr' \sfc}  \leq \kappa_{\rm U}, \qquad \sfr \in [\Lr], \ \sfc \in [\Lc].  \] 
Let
\be
\nu_\sfc^t := \frac{1}{\tau_\sfc^t \ln{\M}}  =  \frac{1}{R} \left[\frac{1}{\Lr}\sum_{\sfr=1}^{\Lr} \frac{W_{\sfr \sfc}}{\phi_\sfr^t}\right].
\label{eq:nu_c^t} 
\ee
For sufficiently large $\M$ and any $\delta\in(0,\frac{1}{2})$, $\tilde{\delta} \in (0, 1)$, 
\be
\left( 1 - M^{-k_1 \tilde{\delta}^2} \right)\indic\{\nu_\sfc^t < 2 - \tilde{\delta} \}  \,  \leq  \, \psi_\sfc^{t+1} \leq 1- \left(1- \frac{\M^{-k \delta^2}}{\delta\sqrt{\log{M}}}   \right) \indic\{\nu_\sfc^t > 2+\delta\},    \qquad \sfc \in [\Lc \, ],
\label{eq:psi_c_bound}
\ee
where $k, k_1$ are positive constants  depending only on $\kappa_{\rm L}$ and $\kappa_{\rm U}$.
\label{lem:psi_nonasymp}
\end{lemma}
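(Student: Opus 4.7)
The plan is to first reduce $\psi_\sfc^{t+1} = 1 - \mathcal{E}(\tau_\sfc^t)$ to a canonical form parameterized by $\nu_\sfc^t$, and then apply truncated moment and Gaussian tail bounds in the two regimes. Substituting $\xi_j = U_j/\sqrt{\tau_\sfc^t}$ gives $\xi_j \sim \mathcal{N}(0,\nu_\sfc^t \ln M)$ and $e^{-1/\tau_\sfc^t} = M^{-\nu_\sfc^t}$, so that
\[
\psi_\sfc^{t+1} \;=\; \mathbb{E}\!\left[\frac{T}{1+T}\right], \qquad T \;=\; M^{-\nu_\sfc^t}\, e^{-\xi_1}\, S, \qquad S \;=\; \sum_{j=2}^{M} e^{\xi_j},
\]
with $\xi_1$ independent of $S$. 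Since $\mathbb{E}[S] = (M-1) M^{\nu_\sfc^t/2}$, the typical value of $u = M^{-\nu_\sfc^t}S$ is of order $M^{1-\nu_\sfc^t/2}$, and the threshold $\nu = 2$ separates the regimes in which this typical value is small or large.

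For the upper bound on $\psi_\sfc^{t+1}$ in the case $\nu_\sfc^t > 2+\delta$, I use the identity $T/(1+T) = \sigma(\ln u - \xi_1)$ with $\sigma$ the logistic sigmoid, together with the envelope $\sigma(x) \leq \mathbb{I}\{x\geq -D\} + e^{x}\mathbb{I}\{x<-D\}$ at $D = (\delta/8)\ln M$. A Markov bound based on $\mathbb{E}[u] = (M-1)M^{-\nu_\sfc^t/2} \leq M^{-\delta/2}$ shows that $u \leq M^{-\delta/4}$ with probability at least $1-M^{-\delta/4}$, so on that event the first indicator's contribution is bounded by the Gaussian left-tail
\[
\Pr\!\left[\xi_1 \leq -(\delta/8)\ln M\right] \;\leq\; \frac{\sqrt{\nu_\sfc^t}}{(\delta/8)\sqrt{2\pi \ln M}} \, M^{-\delta^2/(128\,\nu_\sfc^t)},
\]
in which the Mills-ratio form of the tail supplies the $1/(\delta\sqrt{\ln M})$ factor. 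Adding the envelope contribution $e^{-D} = M^{-\delta/8}$ and the Markov error $M^{-\delta/4}$, and absorbing the smaller polynomial terms into the dominant $M^{-\delta^2/(128\nu_\sfc^t)}$ term using the uniform bounds on $\nu_\sfc^t$ guaranteed by the hypotheses on $\kappa_{\rm L}, \kappa_{\rm U}$, produces an upper bound of the claimed shape $M^{-k\delta^2}/(\delta \sqrt{\ln M})$.

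For the lower bound on $\psi_\sfc^{t+1}$ in the case $\nu_\sfc^t < 2 - \tilde\delta$, I want $T$ to be large on an event of probability close to one. Define
\[
\mathcal{G} \;=\; \bigl\{\xi_1 \leq (\tilde\delta/4)\ln M\bigr\} \;\cap\; \bigl\{S \geq \tfrac{1}{2}\mathbb{E}[S]\bigr\}.
\]
A Gaussian upper-tail bound gives $\Pr[\xi_1 > (\tilde\delta/4)\ln M] \leq M^{-\tilde\delta^2/(32\nu_\sfc^t)}$. For the second event, since $\nu_\sfc^t < 2$ the sum $S$ is in the high-temperature regime of the random-energy-model sum; a truncated second-moment argument (replacing each $e^{\xi_j}$ by $e^{\xi_j}\mathbb{I}\{\xi_j \leq (1+\epsilon)\sqrt{2\nu_\sfc^t}\ln M\}$, controlling the truncation failure by a union bound of order $M^{-2\epsilon}$, and applying Chebyshev to the truncated sum) yields $\Pr[S < \tfrac{1}{2}\mathbb{E}[S]]$ of polynomially small order. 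On $\mathcal{G}$ one obtains $T \geq \tfrac{1}{2}M^{1 - \nu_\sfc^t/2 - \tilde\delta/4} = \tfrac{1}{2}M^{\tilde\delta/4}$, so
\[
\mathcal{E}(\tau_\sfc^t) \;\leq\; \Pr[\mathcal{G}^{c}] \;+\; \frac{1}{1 + \tfrac{1}{2}M^{\tilde\delta/4}} \;\leq\; M^{-k_1 \tilde\delta^2},
\]
for a suitable $k_1 > 0$ depending on $\kappa_{\rm L}, \kappa_{\rm U}$.

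The main technical obstacle is the control of $S$ near the transition $\nu_\sfc^t \approx 2$: the first moment $\mathbb{E}[S] = (M-1)M^{\nu_\sfc^t/2}$ overestimates the typical size of $S$ when $\nu_\sfc^t$ is just above $2$, because the mean is dominated by rare maximal summands. Any bound relying solely on $\mathbb{E}[S]$ fails to separate the ``$u$ small'' and ``$u$ large'' events sharply enough. The key finesse is the two-scale choice in which the $u$-Markov threshold is placed at distance $\Theta(\delta \ln M)$ from the typical $\ln u \approx -(\delta/2)\ln M$ and $D$ is chosen of the same order, so that the Gaussian tail is applied at argument $\Theta(\delta \ln M)$ from the mean; this is precisely what forces the quadratic exponent $M^{-k\delta^2}$ rather than a weaker linear $M^{-k\delta}$ rate.
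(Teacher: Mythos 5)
Your reduction of $\psi_\sfc^{t+1}=\mathbb{E}[T/(1+T)]$ with $T=M^{-\nu_\sfc^t}e^{-\xi_1}S$ is correct, and the upper bound (case $\nu_\sfc^t>2+\delta$) via the logistic envelope $\sigma(x)\leq\mathbb{I}\{x\geq-D\}+e^{x}\mathbb{I}\{x<-D\}$, Markov on $u=M^{-\nu_\sfc^t}S$, and the Mills-ratio tail for $\xi_1$ is clean and correctly produces the $M^{-k\delta^2}/(\delta\sqrt{\log M})$ factor. The paper simply cites \cite[Appendix A]{RushV19} for this half, so I cannot compare routes there, but yours is sound.

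The lower bound (case $\nu_\sfc^t<2-\tilde\delta$) is where your route genuinely diverges from the paper's, and where there is a real gap. The paper does \emph{not} attempt to show $S$ concentrates around $\mathbb{E}[S]$. It conditions on the single event $\{\max_{2\leq j\leq M}U_j\geq\sqrt{2\ln M}(1-a\tilde\delta)\}$, which has failure probability $\exp(-\kappa M^{a\tilde\delta(2-a\tilde\delta)}/\sqrt{\ln M})$ by the standard max-of-Gaussians estimate \eqref{eq:max_std_Gaussians}, and uses the crude lower bound $S\geq e^{\max_j U_j\sqrt{\nu\ln M}}\geq M^{\sqrt{2\nu}(1-a\tilde\delta)}$. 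This single-term bound on $S$ is much weaker than $\mathbb{E}[S]/2\asymp M^{1+\nu/2}$, but it suffices: the resulting exponent $\sqrt{2\nu}(1-(a+a')\tilde\delta)-\nu$ is still $\Theta(\tilde\delta)>0$, and the dominant error term is the $U_1$ Gaussian tail $M^{-(a')^2\tilde\delta^2}$, exactly as in your proof.

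Your truncated second-moment argument, in contrast, needs a constraint you do not state, and as written it fails near the transition. Truncating at $c\ln M$ with $c=(1+\epsilon)\sqrt{2\nu}$, the Chebyshev ratio $\mathrm{Var}(\tilde S)/\mathbb{E}[\tilde S]^2$ has $M$-exponent $-1+\nu-(\sqrt{2\nu}-1-\epsilon)^2$, and for this to be negative one needs $1+\epsilon<\sqrt{2\nu}-\sqrt{\nu-1}$. The function $\sqrt{2\nu}-\sqrt{\nu-1}$ has a stationary point at $\nu=2$ with value $1$ and local expansion $1+\tfrac{1}{16}(2-\nu)^2$, so at $\nu=2-\tilde\delta$ you are forced to take $\epsilon\lesssim\tilde\delta^2/16$. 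With a fixed constant $\epsilon>0$, the Chebyshev exponent at $\nu=2$ equals $2\epsilon-\epsilon^2>0$, i.e. $\mathrm{Var}(\tilde S)/\mathbb{E}[\tilde S]^2$ \emph{grows} with $M$ and the bound is vacuous for $\nu$ in a neighborhood of $2$. Your write-up treats $\epsilon$ as a free parameter and asserts the truncation failure is $M^{-2\epsilon}$ and the $S$-concentration probability is ``polynomially small'' with no stated $\tilde\delta$-dependence; that is not correct uniformly over $\nu\in(1,2-\tilde\delta)$. If you instead take $\epsilon=\Theta(\tilde\delta^2)$, the two error terms both become $M^{-\Theta(\tilde\delta^2)}$ and the final form is recovered, so the conclusion is right, but this scaling must be made explicit and is the crux of the lower bound. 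The paper's max-based argument avoids the issue entirely: it never needs $S$ to be within a constant factor of its mean, only $T\geq M^{c\tilde\delta}$, which the single maximal summand already supplies.

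Two minor points: the assertion $\tfrac12 M^{1-\nu_\sfc^t/2-\tilde\delta/4}=\tfrac12 M^{\tilde\delta/4}$ should be an inequality $\geq$, using $\nu_\sfc^t<2-\tilde\delta$; and the boundedness of $\nu_\sfc^t$ above and below (needed to absorb $\nu_\sfc^t$ into the constants $k,k_1$) follows from the $\kappa_{\rm L},\kappa_{\rm U}$ hypotheses combined with $\sigma^2\leq\phi_\sfr^t\leq\sigma^2+\kappa_{\rm U}$, which is worth stating explicitly since you invoke it in both halves.
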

The proof of the lemma is given in Appendix \ref{app:psi_nonasymp}.

Lemma \ref{lem:psi_nonasymp} implies the following asymptotic state evolution recursion as $M \to \infty$. Initialise $\bar{\psi}_\sfc^{0} = 1$, for $\sfc \in[\Lc]$, and for $t=0,1,2,\ldots$,
\begin{align}
\bar{\phi}_\sfr^t &= \sigma^2 + \frac{1}{\Lc}\sum_{\sfc=1}^{\Lc}W_{\sfr \sfc}\bar{\psi}_\sfc^t, \qquad \sfr \in [\Lr], \label{eq:se_asmyp_phi} \\
\bar{\psi}_\sfc^{t+1} &= 1 - \indic\left\{ \frac{1}{R\, \Lr}\sum_{\sfr=1}^{\Lr} \frac{W_{\sfr \sfc}}{\bar{\phi}_\sfr^t}>2 \right\},  \qquad \sfc \in [\Lc], \label{eq:se_asmyp_psi}
\end{align}
where $\bar{\phi}, \bar{\psi}$ indicate asymptotic values as $M \to \infty$.

The asymptotic SE recursion \eqref{eq:se_asmyp_phi}-\eqref{eq:se_asmyp_psi} is given for a general base matrix $\Wmat$. To get some insight into the decoding progression, we specialise the result to the $(\omega,\Lambda, \rho=0)$ base matrix introduced in Definition \ref{def:ome_lamb}. Recall that an $(\omega,\Lambda, \rho=0)$ base matrix has $\Lr=\Lambda+\omega-1$ rows and $\Lc=\Lambda$ columns, with each column having $\omega$ non-zero entries, all equal to $P \cdot \frac{\Lambda+\omega-1}{\omega}$. 

\begin{corr}\label{corr:se_asymp_wLbasematrix}
The asymptotic state evolution recursion \eqref{eq:se_asmyp_phi}-\eqref{eq:se_asmyp_psi} for an $(\omega,\Lambda, \rho=0)$ base matrix is as follows. Initialise $\bar{\psi}_\sfc^{0} = 1 \ \forall \ \sfc \in[\Lambda ]$, and for $t=0,1,2,\ldots$,
\begin{align}
\bar{\phi}_\sfr^t &= \sigma^2  + \frac{\vth \, P}{\omega} \sum_{\sfc= \underline{\sfc}_\sfr }^{\overline{\sfc}_\sfr} \bar{\psi}_\sfc^t, \quad \sfr \in [ \Lambda + \omega -1], \label{eq:se_asmyp_flat_phi} \\
\bar{\psi}_\sfc^{t+1} &= 1 - \indic\left\{ \frac{P}{ R \, \omega}\sum_{\sfr=\sfc}^{\sfc+\omega-1} \frac{1}{\bar{\phi}_\sfr^t} > 2 \right\}, \quad \sfc \in [ \Lambda ],  \label{eq:se_asmyp_flat_psi}
\end{align}
where $\vth = \frac{\Lambda+\omega-1}{\Lambda}$,  and
\be\label{eq:c_r}
(\underline{\sfc}_\sfr,\, \overline{\sfc}_\sfr)
	=\begin{cases}
  		(1,\, \sfr) \ & {\normalfont \text{if}}  \  \ 1\leq \sfr\leq\omega\\
		(\sfr-\omega+1,\, \sfr) \ & {\normalfont \text{if}}  \ \  \omega \leq \sfr \leq \Lambda\\
	        (\sfr-\omega+1,\, \Lambda) \ & {\normalfont \text{if}} \  \ \Lambda \leq \sfr \leq \Lambda + \omega - 1.
	\end{cases}
\ee
\end{corr}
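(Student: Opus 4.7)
The corollary is obtained by direct substitution of the explicit entries of the $(\omega,\Lambda,\rho=0)$ base matrix from Definition~\ref{def:ome_lamb} into the general asymptotic recursion \eqref{eq:se_asmyp_phi}--\eqref{eq:se_asmyp_psi}. My plan is therefore to carefully bookkeep which entries are non-zero in each row and column, and then perform the algebraic simplification.

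First, I would record the structural facts about the base matrix that follow from \eqref{eq:W_rc} with $\rho=0$: the non-zero entries all take the common value $W_{\sfr\sfc} = P\cdot\frac{\Lambda+\omega-1}{\omega}$, and the support condition $\sfc \leq \sfr \leq \sfc+\omega-1$ can be read in two ways. Reading it per row, for a fixed $\sfr \in [\Lambda+\omega-1]$ the non-zero columns are those $\sfc$ satisfying $\max(1,\sfr-\omega+1) \leq \sfc \leq \min(\Lambda,\sfr)$; splitting into the three ranges $\sfr \leq \omega$, $\omega \leq \sfr \leq \Lambda$, and $\Lambda \leq \sfr \leq \Lambda+\omega-1$ yields precisely the piecewise definition of $(\underline{\sfc}_\sfr, \overline{\sfc}_\sfr)$ in \eqref{eq:c_r}. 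Reading it per column, for a fixed $\sfc \in [\Lambda]$ the non-zero rows are exactly $\sfr \in \{\sfc, \sfc+1, \ldots, \sfc+\omega-1\}$, without any boundary truncation, because the matrix has been made taller (with $\omega-1$ extra rows) precisely so that every column carries the full $\omega$ non-zero entries.

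Next, I would plug these facts into \eqref{eq:se_asmyp_phi} with $\Lc = \Lambda$:
\[
\bar{\phi}_\sfr^t \;=\; \sigma^2 + \frac{1}{\Lambda}\sum_{\sfc=\underline{\sfc}_\sfr}^{\overline{\sfc}_\sfr} P\cdot\frac{\Lambda+\omega-1}{\omega}\,\bar{\psi}_\sfc^t \;=\; \sigma^2 + \frac{\vth P}{\omega}\sum_{\sfc=\underline{\sfc}_\sfr}^{\overline{\sfc}_\sfr} \bar{\psi}_\sfc^t,
\]
where $\vth := \frac{\Lambda+\omega-1}{\Lambda}$, giving \eqref{eq:se_asmyp_flat_phi}. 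Analogously, substitution into \eqref{eq:se_asmyp_psi} with $\Lr = \Lambda+\omega-1$ gives
\[
\bar{\psi}_\sfc^{t+1} \;=\; 1 - \indic\!\left\{\frac{1}{R(\Lambda+\omega-1)}\sum_{\sfr=\sfc}^{\sfc+\omega-1} \frac{P(\Lambda+\omega-1)/\omega}{\bar{\phi}_\sfr^t} > 2\right\} \;=\; 1 - \indic\!\left\{\frac{P}{R\omega}\sum_{\sfr=\sfc}^{\sfc+\omega-1}\frac{1}{\bar{\phi}_\sfr^t} > 2\right\},
\]
which is \eqref{eq:se_asmyp_flat_psi}; the factor $(\Lambda+\omega-1)$ cancels cleanly precisely because of the normalisation chosen in Definition~\ref{def:ome_lamb}.

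There is no real obstacle here: the result is a direct specialisation, and the only thing that requires care is the boundary behaviour in the row index $\sfr$, which is why $(\underline{\sfc}_\sfr, \overline{\sfc}_\sfr)$ must be given piecewise. The asymmetry between rows (truncated support near the top and bottom) and columns (always exactly $\omega$ non-zero entries) is precisely the "seeding" feature noted in the remark at the end of Section~\ref{sec:sc_AMP}, and it is what will let the decoding wave in \eqref{eq:se_asmyp_flat_psi} initiate from both ends and propagate inward, matching the behaviour shown in Fig.~\ref{fig:ser_waveprop}.
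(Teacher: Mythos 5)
Your proof is correct and matches the paper's approach exactly: the paper's own proof is the single sentence ``Substitute the value of $W_{\sfr\sfc}$ from \eqref{eq:W_rc}, with $\rho=0$ and $\Lc=\Lambda$, $\Lr = \Lambda+\omega-1$, in \eqref{eq:se_asmyp_phi}--\eqref{eq:se_asmyp_psi}.'' You have simply spelled out the bookkeeping — per-row versus per-column support, the cancellation of $(\Lambda+\omega-1)$, and why columns never need boundary truncation — which is all accurate.
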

\begin{proof}
Substitute the value of $W_{\sfr \sfc}$ from \eqref{eq:W_rc}, with $\rho=0$ and $\Lc=\Lambda$, $\Lr=\Lambda+\omega-1$ in \eqref{eq:se_asmyp_phi}-\eqref{eq:se_asmyp_psi}.
\end{proof}

Observe that the $\bar{\phi}_\sfr^t$'s and $\bar{\psi}_\sfc^t$'s are symmetric about the middle indices, i.e.\ $\bar{\phi}_\sfr^t = \bar{\phi}_{\Lr- \sfr+1}^t$ for $\sfr\leq \lfloor \frac{\Lr}{2} \rfloor$ and $\bar{\psi}_\sfc^t = \bar{\psi}_{\Lc-\sfc+1}^t$ for $\sfc\leq \lfloor \frac{\Lc}{2} \rfloor$.

Consider the initial step ($t=0$): from \eqref{eq:se_asmyp_flat_phi}  the value of $\bar{\phi}_\sfr^0$  for each $\sfr$ depends on the number of non-zero entries in row $\sfr$ of $\Wmat$, which is equal to $\overline{\sfc}_\sfr - \underline{\sfc}_\sfr + 1$, with $\overline{\sfc}_\sfr, \underline{\sfc}_\sfr$ given by \eqref{eq:c_r}. Therefore, $\bar{\phi}_\sfr^0$ increases from  $\sfr=1$ until $\sfr=\omega$, is constant for $\omega\leq \sfr \leq \Lambda$, and then starts decreasing again for $\Lambda < \sfr \leq \Lambda + \omega -1$. As a result,  $\bar{\psi}_\sfc^{1}$ is smallest for $\sfc$ at either end of the base matrix ($\sfc\in\{1,\Lambda\}$) and increases as $\sfc$ moves towards the middle, since the $\sum_{\sfr=\sfc}^{\sfc+\omega-1} (\bar{\phi}_\sfr^0)^{-1}$ term in \eqref{eq:se_asmyp_flat_psi} is largest  for $\sfc\in\{1,\Lambda\}$, followed by $\sfc\in\{2, \Lambda-1\}$, and so on. Therefore, we expect the blocks of the message vector corresponding to column index $\sfc\in\{1,\Lambda\}$ to be decoded most easily, followed by $\sfc\in\{2, \Lambda-1\}$, and so on. Fig.\ \ref{fig:ser_waveprop} shows that this is indeed the case. 

The decoding propagation phenomenon seen in Fig.\ \ref{fig:ser_waveprop} can also be explained using Corollary \ref{corr:se_asymp_wLbasematrix} by tracking the evolution of the $\bar{\phi}_\sfr^t$'s and $\bar{\psi}_\sfc^t$'s. In particular, one finds that if column $\sfc^*$ decodes in iteration $t$, i.e.\ $\bar{\psi}_{\sfc^*}^t=0$, then columns within a coupling width away, i.e.\ columns $\sfc\in\{\sfc^*-(\omega-1), \ldots, \sfc^*+ (\omega-1)\}$, will become easier to decode in iteration $(t+1)$.  This wave-like decoding propagation also occurs in spatially coupled LDPC codes decoded with belief propagation. The propagation of the LDPC decoding wave (in the large system limit) was studied in \cite{el2018velocity}.

\subsection{Decoding progression} \label{subsec:dec_prog}
We make the above discussion precise by characterizing the decoding progression for an $(\omega, \Lambda, \rho)$ base matrix (with $\rho >0$)  using Lemma \ref{lem:psi_nonasymp}.  Recalling that
 \be \vth = 1 + \frac{\omega-1}{\Lambda},  \label{eq:vartheta_def} 
 \ee 
 we will consider rates $R$ such that 
\be
R <  \frac{1}{2 \vth} \ln (1 + \vth \, \snr), \quad  \text{ where }  \quad \snr=\frac{P}{\sigma^2}.
\label{eq:kappa_ub}
\ee
Note that the RHS of \eqref{eq:kappa_ub} can be made arbitrarily close to the channel capacity $\mc{C}$ by making
$\frac{\omega-1}{\Lambda}$ small enough.  Indeed,  since the expression in \eqref{eq:kappa_ub} is decreasing in $\vth$ for  $\vth >1$, we have
\be
\label{eq:ckappa_bnd}
\begin{split}
& \mc{C} >   \frac{1}{2 \vth} \ln (1 + \vth \, \snr)  > \frac{\mc{C}}{\vth}.
\end{split}
\ee
\begin{prop}
Consider a rate $R$ SC-SPARC with an $n \times ML$ design matrix constructed using an $(\omega, \Lambda, \rho)$ base matrix
and a constant $\delta \in(0, \min\{ \frac{\Delta}{2R}, \frac{1}{2} \})$, 
where $0 \leq \rho \leq \min\{ \frac{\Delta}{3 \snr}, \, \frac{1}{2}\}$, and
\be 
\Delta := \frac{1}{2\vth}\ln(1 + \vth \snr) - R.
\label{eq:Delta}
\ee
If the rate satisfies $R < \frac{(1- \rho)\snr}{(2 + \delta)(1 + \vth \snr)}$, then all the column blocks of the message vector simultaneously decode in one iteration, i.e., for all $\sfc \in [\Lambda]$,
\be
\psi_\sfc^1 \leq f_{M,\delta}:= \frac{\M^{-k \delta^2}}{\delta\sqrt{\log{M}}}
\label{eq:fmdelta_def}
\ee
for sufficiently large $M$, where $k>0$ is a universal constant.

Otherwise,  if the rate satisfies $\frac{(1- \rho)\snr}{(2 + \delta)(1 + \vth \snr)} \leq R < \frac{1}{2 \vth} \ln (1 + \vth \, \snr)$,
the coupling width $\omega$ satisfies
\be\label{eq:omega_thresh}
\omega > \left( \frac{\vth \,  \snr^2}{1+ \vth\snr} \right)\frac{1}{\Delta}, 
\ee
and
\be
g = \frac{(1 + \vth \, \snr)\Delta }{\vth \, \snr^2} \,  \omega, \label{eq:g_def} 
\ee
then, for $t \geq 1$ and 
\be 
\sfc \leq \max \left\{tg , \ \left\lceil \frac{\Lambda}{2} \right\rceil \right\},
\ee  
we have 
\be 
\psi_{\sfc}^t  \, = \,  \psi_{\Lambda - \sfc + 1}^t  \,  \leq \, f_{M, \delta}
\label{eq:psic_prog}
\ee
for sufficiently large $M$.
\label{prop:nonasymp_se}
\end{prop}
The proof of the proposition is given in Appendix \ref{app:nonasymp_se}.

\begin{rem}[BP threshold]
\normalfont
For rates $R$ smaller than $\frac{\snr}{2(1+\snr)}$, one does not require spatial coupling (or power allocation) for reliable SPARC decoding.
Indeed, consider a standard non-coupled SPARC where the 1-by-1 base matrix is a single entry equal to $P$. Using Lemma \ref{lem:psi_nonasymp} in the state evolution recursion \eqref{eq:se_phi}-\eqref{eq:E_tau}, we see that if $R < \frac{\snr}{(2 + \delta)(1 + \snr)}$ for some $\delta \in (0,1)$, then the whole message vector decodes in one iteration, i.e., $\psi^1 \leq f_{M,\delta}$. The threshold $\frac{\snr}{2(1+\snr)}$ can be interpreted as the BP threshold in the $M \to \infty$ limit.
\label{rem:full_decoding}
\end{rem}

\begin{rem}[Choice of base matrix parameters]
\normalfont
For any fixed rate $\frac{\snr}{2(1+\snr)} \leq R < \mc{C} = \frac{1}{2} \ln(1+ \snr)$, Proposition \ref{prop:nonasymp_se}  requires the parameters $(\omega, \Lambda)$ to be chosen such that: {\normalfont i)} the ratio $(\omega/\Lambda)$ is small enough that the rate gap $\Delta$ in \eqref{eq:Delta} is positive,  and {\normalfont ii)} $\omega$ is large enough that \eqref{eq:omega_thresh} is satisfied. These two conditions can be satisfied for any fixed $R < \mc{C}$ by taking 
(for example) $\Lambda > \omega^2$, and $\omega$ sufficiently large.
\label{rem:choose_omega_lambda}
\end{rem}

For rates larger than the threshold $\frac{\snr}{2(1+\snr)}$, the proposition says that if the coupling width $\omega$ is large enough (as specified by \eqref{eq:omega_thresh}),  then in iteration $t$ at least the first and last $\lfloor gt \rfloor$ column blocks from each end are expected to decode.  Furthermore, the proof shows that if $g_t \leq \Lambda/2$ is the exact number of column blocks such that $\psi_\sfc^t = \psi_{\Lambda -\sfc+1}^t \leq f_{M, \delta}$ for $\sfc \leq g_t$, then $g_{t+1} \geq \lfloor g_{t} + g \rfloor$, i.e., in each iteration at least $\lfloor g \rfloor  \geq 1$ additional column blocks of the message vector from each end are expected to  decode.

This decoding progression continues until iteration $T$ when all column blocks have been decoded, i.e., $\psi^T_\sfc \leq f_{M,\delta}$ for $\sfc \in [\Lc]$. 
More precisely, we run the AMP decoder for $T$ iterations where
\be
T := \min\{t: \, \psi^t_\sfc  \leq f_{M,\delta} \text{ for } \sfc \in [\Lc] \}.
\label{eq:T_def}
\ee
 Proposition \ref{prop:nonasymp_se} implies that for rates larger than the threshold
\be
T \leq  \left\lceil   \frac{\Lambda}{2 g} \right\rceil. 
\label{eq:Tdef}
\ee

Using the interpretation of the AMP decoder in Sections \ref{subsec:AMP_int} and \ref{subsec:MSE_SE}, after iteration $T$ we expect the  mean  squared error $\frac{1}{L} \| \betavec - \betavec^T \|^2$ to be small.  We note that $g$ is proportional to 
$\Delta$, which represents the rate gap from capacity (see \eqref{eq:ckappa_bnd}, \eqref{eq:Delta}). Therefore, from \eqref{eq:Tdef} the number of iterations $T$ grows as the rate approaches the channel capacity.   For a fixed $R$ the quantity $f_{M,\delta}$ tends to $0$ with growing $M$.


\section{Main Result} \label{sec:main_result}

The main result, stated in the theorem below is a bound on the probability of the section error rate of the AMP decoder exceeding a target level $\e$, for any $\e \in (0,1)$.

\begin{thm}
Consider an $(\omega, \Lambda, \rho)$ base matrix $\Wmat \in \reals^{\Lr \times \Lc}$ with $(\Lr, \Lc)=  (\Lambda +\omega-1, \, \Lambda)$ and $\rho = \min\{  \frac{\Delta}{3\snr}, \frac{1}{2} \}$, where  $\Delta$ is the rate gap defined in \eqref{eq:Delta}. Fix rate $\frac{\snr}{2(1+\snr)}  \leq R < \frac{1}{2\vartheta}\ln(1 + \vartheta \, \snr)$, where $\vartheta =1 + \frac{\omega-1}{\Lambda}$, and let $\omega$ be large enough that the condition in  \eqref{eq:omega_thresh} is satisfied.
Let $\mc{S}_n$ be a  SC-SPARC of rate $R$ defined via an $n \times ML$ design matrix constructed using the base matrix $\Wmat$.  The  parameters $(n, M, L)$ satisfy \eqref{eq:rate_eq}.

Fix $\e \in (0,1)$, and  for $f_{M, \frac{\Delta}{R}}$ defined in Proposition \ref{prop:nonasymp_se}, let $M$ be large enough that $f_{M,\delta} \leq \frac{\e}{8}$ for $\delta = \min\{ \frac{\Delta}{3R}, \frac{1}{3} \}$. Then after  the  AMP decoder is run for $T$ iterations, with $T$ defined in \eqref{eq:T_def}, the section error rate (defined in \eqref{eq:ser_def}) satisfies 
\be
\begin{split}
& P \left( \mc{E}_{\text{sec}}(\mc{S}_n)  >  \e \right) \leq K_{T-1} (\Lr \Lc)^{T} \exp  \Big\{ \frac{- \kappa_{T-1} n \e^2}{ 64(\log M)^{2T} (\Lr/\omega)^{2T-1}} \Big\}. 
 \label{eq:pezero}
\end{split}  
\ee
For $t \geq 0$, the constants $\kappa_t$ and $K_t$  are given by $\kappa_t = [\xi^{2t} (t!)^{24}]^{-1}$ and $K_t = \Xi^{2t} (t!)^{14}$ where $\xi, \Xi > 0$ are universal constants (not depending on the AMP parameters $(L, M, n, \Lr, \Lc)$ or $\e$), but not explicitly specified.  
\label{thm:main}
\end{thm}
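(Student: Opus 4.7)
The plan is to exploit the explicit connection \eqref{eq:mse_LB_w_ser} between the section error rate and the normalized MSE of the AMP iterates, and then invoke the two main technical pieces already in place: the MSE concentration guarantee of Theorem~\ref{thm:main1} and the state-evolution decay proved in Proposition~\ref{prop:nonasymp_se}. The bound emerges by choosing the iteration index $T$ and the deviation tolerance so that the state-evolution contribution and the concentration-error contribution each stay below $\e/8$, together yielding the desired $\e/4$ bound on the mean-squared error and hence the $\e$ bound on the section error rate via \eqref{eq:mse_LB_w_ser}.

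I proceed in three steps. First, \eqref{eq:mse_LB_w_ser} combined with monotonicity of probability gives
\[ P(\mc{E}_{\text{sec}}(\mc{S}_n) > \e) \;\leq\; P\!\left( \tfrac{1}{L}\|\betavec^T - \betavec\|^2 > \tfrac{\e}{4} \right), \]
where $T$ is the iteration defined in \eqref{eq:T_def} and bounded by $T \leq \lceil \Lambda/(2g) \rceil$ according to \eqref{eq:Tdef}. Second, set $\delta = \min\{\Delta/(3R),\, 1/3\}$, which is consistent with the hypotheses of Proposition~\ref{prop:nonasymp_se}. That proposition, together with the definition of $T$ in \eqref{eq:T_def}, guarantees $\psi_\sfc^T \leq f_{M,\delta}$ for every $\sfc \in [\Lc]$; since $f_{M,\delta} \leq \e/8$ by hypothesis on $M$, the average state-evolution MSE $\frac{1}{\Lc}\sum_\sfc \psi_\sfc^T$ is at most $\e/8$, so the event of interest reduces to a deviation event:
\[ P\!\left( \tfrac{1}{L}\|\betavec^T - \betavec\|^2 > \tfrac{\e}{4} \right) \;\leq\; P\!\left( \left| \tfrac{1}{L}\|\betavec^T - \betavec\|^2 - \tfrac{1}{\Lc}\sum_{\sfc=1}^{\Lc} \psi_\sfc^T \right| > \tfrac{\e}{8} \right). \]
Third, apply Theorem~\ref{thm:main1} at iteration $T$ (equivalently, at time index $T-1$ in the indexing of that theorem) with deviation tolerance $\e/8$. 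Since $(\e/8)^2 = \e^2/64$, this produces exactly a bound of the form $K_{T-1}(\Lr\Lc)^T \exp\{-\kappa_{T-1} n \e^2 / [64(\log M)^{2T}(\Lr/\omega)^{2T-1}]\}$, matching \eqref{eq:pezero}.

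The main obstacle is not in this deduction, which is essentially a rewriting exercise, but in two places. First, parameter compatibility: one must check that the joint choices $\rho = \min\{\Delta/(3\snr), 1/2\}$, $\omega > (\vartheta \snr^2)/[(1+\vartheta\snr)\Delta]$, and $\delta = \min\{\Delta/(3R), 1/3\}$ satisfy every hypothesis of Proposition~\ref{prop:nonasymp_se} (in particular $\delta < \min\{\Delta/(2R), 1/2\}$) so that $T$ is finite and $\psi^T_\sfc \leq f_{M,\delta}$ holds uniformly across $\sfc$; this reduces to routine algebra using \eqref{eq:ckappa_bnd} and the bound \eqref{eq:Tdef}. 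Second, the genuine heavy lifting lies in proving Theorem~\ref{thm:main1} itself, which requires the conditional distribution lemma and the block-wise scalar-product concentration lemma (Lemmas~\ref{lem:hb_cond} and \ref{lem:main_lem}) assembled in Section~\ref{sec:proof}; the form of the exponent $64(\log M)^{2T}(\Lr/\omega)^{2T-1}$ and the prefactor $K_{T-1}(\Lr\Lc)^T$ both emerge from the recursive step-by-step tracking in that proof, and control of the $(\Lr/\omega)$ factors reflects the block-wise rescaling required to accommodate the spatially coupled design.
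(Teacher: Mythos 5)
Your proposal follows exactly the same route as the paper's proof: use \eqref{eq:mse_LB_w_ser} to reduce the section-error event to an NMSE-deviation event, invoke Proposition~\ref{prop:nonasymp_se} (with $\delta = \min\{\Delta/(3R),\,1/3\}$) to bound $\frac{1}{\Lc}\sum_\sfc \psi_\sfc^T \leq f_{M,\delta} \leq \e/8$, and apply Theorem~\ref{thm:main1} at iteration $T$ with deviation tolerance $\tilde{\e}=\e/8$, which is precisely where the factor $64 = 8^2$ in the exponent comes from. Your accompanying remarks on parameter compatibility and on where the real technical work lies (Lemmas~\ref{lem:hb_cond} and~\ref{lem:main_lem} behind Theorem~\ref{thm:main1}) are accurate.
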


\begin{rem}
The theorem is stated for rates
$\frac{\snr}{2(1+\snr)} \leq R <  \frac{1}{2\vartheta}\ln(1 + \vartheta \, \snr)$ as this is the region where spatial coupling is required. Indeed, for $R < \frac{\snr}{2(1+\snr)}$, Remark \ref{rem:full_decoding} and the proof of Theorem \ref{thm:main} imply that  the
 probability bound \eqref{eq:pezero} holds with $\Lr=\Lc=\omega=1$ and $T=1$. This result also follows from the analysis in \cite{RushV19}, applied with a uniform power allocation.
\end{rem}

The bound  \eqref{eq:pezero} on the probability of excess section error rate is obtained via a concentration inequality on the normalized MSE  of the AMP decoder.   Theorem \ref{thm:main1} below gives a concentration inequality bounding the probability  of deviation of the normalized MSE from the state evolution prediction in each iteration.  Recall, by \eqref{eq:mse_LB_w_ser}, the section error rate can be bounded in terms of the normalized MSE, $\| \betavec^T - \betavec \|^2/L$. This connection is used to prove Theorem \ref{thm:main} from Theorem \ref{thm:main1}.

\begin{thm}
With the same assumptions as Theorem \ref{thm:main}, 
for $1 \leq t \leq T$ and $\e >0$ we have
\be
P\Big( \Big \lvert  \frac{\| \betavec^t - \betavec \|^2}{L}  - \frac{1}{\Lc} \sum_{\sfc \in[\Lc]} \psi^{t}_{\sfc} \Big \lvert \geq \epsilon \Big)  \leq K_{t-1} (\Lr \Lc)^{t} \exp 
\Big\{ \frac{- \kappa_{t-1} n \e^2}{ (\log M)^{2t} (\Lr/\omega)^{2t-1}} \Big\}. 
\ee
Here $\{ {\psi}^{t}_{\sfc} \}_{\sfc \in [\Lc]}$ are the state evolution parameters  defined in \eqref{eq:se_psi}, and the constants $K_{t-1}$ and $\kappa_{t-1}$ are as defined in the statement of Theorem \ref{thm:main}.
\label{thm:main1}
\end{thm}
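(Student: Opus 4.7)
The plan is to prove Theorem \ref{thm:main1} by induction on the iteration index $t$, generalizing the blueprint of \cite{RushV19} to account for the block-dependent variance structure of the SC-SPARC design matrix. First observe that the normalized squared error decomposes blockwise:
\[
\frac{\|\betavec^t - \betavec\|^2}{L} \;=\; \frac{1}{\Lc} \sum_{\sfc=1}^{\Lc} \frac{\|\betavec^t_\sfc - \betavec_\sfc\|^2}{L/\Lc},
\]
so it suffices to establish, for each column block $\sfc \in [\Lc]$, a concentration inequality of the per-block normalized squared error $\frac{\|\betavec^t_\sfc - \betavec_\sfc\|^2}{L/\Lc}$ around $\psi^t_\sfc$, the aggregate bound then following by the triangle inequality and a union bound over $\sfc$, absorbed into the $\Lc$ factor in the exponent prefactor.

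The central technical ingredient is the conditional distribution lemma (Lemma \ref{lem:hb_cond}), which gives a non-asymptotic description of the law of $\Amat$ conditioned on the sigma-algebra generated by the AMP iterates through time $t$. Since the entries of $\Amat$ are independent Gaussians with block-dependent variance $W_{\sfr\sfc}/L$, I would introduce block-adjusted iterates $\tmb{m}^t$ and $\tmb{h}^t$ as in \eqref{eq:tildem_def}, which rescale each row- or column-block of $\zvec^t$ and $\betavec^t - \betavec$ by the appropriate base-matrix factors. The resulting linear constraints \eqref{eq:Alin_consts} match the iid Gaussian structure blockwise, so conditional on these constraints $\Amat$ decomposes as a deterministic linear function of past iterates plus a fresh independent Gaussian matrix with the correct block-variance structure. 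Plugging this representation into the AMP recursion \eqref{eq:scamp_decoder_z}--\eqref{eq:scamp_decoder_beta} expresses the input to $\eta^t$ on section $\ell \subset \sfc$ as $\betavec_\ell + \sqrt{\tau^t_\sfc} \Zvec_\ell + \Dvec_\ell$, where $\Zvec_\ell$ is standard Gaussian and $\Dvec_\ell$ is a small deviation term. The inductive step then invokes Lemma \ref{lem:main_lem}, which asserts that a family of blockwise scalar products built from these iterates (including cross-block products weighted by $W_{\sfr\sfc}$) concentrate on the values predicted by state evolution. Combining Lipschitz continuity of the softmax $\eta^t$ section-wise with the MMSE identity \eqref{eq:MMSE_etaj} and the definition \eqref{eq:se_psi} of $\psi^{t+1}_\sfc$ then yields the desired concentration of the per-block normalized squared error.

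The hard part will be maintaining consistent control of the block-wise scalings throughout the conditioning argument. In the standard SPARC setting the design matrix is iid so the orthogonal projections used in the conditioning step act uniformly across entries; here the projection operators, the Onsager correction $\upvec^t \odot \zvec^{t-1}$, and all the key scalar products acquire a block-dependent structure, and one must verify that the residual and the effective observation retain the claimed conditional Gaussian structure with the correct block-dependent variances $\phi^t_\sfr$ and $\tau^t_\sfc$. A further subtlety is that the concentration rates in Lemma \ref{lem:main_lem} degrade with each iteration because the induction compounds errors from earlier steps; this is the source of the $(t!)^{24}$ factor in $\kappa_{t-1}$ and the $(t!)^{14}$ factor in $K_{t-1}$. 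The $(\log M)^{2t}$ factor in the final exponent originates from the sub-Gaussian tail of the softmax denoiser $\eta^t$ acting on $M$-dimensional inputs per section, while the $(\Lr/\omega)^{2t-1}$ factor reflects both the effective per-row-block sample size $n/\Lr$ and the blockwise union bounds weighted by the $(\omega,\Lambda,\rho)$ coupling structure of $\Wmat$.
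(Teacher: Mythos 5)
Your proposal correctly identifies the central technical machinery: the conditional distribution lemma (Lemma \ref{lem:hb_cond}), the block-adjusted iterates in \eqref{eq:tildem_def}--\eqref{eq:Alin_consts}, and the induction-based concentration lemma (Lemma \ref{lem:main_lem}). Your description of how the conditional Gaussian representation feeds into the softmax denoiser is accurate, and the discussion of why the constants degrade with $t$ is on target. However, the opening step --- decompose the normalized MSE blockwise, prove per-block concentration of $\|\betavec^t_\sfc - \betavec_\sfc\|^2/(L/\Lc)$ around $\psi^t_\sfc$, and union-bound over $\sfc$ --- is not what the paper does, and it is not needed. Once Lemma \ref{lem:main_lem} is in hand, Theorem \ref{thm:main1} is an immediate one-line corollary of \eqref{eq:Hc_adj}: take $\sfq=0$ and $s=t+1$ (in the lemma's indexing) to get concentration of $\sum_{\sfc} [\tfrac{1}{L}\|\brqvec^{t+1}_\sfc\|^2 - \tfrac{1}{\Lc}\psi^{t+1}_\sfc ]$ directly, which is exactly the aggregate statement. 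So the per-block decomposition is a detour that the paper avoids.

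It is also not a harmless detour. The Hoeffding step in the proof of \eqref{eq:Hc_adj} (e.g., $T_1$ in the $\mathcal{H}_1(c)$ and $\mathcal{H}_{t+1}(c)$ steps) sums the mean-zero, bounded section-wise terms over all $L$ sections at once, yielding rate $\exp\{-\kappa L\e^2\}$; the per-block version would only have $L/\Lc$ sections, weakening that rate by a factor $\Lc$ in the exponent. This happens not to be fatal here because the deviation terms, controlled via \eqref{eq:Ha} at rate of order $\Mr\omega/(\log M)^{2t}$, dominate for $M$ large, so the per-block Hoeffding rate $L/\Lc$ would still be sub-dominant --- but you would have to verify this. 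More importantly, the deviation-control statements in Lemma \ref{lem:main_lem} (in particular \eqref{eq:Ha}) are proved only for $W$-weighted averages of $\max_j |[\Dvec_{t+1,t,\sfc}]_j|^2$ over all $\sfc$, not block-by-block; a per-block argument would require worst-case-in-$\sfc$ control of the deviations, which the paper never establishes and which is plausibly strictly harder than the average-over-$\sfc$ version. The aggregate route --- prove \eqref{eq:Hc_adj} for the weighted sum once, then specialize $\sfq=0$ --- sidesteps both issues.
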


The dependence of the constants $K_t$ and $\kappa_t$ on  $t!$ arises due to the induction-based proof of the  concentration lemma (Lemma \ref{lem:main_lem}). These constants have not been optimized, but we believe that these constants will depend on 
$t!$ in any induction-based proof of the result. It is an open question whether the $t!$ factors  are fundamental to the problem or if a different analysis of the AMP can yield a different $t$ dependence in these constants.

Theorem \ref{thm:main1} is proved in Section \ref{sec:proof}. We now show how Theorem \ref{thm:main} follows from Theorem \ref{thm:main1}.

\begin{proof}[Proof of Theorem \ref{thm:main}]
Without loss of generality, we can assume that  rate gap $\Delta$ (defined in \eqref{eq:Delta}) satisfies  $\Delta < 2R$. Otherwise the arguments below hold with $(\Delta/R)$ replaced by $1$.

Taking $t=T$, and noting that $\psi^T_\sfc \leq f_{M, \Delta/(3R)}$ for  $\sfc \in [\Lc]$ (from  Proposition \ref{prop:nonasymp_se}), Theorem \ref{thm:main1} implies that for any 
$\tilde{\e} >0$,
\be
P\left(  \frac{\| \betavec^T - \betavec \|^2}{L} \geq \tilde{\epsilon}  +  f_{M, \Delta/(3R)} \right)  \leq K_{T-1} (\Lr \Lc)^{T} \exp 
\Big\{ \frac{-\kappa_{T-1} n \tilde{\e}^2}{ (\log M)^{2T} (\Lr/\omega)^{2T-1}} \Big\}.  
\label{eq:mse_etil}
\ee
Furthermore, from \eqref{eq:mse_LB_w_ser} we have
\be
P \left( \mc{E}_{\text{sec}}(\mc{S}_n)  >  \e \right) \leq P\left(  \frac{\| \betavec^T - \betavec \|^2}{L}  \geq \frac{\e}{4} \right).
\label{eq:pe_mse}
\ee
Combining \eqref{eq:mse_etil} and \eqref{eq:pe_mse}, and taking $\tilde{\e} = \frac{\e}{8}$ and $M$ large enough so that $f_{M, \Delta/(3R)} \leq \frac{\e}{8}$ (see \eqref{eq:fmdelta_def}) yields Theorem \ref{thm:main}.
\end{proof}

\subsection{Choosing the SC-SPARC parameters}

Theorems \ref{thm:main} and \ref{thm:main1} give guidance on how to systematically choose parameters of the spatially coupled SPARC for any fixed rate $R < \mc{C} = \frac{1}{2} \ln(1+ \snr)$ and a target section error rate $\e$. First, choose  $(\omega, \Lambda)$ so that the rate gap $\Delta$ in \eqref{eq:Delta} is positive and \eqref{eq:omega_thresh} is satisfied. As described in Remark \ref{rem:choose_omega_lambda}, this can be done by choosing $\Lambda > \omega^2$, and $\omega$ sufficiently large.  This determines  $\Lr = (\Lambda + \omega -1)$ the number of iterations $T$, which from \eqref{eq:Tdef}, is bounded by a value proportional to $\Lambda/(\omega \Delta)$.

Next choose $M$ large enough for $f_{M,\delta} \leq \frac{\e}{8}$ for $\delta = \min\{ \frac{\Delta}{3R}, \frac{1}{3} \}$.  For fixed values of $(\omega, \Lambda, M)$, \eqref{eq:pezero} shows that the probability that the section error rate exceeds  $\e$ decays exponentially in the block length $n$. Once $n$ is chosen to be a large multiple of $\Lr = (\Lambda + \omega -1)$, the number of sections is  $L = \frac{n R}{\ln M}$, which completes the specification of the SPARC.\footnote{The parameters $L$ and $\Mc= ML/\Lambda$ determined in this way need to be integer-valued, which can be ensured by picking suitable values for  $\omega, \Lambda, M$. }

\begin{rem}
\normalfont
Theorem \ref{thm:main} implies that for any fixed $R < \mc{C}$ and $\e \in (0,1)$, one can construct a sequence  of rate $R$ spatially coupled SPARCs $\{ \mathcal{S}_n\}$ (indexed by code length $n$) for which 
\be
\begin{split}
\lim_{n \to \infty} \mc{E}_{\text{sec}}(\mc{S}_n)  = 0 \quad \text{ almost surely}.
\label{eq:asymp_result}
\end{split}
\ee
Indeed, once $(\Lambda, \omega, M)$ are chosen to satisfy the conditions in Theorem \ref{thm:main},  the bound in \eqref{eq:pezero} decreases exponentially in $n$. The  Borel-Cantelli lemma then yields the asymptotic result in \eqref{eq:asymp_result}.
\end{rem}

\begin{rem}
\normalfont
As described in \cite{joseph2012}, to obtain a small probability of codeword error  $P(\widehat{\betavec} \neq \betavec)$,   one can use a concatenated code with the SPARC as the inner code and  an outer Reed-Solomon  code.    A suitably chosen Reed-Solomon code of rate $(1-2\e)$ ensures that $\widehat{\betavec} = \betavec$ whenever the section error rate $ \mc{E}_{sec} < \e$, for any $\e >0$. The overall rate for such a concatenated code  is $(1-2\e)R$ and the  $P(\widehat{\betavec} \neq \betavec)$ is bounded by the RHS of \eqref{eq:pezero}. The reader is referred to \cite[Sec.\ 2.2]{venkataramanan19monograph} for details of how to choose an appropriate  Reed-Solomon code. 
\end{rem}

\subsection{Tradeoffs in choosing the parameters}

We first consider the effect of the base matrix parameters $\omega, \Lambda$. The condition in \eqref{eq:vartheta_def} and the bounds in \eqref{eq:ckappa_bnd} together imply that the minimum gap from capacity $\mc{C}-R$ is of order $\frac{\omega}{\Lambda}$, or equivalently, of order $\frac{\omega}{\Lambda+ \omega -1} = \frac{\omega}{\Lr}$. Therefore, decreasing the ratio $\frac{\omega}{\Lr}$   allows rates closer to capacity, but weakens the probability bound in Theorem \ref{thm:main}. Indeed, the bound in \eqref{eq:pezero} is exponential in $(\omega/\Lr)^{2T-1}$, with $T$ also increasing with $(\Lr/\omega)$.

Next consider the effect of increasing $M$. From Theorem \ref{thm:main} increasing $M$ allows for  a smaller target section error rate $\e$, since we require $\e \geq 8 f_{M, \delta}$. Equivalently, from \eqref{eq:psic_prog} and Theorem \ref{thm:main1}, increasing $M$ allows for a smaller state evolution estimate of the normalized MSE, ${\| \betavec^T - \betavec \|^2}/{L} $. On the other hand, the probability bounds in Theorems \ref{thm:main} and \ref{thm:main1} worsen with increasing $M$. 

With $\Lambda, \omega$ and $M$ fixed, increasing the code length $n$ (or, equivalently $L$) exponentially improves the probability bound in Theorem \ref{thm:main}. 

The per-iteration computational complexity of the AMP decoder is determined by the complexity of matrix-vector multiplications with the design matrix $\Amat$. This complexity is $O(nML)$ for Gaussian design matrices. For our empirical results in Section \ref{sec:emp_perf}, we use DFT-based design matrices which reduce the per-iteration complexity to $O(ML \log (ML) )$.

\subsection{Compressed sensing with spatially coupled design matrices}

In this section,  we establish a nonasymptotic result analogous to  Theorem \ref{thm:main1} for compressed sensing with a spatially coupled measurement matrix.
In compressed sensing, the goal is to estimate a vector $\betavec \in \reals^p$ from a linear measurement $\yvec = \Amat \betavec + \wvec$.  This model is similar to SPARC decoding, with the main difference being that the entries of the signal vector $\betavec$ are now assumed to be drawn from a generic prior $P_\beta$  rather than the section-wise structure of a SPARC message vector. 

 We consider a spatially coupled measurement matrix $\Amat \in \reals^{n \times p}$  defined via a base matrix $\Wmat \in \reals^{\Lr \times \Lc}$ by replacing each entry $W_{\sfr\sfc}$ by an $(n/\Lr) \times (p/\Lc)$ block with i.i.d.\  $\mc{N}(0, \frac{W_{\sfr \sfc}}{n/\Lr} )$ entries, for $\sfr \in[\Lr]$ and $\sfc \in [\Lc]$. 
  The AMP algorithm for spatially coupled compressed sensing has the same form as the one in \eqref{eq:scamp_decoder_z}-\eqref{eq:scamp_decoder_beta}, with the main difference being the denoising function  $\eta^t$ and the corresponding changes in the state evolution parameters.   Before describing these differences, we state the assumptions on the model:

\begin{enumerate}[(1)]
\item The components of the signal vector $\betavec \in \reals^p$ are i.i.d.\ with a sub-Gaussian distribution $P_\beta$. 

\item The denoising function $\eta^t: \reals^{p} \to \reals^{p}$ used in the AMP algorithm is separable, and its components $\eta_j^t: \reals \to \reals$ are Lipschitz continuous, for $j \in [p]$.

\item  As the signal dimension $p$ grows, the sampling ratio ${n}/{p}$ is constant and denoted by $\delta$.

\item The entries of the noise vector $\wvec \in \reals^n$ are i.i.d.\ Gaussian with zero mean and variance $\sigma^2$. 

\item The entries of the base matrix  $\Wmat \in \reals^{\Lr \times \Lc}$  are bounded below by a strictly positive constant, and for $\sfr \in [\Lr]$, the row sums satisfy $\frac{1}{2} \leq \sum_{\sfc=1}^{\Lc} W_{\sfr\sfc} \leq 2$. (The assumption on the row sums is made  to ensure that the definitions of the state evolution parameters are consistent with \cite{donoho2013}.)
\end{enumerate}

The state evolution recursion for the spatially coupled compressed sensing AMP is as follows.
Initialize $\psi_\sfc^{0} = 1$ for  $\sfc \in[\Lc]$, and for $t \geq 0$:
\begin{align}
\phi_\sfr^t &  = \sigma^2 + \frac{1}{\delta}\sum_{\sfc=1}^{\Lc}W_{\sfr \sfc}\psi_\sfc^t , \qquad \sfr \in [\Lr], \label{eq:cs_se_phi} \\
\psi_\sfc^{t+1} &= \expec \left\{ \Big[\beta - \eta^t\big( (\tau_\sfc^t)^{-1/2} \beta + G \big) \Big]^2 \right\}, \qquad \sfc \in [\Lc],  \label{eq:cs_psi_def}
\end{align}
where the expectation in \eqref{eq:cs_psi_def} is over the independent pair $\beta \sim P_\beta$ and  $G \sim \normal(0,1)$. Furthermore,  $ \tau_\sfc^t = \big( \sum_{\sfr=1}^{\Lr} W_{\sfr \sfc}/\phi_{\sfr}^t \big)^{-1}$ for $\sfc \in [\Lc]$. 

The entries of the vector $\upvec^t$ and the matrix $\Smat$ in the AMP algorithm are defined as in \eqref{eq:onsager_def} and \eqref{eq:olS_def}, respectively. If the prior $P_\beta$ is known,  the Bayes optimal choice for scalar denoising function $\eta^t$  is the MMSE estimator. Indeed, $\psi_\sfc^t$ is minimized by taking $\eta_j(s)= \expec[\beta \, | \, (\tau_\sfc^t)^{-1/2} \beta + G =s  ]$  for an index $j$ in column block $\sfc$.

\begin{thm}
Consider the spatially coupled compressed sensing model under the assumptions listed above.  For $t \geq 1$, the mean-squared error of the AMP satisfies 
\be
P\Big( \Big \lvert  \frac{\| \betavec^t - \betavec \|^2}{p}  - \frac{1}{\Lc} \sum_{\sfc \in[\Lc]} \psi^{t}_{\sfc} \Big \lvert \geq \epsilon \Big)  \leq K_t (\Lr \Lc)^{t} \exp 
\Big\{ \frac{- k_t n \e^2}{ [\Lr \, (\max_{\sfr, \sfc} W_{\sfr,\sfc}) ]^{2t}} \Big\}. 
\label{eq:cs_pr_bnd}
\ee
Here $K_{t}, k_t $ are positive constants that depend only on $t$. (These constants are not the same as the ones in Theorem \ref{thm:main1}).
\label{thm:sc_cs}
\end{thm}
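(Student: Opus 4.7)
The plan is to follow the same conditioning-plus-induction strategy used for Theorem \ref{thm:main1}, adapted to the compressed sensing setting where the signal has i.i.d.\ sub-Gaussian entries (instead of the SPARC section-wise structure) and the denoiser $\eta^t$ is separable and Lipschitz (instead of the softmax in \eqref{eq:eta_function}). I would also lean on the argument of \cite{javanmard2013state}, which established almost-sure convergence of the block-wise MSE to $\psi^t_\sfc$ in the large-system limit; the work here is to upgrade that qualitative convergence to the quantitative, non-asymptotic tail bound in \eqref{eq:cs_pr_bnd}.

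First, I would introduce block-rescaled iterates analogous to \eqref{eq:tildem_def}: write the AMP in terms of $\tmb{h}^t, \tmb{q}^t$ obtained by dividing each block of $\hvec^t := \Amat\betavec^t - (\text{Onsager})$ and $\qvec^t := \betavec^t - \betavec$ by the appropriate $\sqrt{W_{\sfr\sfc}/(n/\Lr)}$ factor, so that the block-wise rescaled design matrix has i.i.d.\ $\normal(0,1)$ entries.  With this normalization, a Bolthausen-style conditional distribution lemma (analog of Lemma \ref{lem:hb_cond}) expresses $\hvec^{t+1}$ conditionally on the $\sigma$-algebra of past AMP iterates as a deterministic linear combination of $\hvec^0,\ldots,\hvec^t$ plus a block-wise independent Gaussian term with variance profile dictated by $\Wmat$, with a symmetric statement for $\qvec^{t+1}$.

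Second, I would prove a non-asymptotic concentration lemma (analog of Lemma \ref{lem:main_lem}) by induction on $t$ for the block-wise scalar products
\be
\frac{\Lr}{n}\|\hvec^t_\sfr\|^2 \approx \phi^t_\sfr, \qquad \frac{\Lc}{p}\|\qvec^t_\sfc\|^2 \approx \psi^t_\sfc,
\ee
together with cross terms $\frac{\Lr}{n}\langle \hvec^s_\sfr, \hvec^t_\sfr\rangle$ and $\frac{\Lc}{p}\langle \qvec^s_\sfc, \qvec^t_\sfc\rangle$ for $s\le t$, as well as inner products between AMP iterates and $\betavec$.  The base case ($t=0$) uses Hoeffding-type concentration for sub-Gaussian sums.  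The inductive step inserts the conditional-distribution representation of Step~1: the new Gaussian component contributes concentration of quadratic forms (Hanson-Wright), and the coefficients in the linear combination are controlled by the inductive hypothesis applied to earlier iterates.  Lipschitz continuity of $\eta^t$ (assumption (2)) is the crucial replacement for the SPARC-specific bounds, and lets us transfer concentration from inputs $\tau_\sfc^{t\,-1/2}\betavec_\sfc + \Zvec_\sfc$ to outputs $\eta^t(\cdot)$ via a standard $\|f(X)-\expec f(X)\|_{\psi_2}\lesssim \mathrm{Lip}(f)\|X-\expec X\|_{\psi_2}$ argument, then yields the concentration of $\psi^{t+1}_\sfc$.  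A union bound over the $O((\Lr \Lc)^t)$ scalar products produced along the induction yields the prefactor $K_t(\Lr\Lc)^t$.

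Finally, summing the block-wise MSE concentrations via
\be
\frac{\|\betavec^t-\betavec\|^2}{p} \;=\; \frac{1}{\Lc}\sum_{\sfc\in[\Lc]} \frac{\Lc}{p}\|\qvec^t_\sfc\|^2
\ee
and a union bound over $\sfc \in [\Lc]$ (with $\e$ rescaled by a constant) gives \eqref{eq:cs_pr_bnd}.  The main obstacle is the same as in Lemma \ref{lem:main_lem}: carefully tracking how the concentration exponent degrades with $t$ and with the base-matrix parameters.  Each iteration multiplies the effective sub-Gaussian norm of the relevant functional by a factor proportional to $\Lr\cdot\max_{\sfr,\sfc} W_{\sfr\sfc}$ (the block-wise variance inflation), producing the $(\Lr \max W_{\sfr\sfc})^{2t}$ denominator in the exponent and the $t$-dependent constant $k_t$.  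The SPARC-specific technicalities that made \cite{RushV19} and Section \ref{sec:proof} delicate, arising from the growing section size $M$ and the non-Lipschitz softmax denoiser, are absent here, so the proof is genuinely simpler once the block-wise Bolthausen conditioning and induction are set up.
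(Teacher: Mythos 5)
Your proposal is correct and takes essentially the same route the paper indicates: the paper does not give a detailed proof of Theorem~\ref{thm:sc_cs}, instead stating that it follows the conditional-distribution-plus-induction argument of Theorem~\ref{thm:main1} (Lemmas~\ref{lem:hb_cond} and \ref{lem:main_lem}) with modifications ``along the lines of \cite{rush2018finite}'' for the i.i.d.\ sub-Gaussian prior and separable Lipschitz denoiser, which is exactly the block-rescaled Bolthausen conditioning, induction on block-wise scalar products, Lipschitz/sub-Gaussian transfer, and $(\Lr\Lc)^t$ union bound that you sketch.
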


The proof of Theorem~\ref{thm:sc_cs} is  similar to that of Theorem \ref{thm:main1}, with appropriate changes (along the lines of the result in \cite{rush2018finite}) to account for the fact that  $\betavec$ now has i.i.d.\ entries rather than the block-wise structure of a SPARC message vector.  A few remarks about Theorem \ref{thm:sc_cs} and the underlying assumptions:

\begin{compactenum}
\item Theorem \ref{thm:sc_cs}  refines the asymptotic result established in \cite{javanmard2013state} for spatially coupled design matrices, which showed that $\lim_{p \to \infty} \| \betavec^t - \betavec \|^2/p = \frac{1}{\Lc}\sum_{\sfc} \psi^{t}_{\sfc}$ almost surely. 

\item The assumption that the entries of the base matrix are lower bounded by a positive constant is due to a technical detail in the proof. For the same reason, we need $\rho$ to be strictly positive (rather than $0$)  in Theorems \ref{thm:main} and \ref{thm:main1}. 

\item The scaling of the base matrix entries (implied by the column sums condition in Assumption 5) differs from the one used for SPARCs in \eqref{eq:W_power_constraint} by a factor of order $\Lr$. Therefore,  the maximum entry in the equivalent of the $(\omega, \Lambda, \rho)$ base matrix for compressed sensing  would be  of order $1/\omega$, leading to an exponent of $ \frac{-k_t n \e^2}{(\Lr/\omega)^{2t}}$ in 
\eqref{eq:cs_pr_bnd}. Thus the probability bound in Theorem \ref{thm:sc_cs} is similar to that of Theorem \ref{thm:main1}.
\end{compactenum}


\section{Empirical performance of SC-SPARCs} \label{sec:emp_perf}

In this section, we investigate the finite length error performance of SC-SPARCs with AMP decoding via numerical simulations. We use the $(\omega, \Lambda, \rho=0)$ base matrix construction in all the simulations.

We would like to compare the performance of SC-SPARCs with that of standard coded modulation schemes such as LDPC codes with Quadrature Amplitude Modulation (QAM), which produce complex-valued symbols to be transmitted over the channel.
Therefore, in these simulations we consider the communication over the \emph{complex} AWGN channel, where the noise is circularly-symmetric complex Gaussian.  We use complex SC-SPARCs, which are defined as described in Section \ref{sec:sc_AMP}, except that the design matrix now has independent circularly-symmetric complex Gaussian entries instead of real-valued Gaussian entries. 
The AMP decoder for complex SC-SPARCs is similar to the one  in \eqref{eq:scamp_decoder_z}-\eqref{eq:scamp_decoder_beta}: we take $\Amat^*$ to be the conjugate transpose of $\Amat$, and  modify the definition of $\eta^t_j$ in \eqref{eq:eta_function} according to \eqref{eq:MMSE_etaj}. For additional details on complex SC-SPARCs and its AMP decoder, see \cite{hsieh2020modulated}.

\begin{figure*}[!t]
\centering
\includegraphics[width=3.1in]{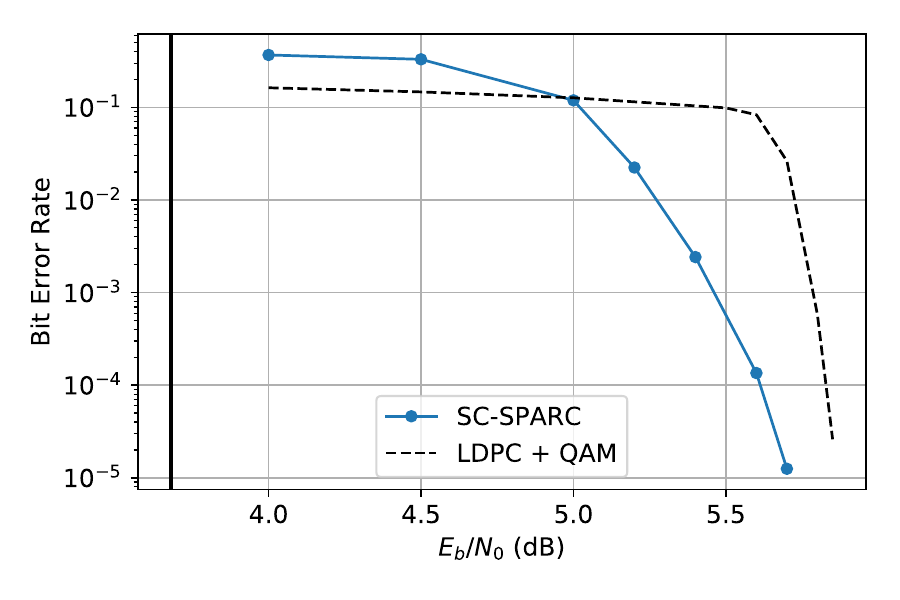}%
\hfil
\includegraphics[width=3.1in]{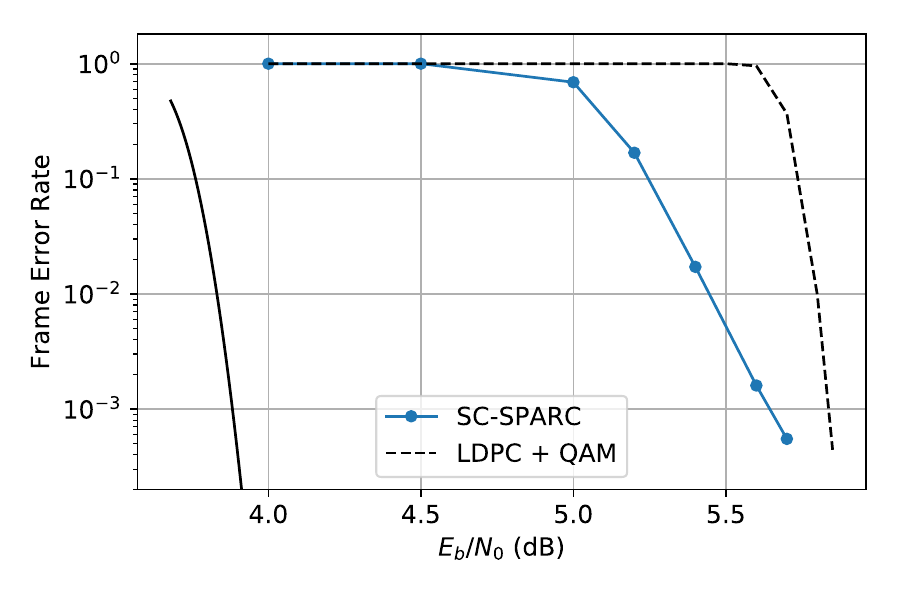}%
\caption{\small Error performance of complex SC-SPARCs defined via a $(\omega=4, \Lambda=32, \rho=0)$ base matrix. $R=1.5$ bits/dimension, $L=2944$, $\M=2048$, code length $n=10795$. The dashed lines show the performance of coded modulation: $(K=32400, N=64800)$ DVB-S2 LDPC + 64 QAM, frame length $=10800$ symbols. The solid black line in the BER plot is the AWGN Shannon limit for $R=1.5$ bits/dimension, and in the FER plot, it is the normal approximation to the AWGN finite length error probability bound in \cite{polyanskiy2010channel}.}
\label{fig:complex_R150}
\end{figure*}

\begin{figure*}[!t]
\centering
\includegraphics[width=3.0in]{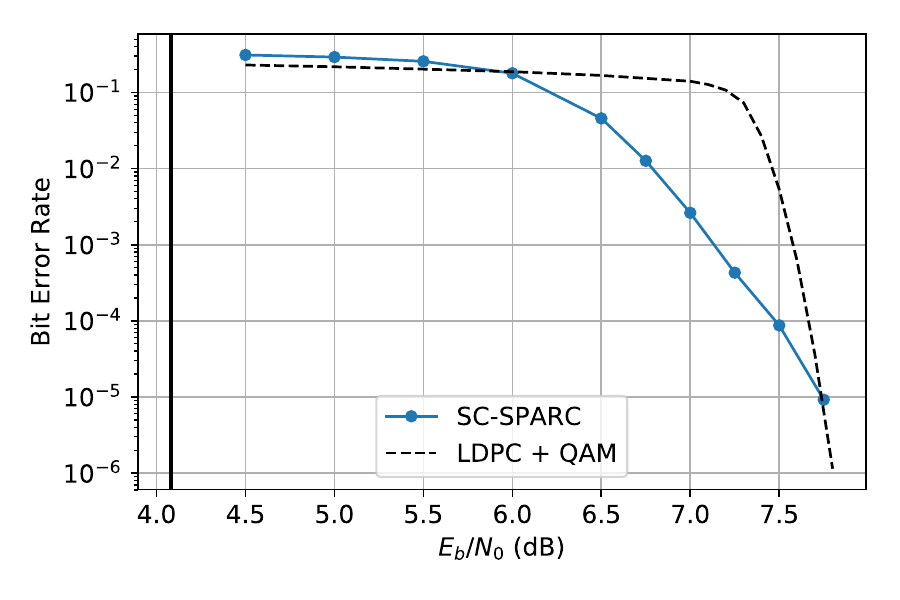}%
\hfil
\includegraphics[width=3.0in]{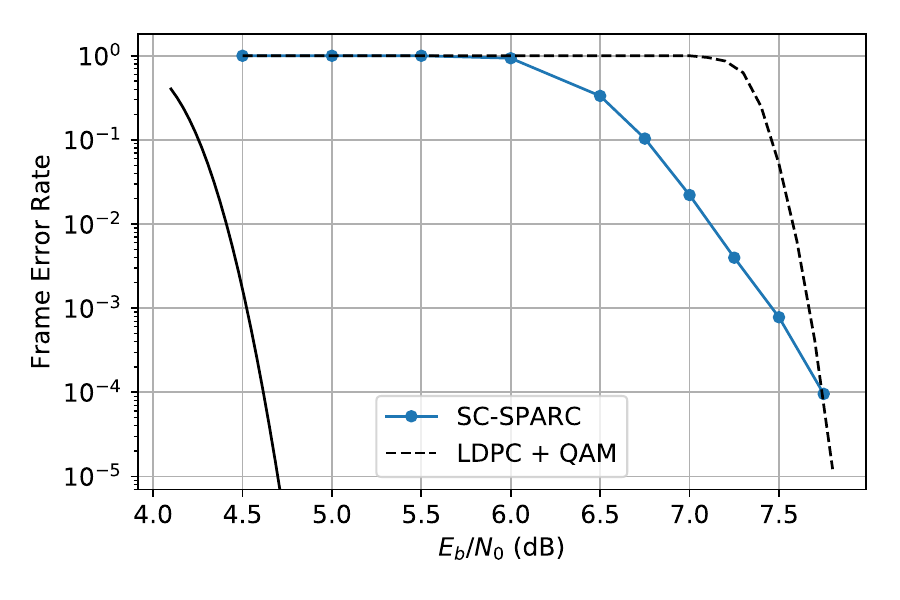}%
\caption{\small Error performance of complex SC-SPARCs defined via an $(\omega=6, \Lambda=32, \rho=0)$ base matrix. $R=1.6$ bits/dimension, $L=960$, $\M=128$, code length $n=2100$. The dashed lines show the performance of coded modulation: $(K=6480, N=16200)$ DVB-S2 LDPC + 256 QAM, frame length $=2025$ symbols. The solid black line in the BER plot is the AWGN Shannon limit for $R=1.6$ bits/dimension, and in the FER plot, it is the normal approximation to the AWGN finite length error probability bound in \cite{polyanskiy2010channel}.}
\label{fig:complex_R160}
\end{figure*}

\begin{figure*}[!t]
\centering
\includegraphics[width=3.0in]{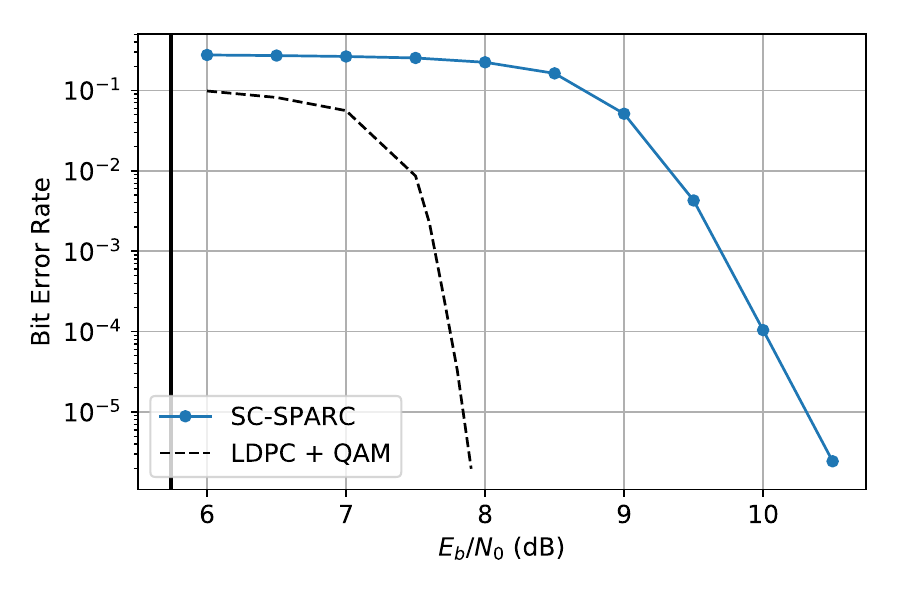}%
\hfil
\includegraphics[width=3.0in]{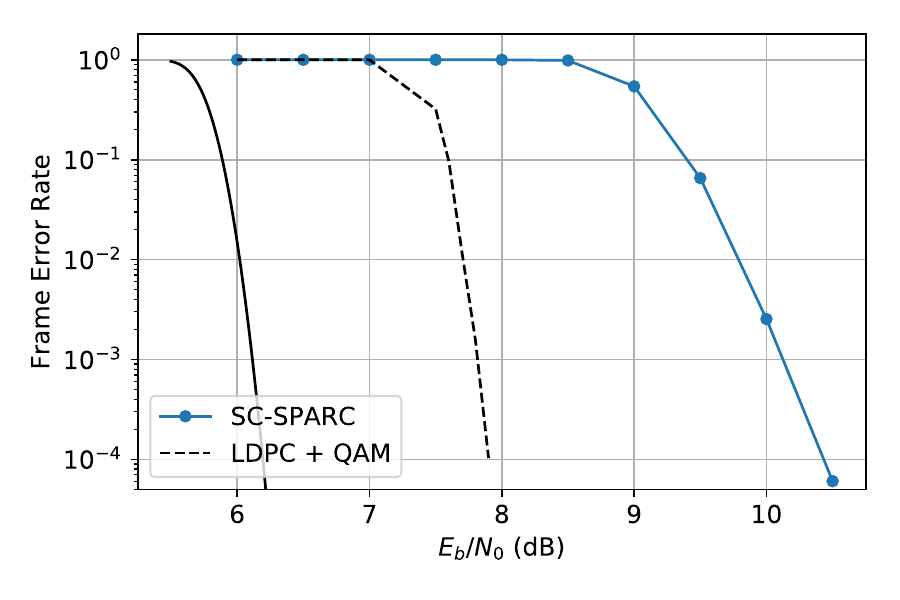}%
\caption{\small Error performance of complex SC-SPARCs defined via an $(\omega=6, \Lambda=32, \rho=0)$ base matrix. $R=2$ bits/dimension, $L=2688$, $\M=16$, code length $n=2688$. The dashed lines show the performance of coded modulation: $(K=10800, N=16200)$ DVB-S2 LDPC + 64 QAM, frame length $=2700$ symbols. The solid black line in the BER plot is the AWGN Shannon limit for $R=2$ bits/dimension, and in the FER plot, it is the normal approximation to the AWGN finite length error probability bound in \cite{polyanskiy2010channel}.}
\label{fig:complex_R200}
\end{figure*}

In  Figures \ref{fig:complex_R150}, \ref{fig:complex_R160}, and \ref{fig:complex_R200},
we provide numerical simulation results demonstrating the finite length error performance of complex SC-SPARCs with AMP decoding at different code rates and code lengths.
The error performance is evaluated using both the bit error rate (BER) and the frame error rate (FER). (The FER is the message/codeword error rate.)
We also simulate and plot the error performance of coded modulation schemes (LDPC + QAM) for reference using the AFF3CT toolbox \cite{cassagne2017fast}. The LDPC codes are chosen from the DVB-S2 standard and a belief propagation (BP) decoder is used which runs for 50 iterations. For fair comparison, in each figure, the frame length of the coded modulation scheme  is chosen to be close to the code length of the SC-SPARC.

Fig.\ \ref{fig:complex_R150} shows the performance of SC-SPARCs with rate $1.5$ bits/dimension and code length  $n = 10795$. The AMP decoder for the SC-SPARC is run for a maximum of 200 iterations (details in Sec.\ \ref{subsec:imp_details}). 
The coded modulation scheme uses a rate $\frac{1}{2}$ $(32400, 64800)$ DVB-S2 LDPC code with 64-QAM modulation, for the same overall rate of $1.5$ bits/dimension and a frame length of 10,800 symbols.
We observe that the SC-SPARC requires a smaller $E_b/N_0$ to achieve BERs in the range $10^{-1}$ to $10^{-5}$, and FERs down to $5\times 10^{-4}$ compared to the coded modulation scheme.
However, for much lower FERs, we expect the coded modulation scheme to require a smaller $E_b/N_0$ because its frame error rate drops faster as $E_b/N_0$ increases.

Fig.\ \ref{fig:complex_R160} shows the performance of an SC-SPARC with a shorter code length $n=2100$, and a rate of $1.6$ bits/dimension. The AMP decoder for the SC-SPARC is run for a maximum of 100 iterations.
The coded modulation scheme uses a rate $\frac{1}{2}$ (6480, 16200) DVB-S2 LDPC code with 256-QAM modulation,  for the same overall rate of $1.6$ bits/dimension and a frame length of 2025 symbols.
We observe that the SC-SPARC requires a smaller $E_b/N_0$ to achieve BERs in the range $10^{-1}$ to $10^{-5}$ and FERs down to $10^{-4}$ compared to the coded modulation scheme. However, for BERs and FERs lower than $10^{-5}$ and $10^{-4}$, respectively, we expect the coded modulation scheme to require a smaller $E_b/N_0$ because its error rate drops faster as $E_b/N_0$ increases.

In Fig.\ \ref{fig:complex_R200}, the rate of the SC-SPARC is $2$ bits/dimension and the code length is $n=2688$. The AMP decoder for the SC-SPARC is run for a maximum of 100 iterations.
The coded modulation scheme uses a rate $\frac{2}{3}$ (10800, 16200) DVB-S2 LDPC code with 64-QAM modulation, for the same overall rate of $2$ bits/dimension and a frame length of 2700 symbols.
We observe that the SC-SPARC has a higher BER and FER compared to the coded modulation scheme for all values of $E_b/N_0$, and its error rate also drops more slowly as $E_b/N_0$ increases. 

In the above plots, the SC-SPARC parameters $(\omega, \Lambda, L, M, n)$ have not been carefully optimized. An interesting direction for future work is to develop good finite length design guidelines for choosing these parameters as a function of rate and $\snr$. Another direction is to explore whether alternative base matrix designs could improve the finite length performance at higher rates like $2$ bits/dimension.


\subsection{Implementation details} \label{subsec:imp_details}
The $(\omega, \Lambda, \rho=0)$ base matrix was used for all the simulations. Furthermore, to reduce the  decoding complexity and the memory required,  a few modifications were made to the SC-SPARC construction and the AMP decoder:

\textit{1) DFT based design matrices}:
We replaced the complex Gaussian design matrix with  a Discrete Fourier Transform (DFT) based design matrix. This enables the matrix-vector multiplications in the AMP decoder \eqref{eq:scamp_decoder_z}--\eqref{eq:scamp_decoder_beta} to be computed via the Fast Fourier Transform (FFT), which significantly lowers the decoding complexity and memory requirement.
Our approach is similar to that of \cite{barbier2017, rush2017} where Hadamard based design matrices were used for real-valued SPARCs.

The computational complexity of the AMP decoder is dominated by the two matrix-vector multiplications associated with the design matrix $\Amat$. These operations have  complexity $O(nL\M)$ when $\Amat$ has independent Gaussian entries. The memory requirements of the encoder and decoder are also proportional to $nL\M$ since the Gaussian design matrix has to be stored.
By constructing the design matrix using randomly sampled rows and columns of the (deterministic) DFT matrix, the complexity of the matrix-vector multiplications (replaced by FFTs) is reduced to $O(L\M \log(L\M))$, and the memory requirements of the encoder and decoder are proportional to $\omega L \M $. The error performance of DFT based design matrices was found to be similar to that of Gaussian matrices for large matrix sizes.

\textit{2) Online estimation of state evolution parameters}:
The AMP decoder in \eqref{eq:scamp_decoder_z}--\eqref{eq:scamp_decoder_beta} contains parameters computed using the state evolution (SE) recursion \eqref{eq:se_phi}--\eqref{eq:E_tau}. In particular, the vector $\upvec^t \in \reals^n$, and the matrix $\overline{\Smat}^t \in \reals^{n \times ML}$ are determined via SE parameters computed offline.
Instead of computing the SE parameters offline, the SE parameters can be estimated online (at runtime) using the outputs of the AMP decoder in each iteration.
The SE parameters $\{\sigma_{\sfr}^t\}_{\sfr\in[\Lr]}$, $\{\phi_{\sfr}^{t}\}_{\sfr\in[\Lr]}$ and $\{\tau_{\sfc}^t\}_{\sfc\in[\Lc]}$, which are needed to compute  $\upvec^t$ and $\overline{\Smat}^t$ (see \eqref{eq:onsager_def} and \eqref{eq:olS_def}) can be estimated online in the following way.  For $\sfr \in[\Lr]$ and $\sfc \in[\Lc]$,
\begin{align}
	\widehat{\sigma}_\sfr^t &= \frac{1}{\Lc} \sum_{\sfc=1}^{\Lc} W_{\sfr\sfc} \bigg(1 - \frac{\|\betavec^t_{\sfc}\|^2}{L/\Lc}\bigg), \label{eq:sigma_r_est}\\
	\widehat{\phi}_\sfr^t \, &= \,
	\begin{cases}
		\sigma^2 + \widehat{\sigma}_\sfr^t \quad &\text{if the decoder knows } \sigma^2,\\
		\frac{\|\boldsymbol{z}_{\sfr}^t\|^2}{n/\Lr} \quad &\text{otherwise},
	\end{cases}\\
	\widehat{\tau}_\sfc^t &= \frac{L}{n} \left[\frac{1}{\Lr}\sum_{\sfr=1}^{\Lr}\frac{W_{\sfr \sfc}}{\widehat{\phi}_\sfr^t}\right]^{-1} \label{eq:tau_c_est}.
\end{align}

The justification for these estimates comes from Lemma \ref{lem:main_lem}, which proves that the estimates
$\widehat{\sigma}_\sfr^t, \widehat{\phi}_\sfr^t$ concentrate on $\sigma_\sfr^t, \phi_\sfr^t$, respectively (eqs.\ \eqref{eq:Bc} and \eqref{eq:Hc_adj}), for large $(n,L)$.
We observe that using  online estimates of the SE parameters results in a better error performance than using deterministic SE parameters. A similar improvement was observed in \cite{GreigV17} for power allocated SPARCs.

\textit{3) Early stopping of AMP}:
Since the empirical estimates of SE parameters in \eqref{eq:sigma_r_est}-\eqref{eq:tau_c_est} are estimates of certain noise variances related to the decoding error in each iteration of the AMP, we chose to stop the AMP decoder early if the change in $\widehat{\boldsymbol{\sigma}}^t, \widehat{\boldsymbol{\phi}}^t$  or $\widehat{\boldsymbol{\tau}}^t$ fell below a prescribed threshold over consecutive iterations.
A similar stopping criterion was used in \cite{GreigV17} to terminate the AMP decoder for power allocated SPARCs.

A Python implementation of SPARCs (both power allocated and spatially coupled) with AMP decoding is available at \cite{KuanPyScripts}.


\section{Proof of Theorem \ref{thm:main1}} \label{sec:proof}

The main ingredients in the proof of Theorem \ref{thm:main1} are two technical lemmas (Lemmas \ref{lem:hb_cond} and \ref{lem:main_lem}).  After laying down some definitions and notation, we give a brief overview of the proof in Section \ref{subsec:proof_overview}.  We then state the key technical  lemmas, and use them to prove Theorem \ref{thm:main1}.  For consistency with earlier analyses of AMP, we use notation similar to  \cite{bayati2011, rush2017, RushV19}, with modifications to account for the row- and column-block dependence induced due to spatial coupling.

\subsection{Definitions and Preliminaries}

Recall that $\betavec_0 \in \reals^{ML}$ is the message vector chosen by the transmitter,  $\wvec \in \reals^n$ is the channel noise vector, and the  AMP decoder is intialized with $\betavec^0= \mathbf{0}$ and $\zvec^0=\yvec$.  Throughout the proof, we use the notation
\be
\Mr := {n}/{\Lr}, \quad \Mc := {ML}/{\Lc}.
\label{eq:MrMcdef}
\ee

Define the  column vectors $\hvec^{t+1}, \brqvec^{t+1} \in \mathbb{R}^{ML}$ and $\bvec^t, \brmvec^t \in \mathbb{R}^{n}$ for $t \geq 1$ recursively as follows. Starting with the initial conditions 
\begin{equation}
\label{eq:vec_init0}
\begin{split} 
\brqvec^0   = - \betavec_0, &  \qquad \hvec^1 = \betavec_{0} -(\ol{\Smat}^0 \odot \Amat)^* \zvec^0,  \\
\brmvec^0 = - \zvec^0, &  \qquad \bvec^0  = \wvec - \zvec^0,
\end{split}
\end{equation}
for $t\geq 1$, the vectors $ \brqvec^{t}, \hvec^{t+1},  \brmvec^t,  \bvec^t$ are defined as 
\begin{equation}
\begin{split}
 \brqvec^{t}  = \betavec^t - \betavec_{0}, &  \qquad  \hvec^{t+1} := \betavec_{0} - \Big([(\ol{\Smat}^t \odot \Amat)^* \zvec^t] + \betavec^t \Big), \\
 \brmvec^t = -\zvec^t,  & 
 \qquad   \bvec^t = \wvec - \zvec^t,
\end{split}
\label{eq:hqbm_def}
\end{equation}
where $\ol{\Smat}^t \in \mathbb{R}^{n \times ML}$ is the matrix with entries defined in \eqref{eq:olS_def}. For notational convenience, we define the matrix $\Smat^t \in \reals^{\Lr \times \Lc}$ with entries 
\be
S_{\sfr \sfc}^t = {\tau_\sfc^t}/{\phi^t_\sfr}, \quad \sfr \in [\Lr], \ \sfc \in [\Lc].
\label{eq:Src_def1}
\ee

We define a modified design matrix $\sfAmat \in \reals^{n \times ML}$ having entries given by
\be
\label{eq:sfA_def}
{\sf{A}}_{ij} = \frac{A_{ij}}{\sqrt{W_{\sfr(i)\sfc(j)}}}, \qquad i \in [n], \ j \in [ML].
\ee
Since $A_{ij} \stackrel{\text{i.i.d.}}{\sim} \mc{N}(0, \frac{1}{L} W_{\sfr(i) \sfc(j)})$, the modified matrix has entries  $ {\sf{A}}_{ij}  \stackrel{\text{i.i.d.}}{\sim} \mc{N}(0, \frac{1}{L})$. We note that $\Amat = \sqrt{\widetilde{\Wmat}} \odot \sfAmat$ where $\widetilde{\Wmat} \in \mathbb{R}^{n \times ML}$ is the matrix with entries $\widetilde{W}_{ij} = W_{\sfr(i)\sfc(j)}$.

Using the definitions  \eqref{eq:vec_init0}--\eqref{eq:sfA_def} in the AMP update equations \eqref{eq:scamp_decoder_z}-\eqref{eq:scamp_decoder_beta}, we find that the following block-wise relationships are satisfied for $t \geq 0$:
\be
\begin{split}
&\bvec^{t}_{\sfr} - \frac{\sigma_\sfr^t}{\phi_\sfr^{t-1}}  \brmvec^{t-1}_{\sfr}  =  [(\sqrt{\widetilde{\Wmat}} \odot \sfAmat) \, \brqvec^{t}]_{\sfr} = \sum_{\sfc \in [\Lc]} \sqrt{W_{\sfr\sfc}} \, \sfAmat_{\sfr \sfc} \, \brqvec^{t}_{\sfc}, \qquad \text{ for } \sfr \in [\Lr],\\
&\hvec^{t+1}_{\sfc} + \brqvec^{t}_{\sfc}=  [(\ol{\Smat}^t \odot \sqrt{\widetilde{\Wmat}} \odot \sfAmat)^* \brmvec^t]_{\sfr} =  \sum_{\sfr \in [\Lr]}  S^{t}_{\sfr \sfc} \, \sqrt{W_{\sfr\sfc}} \, (\sfAmat_{\sfr \sfc})^* \brmvec^t_{\sfr}, \qquad \text{ for } \sfc \in [\Lc].
\label{eq:bmq_rowcol}
\end{split}
\end{equation}

We define complementary quantities for the $\brmvec^t$ and $\brqvec^{t}$ vectors that will be useful in the conditional distribution lemma that follows.   For $t \geq 0$ and $\sfr \in [\Lr]$ and $\sfc \in [\Lc]$, let 
\be
\madj^{t,\sfc} = \begin{bmatrix} 
S^t_{1 \sfc} \, \sqrt{W_{1 \sfc}} \, \brmvec^{t}_{\sfr = 1} \\
S^t_{2 \sfc} \, \sqrt{W_{2 \sfc}} \, \brmvec^{t}_{\sfr = 2} \\
\vdots \\ 
S^t_{\Lr \sfc} \, \sqrt{W_{\Lr \sfc}} \, \brmvec^{t}_{ \sfr = \Lr}
\end{bmatrix}\in \mathbb{R}^{n \times 1} \quad \text{ and } \quad \qadj^{t,\sfr} =
\begin{bmatrix} 
\sqrt{W_{\sfr 1}} \, \brqvec^{t}_{\sfc = 1} \\
\sqrt{W_{\sfr 2}} \, \brqvec^{t}_{\sfc = 2} \\
\vdots \\ 
\sqrt{W_{\sfr \Lc}} \,  \brqvec^{t}_{\sfc = \Lc}
\end{bmatrix}\in \mathbb{R}^{ML \times 1}.
\label{eq:tildem_def}
\ee
(Here  $\brmvec^{t}_{\sfr=1} \in \reals^{\Mr}$ refers to the first row block of $\brmvec^{t}$, and $\brqvec^{t}_{\sfc = 1}  \in \reals^{\Mc}$ to the first column block of $\brqvec^{t}$.)
A word about the notation before we proceed: when a  row- or column-block index ($\sfr$ or $\sfc$) appears as a subscript of a vector or a matrix (e.g., as in $\sfAmat_{\sfr \sfc}$), it denotes the corresponding block of that vector/matrix, but when  a  row- or column-block index  appears as a \emph{superscript} of a vector/matrix (e.g., $\madj^{t,\sfc}$ and  $\qadj^{t,\sfr}$), it denotes the dependence of the vector/matrix on that index.

Using the vectors defined in \eqref{eq:tildem_def}, we can rewrite \eqref{eq:bmq_rowcol} as
\be
\begin{split}
&\bvec^{t}_{\sfr} - \frac{\sigma_\sfr^t}{\phi_\sfr^{t-1}}  \brmvec^{t-1}_{\sfr}  = 
[ \sfAmat \, \qadj^{t, \sfr}]_{\sfr} = \sum_{\sfc \in [\Lc]} \sfAmat_{\sfr \sfc} \, \qadj^{t, \sfr}_{\sfc} \ \text{ for }   \sfr \in [\Lr], \\
&\hvec^{t+1}_{\sfc} + \brqvec^{t}_{\sfc} = 
[ \sfAmat^* \madj^{t, \sfc}]_{\sfc} = \sum_{\sfr \in [\Lr]} [\sfAmat_{\sfr \sfc}]^* \madj^{t, \sfc}_{\sfr} \ \text{ for } \sfc \in  [\Lc].
\label{eq:bmq_rowcol_v2}
\end{split}
\end{equation}
It will be useful to  write the equations in \eqref{eq:bmq_rowcol_v2} in matrix form. For this, we define the following matrices for $t \geq 1$:
\be
\label{eq:XYMSt}
\begin{split}
\Qmat_{t} & :=  [\brqvec^0 \mid \ldots \mid \brqvec^{t-1}]  \in \mathbb{R}^{ML \times t}, \\
\Hmat_{t} & := [\hvec^1 | \ldots | \hvec^{t}]  \in \mathbb{R}^{ML \times t},  \\
\Xmat_{t}   &  := [\hvec^1 +  \brqvec^0 \mid \hvec^2 +  \brqvec^1 \mid \ldots \mid \hvec^t + \brqvec^{t-1}]  \in \mathbb{R}^{ML \times t},    \\  
\Mmat_{t}  & := [\brmvec^0 \mid \ldots \mid \brmvec^{t-1} ] \in \mathbb{R}^{n \times t}, \\
\Bmat_{t} & := [\bvec^0 | \ldots | \bvec^{t-1}]  \in \mathbb{R}^{n \times t} \\
\Ymat_{t}  &  := [\bvec^0 \mid \bvec^1 -  \upvec^1 \odot \brmvec^0 \mid \ldots \mid \bvec^{t-1} - \upvec^{t-1} \odot \brmvec^{t-2}]  \in \mathbb{R}^{n \times t}, \\
  \Upmat^{\sfr}_{t} &  := \text{diag}\left(0, \frac{\sigma_\sfr^1}{\phi_\sfr^0} , \ldots, \frac{\sigma_\sfr^{t-1}}{\phi_\sfr^{t-2}} \right)  \in \mathbb{R}^{t \times t}.
\end{split}
\ee
In the equations above, the notation $[\avec_1 \mid \avec_2 \mid \ldots \mid \avec_k]$ is used to denote a matrix with columns $\avec_1, \ldots, \avec_k$. We also recall from \eqref{eq:onsager_def} that  $\upsilon^t_i =  {\sigma_\sfr^t}/{\phi_\sfr^{t-1}} $ for $t \geq 1$ if $i \in [n]$ is  in row block $\sfr$. For $t=0$, the matrices above are all defined as all-zeros. For $\sfc \in [\Lc]$, we define $\Qmat_{t, \sfc}, \Xmat_{t, \sfc}, \Hmat_{t, \sfc} \in \mathbb{R}^{\Mc \times t}$ to correspond only to rows $\sfc$ of the corresponding matrix.   We similarly define 
  $\Mmat_{t, \sfr}, \Bmat_{t, \sfr}, \Ymat_{t, \sfr}  \in \mathbb{R}^{\Mr \times t}$ for   $\sfr \in [\Lr]$.    Using these definitions we have 
\be
\Ymat_{t, \sfr}  = \Bmat_{t,\sfr} - [ \bzero | \Mmat_{t-1, \sfr}]  \Upmat^{\sfr}_{t},  \qquad
\Xmat_{t,\sfc} = \Hmat_{t,\sfc} + \Qmat_{t, \sfc}.
\label{eq:Ytr_Xtc}
\ee

Let $\Qadjmat_0^{\sfr}$ and $\Madjmat_0^{\sfc}$ be all-zero vectors.  For $t \geq 1$ and $\sfc \in [\Lc]$, $\sfr \in [\Lr]$, let
\be
\Madjmat_t^{\sfc} = [\madj^{0,\sfc}| \ldots | \madj^{t-1,\sfc}] \in \mathbb{R}^{n \times t}  \quad \text{ and } \quad \Qadjmat_t^{\sfr} = [\qadj^{0,\sfr}| \ldots | \qadj^{t-1,\sfr}] \in \mathbb{R}^{ML \times t}.
\label{eq:tildeM_def}
\ee
With this notation, we can compactly write \eqref{eq:bmq_rowcol_v2} for $t \geq 1$ as
\begin{align}
\big[\sfAmat \Qadjmat_t^{\sfr}\big]_{(\sfr, \cdot)} = \Ymat_{t, \sfr}, \ \ \sfr \in [\Lr],  \qquad \text{ and } \qquad 
\big[\sfAmat^* \Madjmat_{t}^{\sfc}\big]_{(\sfc, \cdot)}  = \Xmat_{t, \sfc}, \ \ \sfc \in [\Lc].
\label{eq:Alin_consts}
\end{align}
Here the subscript ${(\sfr, \cdot)}$ on a matrix denotes the $\sfr$th block of rows of the matrix. 

We use the notation $\madj^{t, \sfc}_{\|}$ and $\qadj^{t,\sfr}_{\|}$ to denote the projection of $\madj^{t, \sfc}$ and $\qadj^{t,\sfr}$ onto the column space of $\Madjmat^{\sfc}_{t}$ and $\Qadjmat^{\sfr}_{t}$, respectively. 
Let
 \be 
\alphavec^{t,\sfc} := (\alpha^{t,\sfc}_{0}, \ldots,  \alpha^{t,\sfc}_{t-1})^*, \qquad  \gammavec^{t,\sfr} :=  (\gamma^{t,\sfr}_{0}, \ldots, \gamma^{t,\sfr}_{t-1})^* 
 \label{eq:vec_alph_gam_conc}
 \ee 
 be the coefficient vectors of these projections, i.e.,
 \be
\madj^{t, \sfc}_{\|} := \sum_{i=0}^{t-1} \alpha^{t,\sfc}_{i} \madj^{i,\sfc}, \qquad  \qadj^{t,\sfr}_{\|} := \sum_{i=0}^{t-1} \gamma^{t,\sfr}_{i} \qadj^{i,\sfr}.
 \label{eq:mtqt_par}
 \ee
Writing $\proj^{\parallel}_{\Madjmat^{\sfc}_{t}}$ for the orthogonal projection matrix onto the column space of $\Madjmat^{\sfc}_{t}$, we have $\madj^{t, \sfc}_{\|} = \proj^{\parallel}_{\Madjmat^{\sfc}_{t}}\madj^{t, \sfc} = \Madjmat^{\sfc}_{t}((\Madjmat^{\sfc}_{t})^*\Madjmat^{\sfc}_{t})^{-1} (\Madjmat^{\sfc}_{t})^* \madj^{t, \sfc}$ and so $\alphavec^{t,\sfc} = ((\Madjmat^{\sfc}_{t})^*\Madjmat^{\sfc}_{t})^{-1} (\Madjmat^{\sfc}_{t})^* \madj^{t, \sfc}$.
 (If the columns of $\Madjmat^{\sfc}_{t}$ are linearly dependent,  $((\Madjmat^{\sfc}_{t})^*\Madjmat^{\sfc}_{t})^{-1}(\Madjmat^{\sfc}_{t})^*$ is interpreted as the pseudoinverse of $\Madjmat^{\sfc}_{t}$.)
 We can similarly write $\gammavec^{t,\sfr} = ((\Qadjmat^{\sfr}_{t})^*\Qadjmat^{\sfr}_{t})^{-1} (\Qadjmat^{\sfr}_{t})^* \qadj^{t,\sfr}$.   The projections of $\madj^{t, \sfc}$ and $\qadj^{t,\sfr}$ onto the orthogonal complements of $\Madjmat^{\sfc}_{t}$ and $\Qadjmat^{\sfr}_{t}$, respectively,  are denoted by
 \be
 \madj^{t, \sfc}_{\perp} := \madj^{t, \sfc} - \madj^{t, \sfc}_{\|}, \qquad   \qadj^{t,\sfr}_{ \perp} :=  \qadj^{t,\sfr} -  \qadj^{t,\sfr}_{\|}.
  \label{eq:mtqt_perp}
 \ee
 
In Lemma \ref{lem:main_lem}, we show that  the entries of $\alphavec^{t,\sfc}$ and $\gammavec^{t,\sfr}$ concentrate around constants. We now specify these constants. For $\sfc \in [\Lc]$, $\sfr \in [\Lr]$, define matrices $\tCmat^{t,\sfr}, \brCmat^{t,\sfc} \in \mathbb{R}^{t \times t}$ for $t \geq 1$ such that
\be
\widetilde{C}^{t,\sfr}_{i+1,j+1} = \sigma^{\max(i,j)}_{\sfr}, \quad \text{ and } \quad \breve{C}^{t,\sfc}_{i+1,j+1} = \frac{n}{L} \tau^{\max(i, j)}_{\sfc}, \quad 0\leq i,j \leq t-1. 
\label{eq:Ct_def}
\ee
The concentrating values for $\gammavec^{t,\sfr}$ and $\alphavec^{t,\sfc}$ are
\be
\begin{split}
\widehat{\gammavec}^{t,\sfr} &:= \sigma^{t}_{\sfr} (\tCmat^{t,\sfr})^{-1}  (1, \ldots, 1)^* \stackrel{(a)}{=} (0,\ldots, 0, \sigma^t_{\sfr}/\sigma^{t-1}_{\sfr})^* \in \mathbb{R}^t, \\
\widehat{\alphavec}^{t,\sfc} &:=  \frac{n}{L} \tau^{t}_{\sfc} (\brCmat^{t,\sfc})^{-1}  (1, \ldots, 1)^* \stackrel{(b)}{=} (0,\ldots, 0, \tau^t_{\sfc}/\tau^{t-1}_{\sfc})^*  \in \mathbb{R}^t.
\label{eq:hatalph_hatgam_def}
\end{split}
\ee
To see that $(a)$ holds,  we observe that $(\tCmat^{t,\sfr})^{-1} \tCmat^{t,\sfr} = \iden_{t}$ implies that $(\tCmat^{t,\sfr})^{-1}  (\sigma^{t-1}_{\sfr}, \ldots, \sigma^{t-1}_{\sfr})^*  = (0, \ldots, 0, 1)^* \in \mathbb{R}^t$. The equality $(b)$ is obtained similarly.

Let $\sigma^{0}_{\perp, \sfr} := \sigma^{0}_{\sfr}$ and $\tau^{0}_{\perp, \sfc} :=  \tau^{0}_{\sfc}$, and for $t \geq 1$ define 
\be
\begin{split}
& \sigma_{\perp, \sfr}^{t} :=  \sigma_{\sfr}^t \Big(1 - \frac{\sigma^t_{\sfr}}{\sigma^{t-1}_{\sfr}} \Big), \quad \text{ and } \quad \tau^{t}_{\perp, \sfc} :=  \tau^{t}_{\sfc} \Big(1 - \frac{\tau^{t}_{\sfc}}{\tau^{t-1}_{\sfc}} \Big).
\label{eq:sigperp_defs}
\end{split}
\ee

\begin{lem} \label{lem:sigmatperp}
Under the assumptions of Prop. \ref{prop:nonasymp_se}, for sufficiently large $M$, the constants $ \sigma_{\perp, \sfr}^{t}$ and $\frac{n}{L} \tau^{t}_{\perp, \sfc}$ are bounded below  for $0 \leq k < T$: 
\be
\sigma_{\perp, \sfr}^{t} \geq \ol{C}_1\left( \frac{\omega}{\Lambda}\right)^2,  \  \ \sfr \in [\Lr], \qquad 
\tau_{\perp, \sfc}^{t} \geq \ol{C}_2\left( \frac{\omega}{\Lambda}\right), \ \  \sfc \in [\Lc], 
\ee
where
\be
\ol{C}_1 = \left( 1+ \frac{1}{\vth \snr} \right)^2 \frac{P\rho^2}{\vth \snr^2} \, \Delta^2, \qquad 
\ol{C}_2 = \frac{\rho ( 1 + \vth \snr )}{\vth R}{\Delta},
\ee
where $\Delta$ is the rate gap defined in \eqref{eq:Delta}. 
\end{lem}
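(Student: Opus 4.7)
My plan is to prove the two lower bounds by a direct calculation that combines the monotonicity of the state evolution recursion with the quantitative decoding progression from Proposition~\ref{prop:nonasymp_se}. The first step is to show, by induction on $t$ using \eqref{eq:se_phi}--\eqref{eq:E_tau} together with the fact that $\mc{E}(\tau)$ in \eqref{eq:E_tau} is nonincreasing in $\tau$, that the sequences $\psi_\sfc^t$, $\sigma_\sfr^t$, $\phi_\sfr^t$, $\tau_\sfc^t$ are all nonincreasing in $t$. This monotonicity lets me rewrite
\[
\sigma_{\perp,\sfr}^t = \frac{\sigma_\sfr^t}{\sigma_\sfr^{t-1}}\bigl(\sigma_\sfr^{t-1}-\sigma_\sfr^t\bigr),\qquad \tau_{\perp,\sfc}^t = \frac{\tau_\sfc^t}{\tau_\sfc^{t-1}}\bigl(\tau_\sfc^{t-1}-\tau_\sfc^t\bigr),
\]
with all four factors nonnegative, so the task reduces to separately lower bounding a ratio (strictly less than $1$ but bounded away from $0$) and a gap (strictly positive with controlled magnitude) in each expression.

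For the base case $t=0$, $\sigma_{\perp,\sfr}^0 = \sigma_\sfr^0 \ge \rho P\vth (\Lambda-1)/\Lambda$ and $\tau_{\perp,\sfc}^0 = \tau_\sfc^0$ is bounded below using $\phi_\sfr^0 \le \sigma^2(1+\vth\snr)$, both direct consequences of the $(\omega,\Lambda,\rho)$ formula \eqref{eq:W_rc} with $\rho=\Theta(\Delta/\snr)$, and both easily exceed the claimed bounds. For $1 \le t < T$, I would invoke the main consequence of Proposition \ref{prop:nonasymp_se}: at least $2\lfloor g\rfloor$ additional column blocks (where $g=\Theta(\omega\Delta/\snr)$) cross the decoding threshold between iterations $t-1$ and $t$, while the lower bound in Lemma \ref{lem:psi_nonasymp} gives $\psi_\sfc^{t-1} \ge 1-M^{-k_1\tilde\delta^2}$ for any column that is still undecoded at iteration $t-1$. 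With $W_{\min}=\rho P\vth\Lambda/(\Lambda-1)$ this yields
\[
\sigma_\sfr^{t-1}-\sigma_\sfr^t = \frac{1}{\Lambda}\sum_\sfc W_{\sfr\sfc}\bigl(\psi_\sfc^{t-1}-\psi_\sfc^t\bigr) \ge \frac{2 W_{\min}\lfloor g\rfloor}{\Lambda}\bigl(1-2f_{M,\delta}\bigr),
\]
producing the $\omega/\Lambda$ scaling of the gap. Since $t<T$, at least one column remains undecoded at iteration $t$, so $\sigma_\sfr^t \gtrsim W_{\min}/\Lambda$, and combined with $\sigma_\sfr^{t-1} \le \vth P$ this gives $\sigma_\sfr^t/\sigma_\sfr^{t-1} \ge c\rho/\Lambda$. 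Multiplying the two bounds yields the $(\omega/\Lambda)^2$ scaling in the claimed lower bound on $\sigma_{\perp,\sfr}^t$.

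The argument for $\tau_{\perp,\sfc}^t$ runs in parallel. Starting from the identity
\[
(\tau_\sfc^t)^{-1}-(\tau_\sfc^{t-1})^{-1} = \frac{\ln M}{R\Lr}\sum_\sfr W_{\sfr\sfc}\cdot\frac{\sigma_\sfr^{t-1}-\sigma_\sfr^t}{\phi_\sfr^t\phi_\sfr^{t-1}},
\]
I would substitute the bound on $\sigma_\sfr^{t-1}-\sigma_\sfr^t$ just obtained together with $\phi_\sfr^t \le \sigma^2(1+\vth\snr)$, then convert back via $\tau_\sfc^{t-1}-\tau_\sfc^t = \tau_\sfc^t\tau_\sfc^{t-1}[(\tau_\sfc^t)^{-1}-(\tau_\sfc^{t-1})^{-1}]$ to obtain the $\omega/\Lambda$ scaling, while the ratio $\tau_\sfc^t/\tau_\sfc^{t-1}$ is bounded below by a constant depending only on $\snr$ and $\vth$ using monotonicity. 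The main obstacle lies not in the scaling (which is clear from the steps above) but in the careful bookkeeping required to produce the \emph{exact} prefactors $\ol{C}_1=(1+1/(\vth\snr))^2 P\rho^2\Delta^2/(\vth\snr^2)$ and $\ol{C}_2=\rho(1+\vth\snr)\Delta/(\vth R)$ claimed in the statement: these factors trace back to tracking ratios of the form $\phi_\sfr^t/\phi_\sfr^{t-1}$ at the extremes $\sigma^2$ and $\sigma^2(1+\vth\snr)$, and additionally handling the boundary rows and columns of the base matrix whose row/column sums deviate from those of generic middle rows.
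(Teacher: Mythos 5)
The overall architecture is the same as the paper's: both proofs use monotonicity to decompose $\sigma_{\perp,\sfr}^t$ as ratio times gap, and both bound the gap $\sigma_\sfr^{t-1}-\sigma_\sfr^t$ by counting, via Proposition~\ref{prop:nonasymp_se} and Lemma~\ref{lem:psi_nonasymp}, the $\Theta(g)=\Theta(\omega\Delta)$ columns whose $\psi_\sfc$ drops from near $1$ to near $0$ in one iteration. The decisive difference is in your lower bound on $\sigma_\sfr^t$, and there your argument has a genuine gap.

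You lower bound $\sigma_\sfr^t$ by ``at least one column remains undecoded,'' giving $\sigma_\sfr^t\gtrsim W_{\min}/\Lambda=\Theta(\rho/\Lambda)$, and hence $\sigma_\sfr^t/\sigma_\sfr^{t-1}\gtrsim \rho/\Lambda$. But your gap bound is $\sigma_\sfr^{t-1}-\sigma_\sfr^t\gtrsim 2W_{\min}\lfloor g\rfloor/\Lambda=\Theta(\rho\,\omega\Delta/\Lambda)$, so the product is only $\Theta(\rho^2\omega/\Lambda^2)$ and does \emph{not} yield $(\omega/\Lambda)^2$, contradicting your final sentence. The missing $\omega$ comes from underestimating $\sigma_\sfr^t$: at iteration $t<T$ there are not just ``at least one'' but at least $\Theta(g)=\Theta(\omega\Delta)$ columns still undecoded (namely all those that will first cross the threshold at iteration $t+1$ or later), each contributing $\gtrsim W_{\min}/\Lambda$, so $\sigma_\sfr^t\gtrsim g\,W_{\min}/\Lambda=\Theta(\rho\,\omega\Delta/\Lambda)$. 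This is exactly what the paper extracts by iterating its gap inequality: $\sigma_\sfr^t\geq \sigma_\sfr^t-\sigma_\sfr^{t+1}\geq C_1\omega/\Lambda$. Plugging the improved lower bound into your ratio gives $\sigma_\sfr^t/\sigma_\sfr^{t-1}=\Theta(\rho\omega/\Lambda)$, and then the product does match the claimed $(\omega/\Lambda)^2$ scaling with the advertised $\rho^2\Delta^2$ prefactor. Your treatment of $\tau_{\perp,\sfc}^t$ through $1/\tau_\sfc^t$ is fine as a route (it is a reparametrization of the paper's $\nu_\sfc^t$ argument) and would produce the right $\omega/\Lambda$ scaling, since there only one application of the gap is needed.
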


\begin{proof}
In Appendix \ref{app:sigmatperp}.
\end{proof}

\begin{lem} \label{lem:Ct_invert}
If the $ \sigma_{\perp, \sfr}^{t}$ and $\frac{n}{L} \tau^{t}_{\perp, \sfc}$ are bounded below by some positive constants for $0 \leq k < T$, then the matrices $\tCmat^{k,\sfr}$ and $\brCmat^{k,\sfc}$ defined in \eqref{eq:Ct_def} are invertible for $1 \leq k \leq T$. 
\end{lem}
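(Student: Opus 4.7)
The plan is to prove invertibility by direct computation of the determinant, exploiting the staircase structure of $\tCmat^{k,\sfr}$ and $\brCmat^{k,\sfc}$ to reduce each matrix to lower-triangular form via elementary row operations. Writing out the entries explicitly, row $i$ of $\tCmat^{k,\sfr}$ (for $i \in [k]$) equals $(\sigma_\sfr^{i-1}, \ldots, \sigma_\sfr^{i-1}, \sigma_\sfr^{i}, \sigma_\sfr^{i+1}, \ldots, \sigma_\sfr^{k-1})$ where the first $i$ entries are all $\sigma_\sfr^{i-1}$. The key observation is that performing the sequence of row operations $\text{Row}_i \leftarrow \text{Row}_i - \text{Row}_{i+1}$ for $i = 1, \ldots, k-1$ (which leaves the determinant unchanged) yields a matrix in which the new row $i$ equals $(\sigma_\sfr^{i-1} - \sigma_\sfr^i)\,(\mathbf{1}, \ldots, \mathbf{1}, 0, \ldots, 0)$, with the nonzero block being the first $i$ entries; the last row is unchanged and equals $(\sigma_\sfr^{k-1}, \ldots, \sigma_\sfr^{k-1})$. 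This resulting matrix is lower triangular.

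Reading off the diagonal gives
\[
\det \tCmat^{k,\sfr} \;=\; \sigma_\sfr^{k-1} \prod_{i=1}^{k-1} \bigl(\sigma_\sfr^{i-1} - \sigma_\sfr^i\bigr),
\]
and an analogous computation yields $\det \brCmat^{k,\sfc} = (n/L)^k \, \tau_\sfc^{k-1} \prod_{i=1}^{k-1} (\tau_\sfc^{i-1} - \tau_\sfc^i)$. It remains to verify that every factor is nonzero under the hypothesis. Recalling from \eqref{eq:sigperp_defs} that $\sigma_{\perp,\sfr}^{j} = \sigma_\sfr^j(1 - \sigma_\sfr^j/\sigma_\sfr^{j-1})$ for $j \geq 1$ and $\sigma_{\perp,\sfr}^0 = \sigma_\sfr^0$, the assumption $\sigma_{\perp,\sfr}^{j} > 0$ for $0 \leq j < T$, together with the nonnegativity of the $\sigma_\sfr^j$ (which follows from \eqref{eq:se_phi} since each $\sigma_\sfr^j$ is a nonnegative combination of the $\psi_\sfc^j \in [0,1]$), forces $\sigma_\sfr^{j-1} > \sigma_\sfr^j > 0$ for every $1 \leq j < T$. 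Hence all differences $\sigma_\sfr^{i-1} - \sigma_\sfr^i$ are strictly positive and $\sigma_\sfr^{k-1} > 0$, so $\det \tCmat^{k,\sfr} > 0$ for $1 \leq k \leq T$. The identical argument applied to $\{\tau_\sfc^j\}$ using the positivity of $\tau_{\perp,\sfc}^j$ establishes $\det \brCmat^{k,\sfc} > 0$, completing the proof.

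I do not anticipate any real obstacle here: the structure of the matrix makes the determinant essentially a telescoping product, and the sign/positivity hypothesis on the $\sigma_\perp$ and $\tau_\perp$ constants is exactly tailored to make each factor nonzero. The only mildly nontrivial step is justifying that $\sigma_\sfr^j \geq 0$ (needed to conclude $\sigma_\sfr^j > 0$ and $\sigma_\sfr^j < \sigma_\sfr^{j-1}$ separately from their product being positive), but this is immediate from the state evolution definition \eqref{eq:se_phi}.
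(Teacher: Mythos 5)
Your proof is correct. The paper does not supply its own argument here but simply cites \cite[Lemma 2]{rush2018finite}, so there is no internal proof to compare against. Your approach is a clean, self-contained alternative: you exploit the staircase structure $\widetilde{C}^{k,\sfr}_{i+1,j+1} = \sigma^{\max(i,j)}_\sfr$ to reduce the matrix to lower-triangular form with the single pass $\text{Row}_i \leftarrow \text{Row}_i - \text{Row}_{i+1}$, read off $\det \tCmat^{k,\sfr} = \sigma_\sfr^{k-1}\prod_{i=1}^{k-1}(\sigma_\sfr^{i-1}-\sigma_\sfr^i)$, and then observe that the sign hypothesis on the $\sigma^t_{\perp,\sfr}$, combined with the nonnegativity of $\sigma^t_\sfr$ guaranteed by the state evolution recursion \eqref{eq:se_phi}, forces each factor to be strictly positive; the same telescoping argument handles $\brCmat^{k,\sfc}$ via the $(n/L)^k$ scaling. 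The one step you correctly flag as requiring care — passing from $\sigma^t_\sfr(1 - \sigma^t_\sfr/\sigma^{t-1}_\sfr) > 0$ to the separate conclusions $\sigma^t_\sfr > 0$ and $\sigma^{t-1}_\sfr > \sigma^t_\sfr$ — is handled properly by first establishing $\sigma^t_\sfr \geq 0$ from \eqref{eq:se_phi}. This is more explicit and arguably more transparent than merely pointing to the external lemma, which makes it a genuine contribution even though the mathematical content is equivalent.
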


\begin{proof}
The proof can be found in \cite[Lemma 2]{rush2018finite}.
\end{proof}

We will use the following notation.  Given two random vectors $\xvec_1, \xvec_2$ and a sigma-algebra $\mscrs$, $\xvec_1 |_\mscrs \stackrel{d}{=} \xvec_2$ denotes that the conditional distribution of $\xvec_1$  given $\mscrs$ equals the distribution of $\xvec_2$. 

\subsection{Outline of the proof of Theorem \ref{thm:main1}} \label{subsec:proof_overview}

Theorem  \ref{thm:main1} gives  a concentration inequality for the normalized MSE of the AMP decoder, which  for iteration $(t+1)$ can be written as
\be
 \frac{1}{L} \| \betavec^{t+1} - \betavec \|^2= \frac{1}{L} \sum_{\sfc \in [\Lc]} \| \eta^{t}_\sfc( \betavec_\sfc - \hvec^{t+1}_\sfc) -\betavec_\sfc \|^2,
 \label{eq:MSE_rewrite}
\ee
where we recall that $\hvec^{t+1}_\sfc$ is the $\sfc$th column block of $\hvec^{t+1}  \in \reals^{ML}$. (We choose iteration $(t+1)$ rather than $t$ for notational convenience.) 

The proof of the theorem is based on showing that for $t \geq 0$, the vector $\hvec^{t+1}_\sfc$ is approximately Gaussian, for $\sfc \in [\Lc]$.
In particular, we show that  $\hvec^{t+1}_{\sfc}$ is approximately distributed as  $\sqrt{\tau^{t}_\sfc} \widetilde{\Z}_{t, {\sfc}}$, where $\widetilde{\Z}_{t, \sfc}$ is standard Gaussian and independent across $\sfc \in [\Lc]$.  If we assume that $\hvec^{t+1}_{\sfc}$ is \emph{exactly} distributed as $\sqrt{\tau^{t}_\sfc} \widetilde{\Z}_{t, {\sfc}}$ for  $\sfc \in [\Lc]$, then obtaining a concentration inequality for the MSE in \eqref{eq:MSE_rewrite} is straightforward. Indeed, for a fixed $\betavec$, the MSE $\frac{1}{L} \sum_{\sfc} \| \eta^{t}_\sfc( \betavec_\sfc - \hvec^{t+1}_\sfc) -\betavec_\sfc \|^2$ is a bounded and Lipschitz function of $\hvec^{t+1}$.  Therefore, if $\hvec^{t+1}$ were Gaussian, one could obtain a concentration inequality for the MSE of the AMP decoder via standard Gaussian concentration results \cite{BLMconcbook}. The bulk of the technical work is in precisely quantifying and controlling the deviation from Gaussianity of the vectors $\hvec^{t+1}$, for $t \geq 0$.

To study the distribution of $\hvec^{t+1}$, we use the recursion in \eqref{eq:bmq_rowcol_v2}, or equivalently, the matrix version in \eqref{eq:Alin_consts}. Note that  \eqref{eq:bmq_rowcol_v2} is a restatement of the dynamics of the AMP algorithm, although AMP cannot be run this way in practice (since  it is initialized with $\brqvec^{0}= -\betavec_0$ which is unknown).  The first ingredient in the proof is Lemma \ref{lem:hb_cond}, which specifies the conditional distribution of $\bvec^t_\sfr$ and $\hvec^{t+1}_\sfc$ given the past iterates of the algorithm in \eqref{eq:bmq_rowcol_v2}, for $\sfr \in [\Lr], \sfc \in [\Lc]$. More precisely, for $t\geq 0$, the lemma specifies the conditional distribution of    $\bvec^t_\sfr \lvert_{\mscrs_{t, t}}$ and $\hvec^{t+1}_\sfc \lvert_{\mscrs_{t+1, t}}$, where $\mscrs_{t_a, t}$ is the sigma-algebra generated by the collection of vectors
\be 
\bvec^0, ..., \bvec^{t_a -1}, \brmvec^0, ..., \brmvec^{t_a - 1}, \hvec^1, ..., \hvec^{t}, \brqvec^0, ..., \brqvec^{t}, \text{ and }  \betavec_0, \wvec. \label{eq:sig_tat} 
\ee
Lemmas \ref{lem:hb_cond} and \ref{lem:ideal_cond_dist} together show that the conditional distributions have the following form:
\begin{align}
 \bvec^t_{\sfr} \lvert_{\mscrs_{t, t}}& \stackrel{d}{=} 
\sum_{i=0}^{t} \frac{ \sigma^{t}_{\sfr} }{\sigma^{i}_{\sfr}} \left( \sqrt{\sigma^i_{\perp, \sfr}} \Zprime_{i, \sfr}  +  \Dvec_{i, i, \sfr} \right),  \quad \sfr \in [\Lr], \label{eq:bvec_dist} \\
\hvec^{t+1}_{\sfc} \lvert_{\mscrs_{t+1, t}} & \stackrel{d}{=} \sum_{i=0}^{t} \frac{\tau^{t}_{\sfc} }{\tau^{i}_\sfc} \left( \sqrt{\tau^i_{\perp, \sfc}}  \Z_{i, \sfc} + \Dvec_{i+1, i, \sfc}  \right), \quad \sfc \in [\Lc]. \label{eq:hvec_dist}
\end{align}
Here,  $\Zprime_{i} = [\Z_{i, \sf{1}}, \ldots,  \Z_{i, \sf{\Lr}}]^* \sim \mathcal{N}(0, \iden_{n})$ is independent of the sigma algebra $\mscrs_{i,i}$, and  $\Z_{i} = [\Z_{i, \sf{1}}, \ldots,  \Z_{i, \sf{\Lc}}]^* \sim \mathcal{N}(0, \iden_{ML})$ is independent of  $\mscrs_{i+1,i}$,  for $0 \leq i \leq t$. The deviation vectors $\Dvec_{i,i} = [\Dvec_{i,i,1}, \ldots, \Dvec_{i,i, \Lr}]^*$ and $\Dvec_{i+1,i} = [\Dvec_{i+1,i,1}, \ldots, \Dvec_{i+1,i, \Lc}]^*$  are measurable with respect to the sigma algebras $\mscrs_{i, i}$ and $\mscrs_{i+1, i}$, respectively. Their precise definitions are given in Lemma \ref{lem:hb_cond}.

If we ignore the deviation terms in \eqref{eq:hvec_dist}, then  $\hvec^{t+1}$ would be an i.i.d.\ Gaussian vector with  the variance of the entries equal to $\tau^t_\sfc \sum_{i=1}^t {\tau^i_{\perp, \sfc}}/{(\tau^i_\sfc)^2} = \tau_\sfc^t$. (The equality can be seen by using the definition of $\tau^i_{\perp, \sfc}$ in \eqref{eq:sigperp_defs}.) In this case, a concentration inequality for the MSE could be directly obtained using standard concentration results, as described above. 

The deviation terms in \eqref{eq:bvec_dist} and \eqref{eq:hvec_dist} are controlled via results in  Lemma \ref{lem:main_lem}, specifically the concentration results in  \eqref{eq:Ba1}, \eqref{eq:Ha} and \eqref{eq:Hb}.  The  definitions of the terms $ \Dvec_{t, t, \sfr}$ and $ \Dvec_{t+1, t, \sfc}$ (see \eqref{eq:Dtt} and \eqref{eq:Dt1t}) involve a combination of vectors that  are measurable with respect to $\mscrs_{t, t}$ and $\mscrs_{t+1, t}$, respectively.  We need several concentration results for scalar products involving these vectors in order to show that the deviation terms are negligible. Lemma \ref{lem:main_lem} lists all the required  concentration  results, which are proved  using an induction based argument in Section \ref{subsec:main_lem_proof}.

\subsection{Conditional distribution lemma} \label{sec:cond_dist_lemma}

For $t \geq 1$ and $t_a \in \{ t, t+1\}$, we recall that $\mathscr{S}_{t_a, t}$ be the sigma-algebra generated by the collection of vectors in \eqref{eq:sig_tat}. Furthermore, let $\mathscr{S}_{0, 0}$ and $\mathscr{S}_{1, 0}$ be the sigma-algebras generated by $\{\brqvec^0, \betavec_0, \wvec\}$ and $\{\bvec^0, \mvec^0, \brqvec^0, \betavec_0, \wvec\}$, respectively.  Given the vectors in \eqref{eq:sig_tat},  the vectors $\madj^{0, \sfc}, ..., \madj^{t_a - 1, \sfc}, \qadj^{0, \sfr}, ..., \qadj^{t, \sfr}$ are determined via \eqref{eq:tildem_def} for $\sfr \in [\Lr]$, $\sfc \in [\Lc]$. 

The conditional distribution on $\sfAmat$ given $\mathscr{S}_{t_a, t}$ is the same as the conditional distribution given the following linear constraints:
\be
\begin{split}
[ \sfAmat \Qadjmat^{\sfr}_{t_a} ]_{(\sfr, \cdot)} &= \Ymat_{t_a, \sfr},  \ \  \sfr \in [\Lr], \ \qquad  \ 
[ \sfAmat^* \Madjmat^{\sfc}_t ]_{(\sfc, \cdot)} =\Xmat_{t, \sfc}, \ \   \sfc \in [\Lc].
 \label{eq:lin_const_cond}
\end{split}
\ee
where $\Ymat_{t_a, \sfr}, \Xmat_{t, \sfc}$ are defined in \eqref{eq:Ytr_Xtc}, and $\Qadjmat^{\sfr}_{t_a}, \, \Madjmat^{\sfc}_{t}$  in \eqref{eq:tildeM_def}.  When conditioning on the linear constraints in \eqref{eq:lin_const_cond}, we emphasize that only $\sfAmat$ is treated as random. In the following lemma, we characterize the conditional distributions of the vectors $[\sfAmat^* \madj^{t, \sfc}]_{\sfc} |_{\mscrs_{t+1,t}}$ and $[\sfAmat \, \qadj^{t, \sfr}]_{\sfr} |_{\mscrs_{t,t}}$.  This result is then used in Lemma \ref{lem:hb_cond} to compute the conditional distributions of $\bvec^t |_{ \mscrs_{t, t}}$ and $\hvec^{t+1} |_{ \mscrs_{t+1, t}}$.

We write $\sfAmat_{\sfr\sfc}$ for the $(\sfr,\sfc)$$^{th}$ block of $\sfAmat$,  $\sfAmat_{(\sfr, \cdot)}$ for the $\sfr^{th}$ block of rows, and  $\sfAmat_{(\cdot, \sfc)}$ for the $\sfc$$^{th}$ block of columns of $\sfAmat$. We also recall that when $\sfr$ or $\sfc$ are used as a subscript, it refers to a row or column block of a larger vector, whereas when $\sfr$ or $\sfc$ are used as a superscript it represents a `full' vector whose entires depend on row block $\sfr$ or column index $\sfc$.
For a projection matrix $\proj^{\parallel} \in \mathbb{R}^{ML \times ML}$ we let $[\proj^{\parallel}]_{\sfc\sfc'} \in \mathbb{R}^{\Mc \times \Mc}$ be the sub-matrix of of 
$\proj^{\parallel}$ composed of the $\sfc$$^{th}$ block of rows and the $\sfc'$$^{th}$ block of columns.  The  sub-matrix $[\proj^{\parallel}]_{(\sfc, \cdot)} \in \mathbb{R}^{\Mc \times ML}$  is composed of the $\sfc$$^{th}$ block of rows and all columns and a similar definition is given to $[\proj^{\parallel}]_{(\cdot, \sfc)} \in \mathbb{R}^{ML \times \Mc}$.  For a projection matrix ${\proj}^{\parallel} \in \mathbb{R}^{n \times n}$ , the sub-matrices $[{\proj}^{\parallel}]_{\sfr \sfr'} \in \reals^{\Mr \times \Mr}$ and $[{\proj}^{\parallel}]_{(\sfr, \cdot)} \in \reals^{\Mr \times n}$ are similarly defined.


\begin{lem}
The conditional distributions of the vectors in \eqref{eq:bmq_rowcol_v2} satisfy the following for  $\sfc \in [\Lc]$ and $\sfr \in [\Lr]$, assuming $n >t$ and $\Qadjmat^{\sfr}_{t+1}$ and $\Madjmat^{\sfc}_{t}$ have full column rank.
\be
 \big[ \sfAmat \, \qadj^{0, \sfr} \big]_{\sfr} \,  \big |_{\mscrs_{0,0}} \stackrel{d}{=} \bvec^0_{\sfr} \quad \text{ and } \quad  
  \big[ \sfAmat^* \madj^{0, \sfc} \big]_\sfc \, \big |_{\mscrs_{1,0}} \stackrel{d}{=}  
\sum_{\sfr \in [\Lr]} \left[  \big[   \proj^{\perp}_{\qvec^{0, \sfr}} \,  \sfAmat'^* \big]_{(\sfc, \sfr)} \,  + \,  \qvec^{0, \sfr}_{\sfc}  \norm{\qvec^{0, \sfr}}^{-2} (\bvec^0_{\sfr})^* \right] \madj^{0, \sfc}_{\sfr} ,
\label{eq:A_cond_0case}
 \ee
and for $t \geq 1$,
\begin{align}
&[\sfAmat\, \qadj^{t, \sfr}]_{\sfr} |_{\mscrs_{t,t}} \stackrel{d}{=} \Ymat_{t, \sfr} \, \gammavec^{t, \sfr} + \sum_{\sfc \in [\Lc]} \Big(  [\proj^{\perp}_{\Madjmat_t^{\sfc'}} \,  \widehat{\sfAmat}]_{(\sfr, \sfc)}  +    \Madjmat_{t,\sfr}^{\sfc} ((\Madjmat_t^{\sfc})^* \Madjmat_t^{\sfc})^{-1}(\Xmat_{t, \sfc})^*\Big) \qadj_{\perp, \sfc}^{t, \sfr}, \label{eq:Aq_cond} \\
&[\sfAmat^* \madj^{t, \sfc}]_{\sfc} |_{\mscrs_{t+1,t}} \stackrel{d}{=}  \Xmat_{t,  \sfc} \, \alphavec^{t, \sfc} + \sum_{\sfr \in [\Lr]}  \Big( [\proj^{\perp}_{\Qadjmat^{\sfr'}_{t+1}}\,  \sfAmat'^*]_{(\sfc, \sfr)}  +  \Qadjmat^{\sfr}_{t+1, \sfc}  ((\Qadjmat^{\sfr}_{t+1})^* \Qadjmat^{\sfr}_{t+1})^{-1} (\Ymat_{t+1, \sfr} )^*\Big) \madj^{t, \sfc}_{\perp, \sfr}. \label{eq:Astarm_cond} 
\end{align}
Here $\sfAmat' \stackrel{d}{=} \sfAmat$ and $\widehat{\sfAmat} \stackrel{d}{=} \sfAmat$ are random matrices independent of $\mscrs_{t+1,t}$ and $\mscrs_{t,t}$, and  $\proj^\perp_{ \Qadjmat}$ denotes the projection matrix onto the orthogonal complement of the space spanned by the columns of $ \Qadjmat$. 
\label{lem:Et1t}
\end{lem}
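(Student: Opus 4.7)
The plan is to prove this by induction on $t$, adapting the standard AMP conditional distribution lemma of \cite{bayati2011, RushV19} to the block structure of the spatially coupled design. The crucial structural observation is that each row constraint in \eqref{eq:lin_const_cond} only restricts the single row block $\sfAmat_{(\sfr,\cdot)}$, and each column constraint only restricts the single column block $\sfAmat_{(\cdot,\sfc)}$. Since $\sfAmat$ has i.i.d.\ $\mathcal{N}(0, 1/L)$ entries, distinct row blocks and distinct column blocks of $\sfAmat$ are a priori independent, so I can handle the constraints one block at a time and then recombine the results.

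For the base case $t = 0$, the sigma-algebra $\mscrs_{0,0}$ imposes no constraint on $\sfAmat$, so $[\sfAmat \qadj^{0,\sfr}]_\sfr$ is an unconstrained Gaussian whose variance structure matches that of $\bvec^0_\sfr$ by the definition of $\sigma^0_\sfr$. For $\mscrs_{1,0}$ the row-wise constraint $\sfAmat_{(\sfr,\cdot)} \qadj^{0,\sfr} = \bvec^0_\sfr$ is in force, and the standard Gaussian conditioning yields $\sfAmat_{(\sfr,\cdot)} \stackrel{d}{=} \bvec^0_\sfr (\qadj^{0,\sfr})^*/\|\qadj^{0,\sfr}\|^2 + \sfAmat'_{(\sfr,\cdot)} \proj^\perp_{\qadj^{0,\sfr}}$; transposing, multiplying by $\madj^{0,\sfc}_\sfr$, and summing over $\sfr$ produces \eqref{eq:A_cond_0case}. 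For the inductive step ($t \geq 1$), I would decompose $\qadj^{t,\sfr} = \Qadjmat^\sfr_t \gammavec^{t,\sfr} + \qadj^{t,\sfr}_\perp$ using the full-column-rank hypothesis; the parallel piece contributes the deterministic term $[\sfAmat \Qadjmat^\sfr_t]_{(\sfr,\cdot)} \gammavec^{t,\sfr} = \Ymat_{t,\sfr}\gammavec^{t,\sfr}$. For the perpendicular piece, I would invoke the general fact that an i.i.d.\ Gaussian matrix $A$ conditioned on $A V = Y$ and $A^* U = X$ satisfies $A \stackrel{d}{=} \mathbb{E}[A \mid \text{constraints}] + \proj^\perp_U \widehat{A} \proj^\perp_V$ with $\widehat{A}$ an independent i.i.d.\ Gaussian copy. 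Applying this block-wise with $V = \Qadjmat^\sfr_t$ for the $\sfr$-th row block and $U = \Madjmat^\sfc_t$ for the $\sfc$-th column block, then multiplying by $\qadj^{t,\sfr}_\perp$ (which is orthogonal to $\Qadjmat^\sfr_t$) produces exactly the two terms on the right of \eqref{eq:Aq_cond}. The derivation of \eqref{eq:Astarm_cond} is structurally symmetric, with rows and columns swapped and with $\Qadjmat^\sfr_{t+1}$ (which now includes the just-computed $\qadj^{t,\sfr}$) in place of $\Qadjmat^\sfr_t$.

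The main technical obstacle is justifying that the row-constraint and column-constraint conditionings can be imposed together in a well-defined and commutative fashion, since each block $\sfAmat_{(\sfr,\sfc)}$ participates simultaneously in a row-$\sfr$ and a column-$\sfc$ equation. The general conditioning identity quoted above handles both kinds of constraints at once because the corresponding constraint subspaces of $\mathbb{R}^{n \times ML}$ intersect orthogonally and the i.i.d.\ Gaussian density factorizes across entries, so the order of conditioning does not matter. The full-column-rank hypothesis on $\Qadjmat^\sfr_{t+1}$ and $\Madjmat^\sfc_t$ is exactly what makes the Gram-matrix inverses $((\Qadjmat^\sfr_{t+1})^* \Qadjmat^\sfr_{t+1})^{-1}$ and $((\Madjmat^\sfc_t)^* \Madjmat^\sfc_t)^{-1}$ appearing in the lemma well-defined. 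Beyond that conceptual step, the main difficulty will be the notational bookkeeping, especially distinguishing subscript block indices (which select sub-matrices of a fixed matrix) from superscript block indices (which parameterize whole vectors like $\qadj^{t,\sfr}$ and $\madj^{t,\sfc}$ whose entries carry the $\sqrt{W_{\sfr\sfc}}$ weights inherited from the base matrix) and keeping track of which $\sfr$ or $\sfc$ is being summed over versus held fixed in each step.
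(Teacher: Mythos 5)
Your proposal is correct and follows essentially the same route as the paper's proof: decompose $\sfAmat$ into components parallel and perpendicular to the linear constraints in each row and column block, and use the Gaussian conditioning fact that the doubly-perpendicular component retains its unconditional distribution. Two small points of precision are worth flagging. First, what you call ``induction on $t$'' is not really induction: the conditional-distribution computation at each $t$ is self-contained (it never invokes the claim for $t-1$), so the paper simply treats $t=0$ and $t \geq 1$ as two cases. Second, the correct justification for reusing the double-perpendicular piece is not that the constraint subspaces ``intersect orthogonally,'' but rather—as the paper argues, citing Lemmas 10 and 12 of \cite{bayati2011}—that each row constraint in \eqref{eq:lin_const_cond} involves only the $\sfr$th row block of $\sfAmat$ and each column constraint only the $\sfc'$th column block, so the i.i.d.\ Gaussianity of $\sfAmat$ makes the doubly-perpendicular quantity $\sum_{\sfc'}[\proj^{\perp}_{\Madjmat_t^{\sfc'}} \sfAmat]_{(\sfr, \sfc')}[\proj^{\perp}_{\Qadjmat^{\sfr}_{t_a}}]_{\sfc' \sfc}$ conditionally independent of the conditioning sigma-algebra; that independence, not an orthogonality of constraint subspaces, is what allows the substitution of an independent copy $\widehat{\sfAmat}$.
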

The proof of the lemma is given in Section \ref{subsec:lemEt1t_proof}.


\begin{lem}[Conditional Distribution Lemma]
For the vectors $\hvec^{t+1}$ and $\bvec^t$ defined in \eqref{eq:hqbm_def}, the following hold for $t \geq 1$, provided $n >t$ and $\Madjmat_t^{\sfc},  \Qadjmat_t^{\sfr}$ have full column rank.
\begin{align}
\hvec^1_{\sfc} \lvert_{\mscrs_{1, 0}} \, \stackrel{d}{=} \, \sqrt{ \tau^{0}_{\perp, \sfc}} \, \Z_{0, \sfc}  + \Dvec_{1,0, \sfc}, \quad &\text{ and }  \quad \hvec^{t+1}_{\sfc} \lvert_{\mscrs_{t+1, t}} \, \stackrel{d}{=} \, \frac{\tau^t_{\sfc}}{\tau^{t-1}_{\sfc}} \hvec^t_{\sfc}  + \sqrt{ \tau^{t}_{\perp, \sfc}} \, \Z_{t,\sfc}   + \Dvec_{t+1,t, \sfc}, \label{eq:Ha_dist} \\ 
\bvec^{0}_{\sfr} \lvert_{\mscrs_{0, 0}} \, \stackrel{d}{=} \,   \sqrt{\sigma^{0}_{\perp, \sfr}} \, \Zprime_{0, \sfr}, \quad &\text{ and } \quad \bvec^t_{\sfr} \lvert_{\mscrs_{t, t}}
\, \stackrel{d}{=} \, \frac{\sigma^t_{\sfr}}{\sigma^{t-1}_{\sfr}}  \bvec^{t-1}_{\sfr}+ \sqrt{\sigma^{t}_{\perp, r}} \, \Zprime_{t, \sfr}  + \Dvec_{t,t, \sfr}.\label{eq:Ba_dist}
\end{align}
For  each $t \geq 0$, the vectors $\Z_{t} = [\Z_{t, \sf{1}}, \ldots,  \Z_{t, \sf{\Lc}}]^* \sim \mathcal{N}(0, \iden_{ML})$ and $\Zprime_{t} = [\Z_{t, \sf{1}}, \ldots,  \Z_{t, \sf{\Lr}}]^* \sim \mathcal{N}(0, \iden_{n})$
are  independent of the corresponding conditioning sigma algebras. The terms $\widehat{\gamma}^{t, \sfr}_{i}$ and $\widehat{\alpha}^{t,\sfc}_{i}$ for $i \in [t-1]$ are defined in \eqref{eq:hatalph_hatgam_def} and  $ \sigma^{\perp}_{t,\sfr}$ and $ \tau^{\perp}_{t,\sfc}$ are defined in \eqref{eq:sigperp_defs}.
The deviation terms are 
\begin{align}
\Dvec_{1,0, \sfc} &= \left[\frac{1}{\sqrt{L}} \norm{ \madj^{0, \sfc}}  - \sqrt{ \tau^{0}_{\sfc}} \right]  \Z_{0,\sfc} -  \sum_{\sfr \in [\Lr]} \frac{1 }{\sqrt{ L}} \norm{ \madj^{0, \sfc}_{\sfr} }   [\proj^{\parallel}_{\Qadjmat^{\sfr}_{1}}  \Zsupr_{0}]_{\sfc} + \brqvec^{0}_{\sfc}\left(\sum_{\sfr \in [\Lr]}  \frac{\sqrt{W_{\sfr\sfc}} (\bvec^0_{\sfr})^*\madj^{0, \sfc}_{\sfr}}{L \sigma^0_{\sfr}} -1\right), 
\label{eq:D10}
\end{align}
and for $t >0$,
\begin{align}
&\Dvec_{t,t, \sfr} = \sum_{i=0}^{t-2} \bvec^i_{\sfr}  \gamma^{t,\sfr}_{i}  \,+\, \bvec^{t-1}_{\sfr}   \left[\gamma^{t,\sfr}_{t-1} - \frac{\sigma^t_{\sfr}}{\sigma^{t-1}_{\sfr}} \right] \, + \,  \left[\frac{ 1}{\sqrt{L}}\norm{\qadj^{t,\sfr}_{\perp}} - \sqrt{\sigma_{\perp, \sfr}^{t}}\right] \Zprime_{t,\sfr} \, - \, \sum_{\sfc \in [\Lc]}  \frac{1}{\sqrt{L}}  \norm{\qadj^{t, \sfr}_{\perp, \sfc} }  \, [\proj^{\parallel}_{\Madjmat_t^{\sfc}} \, \Zprimesupc_{t}]_{\sfr} \nonumber \\
& \qquad  + \,  \sum_{\sfc \in [\Lc]} \Madjmat_{t,\sfr}^{\sfc} ((\Madjmat_t^{\sfc})^* \Madjmat_t^{\sfc})^{-1}(\Xmat_{t,\sfc})^*\, \qadj_{\perp, \sfc}^{t, \sfr} \,- \,  \sum_{i=1}^{t-1}\gamma^{t,\sfr}_{i} \upsilon^i_{\sfr} \brmvec^{i-1}_{\sfr}   \,  + \, \upsilon^t_{\sfr}  \brmvec^{t-1}_{\sfr} ,\label{eq:Dtt} \\
&\Dvec_{t+1,t, \sfc} =   \sum_{i=0}^{t-2} \hvec^{i+1}_{\sfc} \alpha^{t,\sfc}_{i}  + \hvec^t_{\sfc}  \left[\alpha^{t,\sfc}_{t-1} - \frac{\tau^t_{\sfc}}{\tau^{t-1}_{\sfc}} \right]  + \left[\frac{1}{\sqrt{L}} \norm{ \madj^{t, \sfc}_{\perp} }   - \sqrt{\tau^{t}_{\perp, \sfc}} \right]  \Z_{t,\sfc} -  \sum_{\sfr \in [\Lr]} \frac{1 }{\sqrt{L}}    \norm{ \madj^{t, \sfc}_{\perp, \sfr} }   [\proj^{\parallel}_{\Qadjmat^{\sfr}_{t+1}}  \Zsupr_{t}]_{\sfc}  \nonumber \\
&\qquad  + \, \sum_{\sfr \in [\Lr]} \Qadjmat^{\sfr}_{t+1, \sfc}  ((\Qadjmat^{\sfr}_{t+1})^* \Qadjmat^{\sfr}_{t+1})^{-1} (\Ymat_{t+1, \sfr} )^* \madj^{t, \sfc}_{\perp, \sfr}\, + \,  \sum_{i=0}^{t-1}\alpha^{t,\sfc}_{i} \brqvec^i_{\sfc}    \,- \, \brqvec^t_{\sfc}.\label{eq:Dt1t}  
\end{align} 
In \eqref{eq:Dtt}, we recall that $\upsilon^i_{\sfr}  = {\sigma_\sfr^i}/{\phi_\sfr^{i-1}}$ for $i \in [t]$. The vectors $\Zprimesupc_{t} \sim \mathcal{N}(0, \iden_{n})$ are  i.i.d.\ for  $\sfc \in [\Lc]$, and  independent of $\mscrs_{t,t}$. Similarly,  in \eqref{eq:D10} and \eqref{eq:Dt1t},  $\Zsupr_{t} \sim \mathcal{N}(0, \iden_{ML})$ are i.i.d.\ for  $\sfr \in [\Lr]$ and  independent of $\mscrs_{t+1,t}$. Furthermore,  $\Zprime_{t} = \frac{1}{\sqrt{\Lc}} \sum_{\sfc \in [\Lc]} \Zprimesupc_{t}$ and $\Z_{t}  = \frac{1}{\sqrt{\Lr}} \sum_{\sfr \in [\Lr]} \Zsupr_{t}$. 
\label{lem:hb_cond}
\end{lem}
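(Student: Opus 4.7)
The plan is to prove the two conditional distribution identities \eqref{eq:Ba_dist} and \eqref{eq:Ha_dist} simultaneously by induction on $t$, with Lemma~\ref{lem:Et1t} as the central building block. From the block-wise form of the AMP recursion in \eqref{eq:bmq_rowcol_v2},
\[
\bvec^t_\sfr \;=\; \upsilon^t_\sfr\,\brmvec^{t-1}_\sfr \,+\, [\sfAmat\,\qadj^{t,\sfr}]_\sfr,\qquad \hvec^{t+1}_\sfc \;=\; -\brqvec^t_\sfc \,+\, [\sfAmat^*\,\madj^{t,\sfc}]_\sfc,
\]
and since $\upsilon^t_\sfr\brmvec^{t-1}_\sfr$ and $\brqvec^t_\sfc$ are measurable with respect to $\mscrs_{t,t}$ and $\mscrs_{t+1,t}$ respectively, the two desired conditional distributions reduce to those of $[\sfAmat\,\qadj^{t,\sfr}]_\sfr\lvert_{\mscrs_{t,t}}$ and $[\sfAmat^*\,\madj^{t,\sfc}]_\sfc\lvert_{\mscrs_{t+1,t}}$ supplied by Lemma~\ref{lem:Et1t}. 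The $t=0$ base case follows immediately from \eqref{eq:A_cond_0case} combined with the initializations in \eqref{eq:vec_init0}.

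For the inductive step I describe the $\bvec^t_\sfr$ case, since the $\hvec^{t+1}_\sfc$ case is entirely analogous with the roles of rows/columns, $\bvec/\hvec$ and $\brmvec/\brqvec$ swapped. Substituting \eqref{eq:Aq_cond} decomposes the conditional law into three pieces: (i) the deterministic term $\upsilon^t_\sfr\brmvec^{t-1}_\sfr + \Ymat_{t,\sfr}\gammavec^{t,\sfr}$; (ii) a Gaussian piece $\sum_{\sfc}[\proj^\perp_{\Madjmat^\sfc_t}\widehat{\sfAmat}]_{(\sfr,\sfc)}\qadj^{t,\sfr}_{\perp,\sfc}$ where $\widehat{\sfAmat}\stackrel{d}{=}\sfAmat$ is independent of $\mscrs_{t,t}$; and (iii) the deterministic correction $\sum_\sfc \Madjmat^\sfc_{t,\sfr}((\Madjmat^\sfc_t)^*\Madjmat^\sfc_t)^{-1}(\Xmat_{t,\sfc})^*\qadj^{t,\sfr}_{\perp,\sfc}$. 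Expanding (i) using the column structure of $\Ymat_t$ from \eqref{eq:XYMSt} gives $\Ymat_{t,\sfr}\gammavec^{t,\sfr} = \sum_{i=0}^{t-1}\gamma^{t,\sfr}_i\bvec^i_\sfr - \sum_{i=1}^{t-1}\gamma^{t,\sfr}_i \upsilon^i_\sfr\brmvec^{i-1}_\sfr$. Splitting $\gamma^{t,\sfr}_{t-1}\bvec^{t-1}_\sfr = \tfrac{\sigma^t_\sfr}{\sigma^{t-1}_\sfr}\bvec^{t-1}_\sfr + \bigl(\gamma^{t,\sfr}_{t-1}-\tfrac{\sigma^t_\sfr}{\sigma^{t-1}_\sfr}\bigr)\bvec^{t-1}_\sfr$ extracts the target ``signal'' term appearing in \eqref{eq:Ba_dist}, and combining the residuals with (iii) yields the non-Gaussian portion of $\Dvec_{t,t,\sfr}$ as written in \eqref{eq:Dtt}.

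The remaining work is the Gaussian piece (ii). Because $\widehat{\sfAmat}$ has i.i.d.\ $\mc{N}(0,1/L)$ entries independent of $\mscrs_{t,t}$, its distinct column blocks are mutually independent. Conditional on $\mscrs_{t,t}$, each block's contribution can be written in the form $\tfrac{1}{\sqrt{L}}\|\qadj^{t,\sfr}_{\perp,\sfc}\|\bigl(\Zprimesupc_t - \proj^\parallel_{\Madjmat^\sfc_t}\Zprimesupc_t\bigr)$ for fresh standard Gaussians $\Zprimesupc_t\in\reals^n$, $\sfc\in[\Lc]$, independent across $\sfc$; the independence of the unprojected components lets them add in quadrature into a single $\tfrac{\|\qadj^{t,\sfr}_\perp\|}{\sqrt{L}}\Zprime_{t,\sfr}$ with $\Zprime_{t,\sfr}\sim\mc{N}(0,\iden_n)$, compatible with the averaging relation $\Zprime_t = \tfrac{1}{\sqrt{\Lc}}\sum_\sfc\Zprimesupc_t$ declared in the lemma statement. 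The target term $\sqrt{\sigma^t_{\perp,\sfr}}\Zprime_{t,\sfr}$ in \eqref{eq:Ba_dist} then appears via the split $\tfrac{\|\qadj^{t,\sfr}_\perp\|}{\sqrt{L}} = \sqrt{\sigma^t_{\perp,\sfr}} + \bigl(\tfrac{\|\qadj^{t,\sfr}_\perp\|}{\sqrt{L}}-\sqrt{\sigma^t_{\perp,\sfr}}\bigr)$, with the leftover scalar piece and the $\sfc$-indexed parallel pieces $\sum_\sfc\tfrac{1}{\sqrt{L}}\|\qadj^{t,\sfr}_{\perp,\sfc}\|[\proj^\parallel_{\Madjmat^\sfc_t}\Zprimesupc_t]_\sfr$ absorbed into $\Dvec_{t,t,\sfr}$, reproducing every term of \eqref{eq:Dtt} exactly.

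The main obstacle is this block-wise Gaussian bookkeeping. Because of the block-dependent variances encoded by $\Wmat$ and the fact that the projections $\proj^\perp_{\Madjmat^\sfc_t}$ depend on the column block $\sfc$, one cannot simply treat $\sum_\sfc[\proj^\perp_{\Madjmat^\sfc_t}\widehat{\sfAmat}]_{(\sfr,\sfc)}\qadj^{t,\sfr}_{\perp,\sfc}$ as the action of a single Gaussian matrix on the aggregated direction $\qadj^{t,\sfr}_\perp$. It is precisely the independence of the $\sfc$-blocks of $\widehat{\sfAmat}$ that permits the unprojected pieces to collapse into a single $\Zprime_{t,\sfr}$, while the parallel pieces must be retained as a $\sfc$-indexed sum inside $\Dvec_{t,t,\sfr}$. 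The analogous bookkeeping on the $\hvec^{t+1}_\sfc$ side uses an independent copy $\sfAmat'$, the orthogonal projections $\proj^\perp_{\Qadjmat^\sfr_{t+1}}$ and block-wise sums over $\sfr$, yielding the deviation vector $\Dvec_{t+1,t,\sfc}$ of \eqref{eq:Dt1t} in exactly the same way.
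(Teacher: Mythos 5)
Your proposal is correct and follows essentially the same route as the paper's proof: reduce to the conditional distributions of $[\sfAmat\,\qadj^{t,\sfr}]_\sfr$ and $[\sfAmat^*\,\madj^{t,\sfc}]_\sfc$ via Lemma~\ref{lem:Et1t}, substitute \eqref{eq:Aq_cond}, unpack $\Ymat_{t,\sfr}\gammavec^{t,\sfr}$ using the column structure from \eqref{eq:XYMSt}, decompose the block-Gaussian piece as $\proj^\perp=\iden-\proj^\parallel$, collapse the independent unprojected blocks into a single $\frac{\|\qadj^{t,\sfr}_\perp\|}{\sqrt{L}}\Zprime_{t,\sfr}$, split off $\sqrt{\sigma^t_{\perp,\sfr}}$ and $\frac{\sigma^t_\sfr}{\sigma^{t-1}_\sfr}\bvec^{t-1}_\sfr$, and gather the residual into $\Dvec_{t,t,\sfr}$, with the $\hvec$ side handled symmetrically. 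One small point of precision: the $\Zprime_{t,\sfr}$ produced by adding the independent block Gaussians in quadrature is the $\sfr$-dependent weighted combination $\sum_\sfc\frac{\|\qadj^{t,\sfr}_{\perp,\sfc}\|}{\|\qadj^{t,\sfr}_\perp\|}\Zprimesupc_{t,\sfr}$ (as the paper's proof actually derives), not the uniformly weighted $\frac{1}{\sqrt{\Lc}}\sum_\sfc\Zprimesupc_{t,\sfr}$ that the lemma statement asserts — your appeal to the latter as the explanation glosses over the same minor inconsistency that exists between the paper's statement and its proof, but the substance of your argument is the correct one.
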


The proof of the lemma is given in Section \ref{subsec:lemhb_cond_proof}.

The next lemma uses the representation in Lemma \ref{lem:hb_cond} to show that $\hvec^{t+1}_{\sfc}$ is the sum of a $\mc{N}(0, \tau^t_{\sfc} \, \iden_{\Mc})$ random vector and a deviation term for each $t \geq 0$.  Similarly $\bvec^t_{\sfr}$ is the sum of a  $\mc{N}(0, \sigma^t_{\sfr} \, \iden_{\Mr})$ random vector and a deviation term.  
\begin{lem}
For $t \geq 0$, the conditional distributions in Lemma \ref{lem:hb_cond} can be expressed as 
\be
\hvec^{t+1}_{\sfc} \lvert_{\mscrs_{t+1, t}} \stackrel{d}{=} \widetilde{\hvec}^t_{\sfc} + \widetilde{\Dvec}_{t+1, {\sfc}} , \qquad  \bvec^t_{\sfr} \lvert_{\mscrs_{t, t}}\stackrel{d}{=} \breve\bvec^t_{\sfr} + \breve{\Dvec}_{t, {\sfr}},
\label{eq:htil_rep}
\ee
where
\begin{align}
& \widetilde{\hvec}^{t+1}_{\sfc} := \tau^{t}_{\sfc} \sum_{i=0}^{t} \sqrt{\tau^i_{\perp, \sfc}} \left(\frac{1}{\tau^{i}_{\sfc}}\right) \Z_{i, {\sfc}}, 
\qquad \widetilde{\Dvec}_{t+1, {\sfc}} := \tau^{t}_{\sfc} \sum_{i=0}^{t} \left(\frac{1}{\tau^{i}_{\sfc}}\right) \Dvec_{i+1, i, {\sfc}},  \label{eq:htilde_def} \\
& \breve\bvec^t_{\sfr}:= \sigma^{t}_{\sfr} \sum_{i=0}^{t} \sqrt{\sigma^i_{\perp, \sfr}} \left(\frac{1}{\sigma^{i}_{\sfr}}\right) \Zprime_{i, {\sfr}},
\qquad \breve{\Dvec}_{t, {\sfr}} := \sigma^{t}_{\sfr} \sum_{i=0}^{t} \left(\frac{1}{\sigma^{i}_{\sfr}}\right) \Dvec_{i, i, {\sfr}}.
\label{eq:btilde_def}
\end{align}
Here, for each $0 \leq i \leq t$,  the standard Gaussian vectors $\Z_{i} = [\Z_{i, \sf{1}}, \ldots,  \Z_{i, \sf{\Lc}}]^* \sim \mathcal{N}(0, \iden_{ML})$ and $\Zprime_{i} = [\Z_{i, \sf{1}}, \ldots,  \Z_{i, \sf{\Lr}}]^* \sim \mathcal{N}(0, \iden_{n})$
are the ones used in Lemma \ref{lem:hb_cond}, Eqs.\ \eqref{eq:Ha_dist} and \eqref{eq:Ba_dist}.

Consequently, $\widetilde{\hvec}^{t+1}_{\sfc}  \stackrel{d}{=} \sqrt{ \tau^t_{\sfc} }\widetilde{\Z}_{t, {\sfc}}$, and $\breve\bvec^t_{\sfr}  \stackrel{d}{=} \sqrt{\sigma^t_{\sfr}} \breve{\Z}_{t, {\sfr}}$, where $\widetilde{\Z}_{t} = [\widetilde{\Z}_{t, \sf{1}}, \ldots,  \widetilde{\Z}_{t, \sf{\Lc}}]^* \sim \mathcal{N}(0, \iden_{ML})$ and $  \breve{\Z}_{t} = [\breve{\Z}_{t, \sf{1}}, \ldots,  \breve{\Z}_{t, \sf{\Lr}}]^* \sim \mathcal{N}(0, \iden_{n})$ such that for any $j  \in [\Mc]$ and  $i \in [\Mr]$,  the length-$t$ vectors $([\widetilde{Z}_{0, {\sfc}}]_j, \ldots, [\widetilde{Z}_{t, {\sfc}}]_j)$ and  $( [\breve{Z}_{0, {\sfr}}]_i , \ldots, [\breve{Z}_{t, {\sfr}}]_i)$ are each jointly Gaussian with 
\be \expec\{[\widetilde{Z}_{\widetilde{s}, {\sfc}}]_j [\widetilde{Z}_{s, {\sfc}}]_j\} = \sqrt{{\tau^{\widetilde{s}}_{\sfc}}/{\tau^s_{\sfc}}}, \qquad  \expec\{ [\breve{Z}_{\widetilde{s}, {\sfr}}]_i [\breve{Z}_{s, {\sfr}}]_i\} = \sqrt{{\sigma^{\widetilde{s}}_{\sfr}}/{\sigma^s_{\sfr}}} \qquad \text{ for }0 \leq s \leq \widetilde{s} \leq t. \ee
Hence for any $s \leq t$ we can write 
\be
\widetilde{\Z}_{t, \sfc} \overset{d}{=} \widetilde{\Z}_{s, \sfc}  \sqrt{\frac{\tau^{t}_{\sfc}}{\tau^{s}_{\sfc}}}  +\widetilde{\mathbf{U}}_{t, \sfc}\sqrt{1  -  \frac{\tau^{t}_{\sfc}}{\tau^{s}_{\sfc}}}, \quad \text { and } \quad \breve{\Z}_{t, \sfr} \overset{d}{=} \breve{\Z}_{s, \sfr}  \sqrt{\frac{\sigma^{t}_{\sfr}}{\sigma^{s}_{\sfr}}}  +\breve{\mathbf{U}}_{t, \sfr}\sqrt{1  -  \frac{\sigma^{t}_{\sfr}}{\sigma^{s}_{\sfr}}},
\ee
where $\widetilde{\Z}_{s}$ is independent of $\widetilde{\mathbf{U}}_{t} = [\widetilde{U}_{t, \sf{1}}, \ldots,  \widetilde{U}_{t, \sf{\Lc}}]^* \sim \mathcal{N}(0, \iden_{ML})$ and $ \breve{\Z}_{s}$ is independent of $\breve{\mathbf{U}}_{t} = [\breve{U}_{t, \sf{1}}, \ldots,  \breve{U}_{t, \sf{\Lr}}]^* \sim \mathcal{N}(0, \iden_{n})$.
\label{lem:ideal_cond_dist}
\end{lem}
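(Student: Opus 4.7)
The plan is to establish the representation in \eqref{eq:htil_rep} by induction on $t$, and then derive the distributional properties of $\widetilde{\hvec}^{t+1}_{\sfc}$ and $\breve\bvec^{t}_{\sfr}$ from direct Gaussian calculations. I focus on $\hvec^{t+1}_{\sfc}$; the argument for $\bvec^t_{\sfr}$ is entirely parallel.

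\textbf{Step 1 (Recursive unfolding).} For the base case $t=0$, Lemma~\ref{lem:hb_cond} yields $\hvec^1_\sfc \lvert_{\mscrs_{1,0}} \stackrel{d}{=} \sqrt{\tau^0_{\perp,\sfc}}\,\Z_{0,\sfc} + \Dvec_{1,0,\sfc}$, which matches \eqref{eq:htilde_def} at $t=0$ since $\widetilde{\hvec}^1_\sfc = \tau^0_\sfc \cdot (\sqrt{\tau^0_{\perp,\sfc}}/\tau^0_\sfc)\Z_{0,\sfc}= \sqrt{\tau^0_{\perp,\sfc}}\Z_{0,\sfc}$ (using $\tau^0_{\perp,\sfc}=\tau^0_\sfc$) and $\widetilde{\Dvec}_{1,\sfc}=\Dvec_{1,0,\sfc}$.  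For the induction step, assuming the representation holds at level $t$, substitute $\hvec^t_\sfc |_{\mscrs_{t,t-1}} \stackrel{d}{=} \widetilde{\hvec}^t_\sfc + \widetilde{\Dvec}_{t,\sfc}$ into the recursion $\hvec^{t+1}_\sfc |_{\mscrs_{t+1,t}} \stackrel{d}{=} (\tau^t_\sfc/\tau^{t-1}_\sfc)\hvec^t_\sfc + \sqrt{\tau^t_{\perp,\sfc}}\Z_{t,\sfc} + \Dvec_{t+1,t,\sfc}$ from \eqref{eq:Ha_dist}. Pulling the factor $\tau^t_\sfc/\tau^{t-1}_\sfc$ inside the sums defining $\widetilde{\hvec}^t_\sfc$ and $\widetilde{\Dvec}_{t,\sfc}$ and combining with the new $i=t$ terms (noting $\sqrt{\tau^t_{\perp,\sfc}} = \tau^t_\sfc\sqrt{\tau^t_{\perp,\sfc}}/\tau^t_\sfc$) gives precisely \eqref{eq:htilde_def} at level $t+1$.

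\textbf{Step 2 (Marginal Gaussianity).} Since the $\Z_{i,\sfc}$ appearing in \eqref{eq:htilde_def} are independent across $i$ (each $\Z_{i}$ is independent of $\mscrs_{i+1,i}$, which contains all previous Gaussians), $\widetilde{\hvec}^{t+1}_\sfc$ is a linear combination of independent $\mc{N}(0,\iden_{\Mc})$ vectors and hence itself Gaussian with covariance
$(\tau^t_\sfc)^2 \sum_{i=0}^t \tau^i_{\perp,\sfc}/(\tau^i_\sfc)^2 \cdot \iden_{\Mc}$.  The definition $\tau^i_{\perp,\sfc}=\tau^i_\sfc(1-\tau^i_\sfc/\tau^{i-1}_\sfc)$ (with the convention $\tau^{-1}_\sfc=\infty$ so that $\tau^0_{\perp,\sfc}=\tau^0_\sfc$) makes the sum telescope:
\begin{equation*}
\sum_{i=0}^{t}\frac{\tau^i_{\perp,\sfc}}{(\tau^i_\sfc)^2} = \sum_{i=0}^{t}\left(\frac{1}{\tau^i_\sfc}-\frac{1}{\tau^{i-1}_\sfc}\right) = \frac{1}{\tau^t_\sfc},
\end{equation*}
so the variance is $\tau^t_\sfc$, confirming $\widetilde{\hvec}^{t+1}_\sfc \stackrel{d}{=} \sqrt{\tau^t_\sfc}\,\widetilde{\Z}_{t,\sfc}$.

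\textbf{Step 3 (Joint covariance across iterations).} For $0\le s \le \widetilde{s} \le t$ and any coordinate $j$, independence of $\Z_{i,\sfc}$ across $i$ gives
\begin{equation*}
\expec\bigl[[\widetilde{\hvec}^{\widetilde{s}+1}_\sfc]_j [\widetilde{\hvec}^{s+1}_\sfc]_j\bigr] = \tau^{\widetilde{s}}_\sfc \tau^s_\sfc \sum_{i=0}^{s}\frac{\tau^i_{\perp,\sfc}}{(\tau^i_\sfc)^2} = \tau^{\widetilde{s}}_\sfc,
\end{equation*}
by the same telescoping identity truncated at $i=s$. Dividing by $\sqrt{\tau^{\widetilde{s}}_\sfc \tau^s_\sfc}$ yields $\expec[[\widetilde{Z}_{\widetilde{s},\sfc}]_j [\widetilde{Z}_{s,\sfc}]_j]=\sqrt{\tau^{\widetilde{s}}_\sfc/\tau^s_\sfc}$, as claimed. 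Finally, the explicit decomposition $\widetilde{\Z}_{t,\sfc} \stackrel{d}{=} \sqrt{\tau^t_\sfc/\tau^s_\sfc}\,\widetilde{\Z}_{s,\sfc} + \sqrt{1-\tau^t_\sfc/\tau^s_\sfc}\,\widetilde{\mathbf{U}}_{t,\sfc}$ is the standard Gaussian regression of $\widetilde{\Z}_{t,\sfc}$ on $\widetilde{\Z}_{s,\sfc}$: the coefficient is the covariance $\sqrt{\tau^t_\sfc/\tau^s_\sfc}$ and the residual has variance $1-\tau^t_\sfc/\tau^s_\sfc$, yielding an independent standard Gaussian $\widetilde{\mathbf{U}}_{t,\sfc}$. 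The corresponding statements for $\breve\bvec^t_\sfr$ follow by the same arguments with $(\tau,\sfc,\Mc)$ replaced by $(\sigma,\sfr,\Mr)$ and $\Z$ replaced by $\Zprime$.

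There is no genuinely hard step here — the result is essentially a bookkeeping consequence of Lemma~\ref{lem:hb_cond} combined with the telescoping identity implied by the definition of $\tau^i_{\perp,\sfc}$. The only point that requires care is keeping the induction clean (verifying that the deviation pieces produced by unrolling match the given formula $\widetilde{\Dvec}_{t+1,\sfc} = \tau^t_\sfc \sum_{i=0}^t (1/\tau^i_\sfc)\Dvec_{i+1,i,\sfc}$) and confirming that the independence across iterations of the driving Gaussians $\{\Z_{i,\sfc}\}_{i\ge 0}$ is legitimate under the nested conditioning $\mscrs_{i+1,i}\subseteq \mscrs_{i+2,i+1}\subseteq\cdots$.
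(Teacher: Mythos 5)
Your proof is correct and follows the standard approach (recursive unfolding of the conditional-distribution representation, then the telescoping identity $\sum_{i=0}^{t}\tau^i_{\perp,\sfc}/(\tau^i_\sfc)^2 = 1/\tau^t_\sfc$ for the variance), which is essentially the argument in \cite[Lemma 6]{RushV19} that the paper itself points to while omitting the details. The base case, the induction step, the covariance computation $\expec\big[[\widetilde{\hvec}^{\widetilde s+1}_\sfc]_j[\widetilde{\hvec}^{s+1}_\sfc]_j\big]=\tau^{\widetilde s}_\sfc$, and the Gaussian-regression decomposition are all handled correctly.
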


\begin{proof}
The proof is similar to that of \cite[Lemma 6]{RushV19} and is omitted.
\end{proof}

\subsection{Order of SPARC parameters and state evolution constants}

We recall a few facts about the SC-SPARCs construction from Section \ref{sec:sc_AMP} that will be used throughout the proof.   There are $L/\Lc$ sections per column block of $\betavec$, with the non-zero coefficient in each section equal to $1$. Each block in the design matrix $\Amat$ can be viewed as a standard (non-SC) SPARC with $L/\Lc$ sections (with $M$ columns in each section), code length $\Mr = {n}/{\Lr}$, and rate $$R_{\text{inner}} = \frac{(L/\Lc) \ln M}{(n/\Lr)} = R \cdot \frac{\Lr}{\Lc}.$$ For an $(\omega, \Lambda, \rho)$ base matrix, $\Lc= \Lambda$ and  $\Lr = (\Lambda + \omega - 1)$. Since $L \gg  \Lc = \Lambda \gg \omega$, we have $\frac{\Lr}{\Lc} = 1 + \frac{\omega -1}{\Lambda} < 2$, hence $R < R_{\text{inner}} < 2R$.

From \eqref{eq:W_rc}, we have $\max_{\sfr, \sfc} \, W_{\sfr, \sfc} \leq P  \frac{\Lr}{\omega}$.
 From \eqref{eq:W_rc}, it can also be verified that for 
$\sfr \in [\Lr]$, 
\begin{align}
\frac{1}{\Lc} \sum_{\sfc \in [\Lc]} W_{\sfr \sfc} =  \kappa_1,  \qquad 
\frac{1}{\Lc} \sum_{\sfc \in [\Lc]} W^2_{\sfr \sfc} = \kappa_2 \Big( \frac{\Lr}{\omega} \Big),
\qquad
\frac{1}{\Lc} \sum_{\sfc \in [\Lc]} W^4_{\sfr \sfc} = \kappa_3 \Big( \frac{\Lr}{\omega} \Big)^3,
\label{eq:Wrc_avgs}
\end{align}
where $\kappa_1, \kappa_2, \kappa_3$ are absolute positive constants. A similar statement holds when the summations in \eqref{eq:Wrc_avgs} are over $\sfr \in [\Lr]$, with $\sfc$ fixed.

 Studying the state evolution equations \eqref{eq:se_phi}-\eqref{eq:se_psi}, we see that $\psi^t_{\sfc}, \sigma^t_{\sfr},$ and $\phi^t_{\sfr}$ are all $\Theta(1)$ for all $t \geq 0, \sfr \in [\Lr]$, and $\sfc \in [\Lc],$ while $\tau^t_{\sfc} = \Theta(1/\ln M) = \Theta(L/n)$ for all $t \geq 0$ and $\sfc \in [\Lc]$. This implies that $S^t_{\sfr \sfc} = \tau^t_{\sfc} (\phi^t_{\sfr})^{-1} = \Theta(1/\ln M) = \Theta(L/n)$.  
We will use the following facts implicitly.  For $t \geq 0,$
$
\min_{\sfr \in [\Lr]} \phi^t_{\sfr} =  \phi^t_1$ and $ \min_{\sfr \in [\Lr]} \sigma^t_{\sfr} = \sigma^t_1,
$
additionally,
$
\max_{\sfr \in [\Lr]} \phi^t_{\sfr} =  \phi^t_{\lfloor \Lr/2\rfloor}$ and $\max_{\sfr \in [\Lr]} \sigma^t_{\sfr} = \sigma^t_{\lfloor \Lr/2\rfloor}.
$
Furthermore, 
$
\max_\sfr \, \sigma^t_\sfr \leq \max_\sfr \frac{1}{\Lc} \sum_{\sfc = 1}^{\Lc} W_{\sfr \sfc} \leq 2 P.
$
It follows that for $\sfc \in [\Lc]$,
\be
S^t_{\sfr \sfc} = \tau^t_{\sfc} (\phi^t_{\sfr})^{-1} \leq \tau^t_{\sfc} (\phi^t_1)^{-1} = S^t_{1, \sfc}.
\label{eq:S_max}
\ee


\subsection{Concentration lemma}
 
The next lemma uses the conditional distribution given by Lemma \ref{lem:hb_cond} to  prove concentration results for various inner products and functions involving $\{\bvec^t, \madj^{t, \sfc}, \hvec^{t+1}, \qadj^{t, \sfr} \}$.  The concentration lemma is stated in two parts. The first part gives concentration inequalities for inner products involving the vectors 
$\{ \bvec^s, \madj^{s, \sfc} \}_{0 \leq s \leq t}$ (Eqs.\ \eqref{eq:Bb1}--\eqref{eq:Bg}). The second part gives concentration inequalities for inner products involving the vectors 
$\{ \hvec^{s+1}, \qadj^{s, \sfr} \}_{0 \leq s \leq t}$ (Eqs.\ \eqref{eq:Hb}--\eqref{eq:Hg}).  These results are proved using an induction argument that includes two other concentration inequalities (Eqs.\ \eqref{eq:Ba1} and \eqref{eq:Ha}) showing that that the deviation terms in Lemma \ref{lem:hb_cond} are small. 

The proof of Theorem \ref{thm:main1} requires only one of the results in the concentration lemma, namely, Eq.\ \eqref{eq:Hc_adj}. However, the other results in the lemma are required for the induction argument. 


To keep the notation compact, we use $K, K',  \kappa,$ and $\kappa'$ to denote generic positive universal  constants whose values may change throughout the lemma statement and proof. 

Let  $\xi, \Xi >0$ be universal constants not depending on $n$, $\e$, or $t$.  
For $t \geq 0$, let
\be K_{t} = \Xi^{2t} (t!)^{14} , \quad \kappa_{t} = \frac{1}{\xi^{2t} (t!)^{24}},  \quad K_{t}' = \Xi(t+1)^{7} K_{t} , \quad \kappa_{t}' = \frac{\kappa_{t}}{\xi (t+1)^{12}}. \label{eq:Ktkapt_defs} \ee
We also define the following iteration-dependent quantities that summarize the problem parameters:
\be \PC_{t} = (\Lr \Lc)^{t+1}, \quad \pc_{t} = \frac{ (\Mr \omega)(\omega/\Lr)^{2(t+1)}  }{(\log M)^{2(t+1)}},  \quad \PC_{t}' =  \Lc \PC_{t} , \quad \pc_{t}'  = \pc_{t}. \label{eq:param_const_defs} \ee
%

\begin{lem}
The following results hold for $\e \in (0,1)$ and $1 \leq t < T$, where $T$ is defined in \eqref{eq:Tdef}.

$\mathbf{1}$.  Let $\sfp$ be an integer with $\sfp \in \{ 0,1,2 \}$.  Let $X_n \doteq \textsf{const}$ be shorthand for $$P(\abs{X_n - \textsf{const}} \geq \e) \leq   t^4   K K_{t-1}  \PC_{t-1}    \exp 
\Big\{\frac{- \kappa \kappa_{t-1}  (\omega/\Lr)^{2(\sfp-1)_{+}}  \pc_{t-1}  \e^2}{t^8  }  \Big\}.$$

\textbf{(a)} For $\sfc \in [\Lc]$
\begin{align}
& P \Big(\frac{1}{n}\sum_{\sfr \in [\Lr]} W^{\sfp}_{\sfr \sfc} \norm{\Dvec_{t,t, \sfr}}^2 \geq \epsilon \Big)  \leq t^3   K K_{t-1}   \PC_{t-1}   \exp\Big\{\frac{- \kappa \kappa_{t-1}   (\omega/\Lr)^{(\sfp-1)_{+}}  \pc_{t-1}  \e}{t^6  }  \Big\},  \label{eq:Ba1}
\end{align}
%

\textbf{(b)}  For all $\sfc \in [\Lc]$,
\begin{align}
 \frac{1}{n} \sum_{\sfr \in [\Lr]} W^{\sfp}_{\sfr \sfc} (\bvec^t_{\sfr})^* \wvec_{\sfr} &\doteq 0 ,  \label{eq:Bb1} 
\end{align}

\textbf{(c)}  For all $0 \leq s \leq t$ and $\sfc \in [\Lc]$,
\begin{align}
\frac{1}{n} \sum_{\sfr \in [\Lr]} W^{\sfp}_{\sfr \sfc} (\bvec^s_{\sfr})^*\bvec^t_{\sfr} &\doteq \frac{1}{\Lr} \sum_{\sfr \in [\Lr]} W^{\sfp}_{\sfr \sfc} \sigma^t_{\sfr}, \label{eq:Bc}
\end{align}

\textbf{(d)}  For all $0 \leq s, \widetilde{s} \leq t$  and $\sfc \in [\Lc]$,
\be
\frac{1}{L}  \sum_{\sfr \in [\Lr]} W^{\sfp - 1/2}_{\sfr \sfc}(\bvec^{\widetilde{s}}_{\sfr})^*\madj^{s, \sfc}_{\sfr} \doteq  \frac{\Mr}{L}  \sum_{\sfr \in [\Lr]} S^s_{\sfr \sfc} W^{\sfp}_{\sfr \sfc} \, \sigma^{\max(\widetilde{s},s)}_{\sfr} , \label{eq:Bd}
\ee

\textbf{(e)} 
For all $0 \leq s \leq t$ and $\sfc \in [\Lc]$,
\begin{align}
\frac{n }{L^2} \sum_{\sfr \in [\Lr]} W^{\sfp-1}_{\sfr \sfc} (\madj^{s, \sfc}_{\sfr})^* \madj^{t, \sfc}_{\sfr} &\doteq \frac{n^2 }{L^2 \Lr}  \sum_{\sfr \in [\Lr]}  S^s_{\sfr \sfc} \, S^t_{\sfr \sfc} \, W_{\sfr \sfc}^{\sfp} \, \phi^t_{\sfr}   \quad (= \frac{n}{L}\tau_{\sfc}^t \text{ when } \sfp = 1), \label{eq:Be}
\end{align}
%


\textbf{(f)} For $\tau^t_{\perp, \sfc}$ defined in \eqref{eq:sigperp_defs} and shown to be positive in Lemma \ref{lem:sigmatperp} and  $\widehat{\alphavec}^{t, \sfc}$  defined in \eqref{eq:vec_alph_gam_conc}, for all $\sfc \in [\Lc]$ and $0 < i \leq t$,
\begin{align}
&P\Big(\lvert \alpha^{t, \sfc}_{i-1} - \widehat{\alpha}^{t, \sfc}_{i-1} \lvert \geq \epsilon\Big) \leq   t^5  K K_{t-1}  \PC_{t-1}    \exp\Big\{ \frac{-\kappa \kappa_{t-1}  \pc_{t-1} \e^2}{t^{10}  }  \Big\}, \quad t \geq 1, \label{eq:Bf} \\
&P\Big(\frac{n}{L}\Big\lvert \frac{1}{L}\| \madj^{t, \sfc}_{\perp} \|^2 -  \tau^{t}_{\perp, \sfc} \Big \lvert \geq \epsilon \Big) \leq  t^6  K K_{t-1}  \PC_{t-1}    \exp\Big\{ \frac{- \kappa \kappa_{t-1} \pc_{t-1} \e^2}{t^{12}}   \Big\}. \label{eq:Bf1}
\end{align}


\textbf{(g)}  Let $\innerM^{\sfc}_{t+1} := \frac{n}{L^2}(\Madjmat^{\sfc}_{t+1})^* \Madjmat^{\sfc}_{t+1}$.  Then for  $\sfc \in [\Lc]$,
\be
P\Big(\innerM^{\sfc}_{t+1} \text{ is singular}\Big) \leq  t^7  K K_{t-1}  \PC_{t-1}    \exp
\Big\{\frac{- \kappa \kappa_{t-1} \pc_{t-1}}{t^{12} }  \Big\}. \label{eq:Msing}
\ee
For matrix $\brCmat^{t+1, \sfc}$ defined in \eqref{eq:Ct_def}, when the inverses of $\innerM^{\sfc}_{t+1} $ exist,
\begin{align}
&P\Big(\Big \lvert [ (\innerM^{\sfc}_{t+1} )^{-1} - (\brCmat^{t+1, \sfc})^{-1}]_{i j} \Big \lvert \geq \epsilon\Big) \leq t^6   K K_{t-1}  \PC_{t-1}    \exp\Big\{ \frac{- \kappa \kappa_{t-1} \pc_{t-1} \e^2 }{t^{12}}
\Big\},  \label{eq:Bg}
\end{align}
%

\vspace{10pt}

$\mathbf{2}$.  For $\sfq \in \{ 0,1\}$, we have the following:

\textbf{(a)} For all $\sfr \in [\Lr]$,
\begin{align}
P\Big(\frac{1}{L} \sum_{\sfc \in [\Lc]} W^{2 \sfq}_{\sfr \sfc} \sum_{\ell \in \sfc}  \max_{j \in sec(\ell)}    \abs{[\Dvec_{t+1,t, \sfc}]_{j}}^2 \geq \epsilon \Big) 
& \leq  t^3 K K'_{t-1}   \PC_{t-1}'  \exp\Big\{\frac{-\kappa \kappa'_{t-1} (\omega/\Lr)^{2\sfq}  \pc_{t-1}'  \e}{t^6 } \Big\}.  \label{eq:Ha} 
\end{align}
%


\textbf{(b)} For all $\sfr \in [\Lr]$,
\be
P\Big( \Big \lvert \frac{1}{L}   \sum_{\sfc \in [\Lc]}  \sqrt{W_{\sfr \sfc}}(\hvec^{t+1}_{\sfc})^* \qadj^{0, \sfr}_{\sfc} \Big \lvert \geq \e \Big) \leq t^3 K K'_{t-1}  \PC_{t-1}'   \exp \Big\{
\frac{-\kappa \kappa'_{t-1} (\omega/\Lr)^{2}   \pc'_{t-1} \e^2}{t^6}  \Big\}. \quad \label{eq:Hb} 
\ee
%


\textbf{(c)} For all $0 \leq s \leq t+1$ and $\sfr \in [\Lr]$,
\begin{align}
P\Big( \Big \lvert \sum_{\sfc \in [\Lc]} W^{\sfq}_{\sfr \sfc}\Big[ \frac{1}{L }(\brqvec^{s}_{\sfc})^* \brqvec^{t+1}_{\sfc}   -   \frac{1}{\Lc} \psi^{t+1}_{\sfc}\Big] \Big \lvert \geq \e \Big) \leq  t^4 K K'_{t-1}  \PC_{t-1}'   \exp\Big\{\frac{-\kappa \kappa'_{t-1} (\omega/\Lr)^{ 2\sfq} \pc_{t-1}'\e^2}{t^8  (\log M)^{2} }\Big\}, \label{eq:Hc_adj}
\end{align}
where for $\sfq=1$, we note that $\frac{1}{\Lc} \sum_{\sfc \in [\Lc]} W_{\sfr \sfc} \psi^{t+1}_{\sfc} = \sigma_{\sfr}^{t+1}$.


\textbf{(d)} For all $0 \leq s, \widetilde{s} \leq t$ and $\sfr \in [\Lr]$, 
\begin{align}
&P\Big( \Big \lvert \sum_{\sfc \in [\Lc]}    \sqrt{W_{\sfr \sfc}} \Big[   \frac{(\hvec^{\widetilde{s}+1}_{\sfc})^*\qadj^{s+1, \sfr}_{\sfc}}{L} + \frac{\sqrt{W_{\sfr \sfc}} \psi^{s+1}_{\sfc} \tau^{\max(\widetilde{s}, s)}_{\sfc}}{\Lc \tau^{s}_{\sfc}}\Big] \Big \lvert \geq \e \Big) \leq t^4 K K'_{t-1}  \PC_{t-1}'   \exp\Big\{\frac{-\kappa \kappa'_{t-1} (\omega/\Lr)^{2}  \pc_{t-1}' \e^2}{t^8  (\log M)^{2} }\Big\} . \label{eq:Hd} 
\end{align}
%

\textbf{(e)}
\begin{align}
& P\Big(  \sum_{\sfc \in [\Lc]}  \sum_{\ell \in \sfc}  \frac{W^{2\sfq}_{\sfr \sfc}}{L} \max_{j \in sec(\ell)}  ([\hvec^{t+1}_{\sfc}]_{j})^2 \geq 6 (\max_{\sfc} W^{2\sfq}_{\sfr \sfc})\tau^t_{\sfc} \log M + \e \Big)  \nonumber \\
&  \leq  t^4 K K'_{t-1}   \PC_{t-1}'    \exp\Big\{ \frac{-  \kappa \kappa'_{t-1}(\omega/\Lr)^{2\sfq} \pc'_{t-1} \e }{t^8}\Big\}.    \label{eq:He}
\end{align}

\textbf{(f)} For $\sigma^{t+1}_{\perp, \sfr}$ defined in \eqref{eq:sigperp_defs} and shown to be positive in Lemma \ref{lem:sigmatperp}, and $\widehat{\gamma}^{t+1, \sfr}$ is defined in \eqref{eq:vec_alph_gam_conc}, for all $\sfr \in [\Lr]$ and $0 \leq i \leq (t+1)$,
\begin{align}
&P\Big(\lvert \gamma^{t+1, \sfr}_{i-1} - \widehat{\gamma}^{t+1, \sfr}_{i-1} \lvert \geq \epsilon \Big) \leq  t^5 K K'_{t-1}   \PC_{t-1}'   \exp\Big\{\frac{-\kappa \kappa'_{t-1} (\omega/\Lr)^{2} \pc'_{t-1}  \e^2}{t^{10}  (\log M)^{2} }\Big\}. \label{eq:Hf}  \\
&P\Big(\Big\lvert \frac{1}{L}\| \qadj^{t+1, \sfr}_{\perp} \|^2 - \sigma^{t+1}_{\perp, \sfr} \Big \lvert \geq \epsilon \Big) \leq t^6 K K'_{t-1} \PC_{t-1}'   \exp\Big\{\frac{-\kappa \kappa'_{t-1}(\omega/\Lr)^{2}  \pc'_{t-1} \e^2}{t^{12} (\log M)^{2} }\Big\}.\label{eq:Hf1}
\end{align}

\textbf{(g)}  Let $\innerQ^{\sfr}_{t+2} := \frac{1}{L} (\Qadjmat^{\sfr}_{t+2})^* \Qadjmat^{\sfr}_{t+2}$.  Then for all $\sfr \in [\Lr]$,
\be
P(\innerQ^{\sfr}_{t+2} \text{ is singular}) \leq t^7 K K'_{t-1}   \PC_{t-1}'   \exp\Big\{\frac{-\kappa \kappa'_{t-1} 
  (\omega/\Lr)^{2}  \pc'_{t-1} \e^2}{t^{12} (\log M)^{2} }\Big\}, \label{eq:Qsing}
\ee
For matrix $\tCmat^{t+2, \sfr}$ defined in \eqref{eq:Ct_def}, when the inverses of $\innerQ^{\sfr}_{t+2}$ exist, 
\begin{align}
&P\Big(\Big \lvert [(\innerQ^{\sfr}_{t+2})^{-1} - (\tCmat^{t+2, \sfr})^{-1}]_{i j} \Big \lvert \geq \epsilon\Big) \leq   t^6 K K'_{t-1}   \PC_{t-1}'    \exp\Big\{\frac{-\kappa \kappa'_{t-1}(\omega/\Lr)^{2} \pc'_{t-1}  \e^2}{t^{12} (\log M)^{2} }\Big\}. \label{eq:Hg}
\end{align}
%

\label{lem:main_lem}
\end{lem}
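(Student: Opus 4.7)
The proof proceeds by joint induction on $t$, cycling through the statements of the lemma in a carefully chosen order dictated by Lemma \ref{lem:hb_cond}. Within iteration $t$, the plan is to establish the $\mathcal{B}$-block first: start with 1(a) (the bound on $\|\Dvec_{t,t,\sfr}\|^2$, which is a consequence of the $\mathcal{H}$-statements at iteration $t-1$), then 1(b)--(e), then the invertibility/projection estimates 1(f)--(g); afterwards turn to the $\mathcal{H}$-block (2(a) first, then 2(b)--(g)). The base case $t=0$ is handled directly: $\brqvec^0 = -\betavec_0$, $\bvec^0 = \wvec - \zvec^0$, $\madj^{0,\sfc}$ and $\qadj^{0,\sfr}$ are all explicit, so every statement reduces either to Gaussian concentration of $\wvec$ or to concentration of sums of $\chi^2$ variables built from the i.i.d.\ entries of $\sfAmat$.

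For the inductive step, the central tool is the representation of Lemma \ref{lem:hb_cond}: conditioning on $\mathscr{S}_{t,t}$ writes $\bvec^t_\sfr$ as $\tfrac{\sigma^t_\sfr}{\sigma^{t-1}_\sfr}\bvec^{t-1}_\sfr + \sqrt{\sigma^t_{\perp,\sfr}}\,\Zprime_{t,\sfr} + \Dvec_{t,t,\sfr}$, and similarly for $\hvec^{t+1}_\sfc$. To prove a scalar-product statement such as 1(c), I will substitute this representation and split the inner product into a Gaussian piece (handled by a Hanson--Wright bound on each of the $\Lr$ blocks of size $\Mr = n/\Lr$) and cross terms with $\Dvec_{t,t,\sfr}$, $\bvec^{t-1}_\sfr$, which are bounded by Cauchy--Schwarz together with 1(a) and the inductive hypothesis. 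The weighting by $W^\sfp_{\sfr\sfc}$ is tracked via the moment identities \eqref{eq:Wrc_avgs}: because $\tfrac{1}{\Lc}\sum_\sfc W^{2\sfp}_{\sfr\sfc} = \Theta((\Lr/\omega)^{2\sfp-1})$, using Bernstein block-wise produces exactly the $(\omega/\Lr)^{2(\sfp-1)_+}$ factor appearing in the statement. Part 1(f) for $\|\madj^{t,\sfc}_\perp\|^2$ follows from 1(e) combined with 1(f) for $\alphavec^{t,\sfc}$, using that $\tau^t_{\perp,\sfc}$ is strictly positive by Lemma \ref{lem:sigmatperp} (this is exactly where $\rho>0$ is required to avoid a degenerate Gram matrix); invertibility of $\innerM^\sfc_{t+1}$ in 1(g) then comes from 1(e) via a matrix-perturbation argument, and 1(g) for the inverse follows from the resolvent identity.

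The $\mathcal{H}$-statements are handled analogously, but the section-wise structure of the SPARC message vector enters in a crucial way. For 2(c) we use that $\brqvec^{t+1}_\sfc = \eta^t_\sfc(\betavec_\sfc - \hvec^{t+1}_\sfc) - \betavec_\sfc$, which is a bounded, section-wise Lipschitz function of $\hvec^{t+1}_\sfc$; replacing $\hvec^{t+1}_\sfc$ by the conditional Gaussian $\sqrt{\tau^t_\sfc}\widetilde{\Z}_{t,\sfc}$ from Lemma \ref{lem:ideal_cond_dist} and applying Gaussian--Lipschitz concentration at the block level yields the concentration around $\psi^{t+1}_\sfc$. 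The bounded-difference coefficient is $O(1/\sqrt{L/\Lc})$ per section and section count per block is $L/\Lc$, producing the $L$-scaled concentration; the extra $(\log M)^{-2}$ in the exponent tracks the Lipschitz constant of $\eta^t$, which scales as $1/\tau^t_\sfc = \Theta(\log M)$. Part 2(e), the bound on $\max_{j\in \mathrm{sec}(\ell)}([\hvec^{t+1}_\sfc]_j)^2$, is proved by the standard tail bound for the maximum of $M$ (approximate) Gaussians, giving the deterministic $6(\max W^{2\sfq})\tau^t_\sfc \log M$ term; the deviation-term contribution from $\widetilde{\Dvec}_{t+1,\sfc}$ is absorbed using 2(a).

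The main obstacle, and what drives the $(t!)^{14},(t!)^{24}$ behavior of the constants $K_t,\kappa_t$, is the bookkeeping of how errors propagate across iterations when every statement at level $t$ is built from several statements at level $t-1$, each invoked with a slightly different $\e$ via a union bound and a Cauchy--Schwarz split. Concretely, to bound $\Dvec_{t,t,\sfr}$ in 1(a) one must control simultaneously: the deviations $\gamma^{t,\sfr}_i - \widehat{\gamma}^{t,\sfr}_i$ (by 2(f) at $t-1$), the norm $\|\qadj^{t,\sfr}_\perp\|^2$ (by 2(f1) at $t-1$), the Gaussian projection terms $[\proj^\parallel_{\Madjmat^\sfc_t}\Zprimesupc_t]_\sfr$ (Gaussian tail bounds using invertibility from 1(g)), and the polynomial-in-$t$ cross terms involving $\upvec$. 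Each invocation loses a polynomial factor in $t$ in both the numerator and denominator of the concentration exponent; propagating this across $T$ iterations multiplies these polynomial losses, and the block-wise structure with weights $W_{\sfr\sfc}$ adds an extra $(\omega/\Lr)$ factor at each step relative to the power-allocated case of \cite{RushV19}. The careful choice of the definitions in \eqref{eq:Ktkapt_defs} and \eqref{eq:param_const_defs} is precisely engineered so that this propagation closes: at the end of each inductive step, the constants degrade from $(K_{t-1},\kappa_{t-1})$ to $(K_t,\kappa_t)$ by exactly the slack consumed by the polynomial-in-$t$ union bounds and by the new $(\omega/\Lr)$ factor introduced by weighting with $W_{\sfr\sfc}$.
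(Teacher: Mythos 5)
Your proposed architecture matches the paper's proof closely: a joint induction alternating $\mathcal{B}_t$ then $\mathcal{H}_{t+1}$, base cases $\mathcal{B}_0$, $\mathcal{H}_1$, the conditional distribution Lemma~\ref{lem:hb_cond} as the engine, the decomposition into ideal Gaussian plus deviation, Cauchy--Schwarz/union-bound bookkeeping producing the $(t!)$ growth, and the moment identities \eqref{eq:Wrc_avgs} to extract the $(\omega/\Lr)$ powers. The paper uses Hoeffding for the bounded sections (since $\|\eta_\ell - \beta_\ell\|^2 \le 1$) and the sub-exponential MGF bounds of Lemma~\ref{lem:max_abs_normals} rather than Hanson--Wright or Gaussian--Lipschitz concentration, but those are interchangeable choices at the level of detail you give.

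There is, however, one genuine gap. When you bound $\Dvec_{t,t,\sfr}$ in step 1(a), you list the pieces to control as ``the Gaussian projection terms $[\proj^{\parallel}_{\Madjmat^\sfc_t}\Zprimesupc_t]_\sfr$ (Gaussian tail bounds using invertibility from 1(g)), and the polynomial-in-$t$ cross terms involving $\upvec$.'' This treats the projection term $\sum_{\sfc} \Madjmat_{t,\sfr}^{\sfc} ((\Madjmat_t^{\sfc})^* \Madjmat_t^{\sfc})^{-1}(\Xmat_{t,\sfc})^* \qadj_{\perp, \sfc}^{t, \sfr}$ and the Onsager pieces $-\sum_{i=1}^{t-1}\gamma^{t,\sfr}_i \upsilon^i_\sfr \brmvec^{i-1}_\sfr + \upsilon^t_\sfr \brmvec^{t-1}_\sfr$ as if each is individually negligible. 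They are not: each is $\Theta(1)$ in $\|\cdot\|^2/n$, and only the algebraic combination is small. The paper handles this by rewriting the sum of the last three terms of \eqref{eq:Dtt} (see \eqref{eq:final_three_tt}) as
\begin{equation*}
\sum_{j=1}^{t-1} \brmvec_\sfr^{j-1}\Big[\sum_{\sfc}S^{j-1}_{\sfr\sfc}\sqrt{W_{\sfr\sfc}}[(\innerM^{\sfc}_t)^{-1}\mathbf{v}^{\sfr,\sfc}]_j - \gamma^{t,\sfr}_j \upsilon^j_\sfr\Big] + \brmvec_\sfr^{t-1}\Big[\sum_{\sfc}S^{t-1}_{\sfr\sfc}\sqrt{W_{\sfr\sfc}}[(\innerM^{\sfc}_t)^{-1}\mathbf{v}^{\sfr,\sfc}]_t + \upsilon^t_\sfr\Big],
\end{equation*}
and then proves the dedicated auxiliary Lemma~\ref{lem:Mv_conc} (and its companion Lemma~\ref{lem:Qv_conc} on the $\mathcal{H}$ side), which establishes the non-obvious identities \eqref{eq:bt_r_rep}, \eqref{eq:concj_equality} relating $(\brCmat^{t,\sfc})^{-1}$ and the state-evolution constants, plus the boundedness estimate \eqref{eq:T2b}. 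Without this cancellation the bound on $\frac{1}{n}\sum_\sfr W^{\sfp}_{\sfr\sfc}\|\Dvec_{t,t,\sfr}\|^2$ would fail, so the proposal as written does not close. This is the single most substantive piece of new bookkeeping in the spatially coupled setting relative to \cite{RushV19}, and any proof of this lemma has to identify it explicitly.

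A second, smaller imprecision: for 2(c) the $(\log M)^{-2}$ loss in the exponent does not come from the Lipschitz constant of $\eta^t$ acting through Gaussian concentration of the Hoeffding term (that term gives $\exp\{-\kappa L(\omega/\Lr)^\sfq \e^2\}$ with no $\log M$). It enters only through the deviation-term estimate \eqref{eq:maxtildedelta0} via Lemma~\ref{lem:BC9}, where $1/\tau^t_\sfc = \Theta(\log M)$ multiplies the $\max_{j\in\mathrm{sec}(\ell)}|[\widetilde{\Dvec}_{t+1,\sfc}]_j|$ piece. Keeping the two sources separate is important because otherwise you would double-count the $\log M$ degradation.
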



\subsection{Proof of Theorem \ref{thm:main1}} \label{subsec:thm_main1_proof}

Recall from \eqref{eq:hqbm_def} that $ \brqvec^{s}  = \betavec^s - \betavec_{0}$. With $\sfq=0$, Eq.\ \eqref{eq:Hc_adj} implies that  for $0 \leq s \leq (T-1)$, 
\be
\begin{split}
P\Big(  \frac{1}{L } \| \betavec^{s+1} - \betavec_0 \|^2 - \frac{1}{\Lc}  \sum_{\sfc \in [\Lc]}   \psi^{s+1}_{\sfc} \Big \lvert \geq \e \Big) 
& \leq s^4 K K'_{s-1}  (\Lr \Lc)^{s} \Lc   \exp\Big\{\frac{-\kappa \kappa'_{s-1} (\Mr \omega) \, (\omega/\Lr)^{2s} \, \e^2}{t^8  (\log M)^{2s+2} }\Big\} \\
& \leq K_s (\Lr \Lc)^{s+1} \exp 
\Big\{ \frac{- \kappa_s (\Mr \omega) \e^2}{ (\log M)^{2s+2} (\Lr/\omega)^{2s}} \Big\}.
\end{split}
\ee
Setting $s=(t-1)$ and recalling that $\Mr = n /\Lr $  yields the statement of Theorem \ref{thm:main1}. \qed

\begin{rem}
\normalfont
Although Theorem \ref{thm:main1} is an immediate consequence of  \eqref{eq:Hc_adj}, the other concentration results in Lemma \ref{lem:main_lem} are required for the induction based proof of  \eqref{eq:Hc_adj}. Indeed, we write
\be
\frac{1}{L } \|  \brqvec^{s+1}_{\sfc} \|^2 =  \frac{1}{L} \sum_{\sfc \in [\Lc]} \| \eta^{t}_\sfc( \betavec_\sfc - \hvec^{s+1}_\sfc) -\betavec_\sfc \|^2,
\label{eq:MSE_exp2}
\ee
and use the representation for $\hvec^{s+1}_\sfc$ from Lemma \ref{lem:hb_cond}.  The concentration results in the  Lemma \ref{lem:main_lem} are used in an induction argument to show that the deviation term is negligible in the sense of \eqref{eq:Ha}. Consequently, $\hvec^{s+1}_\sfc$ is approximately Gaussian, and hence $\frac{1}{L} \sum_{\sfc \in [\Lc]} \| \eta^{t}_\sfc( \betavec_\sfc - \hvec^{s+1}_\sfc) -\betavec_\sfc \|^2$ concentrates on a deterministic value, as described in Section \ref{subsec:proof_overview}. The detailed proof of Lemma \ref{lem:main_lem} is given in Section \ref{subsec:lemhb_cond_proof}.
\end{rem}


\section{Proofs of conditional distribution and concentration lemmas} \label{sec:cond_conc_lemmas}

\subsection{Proof of  Lemma \ref{lem:Et1t}} \label{subsec:lemEt1t_proof}

For $t_a \geq 1$, let $\proj^{\parallel}_{\Qadjmat^{\sfr}_{t_a}}$ and $\proj^{\perp}_{\Qadjmat^{\sfr}_{t_a}}$ denote the  orthogonal projectors onto the column space of $\Qadjmat^{\sfr}_{t_a}$ and its orthogonal complement, respectively. Given the constraints in \eqref{eq:lin_const_cond}, for $\sfr \in [\Lr]$, we can  write
\be
\label{eq:Ardot}
\begin{split}
\sfAmat_{(\sfr, \cdot)} 
 = \sfAmat_{(\sfr, \cdot)}\left(\proj^{\perp}_{\Qadjmat^{\sfr}_{t_a}} + \proj^{\parallel}_{\Qadjmat^{\sfr}_{t_a}}\right) & = \sfAmat_{(\sfr, \cdot)}\left(\proj^{\perp}_{\Qadjmat^{\sfr}_{t_a}} +\Qadjmat^{\sfr}_{t_a} ((\Qadjmat^{\sfr}_{t_a})^* \Qadjmat^{\sfr}_{t_a})^{-1}(\Qadjmat^{\sfr}_{t_a})^* \right) \\
&= \sfAmat_{(\sfr, \cdot)}\proj^{\perp}_{\Qadjmat^{\sfr}_{t_a}} + \Ymat_{t_a, \sfr}  ((\Qadjmat^{\sfr}_{t_a})^* \Qadjmat^{\sfr}_{t_a})^{-1} (\Qadjmat^{\sfr}_{t_a})^*.
\end{split}
\ee
 Similarly, for $\sfc \in [\Lc]$ and $t \geq 0$, we can write
\be
\label{eq:Adotc}
\begin{split}
\sfAmat_{(\cdot, \sfc)} = \left(\proj^{\perp}_{\Madjmat_t^{\sfc}} + \proj^{\parallel}_{\Madjmat_t^{\sfc}}\right) \sfAmat_{(\cdot, \sfc)} = \proj^{\perp}_{\Madjmat_t^{\sfc}} \, \sfAmat_{(\cdot, \sfc)}  + \Madjmat_t^{\sfc} ((\Madjmat_t^{\sfc})^* \Madjmat_t^{\sfc})^{-1}(\Xmat_{t, \sfc})^*,
\end{split}
\ee
where we interpret the $t=0$ case as follows: $\proj^{\perp}_{\Madjmat_0^{\sfc}} = \proj^{\perp}_{\mathbf{0}} = \iden$ and $\proj^{\parallel}_{\Madjmat_0^{\sfc}} = \proj^{\parallel}_{\mathbf{0}} = \mathbf{0}.$ From \eqref{eq:Ardot} and \eqref{eq:Adotc}, we have two equivalent representations for the submatrix $\sfAmat_{(\sfr, \sfc)} \in \reals^{\Mr \times \Mc}$:
\begin{align}
\sfAmat_{\sfr \sfc}  & =  
\sum_{\sfc' \in [\Lc]} \sfAmat_{\sfr \sfc'} [\proj^{\perp}_{\Qadjmat^{\sfr}_{t_a}}]_{\sfc' \sfc}
+ \Ymat_{t_a, \sfr}  ((\Qadjmat^{\sfr}_{t_a})^* \Qadjmat^{\sfr}_{t_a})^{-1} (\Qadjmat^{\sfr}_{t_a, \sfc})^* 
\label{eq:cond1_rep}
\\
&=  
\sum_{\sfr' \in [\Lr]} [\proj^{\perp}_{\Madjmat_t^{\sfc}}]_{\sfr \sfr'} \,  \sfAmat_{\sfr' \sfc}  
+ \Madjmat_{t, \sfr}^{\sfc} ((\Madjmat_t^{\sfc})^* \Madjmat_t^{\sfc})^{-1}(\Xmat_{t, \sfc})^*. \label{eq:A_rep2}
\end{align}

Using the representation \eqref{eq:A_rep2} in \eqref{eq:cond1_rep}, we obtain
\begin{align}
\sfAmat_{\sfr \sfc}  & = \sum_{\sfc' \in [\Lc]} \sum_{\sfr' \in [\Lr]} [\proj^{\perp}_{\Madjmat_t^{\sfc'}}]_{\sfr \sfr'} \,  \sfAmat_{\sfr' \sfc'}  [\proj^{\perp}_{\Qadjmat^{\sfr}_{t_a}}]_{\sfc' \sfc}
 + \sum_{\sfc' \in [\Lc]} \Madjmat_{t, \sfr}^{\sfc'} ((\Madjmat_t^{\sfc'})^* \Madjmat_t^{\sfc'})^{-1}(\Xmat_{t, \sfc'})^* [\proj^{\perp}_{\Qadjmat^{\sfr}_{t_a}}]_{\sfc' \sfc} \nonumber \\
 & \qquad +  \Ymat_{t_a, \sfr}  ((\Qadjmat^{\sfr}_{t_a})^* \Qadjmat^{\sfr}_{t_a})^{-1} (\Qadjmat^{\sfr}_{t_a, \sfc})^*, 
 \label{eq:Arc_rep1}
%
\end{align}
and using \eqref{eq:cond1_rep} in \eqref{eq:A_rep2}, we obtain
\begin{align}
\sfAmat_{\sfr \sfc}  & =  \sum_{\sfr' \in [\Lr]} \sum_{\sfc' \in [\Lc]} [\proj^{\perp}_{\Madjmat_t^{\sfc}}]_{\sfr \sfr'} \,  \sfAmat_{\sfr' \sfc'}  [\proj^{\perp}_{\Qadjmat^{\sfr'}_{t_a}}]_{\sfc' \sfc}  +  \sum_{\sfr' \in [\Lr]}  [\proj^{\perp}_{\Madjmat_t^{\sfc}}]_{\sfr \sfr'} \, \Ymat_{t_a, \sfr'}  ((\Qadjmat^{\sfr'}_{t_a})^* \Qadjmat^{\sfr'}_{t_a})^{-1} (\Qadjmat^{\sfr'}_{t_a, \sfc})^*  \nonumber \\
& \qquad +  \Madjmat_{t, \sfr}^{\sfc} ((\Madjmat_t^{\sfc})^* \Madjmat_t^{\sfc})^{-1}(\Xmat_{t, \sfc})^*.
\label{eq:Arc_rep2}
\end{align}

The first term in the RHS of \eqref{eq:Arc_rep1} can be written as $\sum_{\sfc' \in [\Lc]} [\proj^{\perp}_{\Madjmat_t^{\sfc'}} \sfAmat]_{(\sfr, \sfc')}  [\proj^{\perp}_{\Qadjmat^{\sfr}_{t_a}}]_{\sfc' \sfc}$.  Recall that conditioning on the sigma-algebra $\mathscr{S}_{t, t}$ is equivalent to conditioning on  the linear constraints in \eqref{eq:lin_const_cond} with $t_a=t$.  
We note that the constraint corresponding to $\sfr$ in \eqref{eq:lin_const_cond} involves only the $\sfr$th row block of 
$\sfAmat$, and the constraint corresponding to $\sfc'$ in \eqref{eq:lin_const_cond}  involves only the $\sfc'$th column block of $\sfAmat$, for each $\sfr \in [\Lr], \sfc' \in [\Lc]$.  Therefore, since the entries of $\sfAmat$ are i.i.d.\ Gaussian,  the conditional distribution of the above term given $\mathscr{S}_{t, t}$  satisfies \cite[Lemmas 10, 12]{bayati2011},
\be
\sum_{\sfc' \in [\Lc]} [\proj^{\perp}_{\Madjmat_t^{\sfc'}} \sfAmat]_{(\sfr, \sfc')}  [\proj^{\perp}_{\Qadjmat^{\sfr}_{t_a}}]_{\sfc' \sfc} \, \Big \vert_{\mathscr{S}_{t, t}} \, \stackrel{d}{=} \, \sum_{\sfc' \in [\Lc]} [\proj^{\perp}_{\Madjmat_t^{\sfc'}} \hat{\sfAmat} ]_{(\sfr, \sfc')}  [\proj^{\perp}_{\Qadjmat^{\sfr}_{t_a}}]_{\sfc' \sfc},
\label{eq:AAhat_rep}
\ee
where $\hat{\sfAmat} \stackrel{d}{=} \sfAmat$ is independent of $\mathscr{S}_{t, t}$.  

We first prove the two results in \eqref{eq:A_cond_0case}.  We note that $\sfAmat$ is independent of $\mscrs_{0,0}$ and $\bvec^0_{\sfr} = \sfAmat \qadj^{0, \sfr}$ by definition.  When $t_a = 1$ and $t=0$,  that the result in  \eqref{eq:Arc_rep2} gives
\begin{align*}
\sfAmat_{\sfr \sfc}  & =  \sum_{\sfr' \in [\Lr]} \sum_{\sfc' \in [\Lc]} \iden_{\sfr \sfr'} \,  \sfAmat_{\sfr' \sfc'}  [\proj^{\perp}_{\qvec^{0, \sfr'}}]_{\sfc' \sfc}  +  \sum_{\sfr' \in [\Lr]}  \iden_{\sfr \sfr'} \, \bvec^0_{\sfr'} \norm{\qvec^{0, \sfr'}}^{-2} (\qvec^{0, \sfr'}_{\sfc})^*  = [ \sfAmat  \proj^{\perp}_{\qvec^{0, \sfr}}]_{(\sfr, \sfc)}  + \bvec^0_{\sfr} \norm{\qvec^{0, \sfr}}^{-2} (\qvec^{0, \sfr}_{\sfc})^*.
\end{align*}
Using the above and  noting that $\sfAmat  \proj^{\perp}_{\qvec^{0, \sfr}} \lvert_{\mscrs_{1,0}} \stackrel{d}{=} \sfAmat'  \proj^{\perp}_{\qvec^{0, \sfr}}$ (by a similar argument to \eqref{eq:AAhat_rep}), we have
\begin{align*}
&  \left[ \sfAmat^* \madj^{0, \sfc} \right]_\sfc \, \big |_{\mscrs_{1,0}}  =   \,  \sum_{\sfr \in [\Lr]} 
(\sfAmat_{(\sfr, \sfc)})^* \madj^{0, \sfc}_{\sfr} \  \big \vert_{\mathscr{S}_{1,0}}  \stackrel{d}{=} \sum_{\sfr \in [\Lr]}\left[  \left[   \proj^{\perp}_{\qvec^{0, \sfr}} \,  \sfAmat'^* \right]_{(\sfc, \sfr)} \,  + \,  \qvec^{0, \sfr}_{\sfc}  \norm{\qvec^{0, \sfr}}^{-2} (\bvec^0_{\sfr})^* \right] \madj^{0, \sfc}_{\sfr}.
\end{align*}

Next, using \eqref{eq:Arc_rep1} and \eqref{eq:AAhat_rep} and multiplying by $\qvec^{t, \sfr}$, we obtain
\begin{align}
& \left[ \sfAmat \qvec^{t, \sfr} \right]_\sfr \, \big \vert_{\mathscr{S}_{t, t}} \, =  \,  \sum_{\sfc \in [\Lc]} 
\sfAmat_{(\sfr, \sfc)} \qvec^{t, \sfr}_{\sfc} \  \big \vert_{\mathscr{S}_{t, t}}  \nonumber \\
& \stackrel{d}{=} 
\sum_{\sfc' \in [\Lc]}  [\proj^{\perp}_{\Madjmat_t^{\sfc'}} \hat{\sfAmat} ]_{(\sfr, \sfc')}  \, \qadj^{t, \sfr}_{\perp, \sfc'}  + \sum_{\sfc' \in [\Lc]} \Madjmat_{t, \sfr}^{\sfc'} ((\Madjmat_t^{\sfc'})^* \Madjmat_t^{\sfc'})^{-1}(\Xmat_{t, \sfc'})^* \qvec^{t, \sfr}_{\perp, \sfc'}  +  \Ymat_{t, \sfr}  ((\Qadjmat^{\sfr}_{t})^* \Qadjmat^{\sfr}_{t})^{-1} (\Qadjmat^{\sfr}_{t, \sfc})^*\qvec^{t, \sfr} \nonumber .
%
\end{align}
Noting that $((\Qadjmat^{\sfr}_{t})^* \Qadjmat^{\sfr}_{t})^{-1} (\Qadjmat^{\sfr}_{t, \sfc})^*\qvec^{t, \sfr} =  \gammavec^{t, \sfr}$ completes the proof of \eqref{eq:Aq_cond}. The result  \eqref{eq:Astarm_cond} can be similarly  obtained, by using the representation in \eqref{eq:Arc_rep2} to express $\left[ \sfAmat^* \mvec^{t, \sfc} \right]_\sfc$.


\subsection{Proof of Lemma \ref{lem:hb_cond}}  \label{subsec:lemhb_cond_proof}

\begin{proof}
We begin by demonstrating  \eqref{eq:Ba_dist}.  By \eqref{eq:hqbm_def} it follows that
\ben
\bvec^{0}_{\sfr} \lvert_{\mscrs_{0,0}} = [\sfAmat\qadj^{0, \sfr}]_{\sfr} = \sum_{\sfc \in [\Lc]} \sfAmat_{\sfr \sfc} \, \qadj^{0, \sfr}_{\sfc} \overset{d}{=} \Zprime_{0,\sfr} \left(\frac{1}{\sqrt{L}}\norm{\qadj^{0, \sfr}} \right),
\een
where $\Zprime_{0,\sfr} \sim \mathcal{N}(0, \iden_{\Mr})$ independent of $\mscrs_{0,0}$.  Furthermore, we have
\be
 \frac{1}{L} \norm{\qadj^{0, \sfr}}^2 = \frac{1}{L} \sum_{\sfc \in [\Lc]} W_{\sfr\sfc} \norm{\brqvec^{0}_{\sfc}}^2 = \frac{1}{L} \sum_{\sfc \in [\Lc]} W_{\sfr\sfc} \norm{\betavec_{0, \sfc}}^2 =  \frac{1}{\Lc} \sum_{\sfc \in [\Lc]} W_{\sfr\sfc} \psi^0_{\sfc} =  \sigma^0_{\sfr}.
 \label{eq:breveq_simp}
 \ee

For the case $t \geq 1$, we use \eqref{eq:bmq_rowcol_v2} and \eqref{eq:Aq_cond} to write
\be
\begin{split}
&\bvec^t_{\sfr}  \lvert_{\mscrs_{t, t}} =  ([\sfAmat\qadj^{t,\sfr}]_{\sfr} + \frac{\sigma^t_{\sfr}}{\phi^{t-1}_{\sfr}} \brmvec^{t-1}_{\sfr}) \lvert_{\mscrs_{t, t}} \\
&\overset{d}{=}  \sum_{\sfc \in [\Lc]}  [\proj^{\perp}_{\Madjmat_t^{\sfc}}]_{(\sfr, \cdot)} \,  \hat{\sfAmat}_{(\cdot, \sfc)}  \, \qadj^{t, \sfr}_{\perp, \sfc} +\Ymat_{t, \sfr} \gammavec^{t, \sfr}  +   \sum_{\sfc \in [\Lc]} \Madjmat_{t,\sfr}^{\sfc} ((\Madjmat_t^{\sfc})^* \Madjmat_t^{\sfc})^{-1}(\Xmat_{t, \sfc})^*\, \qadj_{\perp, \sfc}^{t, \sfr}   + \frac{\sigma^t_{\sfr}}{\phi^{t-1}_{\sfr}}   \brmvec^{t-1}_{\sfr}.
\label{eq:lemma13a_1}
\end{split}
\ee
%
Since $\hat{\sfAmat}$ is independent of $\mscrs_{t, t}$, with  i.i.d.\ $\sim \mc{N}(0, \frac{1}{L})$ entries, the first term on the right side of \eqref{eq:lemma13a_1} can be written as:
\begin{align}
 \sum_{\sfc \in [\Lc]}  [\proj^{\perp}_{\Madjmat_t^{\sfc}}]_{(\sfr, \cdot)} \,  \hat{\sfAmat}_{(\cdot, \sfc)}  \, \qadj^{t, \sfr}_{\perp, \sfc} &\overset{d}{=}  \sum_{\sfc \in [\Lc]}   \frac{\| \qadj^{t, \sfr}_{\perp, \sfc} \| }{\sqrt{L}}    \, [\proj^{\perp}_{\Madjmat_t^{\sfc}} \, \Zprimesupc_{t}]_{\sfr} \overset{d}{=} \frac{\| \qadj^{t, \sfr}_{\perp}\| }{\sqrt{L}}  \, \Zprime_{t,\sfr} -  \sum_{\sfc \in [\Lc]}  \frac{\| \qadj^{t, \sfr}_{\perp, \sfc}\| }{\sqrt{L}}    \, [\proj^{\parallel}_{\Madjmat_t^{\sfc}} \, \Zprimesupc_{t}]_{\sfr},
\label{eq:bt_dist_eq1}
\end{align}
where in the first step, the vectors $\Zprimesupc_{t} \sim \mathcal{N}(0, \iden_{n})$  are independent for $\sfc \in [\Lc]$. For the second equality we use $\proj^{\perp}_{\Madjmat_t^{\sfc}} = \iden_{n} - \proj^{\parallel}_{\Madjmat_t^{\sfc}}$, and writing
$\Zprime_{t,\sfr}  = \sum_{\sfc \in [\Lc]}    \frac{\| \qadj^{t, \sfr}_{\perp, \sfc} \| }{\| \qadj^{t, \sfr}_{\perp}\|}     \Zprimesupc_{t, \sfr} $, we note that 
$\Zprime_{t, \sfr}  \sim \mathcal{N}(0, \iden_{\Mr})$.
%
%
%
Using \eqref{eq:bt_dist_eq1} and $\Ymat_{t,\sfr} = \Bmat_{t, \sfr} - [0| \Mmat_{t-1, \sfr}] \Upmat^{\sfr}_{t}$, we simplify \eqref{eq:lemma13a_1} to the following:
\be
\begin{split}
\bvec^t_{\sfr}  \lvert_{\mscrs_{t, t}} &\stackrel{d}{=} \Bmat_{t, \sfr} \gammavec^{t,\sfr} \, + \,  \frac{1}{\sqrt{L}} \, \| \qadj^{t, \sfr}_{\perp} \|   \, \Zprime_{t,\sfr}  \, - \, \frac{1}{\sqrt{L}}  \sum_{\sfc \in [\Lc]}\| \qadj^{t, \sfr}_{\perp, \sfc} \|    \, [\proj^{\parallel}_{\Madjmat_t^{\sfc}} \, \Zprimesupc_{t} ]_{\sfr}\\
&  + \,  \sum_{\sfc \in [\Lc]} \Madjmat_{t,\sfr}^{\sfc} ((\Madjmat_t^{\sfc})^* \Madjmat_t^{\sfc})^{-1}(\Xmat_{t, \sfc})^*\, \qadj_{\perp, \sfc}^{t, \sfr} \, - \,  [ \bzero \, | \, \Mmat_{t-1, \sfr}] \Upmat^{\sfr}_{t} \gammavec^{t,\sfr}   \,  + \, \frac{\sigma^t_{\sfr}}{\phi^{t-1}_{\sfr}}   \brmvec^{t-1}_{\sfr} .
\end{split}
\label{eq:lemma13a}
\ee
All the quantities in the RHS of \eqref{eq:lemma13a} except the vectors $\Zprime_{t, \sfr}$ and $\Zprimesupc_{t}$ are in the conditioning sigma-field.  We can rewrite \eqref{eq:lemma13a_1} with the following pair of values:
\begin{align*}
\bvec^t_{\sfr} \lvert_{\mscrs_{t, t}} &\overset{d}{=} \bvec^{t-1}_{\sfr} \frac{\sigma^t_{\sfr}}{\sigma^{t-1}_{\sfr}}+ \sqrt{\sigma_{\perp, \sfr}^{t}} \Zprime_{t,\sfr} + \Dvec_{t,t, \sfr}, \\
\Dvec_{t,t, \sfr} &=  \sum_{i=0}^{t-2} \bvec^i_{\sfr}  \gamma^{t,\sfr}_{i}  \,+\, \bvec^{t-1}_{\sfr}  \left[\gamma^{t,\sfr}_{t-1} - \frac{\sigma^t_{\sfr}}{\sigma^{t-1}_{\sfr}}\right]  \, + \,  \left[ \frac{1}{\sqrt{L}}  
\|\qadj^{t,\sfr}_{\perp}\| - \sqrt{\sigma_{\perp, \sfr}^{t}}\right] \Zprime_{t,\sfr}  \,  - \,  \frac{1}{\sqrt{L}}  \sum_{\sfc \in [\Lc]} \| \qadj^{t, \sfr}_{\perp, \sfc} \|    \, [\proj^{\parallel}_{\Madjmat_t^{\sfc}} \, \Zprimesupc_{t}]_{\sfr} \\
& \qquad   + \,  \sum_{\sfc \in [\Lc]} \Madjmat_{t,\sfr}^{\sfc} ((\Madjmat_t^{\sfc})^* \Madjmat_t^{\sfc})^{-1}(\Xmat_{t,\sfc})^*\, \qadj_{\perp, \sfc}^{t, \sfr} \,- \, [ \bzero \mid \Mmat_{t-1, \sfr}] \Upmat^{\sfr}_{t} \gammavec^{t,\sfr}   \,  + \, \frac{\sigma^t_{\sfr}}{\phi^{t-1}_{\sfr}}   \brmvec^{t-1}_{\sfr} . 
\end{align*}
To complete the proof of \eqref{eq:Ba_dist}, we note that $[ 0 | \Mmat_{t-1, \sfr}] \Upmat^{\sfr}_{t} \gammavec^{t,\sfr} =  \sum_{i=1}^{t-1}\gamma^{t,\sfr}_{i} \frac{\sigma^i_{\sfr}}{\phi^{i-1}_{\sfr}}  \brmvec^{i-1}_{\sfr}$.

For the result  \eqref{eq:Ha_dist}  we start by writing $\hvec^{t+1}_{\sfc} =  [\sfAmat^* \madj^{t, \sfc}]_{\sfc}- \brqvec^{t}_{\sfc}$, and using \eqref{eq:Astarm_cond} for the conditional distribution of  $[\sfAmat^* \madj^{t, \sfc}]_{\sfc}$ given  $\mscrs_{t+1,t}$. We omit the proof as the argument is along similar lines as the above.
 \end{proof}

\subsection{Proof of Lemma  \ref{lem:main_lem}} \label{subsec:main_lem_proof}

The proof proceeds by induction on $t$.  We label the results  in \eqref{eq:Ba1} through \eqref{eq:Bg} as step $\mathcal{B}_t$, and those in \eqref{eq:Ha} through \eqref{eq:Hg} as  step $\mathcal{H}_{t+1}$.  The proof consists of four steps, inductively showing that: (1) $\mathcal{B}_0$ holds, (2) $\mathcal{H}_1$ holds, (3) if $\mathcal{B}_{\tilde{s}}, \mathcal{H}_s$ hold for all $\tilde{s} < t $ and $s \leq t $, then $\mathcal{B}_t$ holds, and (4) if $\mathcal{B}_{\tilde{s}}, \mathcal{H}_s$ hold for all $\tilde{s} \leq t $ and $s \leq t$, then $\mathcal{H}_{t+1}$ holds.
Appendix \ref{app:conc_lemmas} lists a few basic concentration inequalities and other lemmas that are used in the proof. 

\subsubsection{Step 1: Showing $\mathcal{B}_0$ holds} \label{subsub:step1}

\noindent \textbf{(a)} $\Dvec_{0,0} =\mathbf{0}$, so there is nothing to prove. 

\noindent \textbf{(b)} First, $\bvec^0_{\sfr} \stackrel{d}{=} \sqrt{\sigma^0_{\sfr}} \,  \Zprime_{0, \sfr}$ where $ \Zprime_{0, \sfr} \sim \mathcal{N}(0, \mathbb{I}_{\Mr})$ by Lemma \ref{lem:hb_cond}.  Further, $\wvec_{\sfr} \overset{d}{=} \sigma \mathbf{U}_{\sfr}$ where $\mathbf{U}_{\sfr} \sim \mathcal{N}(0, \mathbb{I}_{\Mr})$ independent of $\Zprime_{0, \sfr}$. 
Hence,
\be
\begin{split}
& P\Big(\frac{1}{n} \Big \lvert  \sum_{\sfr \in [\Lr]} W^{\sfp}_{\sfr \sfc} (\bvec^0_{\sfr})^*\wvec_{\sfr} \Big \lvert \geq \epsilon  \Big)  \leq P \Big(\frac{1}{n} \Big \lvert  \sum_{\sfr \in [\Lr]}   \sigma \sqrt{\sigma^0_{\sfr}} W^{\sfp}_{\sfr \sfc}(\Zprime_{0, \sfr})^*\mathbf{U}_{\sfr} \Big \lvert \geq \epsilon \Big)  \\
&\overset{(a)}{\leq}  K\exp\Big\{\frac{-\kappa n \e^2}{\max\{ \frac{1}{\Lr} \, \sum_{\sfr} W^{2\sfp}_{\sfr \sfc}, \,  \max_\sfr W^{\sfp}_{\sfr  \sfc}, \, 1 \}}\Big \}\overset{(b)}{\leq}   K\exp\{-\kappa n   ( {\omega}/{\Lr})^{\max\{2\sfp -1, 0\}} \e^2 \}.
\label{eq:Bb2_1}
\end{split}
\ee
In  step $(a)$, we use Lemma \ref{lem:max_abs_normals} \eqref{res3} and that $\sigma_\sfr^0 \leq 2P$ for $\sfr \in [\Lr]$. In step $(b)$, we use $\max_{\sfr, \sfc} W^{\sfp}_{\sfr \sfc} \leq (P \Lr/\omega)^{\sfp}$, and $\frac{1}{\Lr} \, \sum_{\sfr} W^{2\sfp}_{\sfr  \sfc} \leq P (P \Lr/\omega )^{2\sfp -1}$ for $\sfp \in \{1,2 \}$ (see \eqref{eq:Wrc_avgs}).  Notice that this is consistent with the $\PC, \pc$ notation defined in \eqref{eq:param_const_defs} and used  in the stated concentration results since $\PC_{-1} = 1$ and $\pc_{-1} = \Mr \omega$, and in particular, $n ( {\omega}/{\Lr})^{\max\{2\sfp -1, 0\}} = \Mr \omega   ( {\omega}/{\Lr})^{\max\{2\sfp -1, 0\}-1} \geq  \Mr \omega   ( {\omega}/{\Lr})^{\max\{2(\sfp -1), 0\}}$ (the only inequality is when $\sfp = 0$).

\noindent  \textbf{(c)}  
Proving $\mathcal{B}_0(c)$ is similar to $\mathcal{B}_0(b)$ and results in the same bound: $\| \bvec^0_{\sfr} \|^2 \overset{d}{=} \sigma_{\sfr}^0 \|\Zprime_{0, \sfr} \|^2$ by Lemma \ref{lem:hb_cond}, therefore the result follows from  Lemma \ref{lem:max_abs_normals} \eqref{res2} and arguments as used in step $(b)$ of \eqref{eq:Bb2_1}.

\noindent  \textbf{(d)} 
Recall, $\madj_{\sfr}^{0,\sfc} = S^0_{\sfr \sfc} \sqrt{W_{\sfr \sfc}} (\bvec^0_{\sfr} - \wvec_{\sfr})$, therefore
$ \sqrt{W_{\sfr \sfc}}  (\bvec^0_{\sfr})^*\madj^{0,\sfc}_{\sfr} 
 =  S^0_{\sfr \sfc} W_{\sfr \sfc}  (\norm{\bvec^0_{\sfr}}^2 - ( \bvec^0_{\sfr})^*\wvec_{\sfr}.$
Then,
\be
\begin{split}
P&\Big( \frac{\Mr}{L} \Big \lvert \frac{1}{ \Mr } \sum_{\sfr \in [\Lr]} W_{\sfr \sfc}^{\sfp- \frac{1}{2}}   (\bvec^0_{\sfr})^*\madj^{0,\sfc}_{\sfr} - \sum_{\sfr \in [\Lr]}   S^0_{\sfr \sfc} W^{\sfp}_{\sfr \sfc} \, \sigma^{0}_{\sfr} \Big \lvert \geq \epsilon \Big) \\
&= P\Big(\Big \lvert \frac{n}{L \Lr}   \sum_{\sfr \in [\Lr]} S^0_{\sfr \sfc}  W^{\sfp}_{\sfr \sfc}  \Big(\frac{\norm{\bvec^0_{\sfr}}^2}{\Mr} - \frac{( \bvec^0_{\sfr})^*\wvec_{\sfr}}{\Mr} - \sigma^{0}_{\sfr}\Big) \Big \lvert \geq \epsilon \Big) \\
&\overset{(a)}{\leq}  P\Big(\Big \lvert \frac{1}{\Lr}   \sum_{\sfr \in [\Lr]}  \Big(\frac{n}{L} S^0_{\sfr \sfc}\Big)  W^{\sfp}_{\sfr \sfc}  \Big[\frac{\norm{\bvec^0_{\sfr}}^2}{\Mr} - \sigma^{0}_{\sfr}\Big] \Big \lvert \geq \frac{\epsilon}{2} \Big) +P\Big(\Big \lvert \frac{1}{\Lr}   \sum_{\sfr \in [\Lr]} \Big(\frac{n}{L} S^0_{\sfr \sfc}\Big)  W^{\sfp}_{\sfr \sfc}  \Big[ \frac{( \bvec^0_{\sfr})^*\wvec_{\sfr}}{\Mr} \Big] \Big \lvert \geq \frac{\epsilon}{2} \Big) \\
&\overset{(b)}{\leq} K \exp\{-\kappa n \e^2  ( {\omega}/{\Lr})^{\max\{2\sfp -1, 0\}} \},
\label{eq:B0d_eq1}
\end{split}
\ee
where we have used  Lemma~\ref{sums} in step $(a)$, and  step $(b)$ follows by $\mathcal{B}_0(b)$ and $\mathcal{B}_0(c)$, using that fact that  $\frac{n}{L} S^0_{\sfr \sfc}$ is bounded above and below by positive constants for all $(\sfr, \sfc)$.

\noindent  \textbf{(e)} %
Proving $\mathcal{B}_0(e)$ is similar to $\mathcal{B}_0(d)$ and results in the same bound.  We sketch the detais. First, $\madj_{\sfr}^{0,\sfc} = S^0_{\sfr \sfc} \sqrt{W_{\sfr \sfc}} (\bvec^0_{\sfr} - \wvec_{\sfr})$, therefore,
\ben
\begin{split}
\sum_{\sfr \in [\Lr]}  \|\madj^{0,\sfc}_{\sfr} \|^2 &= \sum_{\sfr \in [\Lr]} (S^0_{\sfr \sfc})^2 W_{\sfr \sfc} \norm{\bvec^0_{\sfr} - \wvec_{\sfr}}^2  =  \sum_{\sfr \in [\Lr]} (S^0_{\sfr \sfc})^2 W_{\sfr \sfc} ( \| \bvec^0_{\sfr} \|^2 +  \| \wvec_{\sfr} \|^2 - 2  (\bvec^0_{\sfr})^*\wvec_{\sfr}),
%
\end{split}
\een
Next, $  \phi^0_{\sfr} =   (\sigma^2 + \sigma^0_{\sfr})$, and letting $\wvec_{\sfr} \overset{d}{=} \sigma \mathbf{U}_{\sfr}$ where $\mathbf{U}_{\sfr} \sim \mathcal{N}(0, \mathbb{I}_{\Mr})$, the result follows by Lemma \ref{sums}, $\mathcal{B}_0(c)$, $\mathcal{B}_1(b)$, and Lemma \ref{lem:max_abs_normals} as well as an argument as in step $(b)$ of \eqref{eq:Bb2_1}.

\noindent  \textbf{(f)} There is nothing to prove for \eqref{eq:Bf} since $ \| \madj^{0,\sfc}_{\perp} \|^2 = \| \madj^{0,\sfc} \|^2$ and $\tau^0_{\perp, \sfc} = \tau^0_{\sfc}$ and result \eqref{eq:Bf1} follows from $\mathcal{B}_0 (e)$ with $\sfp = 1$.

\noindent  \textbf{(g)} First, $\innerM^{\sfc}_{1} := \frac{n}{L^2}(\Madjmat^{\sfc}_{1})^* \Madjmat^{\sfc}_{1} = \frac{n}{L^2}\| \madj^{0,\sfc} \|^2$  concentrates to $\frac{n }{L}\tau^{0}_{\sfc} $ by $\mathcal{B}_0 (e)$ with $\sfp = 1$.  Result \eqref{eq:Bg} then follows from $\mathcal{B}_0 (e)$ and Lemma \ref{inverses}.  By Fact \ref{fact:eig_proj}, if $\frac{n}{L^2}
\| \madj^{0, \sfc}_{\perp} \|^2 \geq c >0$, then $\innerM^{\sfc}_1$ is invertible.  Note from  $\mc{B}_{0}(f)$ that $\frac{n}{L^2}\| \madj^{0, \sfc}_{\perp} \|^2$ concentrates on $\frac{n}{L}\tau^{0}_{\perp, \sfc} = \frac{n}{L} \tau^{0}_{\sfc} $. From  \eqref{eq:tau_ct_def}, since $\phi^0_\sfr \geq \sigma^2$, we have  $  \frac{n}{L} \tau^{0}_{\sfc}  \geq \frac{\sigma^2}{P}$ for all $\sfc$. Choosing $\kappa =  \min \{1, \frac{n}{2L}\tau^{0}_{\perp, \sfc}\}$, we therefore have from $\mc{B}_0 (f)$ that
\ben
P\Big(\innerM^{\sfc}_1 \text{ singular}\Big)   \leq  P\Big(\frac{n}{L}\Big \lvert \frac{1}{L} \|  \madj^{0, \sfc}_{\perp} \|^2 - \tau^0_{\perp, \sfc} \Big \lvert \geq  \kappa \Big) \leq K\exp\{ -c \Mr \omega\}.
\een
%


\subsubsection{Step 2: Showing $\mathcal{H}_1$ holds} \label{subsub:step2}

\noindent  \textbf{(a)} We use the expression for $\Dvec_{1,0, \sfc}$ from Lemma \ref{lem:hb_cond}, and write the second term in \eqref{eq:D10} as 
\begin{align*}
\sum_{\sfr \in [\Lr]}  \frac{\|  \madj^{0, \sfc}_{\sfr} \|  }{\sqrt{L}}    \, [\proj^{\parallel}_{\Qadjmat^{\sfr}_{1}} \, \Z^{\sfr}_{0}]_{\sfc} &= \sum_{\sfr \in [\Lr]}   \frac{ \| \madj^{0, \sfc}_{\sfr} \|  }{\sqrt{L}}   \Big[\frac{\qadj^{0, \sfr}_{\sfc} \, (\qadj^{0, \sfr})^* \, \Z^{\sfr}_{0}}{ \norm{\qadj^{0, \sfr}}^2 }\Big]  \overset{d}{=} \sum_{\sfr \in [\Lr]}  \frac{ \|  \madj^{0, \sfc}_{\sfr} \|  }{\sqrt{L}}   \Big[\frac{\qadj^{0, \sfr}_{\sfc} \, Z^{\sfr}}{\norm{\qadj^{0, \sfr}}}\Big],
\end{align*}
where $Z^{\sfr}  \sim \mathcal{N}(0, 1)$ are i.i.d.\  for $\sfr \in [\Lr]$.  Recall, $\qadj^{0, \sfr}_\sfc =  \sqrt{W_{\sfr \sfc}} \brqvec^{0}_{\sfc}$ and  $\norm{\qadj^{0, \sfr}}^2 = L \sigma^0_{\sfr}$ from \eqref{eq:breveq_simp}, so
\begin{align*}
 \sum_{\sfr \in [\Lr]}  \frac{ \| \madj^{0, \sfc}_{\sfr} \|  }{\sqrt{L}}   \Big[\frac{\qadj^{0, \sfr}_{\sfc} \, Z^{\sfr}}{\| \qadj^{0, \sfr} \| }\Big] =  \brqvec^{0}_{\sfc}  \sum_{\sfr \in [\Lr]}  
 \frac{ \sqrt{W_{\sfr \sfc}} \, \| \madj^{0, \sfc}_{\sfr} \| Z^{\sfr} }{\sqrt{L} \| \qadj^{0, \sfr} \|}  =   Z \, \brqvec^{0}_{\sfc}  \Big({\sum_{\sfr \in [\Lr]}  \frac{W_{\sfr \sfc} \, \| \madj^{0, \sfc}_{\sfr} \|^2   }{L^2 \, \sigma^0_{\sfr}} }\Big)^{1/2},
\end{align*}
where $Z\sim \normal(0,1)$. Using $  \max_{j \in sec(\ell)} \abs{[\brqvec^0_{\sfc}]_j}^2 =\max_{j \in sec(\ell)} \abs{[\betavec_{0, \sfc}]_j}^2 = 1$, for any section $\ell$ in $\sfc$ and the above, we have 
\be
\begin{split}
\max_{j \in sec(\ell)} \abs{[\Dvec_{1,0, \sfc}]_j}^2 &\leq  \Big\lvert \frac{\norm{\madj^{0, \sfc}}}{\sqrt{L}}  - \sqrt{ \tau^{0}_{\sfc}}\Big \lvert^2  \max_{j \in sec(\ell)} \abs{ [\Z_{0, \sfc}]_j}^2 +    \abs{Z}^2\sum_{\sfr \in [\Lr]}  \frac{W_{\sfr \sfc} \, \| \madj^{0, \sfc}_{\sfr}\|^2   }{L^2 \, \sigma^0_{\sfr}}  \\
& \qquad + \Big \lvert \sum_{\sfr \in [\Lr]}\frac{\sqrt{W_{\sfr \sfc}}}{L \sigma^0_{\sfr}} (\bvec^0_{\sfr})^*\madj^{0, \sfc}_{\sfr} - 1\Big \lvert^2 .
\label{eq:normDel10}
\end{split}
\ee
We therefore have the following bound for $\sfr' \in [\Lr]$, and  $\sfq = 0$ or $\sfq =1$ using Lemma~\ref{sums}.
\begin{align}
& P  \Big(\frac{1}{L}  \sum_{\sfc \in [\Lc]}\sum_{\ell \in \sfc} W^{2 \sfq}_{\sfr' \sfc} \max_{j \in sec(\ell)} \abs{[\Dvec_{1,0, \sfc}]_j}^2 \geq \epsilon \Big) \nonumber \\
&  \leq P \Big(\frac{1}{L} \sum_{\sfc \in [\Lc]}  \frac{L  W^{2 \sfq}_{\sfr' \sfc} }{n} \Big \lvert \sqrt{\frac{n \| \madj^{0, \sfc} \|^2 }{ L^2}}  - \sqrt{\frac{n \tau^0_{\sfc}}{L}} \Big \lvert^2 \sum_{\ell \in \sfc} \max_{j \in sec(\ell)} \abs{ [\Z_{0, \sfc}]_j}^2  \geq \frac{\epsilon }{3} \Big) \label{eq:sum_del2_bnd1}  \\
& +P \Big(\frac{1}{\Lc}  \sum_{\sfc \in [\Lc]}  W^{2 \sfq}_{\sfr' \sfc}  \Big \lvert\sum_{\sfr \in [\Lr]}\frac{\sqrt{W_{\sfr \sfc}} }{L \sigma^0_{\sfr}}(\bvec^0_{\sfr})^*\madj^{0, \sfc}_{\sfr} - 1\Big \lvert^2 \geq \frac{ \epsilon}{3} \Big) +   P \Big( \frac{\abs{Z}^2}{L \Lc}  \sum_{\sfc \in [\Lc]} \hspace{-3pt} W^{2 \sfq}_{\sfr' \sfc} \sum_{\sfr \in [\Lr]}  \hspace{-3pt}  \frac{W_{\sfr \sfc} \, \| \madj^{0, \sfc}_{\sfr} \|^2   }{L \, \sigma^0_{\sfr}} \geq \frac{ \epsilon }{3} \Big). \nonumber
\end{align}
Label the terms on the RHS of \eqref{eq:sum_del2_bnd1} as $T_1, T_2, T_3$ and we bound each separately.

Consider term $T_1$ and let $\Pi_0$ be the event under consideration, so that $T_1 = P(\Pi_0)$, and define an event $\mathcal{F}$ as 
\be
\mathcal{F} =  \Big\{ \max_{\sfc  \in [\Lc]}   \frac{L }{n}\Big \lvert \sqrt{\frac{n}{L} \norm{\madj^{0, \sfc}}^2 }  - \sqrt{\frac{n}{L} \tau^0_{\sfc}} \Big \lvert^2 \geq \frac{\epsilon}{(\max_{\sfc' \in [\Lc]} W^{2 \sfq}_{\sfr' \sfc'})9 \log M} \Big \}.
\label{eq:H1F_def}
\ee
With this definition, 
$T_1 = P(\Pi_0) \leq P(\mathcal{F}) + P(\Pi_0 | \mathcal{F}^{c}).$
First,
\be
\begin{split}
P(\mathcal{F})  &\leq \sum_{\sfc  \in [\Lc]} P\Big(\Big \lvert \sqrt{\frac{n}{L} \norm{\madj^{0, \sfc}}^2 }  - \sqrt{\frac{n}{L} \tau^0_{\sfc}}  \Big \lvert \geq \sqrt{\frac{n\epsilon}{(\max_{\sfc' \in [\Lc]} W^{2 \sfq}_{\sfr' \sfc'}) 9  L \log M}} \Big) \\
& \overset{(a)}{\leq}   \sum_{\sfc  \in [\Lc]} K\exp\Big\{\frac{-\kappa \Mr \omega (n \tau^0_{\sfc}/L) \e}{\max_{\sfc' \in [\Lc]} W^{2 \sfq}_{\sfr' \sfc'} } \Big\}   \overset{(b)}{\leq} K \Lc \exp\Big\{\frac{-\kappa \Mr \omega \e}{\max_{\sfc' \in [\Lc]} W^{2 \sfq}_{\sfr' \sfc'}}\Big\},
\label{eq:F_bound}
\end{split}
\ee
where step $(a)$ uses $\mathcal{B}_0(f)$ and Lemma \ref{sqroots}, and step $(b)$ the fact that $n \tau^0_{ \sfc}/L \in \Theta(1)$ and $n R = L \log M$. 
Next, by Lemma \ref{lem:max_abs_normals}, using  that $\Z_{0, \sfc}$ is independent of the event $\mathcal{F}$ for $\sfc \in [\Lc]$,
\be
\begin{split}
P(\Pi_0 | \mathcal{F}^{c}) &\leq  P \Big(\frac{1}{L } \sum_{\sfc \in [\Lc]}  \sum_{\ell \in \sfc}   \max_{j \in sec(\ell)} \abs{ [\Z_{0, \sfc}]_j}^2  \geq  3 \log M \Big) \leq K \exp\{- \kappa  L \log M \}.
\label{eq:Fcond_bound}
\end{split}
\ee
From \eqref{eq:F_bound} and \eqref{eq:Fcond_bound},  we find
\be
T_1 \leq K \Lc \exp\Big\{\frac{-\kappa \Mr \omega \e}{\max_{\sfc' \in [\Lc]} W^{2 \sfq}_{\sfr' \sfc'}}\Big\}
 \leq K \Lc \exp\{-\kappa' n (\omega/\Lr)^{1+2\sfq} \e\} .
\label{eq:H1T1_bound}
\ee
The final inequality uses $\max_{\sfr, \sfc} W^{2 \sfq}_{\sfr \sfc} \leq (P \Lr/\omega)^{2 \sfq}$, thus $\Mr \omega/(\max_{\sfr, \sfc} W^{2 \sfq}_{\sfr \sfc}) \geq \kappa \Mr \omega (\omega/\Lr)^{2 \sfq}.$

For term $T_2$, we use result $\mathcal{B}_0 (d)$.  To see this,
\ben
\begin{split}
&T_2 \overset{(a)}{=} P \Big(\frac{1}{\Lc}  \sum_{\sfc \in [\Lc]} W^{2 \sfq}_{\sfr' \sfc}  \Big \lvert\sum_{\sfr \in [\Lr]}  \Big(\frac{\sqrt{W_{\sfr \sfc}} }{L \sigma^0_{\sfr}}(\bvec^0_{\sfr})^*\madj^{0, \sfc}_{\sfr} - \frac{\Mr}{L} S^0_{\sfr \sfc} W_{\sfr \sfc}  \Big) \Big \lvert^2 \geq \frac{ \epsilon}{3} \Big)\\ 
%
%
&\overset{(b)}{\leq}  \sum_{\sfc \in [\Lc]}   P \Big( \Big \lvert\sum_{\sfr \in [\Lr]} \Big(\frac{\sqrt{W_{\sfr \sfc}} }{L}(\bvec^0_{\sfr})^*\madj^{0, \sfc}_{\sfr} - \frac{\Mr}{L} S^0_{\sfr \sfc} W_{\sfr \sfc} \sigma^0_{\sfr}\Big) \Big \lvert^2 \geq \frac{\kappa \e}{\sum_{\sfc' \in [\Lc]} W^{2 \sfq}_{\sfr' \sfc'}/\Lc} \Big) \overset{(c)}{\leq}  K \Lc e^{ - \kappa n (\omega/\Lr)^{1+ \sfq} \e}.
\end{split}
\een
In the above, step $(a)$ holds because $\frac{\Mr}{L}  \sum_{\sfr \in [\Lr]} S^0_{\sfr \sfc} W_{\sfr \sfc} = 1$, which can be verified using $S^0_{\sfr \sfc} = \tau^0_\sfc/\phi^0_\sfr$, and the state evolution equations in \eqref{eq:se_phi}--\eqref{eq:tau_ct_def}.  Step $(b)$ uses that  $ \sigma^0_{\sfr} \in \Theta(1)$ and Lemma \ref{sums}.  Step $(c)$ follows from $\mathcal{B}_0 (d)$ (Eq.\ \eqref{eq:B0d_eq1}) with $\sfp = 1$, noting from \eqref{eq:Wrc_avgs} that $\sum_{\sfc' \in [\Lc]} W^{2 \sfq}_{\sfr' \sfc'}/\Lc = \Theta( (\Lr/\omega)^\sfq)$.

Finally we bound term $T_3$.  Note that for any $\sfr, \sfr' \in [\Lr]$, we have $\frac{1}{\Lc}  \sum_{\sfc \in [\Lc]}  W^{2 \sfq}_{\sfr' \sfc} = \Theta((\Lr /\omega)^\sfq)$ since $\sfq \in \{0,1\}$ and $\sigma_\sfr^0 = \Theta(1)$. Furthermore, from $\mathcal{B}_0 (e)$ with $\sfp = 2$,  for $\sfc \in [\Lc]$ the term 
$\frac{n}{L^2} \sum_{\sfr \in [\Lr]}  \hspace{-3pt}  W_{\sfr \sfc} \, \| \madj^{0, \sfc}_{\sfr} \|^2$
concentrates on $\frac{n\Mr}{L^2} \sum_{\sfr \in [\Lr]}   (S^0_{\sfr \sfc})^2 \, W^{2}_{\sfr \sfc} \,\phi^0_{\sfr} = \Theta(\Lr/\omega)$, with the deviation probability bounded by $\mathcal{B}_0 (e)$. We therefore have
%
\begin{align}
T_3 &  \leq P \Big( \cup_{\sfc\in [\Lc]} \,  \Big\{ \frac{|Z|^2}{\Mr \omega} \kappa \Big(\frac{\Lr}{\omega}\Big)^\sfq  \cdot \frac{n \omega}{L^2 \Lr} \sum_{\sfr \in [\Lr]}  \hspace{-3pt}  W_{\sfr \sfc} \, \| \madj^{0, \sfc}_{\sfr} \|^2 \geq \frac{\e}{3}  \Big\}\Big)  \nonumber \\
 & \leq \sum_{\sfc \in [\Lc]} P\Big( \frac{n \omega}{L^2\Lr} \sum_{\sfr \in [\Lr]}  \hspace{-3pt}  W_{\sfr \sfc} \, \| \madj^{0, \sfc}_{\sfr} \|^2 \leq 2 \cdot  \frac{n^2 \omega}{L^2 \Lr^2} \sum_{\sfr \in [\Lr]}   (S^0_{\sfr \sfc})^2 \, W^{2}_{\sfr \sfc} \,\phi^0_{\sfr}   \Big) + 
P\Big( |Z|^2 \geq \frac{\Mr \omega \e}{ \kappa (\Lr/\omega)^{\sfq}}\Big) \nonumber \\
& \stackrel{(a)}{\leq} \Lc K  \exp\{-\kappa n ( {\omega}/{\Lr}) \} + \exp\{-  n (\omega/\Lr)^{\sfq+1} \e /\kappa \},
\end{align}
where step $(a)$ follows from $\mathcal{B}_0 (e)$ and  Lemma \ref{lem:normalconc}.  The overall upper bound is $\exp\{- \kappa  n (\omega/\Lr)^{2\sfq+1} \e\}$, which is again consistent with the $\PC, \pc$ notation defined in \eqref{eq:param_const_defs} for $t=0$.

\noindent  \textbf{(b)} From Lemma \ref{lem:hb_cond}, we have $\hvec^1_{\sfc} \lvert_{\mscrs_{1, 0}} \stackrel{d}{=} \sqrt{ \tau^{0}_{ \sfc}} \, \Z_{0, \sfc}  + \Dvec_{1,0, \sfc}$.  Using this, we obtain
\begin{align}
&P\Big(  \Big \lvert \sum_{\sfc \in [\Lc]} \sqrt{W_{\sfr \sfc}}  \frac{(\hvec^1_{\sfc})^*\qadj^{0, \sfr}_{\sfc}}{L} \Big \lvert \geq \epsilon \Big)  \leq P\Big(   \Big \lvert \sum_{\sfc \in [\Lc]}  \sqrt{W_{\sfr \sfc} \tau^0_{\sfc}}   \frac{\Z_{0, \sfc}^*  \qadj^{0, \sfr}_{\sfc}}{L} \Big \lvert\geq \frac{ \epsilon}{2} \Big) + P\Big( \sum_{\sfc \in [\Lc]}  \sqrt{W_{\sfr \sfc}} \Big \lvert \frac{\Dvec_{1,0, \sfc}^* \qadj^{0, \sfr}_{\sfc}}{L} \Big\lvert \geq \frac{ \epsilon}{2}\Big). \label{eq:H1b_split}
\end{align}
Label the terms on the right side as $T_1$ and $T_2$.  

Consider term $T_1$.  
Since $\qadj^{0, \sfr}_{\sfc}$ is independent of $\Z_{0, \sfc}$, we have $\Z_{0, \sfc}^*\,\qadj^{0, \sfr}_{\sfc} \stackrel{d}{=} \| \qadj^{0, \sfr}_{\sfc} \| Z^{\sfc}$, where $Z^{\sfc} \sim \mc{N}(0,1)$ and 
$\| \qadj^{0, \sfr}_{\sfc} \|^2 = W_{\sfr\sfc} \| \betavec_{0, \sfc} \|^2 =  W_{\sfr\sfc}  ({L}/{\Lc}).$
Therefore, 
\begin{align*}
\Big \lvert \sum_{\sfc \in [\Lc]}   \sqrt{W_{\sfr \sfc} \tau^0_{\sfc}} \,   \Z_{0, \sfc}^* \, \qadj^{0, \sfr}_{\sfc} \Big \lvert &\overset{d}{=} \Big \lvert \sqrt{\frac{L}{\Lc}} \sum_{\sfc \in [\Lc]}   \sqrt{\tau^0_{\sfc}  } \, W_{\sfr\sfc} \, Z^{\sfc} \Big \lvert \overset{d}{=} \Big \lvert Z \Big(\frac{L}{\Lc} \sum_{\sfc \in [\Lc]}  \tau^0_{\sfc} \, W^2_{\sfr\sfc}  \Big)^{\frac{1}{2}} \Big \lvert \leq  
\kappa \abs{Z} \sqrt{\frac{L^2 \Lr}{n \omega}},
\end{align*}
where $Z \sim \normal(0,1)$. Recall, $n = \Mr \Lr$, therefore $T_1 \leq  P( \kappa \abs{Z} \sqrt{{\Lr}/{(n \omega)}} \geq \frac{\e}{2}) \leq \exp\{ - \kappa \Mr \omega \e^2 \}.$

For term $T_2$, observe that $\sum_{i \in \sec(\ell)}  [\qadj^{0, \sfr}_{\sfc}]_{i}  = -\sqrt{W_{\sfr\sfc}}$ for each $\sfc \in [\Lc]$ and section $\ell \in \sfc$,
\begin{align*}
 \sum_{\sfc \in [\Lc]}   \sqrt{W_{\sfr \sfc}} \abs{\Dvec_{1,0, \sfc}^* \, \qadj^{0, \sfr}_{\sfc}}  & =  \sum_{\sfc \in [\Lc]}   \sum_{\ell \in \sfc}  \sqrt{W_{\sfr \sfc}} \Big \lvert ([\Dvec_{1,0, \sfc}]_{\ell})^* \, [\qadj^{0, \sfr}_{\sfc}]_{\ell} \Big \lvert 
\leq   \sum_{\sfc \in [\Lc]}  \sum_{\ell \in \sfc}  W_{\sfr \sfc}  \max_{i \in sec(\ell)} \abs{ [\Dvec_{1,0, \sfc}]_{i}}. 
\end{align*}
Now $ (\frac{1}{L} \sum_{\sfc \in [\Lc]}    \sum_{\ell \in \sfc} W_{\sfr \sfc} \max_{i \in sec(\ell)} \abs{ [\Dvec_{1,0, \sfc}]_{i}} )^2 \leq  \frac{1}{L} \sum_{\sfc \in [\Lc]}    \sum_{\ell \in \sfc} W_{\sfr \sfc}^2 \max_{i \in sec(\ell)} \abs{ [\Dvec_{1,0, \sfc}]_{i}}^2$ by Lemma \ref{lem:squaredsums} and therefore by $\mathcal{H}_1 (a)$ with $\sfq = 1$,
\begin{align}
T_2 &\leq 
%
P\Big(\frac{1}{L}  \sum_{\sfc \in [\Lc]}   \sum_{\ell \in \sfc} W^2_{\sfr \sfc} \max_{i \in sec(\ell)} \abs{ [\Dvec_{1,0, \sfc}]_{i}}^2 \geq \kappa  \epsilon^2 \Big) \leq  K \Lc \exp\{-\kappa n (\omega/\Lr)^3 \e^2 \}.  \label{eq:H1b_T2}
\end{align}


\noindent  \textbf{(c)}  We begin by showing the result \eqref{eq:Hc_adj} for $s=1$.    Recall that  
$\eta^0_{\sfc}(\betavec_{0, \sfc} - \hvec^1_{\sfc}) - \betavec_{0, \sfc}$ and $\hvec^1_{\sfc} \lvert_{\mscrs_{1, 0}} \stackrel{d}{=} \sqrt{ \tau^{0}_{ \sfc}} \, \Z_{0, \sfc}  + \Dvec_{1,0, \sfc}$ from Lemma \ref{lem:hb_cond}. Therefore, by Lemma~\ref{sums},
\begin{align*}
&P\Big( \Big \lvert  \sum_{\sfc \in [\Lc]}  W^{\sfq}_{\sfr \sfc} \Big(\frac{\norm{\brqvec^{1}_{\sfc}}^2}{L}   - \frac{ \psi^{1}_{\sfc}}{\Lc} \Big) \Big \lvert \geq \epsilon \Big) \\
&= P\Big(  \frac{1}{\Lc} \Big \lvert \sum_{\sfc \in [\Lc]}  W^{\sfq}_{\sfr \sfc}  \Big( \frac{\Lc}{L} \norm{\eta^0_{\sfc}(\betavec_{0, \sfc} - \sqrt{ \tau^0_{\sfc}} \Z_{0, \sfc} - \Dvec_{1,0, \sfc}) - \betavec_{0, \sfc}}^2 -    \psi^{1}_{\sfc} \Big) \Big \lvert \geq \epsilon \Big)\\
& \leq P\Big(  \frac{1}{\Lc} \Big \lvert \sum_{\sfc \in [\Lc]} W^{\sfq}_{\sfr \sfc}  \Big(  \frac{\Lc}{L} \norm{\eta^0_{\sfc}(\betavec_{0, \sfc} - \sqrt{\tau^0_{\sfc}} \Z_{0, \sfc}) - \betavec_{0, \sfc}}^2 -  \psi^{1}_{\sfc}  \Big)\Big \lvert \geq \frac{\epsilon}{2} \Big) \nonumber \\
&+ P\Big(  \frac{1}{L} \Big \lvert \sum_{\sfc \in [\Lc]} W^{\sfq}_{\sfr \sfc}  \Big( \norm{\eta^0_{\sfc}(\betavec_{0, \sfc} - \sqrt{\tau^0_{\sfc}} \Z_{0, \sfc} - \Dvec_{1,0, \sfc}) - \betavec_{0, \sfc}}^2  -   \norm{\eta^0_{\sfc}(\betavec_{0, \sfc} - \sqrt{\tau^0_{\sfc}} \Z_{0, \sfc}) - \betavec_{0, \sfc}}^2 \Big) \Big\lvert \geq \frac{\epsilon}{2} \Big).
\end{align*}
Label the two terms on the RHS as $T_1$ and $T_2$. 

To bound $T_1$, we write 
\begin{align}
T_1 &=  P\Big(\frac{1}{L}  \Big \lvert \sum_{\sfc \in [\Lc]}  W^{\sfq}_{\sfr \sfc}   \sum_{\ell \in \sfc} \Big(  
\| \eta^0_{\ell}(\betavec_{0, \sfc} - \sqrt{\tau^0_{\sfc}} \Z_{0, \sfc}) - [\betavec_{0, \sfc}]_{\ell} \|^2 - \psi_{\sfc}^1 \Big)\Big \lvert \geq \frac{\epsilon}{2} \Big), 
\label{eq:Hoeffdings1}
\end{align}
and apply Hoeffding's inequality (Lemma \ref{lem:hoeff_lem}). To do this, notice that 
$\big( \| \eta^0_{\ell}(\betavec_{0, \sfc} - \sqrt{\tau^0_{\sfc}} \Z_{0, \sfc}) - [\betavec_{0, \sfc}]_{\ell} \|^2 - \psi_{\sfc}^1 \big)$ is bounded in absolute value by 1, and has zero mean. Indeed, 
\be
\mathbb{E}_{\Z_{0}}\| \eta^0_{\ell}(\betavec_{0, \sfc} - \sqrt{\tau^0_{\sfc}} \Z_{0, \sfc}) - [\betavec_{0, \sfc}]_{\ell} \|^2 = \mathbb{E}_{\betavec, \Z_{0}}\| \eta^0_{\ell}(\betavec_{\sfc} - \sqrt{\tau^0_{\sfc}} \Z_{0, \sfc}) - [\betavec_{\sfc}]_{\ell} \|^2 = \psi^1_{\sfc},
\label{eq:Hoeffdings2}
\ee
where the first equality is true for each $\betavec_{0} \in \mathcal{B}_{M,L}$ because of the uniform distribution of the non-zero entry in each section of $\betavec_{0}$ over the $M$ possible locations and the i.i.d.\ distribution of $\Z_{0, \sfc}$. The second equality follows by Lemma \ref{lem:expect_etar_etas}. Applying Hoeffding's inequality to  \eqref{eq:Hoeffdings1}, we obtain
\be
T_1 \leq 2 \exp\Big\{\frac{-\kappa L^2 \e^2}{(\sum_{\sfc \in [\Lc]}   \sum_{\ell \in \sfc}   W^{2\sfq}_{\sfr \sfc} )}\Big\}  = 2 \exp\Big\{\frac{-\kappa L \e^2}{(\sum_{\sfc \in [\Lc]}    W^{2\sfq}_{\sfr \sfc} /\Lc)}\Big\} = 2 \exp\{-\kappa L  (\omega/\Lr)^{\sfq}\e^2\},
\ee
where for the last equality we have used \eqref{eq:Wrc_avgs}. 

Next, we bound term $T_2$. To save space,  we write $\eta^0_{\sfc}(\betavec_{0} - \sqrt{\tau^0_{\sfc}} \Z_{0} -   \Dvec_{1,0})$ to mean $\eta^0_{\sfc}(\betavec_{0, \sfc} - \sqrt{\tau^0_{\sfc}} \Z_{0, \sfc} - \Dvec_{1,0, \sfc})$. First, 
\be
\begin{split}
& \| \eta^0_{\sfc}(\betavec_{0} - \sqrt{\tau^0_{\sfc}} \Z_{0} - \Dvec_{1,0}) - \betavec_{0, \sfc} \|^2 - 
\| \eta^0_{\sfc}(\betavec_{0} - \sqrt{\tau^0_{\sfc}} \Z_{0}) - \betavec_{0, \sfc} \|^2 \\
&= [\eta^0_{\sfc}(\betavec_{0} - \sqrt{\tau^0_{\sfc}} \Z_{0} - \Dvec_{1,0}) - \betavec_{0, \sfc}]^*[\eta^0_{\sfc}(\betavec_{0} - \sqrt{\tau^0_{\sfc}} \Z_{0} -   \Dvec_{1,0}) - \eta^0_{\sfc}(\betavec_{0} - \sqrt{\tau^0_{\sfc}} \Z_{0})] \\
&\qquad +  [ \eta^0_{\sfc}(\betavec_{0} - \sqrt{\tau^0_{\sfc}} \Z_{0}) - \betavec_{0, \sfc}]^*[\eta^0_{\sfc}(\betavec_{0} - \sqrt{\tau^0_{\sfc}} \Z_{0} -  \Dvec_{1,0}) - \eta^0_{\sfc}(\betavec_{0} - \sqrt{\tau^0_{\sfc}} \Z_{0})].
\label{eq:HCterm2_1}
\end{split}
\ee
Using this, we have
\ben 
\begin{split}
&T_2 \leq P\Big(\frac{1}{L} \sum_{\sfc \in [\Lc]}  W^{\sfq}_{\sfr \sfc}   \Big \lvert   [ \eta^0_{\sfc}(\betavec_{0} - \sqrt{\tau^0_{\sfc}} \Z_{0}) - \betavec_{0, \sfc}]^*[\eta^0_{\sfc}(\betavec_{0} - \sqrt{\tau^0_{\sfc}} \Z_{0} -   \Dvec_{1,0}) - \eta^0_{\sfc}(\betavec_{0} - \sqrt{\tau^0_{\sfc}} \Z_{0})] \Big \lvert \geq \frac{\kappa \epsilon}{4} \Big)  \\
&+ P\Big(\frac{1}{L} \sum_{\sfc \in [\Lc]}  W^{\sfq}_{\sfr \sfc}   \Big \lvert  [\eta^0_{\sfc}(\betavec_{0} - \sqrt{\tau^0_{\sfc}} \Z_{0} -  \Dvec_{1,0}) - \beta_{0,\sfc}]^*[\eta^0_{\sfc}(\betavec_{0} - \sqrt{\tau^0_{\sfc}} \Z_{0} - \Dvec_{1,0}) - \eta^0_{\sfc}(\betavec_{0} - \sqrt{\tau^0_{\sfc}} \Z_{0})] \Big \lvert \geq \frac{\kappa \epsilon}{4} \Big).
\end{split} 
\een
Label the terms on the right side  above as $T_{2,a}$ and $T_{2,b}$.  Then,
\be 
\begin{split}
&T_{2,a} 
 \stackrel{(a)}{\leq}  P\Big(\frac{1}{L}\sum_{\sfc \in [\Lc]} \sum_{\ell \in \sfc} \sum_{j \in sec(\ell)}  W^{\sfq}_{\sfr \sfc}    \Big\lvert \eta^0_{j}(\betavec_{0} -  \sqrt{\tau^0_{\sfc}} \Z_{0} - \Dvec_{1,0}) - \eta^0_{j}(\betavec_{0} -  \sqrt{\tau^0_{\sfc}} \Z_{0}) \Big \lvert \geq \frac{\kappa \epsilon}{4} \Big) \\
&\overset{(b)}{\leq}  P\Big(\sum_{\sfc \in [\Lc]}   \sum_{\ell \in \sfc} \frac{ W^{\sfq}_{\sfr \sfc}   }{L \tau^0_{\sfc}} \max_{j \in sec(\ell)}   \abs{[\Dvec_{1,0, \sfc}]_j} \geq \frac{\kappa \epsilon}{8} \Big)  \\
&\overset{(c)}{\leq} P\Big(\frac{1}{L} \sum_{\sfc \in [\Lc]}  \sum_{\ell \in \sfc} W^{2 \sfq}_{\sfr \sfc}    \max_{j \in sec(\ell)} \abs{[\Dvec_{1,0,\sfc}]_j}^2 \geq \frac{\kappa^2  \epsilon^2 }{(\log M)^2} \Big) \overset{(d)}{\leq} K \Lc \exp\Big\{\frac{-\kappa n (\omega/\Lr)^{2 \sfq + 1} \e^2}{(\log M)^2} \Big\}.
\label{eq:HCterm2_2}
\end{split} 
\ee
Step $(a)$ holds since $|\eta^0_{j}(\betavec_{0} - \sqrt{\tau^0_{\sfc}} \Z_{0}) - \beta_{0, j}| \leq 1$ for $j  \in [ML]$, step $(b)$  by Lemma \ref{lem:BC9}, and step $(c)$ by Lemma~\ref{lem:squaredsums} and $\tau^0_{\sfc}= \Theta(1/\log M) $. Finally, step $(d)$ follows from $\mathcal{H}_1 (a)$.  

Using Lemma \ref{lem:BC9}, it can be shown that term $T_{2,b}$ also has the same upper bound. This proves the concentration result \eqref{eq:Hc_adj} for $s=1$. Proving the  result for $s=0$ is similar: we use Lemma \ref{sums} followed by Hoeffding's inequality and Lemma \ref{lem:BC9}. 

\vspace{5pt}

\noindent  \textbf{(d)} Recalling $\qadj^{1, \sfr}_{\sfc} = \sqrt{W_{\sfr \sfc}} [\eta^0_{\sfc}(\betavec_{0} - \hvec^1) - \betavec_{0, \sfc}]$, we write 
$\sum_{\sfc \in [\Lc]}  (\hvec^1_{\sfc})^* \qadj_{\sfc}^{1, \sfr} =  \sum_{\sfc \in [\Lc]}  \sqrt{W_{\sfr \sfc}}  (\hvec^1_{\sfc})^* \eta^0_{\sfc}(\betavec_{0} - \hvec^1) + (\hvec^1_{\sfc})^*\qadj^{0, \sfr}_{\sfc},$ and therefore, by Lemma \ref{sums},
\begin{align}
&P\Big(\Big \lvert \sum_{\sfc \in [\Lc]} \Big(\frac{ \sqrt{W_{\sfr \sfc}} }{L} (\hvec^1_{\sfc})^*\qadj^{1, \sfr}_{\sfc}  +    \frac{ W_{\sfr \sfc}}{\Lc}   \,  \psi^{1}_{\sfc} \Big) \Big \lvert \geq \epsilon \Big) \nonumber \\
 &\leq   P\Big( \frac{1}{\Lc} \Big \lvert \sum_{\sfc \in [\Lc]}  W_{\sfr \sfc}   \Big[ \frac{\Lc}{L }  (\hvec^1_{\sfc})^* \eta^0_{\sfc}(\betavec_{0} - \hvec^1)  +   \psi^{1}_{\sfc} \Big] \Big \lvert \geq \frac{\e}{2} \Big) + P\Big(  \frac{1}{L}   \Big \lvert  \sum_{\sfc \in [\Lc]}   \sqrt{W_{\sfr \sfc}} (\hvec^1_{\sfc})^*\qadj^{0, \sfr}_{\sfc} \Big \lvert \geq \frac{\e}{2} \Big). \label{eq:h1q1_split}
\end{align}
By $\mathcal{H}_1 (b)$, the second term in \eqref{eq:h1q1_split} is  bounded by 
$K \Lc \exp\{-\kappa  n (\omega/\Lr)^3  \e^2  \}$.  Using the conditional distribution of $\hvec^1$ stated in Lemma \ref{lem:hb_cond} and Lemma \ref{sums}, for the first term of \eqref{eq:h1q1_split} we write 
\begin{align}
&P\Big(  \frac{1}{\Lc}  \Big \lvert \sum_{\sfc \in [\Lc]} W_{\sfr \sfc} \Big ( \frac{\Lc}{L }(\hvec^1_{\sfc})^* \eta^0_{\sfc}(\betavec_{0} - \hvec^1)  +  \psi^{1}_{\sfc}  \Big)\Big \lvert \geq \frac{\e}{2} \Big)   \nonumber \\
&\leq P\Big( \frac{1}{\Lc}\Big \lvert \sum_{\sfc \in [\Lc]} W_{\sfr \sfc}  \Big ( \frac{\Lc}{L}(\sqrt{\tau^0_{\sfc}} \Z_{0, \sfc} + \Dvec_{1,0, \sfc})^* \eta^0_{\sfc}(\betavec_{0, \sfc} - \sqrt{\tau^0_{\sfc}} \Z_{0, \sfc} -  \Dvec_{1,0, \sfc})  +  \psi^{1}_{\sfc} \Big) \Big \lvert \geq \epsilon \Big) \nonumber  \\ 
&\leq P\Big( \frac{1}{L} \sum_{\sfc \in [\Lc]} W_{\sfr \sfc}   \Big \lvert (\Dvec_{1,0, \sfc})^* \eta^0_{\sfc}(\betavec_{0, \sfc} - \sqrt{\tau^0_{\sfc}} \Z_{0, \sfc} -  \Dvec_{1,0, \sfc})\Big \lvert \geq \kappa \epsilon \Big) \nonumber \\
&\ + P\Big(\frac{1}{\Lc} \Big \lvert  \sum_{\sfc \in [\Lc]} W_{\sfr \sfc} \Big( \frac{\Lc}{L} \sqrt{\tau^0_{\sfc}} \Z_{0, \sfc}^* \eta^0_{\sfc}(\betavec_{0, \sfc} - \sqrt{\tau^0_{\sfc}} \Z_{0, \sfc})  + \psi^{1}_{\sfc} \Big) \Big\lvert \geq  \kappa  \epsilon \Big)  \label{eq:H1dT1T2T3} \\
&\ + P\Big( \frac{1}{L} \sum_{\sfc \in [\Lc]} \sqrt{\tau^0_{\sfc}  }  W_{\sfr \sfc}  \Big \lvert \Z_{0, \sfc}^* \Big[\eta^0_{\sfc}(\betavec_{0, \sfc} - \sqrt{\tau^0_{\sfc}} \Z_{0, \sfc} - \Dvec_{1,0, \sfc}) - \eta_{\sfc}(\betavec_{0, \sfc} - \sqrt{\tau^0_{\sfc}} \Z_{0, \sfc})\Big] \Big \lvert \geq  \kappa  \epsilon  \Big). \nonumber
\end{align}
 Label the terms of the above as $T_1 , T_2, T_3$ and we bound each individually.

 For term $T_1$ notice that $T_{1} \leq P(\frac{1}{L} \sum_{\sfc \in [\Lc]} \sum_{\ell \in \sfc}  W_{\sfr \sfc}  \max_{i \in sec(\ell)} \abs{[\Dvec_{1,0, \sfc}]_i} \geq \kappa\epsilon)$ and therefore the term can be bounded as in \eqref{eq:H1b_T2} using $\mathcal{H}_1 (a)$ with $\sfq =1$.
%
%

Next consider term $T_2$ of \eqref{eq:H1dT1T2T3}.  Because of the uniform distribution of the non-zero entry in each section of $\betavec$ over the $M$ possible locations and the i.i.d.\ distribution of $\Z_{0, \sfc}$,  for any $\betavec_{0} \in \mathcal{B}_{M,L}$, we have $\mathbb{E}_{\Z_{0}}\{\Z_{0, \sfc}^* \eta^0_{\sfc}(\betavec_{0, \sfc} - \sqrt{\tau^0_{\sfc}} \Z_{0, \sfc})\}  = \mathbb{E}_{\Z_{0}, \betavec}\{\Z_{0, \sfc}^* \eta^0_{\sfc}(\betavec_{0, \sfc} - \sqrt{\tau^0_{\sfc}} \Z_{0, \sfc})\} $.  The expectation equals 
\be
\begin{split} 
& \sqrt{\tau^0_{\sfc}} \mathbb{E}\{\Z_{0, \sfc}^* \eta^0_{\sfc}(\betavec_{0, \sfc} - \sqrt{\tau^0_{\sfc}} \Z_{0, \sfc})\} \stackrel{(a)}{=} 
\mathbb{E} \| \eta^0_{ \sfc}(\betavec_{0, \sfc} - \sqrt{\tau^0_{\sfc}} \Z_{0, \sfc}) \|^2 - \frac{L}{\Lc} \stackrel{(b)}{=} -\psi^1_{\sfc} \frac{L}{\Lc},
\label{eq:Hd12a}
\end{split}
\ee
where equality $(a)$ is obtained using Stein's lemma, Lemma \ref{lem:stein} (see  \cite[p.1491, Eqs.\ (102) -- (104)]{rush2017} for details) and equality $(b)$ from Lemma \ref{lem:expect_etar_etas}.  Now,  using \eqref{eq:Hd12a} and  the fact that $\tau^0_{\sfc}= \Theta(1/\log M)$,  the concentration result from Lemma \ref{lem:Hd_convergence} yields
\ben
\begin{split}
T_2 &= P\Big( \frac{1}{L}  \Big \lvert \sum_{\sfc \in [\Lc]} \sqrt{ \tau^0_{\sfc} } W_{\sfr \sfc}   \sum_{\ell \in \sfc} \Big[ ([\Z_{0, \sfc}]_{\ell})^* \eta^0_{\ell}(\betavec_{0, \sfc} -\sqrt{\tau^0_{\sfc}} \Z_{0, \sfc}) - \mathbb{E}\{([\Z_{0, \sfc}]_{\ell})^* \eta^0_{\ell}(\betavec_{0, \sfc} - \sqrt{\tau^0_{\sfc}} \Z_{0, \sfc})\} \Big] \Big \lvert \geq  \kappa \epsilon \Big) \\
&\leq \exp\{-\kappa L  (\omega/\Lr ) \e^2 \}. 
%
\end{split}
\een

Finally consider term $T_3$ in \eqref{eq:H1dT1T2T3}.  First, 
\begin{align}
&T_{3}  
%
\leq P\ \Big(\frac{1}{L}  \sum_{\sfc \in [\Lc]} \sqrt{\tau^0_{\sfc}} W_{\sfr \sfc}  \sum_{\ell \in \sfc} \max_{k  \in sec(\ell)} \abs{[\Z_{0}]_k} \hspace{-5pt} \sum_{j  \in  sec(\ell)} \lvert \eta^0_{j}(\betavec_{0} - \sqrt{\tau^0_{\sfc}} \Z_{0} -  \Dvec_{1,0}) - \eta^0_{j}(\betavec_{0} - \sqrt{\tau^0_{\sfc}} \Z_{0}) \lvert \geq \kappa 
\e \Big) \nonumber \\
 & \overset{(a)}{\leq}  P\Big( \frac{1}{L}  \sum_{\sfc \in [\Lc]} \frac{W_{\sfr \sfc}  }{\sqrt{\tau^0_{\sfc}}}  \sum_{\ell \in \sfc}  \max_{k \in sec(\ell)} \abs{[\Z_{0}]_k} 
 \max_{j \in sec(\ell)} \abs{[\Dvec_{1,0}]_{j}}  \geq  \kappa  \epsilon  \Big) \overset{(b)}{\leq} K \Lc \exp\Big\{\frac{-\kappa  n (\omega/\Lr)^3 \e^2 }{ (\log M)^2} \Big\}.  \label{eq:H1c_T3_eq1}
\end{align}
Step $(a)$ follows from Lemma \ref{lem:BC9} and step $(b)$  is obtained as follows.
\begin{align}
&P\Big( \frac{1}{L}  \sum_{\sfc \in [\Lc]}     \sum_{\ell \in \sfc}   \frac{W_{\sfr \sfc}   }{\sqrt{\tau^0_{\sfc}}}  \max_{j \in sec(\ell)} \abs{[\Dvec_{1,0, \sfc}]_{j}} \max_{k \in sec(\ell)} \abs{[\Z_{0, \sfc}]_k}   \geq  \kappa  \epsilon  \Big) \nonumber \\
&\overset{(c)}{\leq} P\Big(\frac{1}{L} \sum_{\sfc \in [\Lc]}   \sum_{\ell \in \sfc} W_{\sfr \sfc}  \max_{j \in sec(\ell)} \abs{[\Dvec_{1,0, \sfc}]_{j}} \max_{k \in sec(\ell)} \abs{[\Z_{0, \sfc}]_k}   \geq  \kappa \epsilon  \sqrt{L/n}  \Big) \nonumber \\
&\overset{(d)}{\leq} P\Big(\frac{1}{L}  \sum_{\sfc \in [\Lc]} \sum_{\ell \in \sfc}  \max_{k \in sec({\ell})} \abs{[Z_{0, \sfc}]_k}^2 \cdot \frac{1}{L} \sum_{\sfc\in [\Lc]} \sum_{\ell \in \sfc} W^2_{\sfr \sfc}   \max_{j \in sec(\ell)} \abs{[\Dvec_{1,0, \sfc}]_{j}}^2 \geq \frac{\kappa \epsilon^2 L}{n}  \Big)  \nonumber \\
&\leq P \Big( \frac{1}{L}  \sum_{\sfc \in [\Lc]} \sum_{\ell \in \sfc}   \max_{k \in sec({\ell})}  \abs{[Z_{0, \sfc}]_k}^2 \geq 3 \log M \Big) + P \Big( \frac{1}{L}  \sum_{\sfc \in [\Lc]}  \sum_{\ell \in \sfc} W^2_{\sfr \sfc}  \max_{j \in sec(\ell)} \abs{[\Dvec_{1,0, \sfc}]_{j}}^2   \geq\frac{\kappa \e^2L }{3n \log M}\Big) \nonumber  \\
&\overset{(e)}{\leq}  e^{-\kappa n } +K \Lc \exp\Big\{\frac{-\kappa  n (\omega/\Lr)^3   \e^2}{ (\log M)^2} \Big\}.  \label{eq:H1c_T3_eq2}
\end{align}
Step $(c)$ follows by using $\tau^0_{\sfc}= \Theta(L/n)$ and step $(d)$ by Cauchy-Schwarz.  Finally step $(e)$ follows from Lemma \ref{lem:max_abs_normals}, $\mathcal{H}_1 (a)$ with $\sfq =1$ along with  $n R = L \log M$ and $n = \Lr\Mr$.


\vspace{5pt}

\noindent  \textbf{(e)}  From the conditional distribution of $\hvec^1_{\sfc}$ stated in Lemma \ref{lem:hb_cond} and Lemma \ref{lem:squaredsums}, it follows that $([\hvec^1_{\sfc}]_j)^2 \leq 2\tau^0_{\sfc}([\Z_{0, \sfc}]_j)^2 + 2([ \Dvec_{1,0,\sfc}]_{j})^2$  for $j \in sec(\ell)$.  We therefore have the following  bound:
\begin{align}
& P\Big(\frac{1}{L}  \sum_{\sfc \in [\Lc]}  \sum_{\ell \in \sfc}  W^{2 \sfq}_{\sfr\sfc} \max_{j \in sec(\ell)}  ([\hvec^1_{\sfc}]_j)^2 \geq 6 (\max_{\sfc} W^{2\sfq}_{\sfr \sfc}) \tau^0_{\sfc} \log M +  \e \Big)  \nonumber \\
&\leq P\Big(\frac{1}{L}  \sum_{\sfc \in [\Lc]}  \sum_{\ell \in \sfc}  W^{2 \sfq}_{\sfr\sfc}  \Big[ \tau^0_{\sfc} \max_{j \in sec(\ell)}  ([\Z_{0, \sfc}]_j)^2 + \max_{j \in sec(\ell)}([ \Dvec_{1,0,\sfc}]_{j})^2 \Big]  \geq 3 (\max_{\sfc} W^{2\sfq}_{\sfr \sfc}) \tau^0_{\sfc} \log M  + \e/2 \Big)  \nonumber \\
& \leq P\Big(\frac{1}{L}  \sum_{\sfc \in [\Lc]}  \sum_{\ell \in \sfc}  W^{2 \sfq}_{\sfr\sfc}  \hspace{-1pt} \max_{j \in sec(\ell)} \hspace{-1pt}  ([\Z_{0, \sfc}]_j)^2 \geq 3 (\max_{\sfc} W^{2\sfq}_{\sfr \sfc})  \log M\Big)  + P\Big(\frac{1}{L}  \sum_{\sfc \in [\Lc]}  \sum_{\ell \in \sfc}  \hspace{-1pt} \max_{j \in sec(\ell)}  \hspace{-1pt}  W^{2 \sfq}_{\sfr\sfc} ([ \Dvec_{1,0,\sfc}]_{j})^2 \geq \frac{ \e}{2} \Big) \nonumber \\
&\overset{(a)}{\leq} 2 \exp\{-\kappa L \log M\} +  K \Lc \exp\{-\kappa n (\omega/\Lr)^{2 \sfq + 1} \e  \}. \label{eq:H1e_eq1}
\end{align}
The inequality $(a)$ follows from Lemma \ref{lem:max_abs_normals} and $\mc{H}_1 (a)$.


\vspace{5pt}

\noindent  \textbf{(f)} We first prove \eqref{eq:Hf}, then \eqref{eq:Hf1}. Recall,
$\gamma^{1,\sfr}_0 = \frac{(\qadj^{0, \sfr})^*\qadj^{1,\sfr}}{\norm{\qadj^{0, \sfr}}^2} = \frac{(\qadj^{0, \sfr})^*\qadj^{1,\sfr}}{L \sigma^0_{\sfr}}$. Then since 
$\gamma^{1,\sfr}_0 = 
{\sigma^1_{\sfr}}/{ \sigma^0_{\sfr}}$, result \eqref{eq:Hf} follows directly from $\mathcal{H}_1 (c)$ with $\sfq = 1$ as $ \frac{1}{\Lc} \sum_{\sfc \in [\Lc]}   W_{\sfr \sfc} \psi^{1}_{\sfc} = \sigma^1_{\sfr} $ and $\sigma^0_{\sfr} \in \Theta(1)$:
\be \begin{split}
P \Big( \Big \lvert \gamma^{1,\sfr}_0 - \frac{\sigma^1_{\sfr}}{ \sigma^0_{\sfr}}  \Big \lvert \geq \epsilon  \Big) &= P \Big( \Big \lvert  \frac{1}{L}(\qadj^{0, \sfr})^*\qadj^{1,\sfr} -  \sigma^1_{\sfr} \Big \lvert \geq \epsilon \sigma^0_{\sfr}  \Big)  \leq K \Lc \exp\Big\{\frac{-\kappa n (\omega/\Lr)^3 \e^2}{(\log M)^2} \Big\}.
\end{split} 
\label{eq:eq:H1f1}
\ee
For \eqref{eq:Hf1}, by definition, $\| \qadj^{1, \sfr}_{\perp} \|^2 = \norm{\qadj^{1,  \sfr}}^2 - (\gamma^{1,  \sfr}_0)^2 \norm{\qadj^{0, \sfr}}^2 = \norm{\qadj^{1,  \sfr}}^2 - (\gamma^{1,  \sfr}_0)^2 L \sigma^0_{\sfr} .$  Using this and the fact that $\sigma^1_{\perp,  \sfr} = \sigma^1_{ \sfr}[1 - (\sigma^1_{ \sfr}/\sigma^0_{\sfr})]$, we obtain the following upper bound:
\begin{align*}
P&\Big(\Big \lvert \frac{ \| \qadj^{1,  \sfr}_{\perp} \|^2}{L} - \sigma^1_{\perp,  \sfr} \Big \lvert \geq \epsilon \Big) = P\Big(\Big \lvert \frac{\| \qadj^{1,  \sfr} \|^2 }{L} - (\gamma^{1,  \sfr}_0)^2\sigma^0_{\sfr}  - \sigma^1_{ \sfr}\Big[1 - \frac{\sigma^1_{ \sfr}}{\sigma^0_{\sfr}} \Big] \Big \lvert \geq \epsilon \Big) \\
&\leq P\Big(\Big \lvert \frac{\| \qadj^{1,  \sfr} \|^2 }{ L} - \sigma^1_{ \sfr}\Big \lvert \geq \frac{\epsilon}{2} \Big) + P\Big(\Big \lvert  (\gamma^{1,  \sfr}_0)^2 - \frac{(\sigma^1_{ \sfr})^2}{(\sigma^0_{\sfr})^2} \Big \lvert \geq \frac{\epsilon}{2 \sigma^0_{\sfr}} \Big) \leq K \Lc \exp\Big\{\frac{-\kappa n (\omega/\Lr)^3 \e^2}{(\log M)^2} \Big\}.
\end{align*}
The second inequality follows from $\mathcal{H}_1 (c)$ with $\sfq = 1$ and \eqref{eq:eq:H1f1}  above, along with Lemma \ref{powers}.


\vspace{5pt}

\noindent  \textbf{(g)}  Note that $\norm{\qadj^{0, \sfr}}^2 = L  \sigma^0_{\sfr}$ by \eqref{eq:breveq_simp}
and, therefore, $\Qadjmat^{\sfr}_{1} := \frac{1}{L} \norm{\qadj^{0, \sfr}}^2 = \sigma^0_{\sfr}= \tilde{C}^1.$ So results \eqref{eq:Qsing} and \eqref{eq:Hg} are trivially true for $\Qadjmat^{\sfr}_{1}$.

We now show \eqref{eq:Qsing} for $\Qadjmat^{\sfr}_{2}$.  Recall, $[\Qadjmat^{\sfr}_{2}]_{\tilde{s}+1,s+1}= \frac{1}{L} (\qadj^{\tilde{s}, \sfr})^*\qadj^{s, \sfr}$ for $0 \leq \tilde{s},s \leq 1$ and, therefore, by $\mc{H}_{1}(c)$ with $\sfq =1$,  $[\Qadjmat^{\sfr}_{2}]_{\tilde{s}+1,s+1}$  concentrates on $\sigma^{\max\{\tilde{s}, s\}}_{\sfr}$.    By Lemma \ref{fact:eig_proj}, if $ \frac{1}{L}\| \qadj^{s, \sfr}_{\perp} \|^2 \geq \kappa >0$ for all $0 \leq s \leq 1$, then $\Qadjmat^{\sfr}_{2}$ is invertible.  By  $\mc{H}_{1}(f)$, $ \frac{1}{L}\norm{\qadj^{s, \sfr}_{\perp}}^2$ concentrates on $ \sigma^{s}_{\perp, \sfr} $, and $\sigma^{s}_{\perp, \sfr} > 0$ by Lemma \ref{lem:sigmatperp}. Choosing $\kappa = \frac{1}{2} \min \{2, \sigma^{0}_{\perp, \sfr}, \sigma^{1}_{\perp, \sfr}\}$, using $\mc{H}_1 (f)$, we therefore have
\ben
\begin{split}
P\Big(\Qadjmat^{\sfr}_{2} \text{ singular}\Big)   \leq  P\Big(\Big\lvert \frac{ \| \qadj^{0, \sfr}_{\perp} \|^2 }{L}- \sigma^{0}_{\perp, \sfr} \Big \lvert \geq  \kappa \Big) + P\Big(\Big\lvert \frac{\| \qadj^{1, \sfr}_{\perp} \|^2}{L} - \sigma^{1}_{\perp, \sfr} \Big \lvert \geq  \kappa \Big) &\leq K \Lc \exp\Big\{\frac{-\kappa n (\omega/\Lr)^3 \e^2}{(\log M)^2} \Big\}.
  \end{split}
\een
Now we show \eqref{eq:Hg}.  Since
\ben
(\Qadjmat^{\sfr}_{2})^{-1} = \frac{L}{\norm{\qadj^{0, \sfr}}^2\| \qadj^{1, \sfr} \|^2 - ((\qadj^{0, \sfr})^*\qadj^{1, \sfr})^2} \begin{bmatrix}
\| \qadj^{1, \sfr} \|^2& - (\qadj^{0, \sfr})^*\qadj^{1, \sfr} \\
- (\qadj^{1, \sfr})^*\qadj^{0, \sfr} & \| \qadj^{0, \sfr} \|^2
\end{bmatrix},
\een
and
\ben
(\tCmat^{2, \sfr})^{-1} = \frac{1}{\sigma^{0}_{\sfr} \sigma^{1}_{\sfr} - (\sigma^{1}_{\sfr})^2} \begin{bmatrix}
\sigma^{1}_{\sfr} & - \sigma^{1}_{\sfr} \\
- \sigma^{1}_{\sfr} &\sigma^{0}_{\sfr},
\end{bmatrix}.
\een
element-wise concentration of $(\Qadjmat^{\sfr}_{2})^{-1}$ to $(\tCmat^{2, \sfr})^{-1} $ follows from $\mc{H}_{1}(c)$ with $\sfq =1$ using Lemma \ref{sums}, Lemma \ref{products}, and Lemma \ref{inverses}.



\subsection{Step 3: Showing $\mathcal{B}_t$ holds} \label{subsub:step3}

We prove the statements in $\mc{B}_t$ assuming that $\mc{B}_{0}, \ldots, \mc{B}_{t-1}$, and $\mc{H}_1, \ldots, \mc{H}_t$ hold due to the induction hypothesis.  We begin with a lemma that is used to prove $\mc{B}_t (a)$.  The lemma as well as other parts of $\mc{B}_{t}$ assume the invertibility of $\innerM^{\sfc}_1, \ldots, \innerM^{\sfc}_t$ for all $\sfc \in [\Lc]$, but for the  sake of brevity, we do not explicitly specify the conditioning.
The induction hypothesis implies that for $0 \leq s \leq (t-1)$, the large deviation bound in \eqref{eq:Ba1} gives
$$ P \Big(\frac{1}{n}\sum_{\sfr \in [\Lr]} W^{\sfp}_{\sfr \sfc} \norm{\Dvec_{s,s, \sfr}}^2 \geq \epsilon \Big)  \leq    K K_{t-1}   \PC_{t-2}   \exp\{-\kappa \kappa_{t-1}   (\omega/\Lr)^{(\sfp-1)_{+}}  \pc_{t-2}  \e\}.$$
Similarly, for $0 \leq s \leq (t-1)$, we can use the values $K_{t-1}, \kappa_{t-1}$ in prefactor and exponent, respectively, of the bounds given in \eqref{eq:Bb1}-\eqref{eq:Hg}.

\begin{lem}
\label{lem:Mv_conc}
Let $\innerM^{\sfc}_t := \frac{n}{L^2}(\Madjmat^{\sfc}_{t})^* \Madjmat^{\sfc}_{t}$ and 
$\mathbf{v}^{\sfr, \sfc}= \frac{n}{L^2}(\Xmat_{t,\sfc})^*\, \qadj_{\perp, \sfc}^{t, \sfr}$. 
If $\innerM^{\sfc}_1, \ldots, \innerM^{\sfc}_t$ are invertible, we have for $j \in [t-1]$ and for all $\sfr \in [\Lr],$
\begin{align}
P\Big(\Big \lvert \sum_{\sfc \in [\Lc]} S^{j-1}_{\sfr \sfc} \sqrt{W_{\sfr \sfc}} [(\innerM^{\sfc}_t)^{-1}\mathbf{v}^{\sfr,\sfc}]_{j} - \gamma^{t,\sfr}_{j} \frac{\sigma^{j}_{\sfr}}{\phi^{j-1}_{\sfr}}  \Big \lvert \geq \e \Big) \leq  t^2 K K_{t-1}   \PC'_{t-2}   \exp\Big\{ \frac{-\kappa \kappa_{t-1}  (\omega/\Lr)^{2} \pc'_{t-2}  \e^2}{t^4(\log M)^2}\Big\},
%
\label{eq:conc_j} \\
P\Big( \Big \lvert \sum_{\sfc \in [\Lc]} S^{t-1}_{\sfr \sfc} \sqrt{W_{\sfr \sfc}} [(\innerM^{\sfc}_t)^{-1}\mathbf{v}^{\sfr,\sfc}]_{t} + \frac{\sigma^{t}_{\sfr}}{\phi^{t-1}_{\sfr}}   \Big \lvert \geq \e \Big)  
\leq t^2 K K_{t-1}   \PC'_{t-2}   \exp\Big\{ \frac{- \kappa \kappa_{t-1}  (\omega/\Lr)^{2} \pc'_{t-2}  \e^2}{t^4(\log M)^2}\Big\}.
%
\label{eq:conc_t}
\end{align}
\end{lem}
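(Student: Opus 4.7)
The plan is to establish the result in two steps: first, use the induction hypothesis $\mc{B}_{t-1}(g)$ to replace the random matrix $(\innerM^{\sfc}_t)^{-1}$ by the deterministic $(\brCmat^{t,\sfc})^{-1}$ entry-wise; second, use the concentration results $\mc{H}_t(c), \mc{H}_t(d)$ together with the concentration of the projection coefficients $\gamma^{t,\sfr}$ (from $\mc{H}_t(f)$) to identify the concentrating value of each entry of $\mathbf{v}^{\sfr,\sfc}$. Each substitution costs an additive error, controlled using Lemma~\ref{products} (for products of concentrating quantities) and Lemma~\ref{sums} (for the outer summations over $\sfc\in[\Lc]$ and the coordinate index $i\in\{0,\ldots,t-1\}$).

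To set up the key computation, I would expand $\qadj^{t,\sfr}_{\perp,\sfc}=\qadj^{t,\sfr}_\sfc-\sum_{i=0}^{t-1}\gamma^{t,\sfr}_i\qadj^{i,\sfr}_\sfc$ and use $\qadj^{i,\sfr}_\sfc=\sqrt{W_{\sfr\sfc}}\,\brqvec^i_\sfc$, so that the $k$th entry of $\mathbf{v}^{\sfr,\sfc}$ equals
\[
\mathbf{v}^{\sfr,\sfc}_k=\frac{n}{L}\sqrt{W_{\sfr\sfc}}\Big[\tfrac{1}{L}(\hvec^k_\sfc+\brqvec^{k-1}_\sfc)^*\brqvec^t_\sfc-\sum_{i=0}^{t-1}\gamma^{t,\sfr}_i\,\tfrac{1}{L}(\hvec^k_\sfc+\brqvec^{k-1}_\sfc)^*\brqvec^i_\sfc\Big].
\]
After substituting concentrating values via $\mc{H}_t(c)$ for the $\frac{1}{L}(\brqvec^{k-1}_\sfc)^*\brqvec^i_\sfc$ terms and $\mc{H}_t(d)$ for the $\frac{1}{L}(\hvec^k_\sfc)^*\brqvec^i_\sfc$ terms (both applied after bringing $\sqrt{W_{\sfr\sfc}}$ inside the sum to combine with the outer $\sqrt{W_{\sfr\sfc}}$), I would obtain an expression $\widehat{\mathbf{v}}^{\sfr,\sfc}_k$ depending only on the deterministic quantities $\psi^s_\sfc,\tau^s_\sfc,W_{\sfr\sfc}$ and on $\gamma^{t,\sfr}_i$. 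The banded structure of $\brCmat^{t,\sfc}$ in \eqref{eq:Ct_def} then allows $(\brCmat^{t,\sfc})^{-1}\widehat{\mathbf{v}}^{\sfr,\sfc}$ to telescope, by way of the identity $(\brCmat^{t,\sfc})^{-1}(\tfrac{n}{L}\tau^i_\sfc,\ldots,\tfrac{n}{L}\tau^i_\sfc)^*=e_{i+1}$-type relations (analogous to the derivation of $\widehat{\alphavec}^{t,\sfc}$ in \eqref{eq:hatalph_hatgam_def}). The outer sum $\sum_\sfc S^{j-1}_{\sfr\sfc}\sqrt{W_{\sfr\sfc}}[(\brCmat^{t,\sfc})^{-1}\widehat{\mathbf{v}}^{\sfr,\sfc}]_j$ will then simplify to $\gamma^{t,\sfr}_j\sigma^j_\sfr/\phi^{j-1}_\sfr$ for $j\in[t-1]$ and to $-\sigma^t_\sfr/\phi^{t-1}_\sfr$ for $j=t$, using $S^{j-1}_{\sfr\sfc}=\tau^{j-1}_\sfc/\phi^{j-1}_\sfr$ together with the state evolution identity $\frac{1}{\Lc}\sum_\sfc W_{\sfr\sfc}\psi^s_\sfc=\sigma^s_\sfr$ from \eqref{eq:se_phi}.

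The main obstacles will be bookkeeping and scaling. The bookkeeping obstacle is that the telescoping identity mixes several cross-terms indexed by $i\in\{0,\ldots,t-1\}$ with the $\gamma^{t,\sfr}_i$ coefficients, and I would need to verify algebraically that after the $(\brCmat^{t,\sfc})^{-1}$-multiplication and the weighted sum over $\sfc$, only the coefficient $\gamma^{t,\sfr}_j$ (or $-1$ in the case $j=t$) survives, with all intermediate $\tau$ and $\psi$ terms canceling. The scaling obstacle is that $\brCmat^{t,\sfc}$ has entries of order $1/\log M$ and therefore the entries of $(\brCmat^{t,\sfc})^{-1}$ are of order $\log M$; each entry of $\mathbf{v}^{\sfr,\sfc}$ is thus amplified, which accounts for the factor $(\log M)^{-2}$ in the denominator of the exponent. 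Finally, combining the $t$ summands over the coordinate index and the $\Lc$ summands over $\sfc$ through Lemma~\ref{sums} yields the prefactor $t^2 K K_{t-1}\PC'_{t-2}$, and assembling the product-type errors via Lemma~\ref{products} produces the exponent $\kappa\kappa_{t-1}(\omega/\Lr)^2\pc'_{t-2}\e^2/[t^4(\log M)^2]$ of the claimed bound.
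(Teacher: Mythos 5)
Your overall decomposition matches the paper's: you correctly identify the two ingredients (replace $(\innerM^{\sfc}_t)^{-1}$ by $(\brCmat^{t,\sfc})^{-1}$ via $\mathcal{B}_{t-1}(g)$, and identify the concentrating value of $\mathbf{v}^{\sfr,\sfc}$ via $\mathcal{H}_t(c)$, $\mathcal{H}_t(d)$, $\mathcal{H}_t(f)$), and your algebraic telescoping through $(\brCmat^{t,\sfc})^{-1}$ using the state evolution identity is exactly what the paper establishes in Eqs.\ \eqref{eq:bt_r_rep}--\eqref{eq:concj_equality}. Your expansion of $\mathbf{v}^{\sfr,\sfc}_k$ via $\qadj^{t,\sfr}_{\perp,\sfc}=\qadj^{t,\sfr}_\sfc-\sum_i\gamma^{t,\sfr}_i\qadj^{i,\sfr}_\sfc$ is also correct.

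However, there is a genuine gap in how you control the error from replacing $(\innerM^{\sfc}_t)^{-1}$ by $(\brCmat^{t,\sfc})^{-1}$. Writing $\innerM^{-1}\mathbf{v}=\brCmat^{-1}\mathbf{v}+(\innerM^{-1}-\brCmat^{-1})\mathbf{v}$ and then summing over $\sfc$, the first piece is handled by your concentration plan (this is the paper's Eq.\ \eqref{eq:T2a}). But the second piece, $\sum_{\sfc}S^{j-1}_{\sfr\sfc}\sqrt{W_{\sfr\sfc}}\,[(\innerM^{\sfc}_t)^{-1}-(\brCmat^{t,\sfc})^{-1}]_{jk}\,\mathbf{v}^{\sfr,\sfc}_k$, cannot be controlled by Lemma~\ref{products} and Lemma~\ref{sums} as you propose. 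The matrix deviation $(\innerM^{\sfc}_t)^{-1}-(\brCmat^{t,\sfc})^{-1}$ concentrates \emph{per-$\sfc$} (so a union bound over $\sfc\in[\Lc]$ applies), while $\mathbf{v}^{\sfr,\sfc}_k$ concentrates \emph{only through the weighted sum over $\sfc$} — the inner products $\frac{1}{L}(\hvec^k_\sfc)^*\brqvec^i_\sfc$ and $\frac{1}{L}(\brqvec^{k-1}_\sfc)^*\brqvec^i_\sfc$ have no per-$\sfc$ concentration guarantee in $\mathcal{H}_t(c),\mathcal{H}_t(d)$. If you split the sum over $\sfc$ via Lemma~\ref{sums} you pay a $\Lc^2$ penalty in the exponent, and Lemma~\ref{products} cannot be applied to a product where one factor varies over $\sfc$ while the other is a non-concentrating quantity.

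The paper resolves this by proving a separate \emph{boundedness} estimate, Eq.\ \eqref{eq:T2b}, showing that $\sum_{\sfc}\sqrt{W_{\sfr\sfc}}\,\lvert\frac{L}{n}\tv^{\sfr,\sfc}_k\rvert$ is bounded by a deterministic quantity $\textsf{B}\Lr/\omega$ with high probability, so that the cross-term can be bounded by $\textsf{B}\Lr/\omega$ times $\max_{\sfc}\lvert(\innerM^{\sfc}_t)^{-1}-(\brCmat^{t,\sfc})^{-1}\rvert_{jk}$, and the latter is small via a union bound. Proving \eqref{eq:T2b} is itself nontrivial: it uses Cauchy--Schwarz on the inner products and appeals to $\mathcal{H}_t(e)$ (the bound on $\frac{1}{L}\sum_\sfc\sum_\ell W^{2\sfq}_{\sfr\sfc}\max_j([\hvec^{t+1}_\sfc]_j)^2$) and $\mathcal{H}_t(f)$ for the $\gamma^{t,\sfr}_i$, with careful choice of $\textsf{B}$. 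Your proposal omits $\mathcal{H}_t(e)$ and the entire boundedness argument, so the crucial cross-term remains uncontrolled.
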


\begin{proof}
First note that $(\mathbf{\Madjmat}^{\sfc}_t)^{-1}$ concentrates element-wise to 
$(\brCmat^{t, \sfc})^{-1}$ by $\mathcal{B}_{t-1}(g)$.  So before proving  \eqref{eq:conc_j} and  \eqref{eq:conc_t}, we state two results providing concentration and boundedness guarantees for the elements of $\mathbf{v}^{\sfr,\sfc}$. For for $ i \in [t]$, define
\be
\mathbb{E}_{i, \sfc}  =  -\frac{ 1}{\Lc}  \hat{\gamma}^{t, \sfr}_{t-1}   \Psi^{t-1}_{\sfc} \sqrt{W_{\sfr \sfc}}\Big(1 - \frac{ \tau^{\max\{i-1, \, t-2\}}_{\sfc} }{\tau^{t-2}_{\sfc} }  \Big) .
\label{eq:Ek_def}
\ee
Notice $\mathbb{E}_{1, \sfc} = \ldots = \mathbb{E}_{t-1, \sfc} = 0$.  Then, we will prove that for $i \in [t]$ and $\textsf{B} > 0$, a universal constant,
\begin{align}
P\Big(\Big \lvert  \sum_{\sfc \in [\Lc] } \sqrt{W_{\sfr \sfc}} \Big( \frac{L}{n} \tv^{\sfr,\sfc}_i -  \mathbb{E}_{i, \sfc} \Big) \Big \lvert \geq \e \Big) &\leq  t K K_{t-1}   \PC'_{t-2}   \exp\Big\{\frac{-\kappa \kappa_{t-1}  (\omega/\Lr)^{2} \pc'_{t-2}  \e^2}{t^2(\log M)^2}\Big\},
\label{eq:T2a} \\
P\Big(  \sum_{\sfc \in [\Lc]}   \sqrt{W_{\sfr \sfc}} \Big \lvert  \frac{L}{n}\tv^{\sfr,\sfc}_i \Big \lvert \geq  \frac{\textsf{B}\Lr}{\omega} \Big) &\leq   
t K K_{t-1}   \PC'_{t-2}   \exp\Big\{\frac{-\kappa \kappa_{t-1}  (\omega/\Lr)^{2} \pc'_{t-2} }{t^2(\log M)^2}\Big\}.
\label{eq:T2b}
\end{align}
%
%
We prove the main result using \eqref{eq:T2a} and \eqref{eq:T2b}, and then prove  \eqref{eq:T2a} and \eqref{eq:T2b}.  We first claim:
\begin{align}
&  \frac{-n}{L} \sum_{\sfc \in [\Lc]} \sum_{k=1}^t  S^{t-1}_{\sfr \sfc}  \sqrt{W_{\sfr \sfc}} \, [(\brCmat^{t, \sfc})^{-1}]_{t, k} \, \mathbb{E}_{k, \sfc} = \frac{\sigma^{t}_{\sfr}}{\phi^{t-1}_{\sfr}}, \label{eq:bt_r_rep}  \\
 & \frac{n}{L}  \sum_{\sfc \in [\Lc]} \sum_{k=1}^{t} S^{j-1}_{\sfr \sfc} \, \sqrt{W_{\sfr \sfc}}      [(\brCmat^{t, \sfc})^{-1}]_{j k} \, \mathbb{E}_{k, \sfc} =  \frac{\hat{\gamma}^{t,\sfr}_{j} \sigma^{j}_{\sfr}}{\phi^{j-1}_{\sfr}}, \qquad 1 \leq j \leq (t-1).
\label{eq:concj_equality}
\end{align}
The result \eqref{eq:bt_r_rep} is obtained as follows using \eqref{eq:Ek_def}.
\begin{align}
& \frac{-n}{L} \sum_{\sfc \in [\Lc]} \sum_{k=1}^t  S^{t-1}_{\sfr \sfc}  \sqrt{W_{\sfr \sfc}} \, [(\brCmat^{t, \sfc})^{-1}]_{t k} \, \mathbb{E}_{k, \sfc} =  \frac{-n}{L}\sum_{\sfc \in [\Lc]} S^{t-1}_{\sfr \sfc}  \sqrt{W_{\sfr \sfc}} \, [(\brCmat^{t, \sfc})^{-1}]_{t t} \, \mathbb{E}_{t, \sfc} 
\label{eq:bt_r_rep_proof} \\
&= \frac{n \hat{\gamma}^{t, \sfr}_{t-1} }{L\Lc}  \sum_{\sfc \in [\Lc]} S^{t-1}_{\sfr \sfc}  W_{\sfr \sfc} \, [(\brCmat^{t, \sfc})^{-1}]_{t t}  \Psi^{t-1}_{\sfc} \Big( 1-  \frac{ \tau^{t-1}_{\sfc} }{\tau^{t-2}_{\sfc} } \Big) \overset{(a)}{=} \frac{\hat{\gamma}^{t, \sfr}_{t-1}  }{\Lc \phi^{t-1}_{\sfr}} \sum_{\sfc \in [\Lc]} W_{\sfr \sfc}  \Psi^{t-1}_{\sfc} =  \frac{\hat{\gamma}^{t, \sfr}_{t-1}  \sigma^{t-1}_{\sfr} }{\phi^{t-1}_{\sfr}} =  \frac{\sigma^{t}_{\sfr}}{\phi^{t-1}_{\sfr}}, \nonumber
\end{align}
where step $(a)$ uses  $S^{t-1}_{\sfr \sfc} = \tau_\sfc^{t-1}/\phi^{t-1}_\sfr$ and  $ [(\brCmat^{t, \sfc})^{-1}]_{t t} = \frac{L}{n}(\tau^{t-1}_{\perp, \sfc})^{-1}$ which can be seen as follows. From the definition of $\brCmat^{t, \sfc}$ in \eqref{eq:Ct_def},  if $\brCmat^{t-1, \sfc}$ is invertible,  using the block inversion formula, 
\be
 (\brCmat^{t, \sfc})^{-1} = \Big( \begin{array}{cc}
(\brCmat^{t-1, \sfc})^{-1} + \frac{L}{n} (\tau^{t-1}_{\perp, \sfc})^{-1} \hat{\alpha}^{t-1,\sfc}   (\hat{\alpha}^{t-1, \sfc})^*  & - \frac{L}{n}(\tau^{t-1}_{\perp, \sfc})^{-1} \hat{\alpha}^{t-1,\sfc}  \\
-\frac{L}{n} (\tau^{t-1}_{\perp, \sfc})^{-1}   (\hat{\alpha}^{t-1, \sfc})^* &  \frac{L}{n}(\tau^{t-1}_{\perp, \sfc})^{-1}  \end{array} \Big),
\label{eq:Cinverse_mat}
\ee
where we have used $\hat{\alpha}^{t-1,\sfc} :=  \frac{n}{L}\tau^{t-1}_{\sfc} (\brCmat^{t-1,\sfc})^{-1}  (1, \ldots, 1)^*$ and $\tau^{t-1}_{\sfc} -  \tau^{t-1}_{\sfc}(1, \ldots, 1) \hat{\alpha}^{t-1,\sfc} =\tau^{t-1}_{\sfc} - (\tau^{t-1}_{\sfc})^2/\tau^{t-2}_{\sfc} = \tau^{t-1}_{\perp, \sfc}$. Result \eqref{eq:concj_equality} is obtained using steps similar to \eqref{eq:bt_r_rep_proof} to show that the LHS of \eqref{eq:concj_equality} equals 
\[ 
\frac{\hat{\gamma}^{t, \sfr}_{t-1}}{ \phi^{j-1}_{\sfr} \Lc} \sum_{\sfc \in [\Lc]}   W_{\sfr \sfc}  \,  \Psi^{t-1}_{\sfc} \hat{\alpha}^{t-1, \sfc}_{j-1}  \Big[\frac{\tau^{j-1}_{\sfc}}{\tau^{t-1}_{\sfc}}\Big]  =  \frac{  \hat{\gamma}^{t,\sfr}_{j} \sigma^{j}_{\sfr}}{\phi^{j-1}_{\sfr}},
\]
where the last equality follows from the fact that for $j \in [t-2]$ we have $\hat{\gamma}^{t,\sfr}_{j} =  \hat{\alpha}^{t-1, \sfc}_{j-1} = 0$, and if $j=t-1$ then since $\hat{\alpha}^{t-1, \sfc}_{t-1} = 
\tau_\sfc^{t-1}/\tau_\sfc^{t-2}$.

We now prove \eqref{eq:conc_t}.  Using the result in \eqref{eq:bt_r_rep}, the LHS of \eqref{eq:conc_t}  can be expressed as follows:
\be
\begin{split}
&P\Big(\Big \lvert  \sum_{\sfc \in [\Lc]} S^{t-1}_{\sfr \sfc}  \sqrt{W_{\sfr \sfc}} \, \sum_{k=1}^t \Big[ [(\mathbf{\Madjmat}^{\sfc}_t)^{-1}]_{t k} \, \tv^{\sfr,\sfc}_{k} - \frac{n}{L}[(\brCmat^{t \sfc})^{-1}]_{t k} \, \mathbb{E}_{k, \sfc} \Big] \Big \lvert \geq \e \Big) \\
&\overset{(a)}{\leq}   \sum_{k=1}^t P\Big(\frac{n}{L}  \Big \lvert  \sum_{\sfc \in [\Lc]} \tau^{t-1}_{\sfc} \sqrt{W_{\sfr \sfc}} \Big[[(\mathbf{\Madjmat}^{\sfc}_t)^{-1}]_{t k} \, \Big(\frac{L}{n}\tv^{\sfr,\sfc}_{k}\Big) - [(\brCmat^{t, \sfc})^{-1}]_{t k} \, \mathbb{E}_{k, \sfc}\Big]  \Big \lvert \geq \frac{\kappa \e}{t} \Big) \\
& \overset{(b)}{\leq}  \sum_{k=1}^t P\Big(  \Big \lvert  \sum_{\sfc \in [\Lc]}   [(\brCmat^{t, \sfc})^{-1}]_{t k} \, \Big( \frac{n}{L}\tau^{t-1}_{\sfc}\Big)  \sqrt{W_{\sfr \sfc}} \Big(\frac{L}{n} \tv^{\sfr,\sfc}_{k}-  \mathbb{E}_{k, \sfc}\Big)   \Big \lvert \geq \frac{\kappa \e}{2t} \Big) \\
& \qquad + \sum_{k=1}^t P\Big(\frac{n}{L}  \Big \lvert  \sum_{\sfc \in [\Lc]} \tau^{t-1}_{\sfc} \sqrt{W_{\sfr \sfc}}\Big[[(\mathbf{\Madjmat}^{\sfc}_t)^{-1}]_{t k} - [(\brCmat^{t, \sfc})^{-1}]_{t k}\Big] \, \Big(\frac{L}{n}\tv^{\sfr,\sfc}_{k}\Big)  \Big \lvert \geq \frac{\kappa \e}{2t} \Big).
\label{eq:toprove_num22}
\end{split}
\ee
In the above, step $(a)$ follows by  Lemma \ref{sums} and $S^{t-1}_{\sfr \sfc} = \tau^{t-1}_{\sfc} /\phi^{t-1}_{\sfr}$ with $\phi^{t-1}_{\sfr} \in \Theta(1)$.  Step $(b)$ follows by Lemma \ref{sums} again.  Label the terms on the RHS of \eqref{eq:toprove_num22} as $T_1$ and $T_2$.  Note that term $T_1$ has the desired upper bound using \eqref{eq:T2a} and that the non-zero elements of $ [(\brCmat^{t, \sfc})^{-1}]_{t k} \, ( \frac{n}{L}\tau^{t-1}_{\sfc})$ are  $\Theta(1)$.   Now consider term $T_2$.   Using that  $\frac{n}{L}  \tau^{t-1}_{\sfc}$ is $\Theta(1)$ for all $\sfc \in [\Lc]$, 
\be
\begin{split}
T_2 &\leq  \sum_{k=1}^t P\Big( \sum_{\sfc \in [\Lc]} \sqrt{W_{\sfr \sfc}} \Big \lvert [(\mathbf{\Madjmat}^{\sfc}_t)^{-1}]_{t k} - [(\brCmat^{t, \sfc})^{-1}]_{t k}\Big \lvert   \Big\lvert \frac{L}{n}\tv^{\sfr,\sfc}_{k}\Big\lvert  \geq \frac{\kappa \e}{2t} \Big) \\
&\leq  \sum_{k=1}^t \Big[P\Big( \sum_{\sfc \in [\Lc]}  \hspace{-5pt} \sqrt{W_{\sfr \sfc}} \Big\lvert \frac{L}{n} \tv^{\sfr,\sfc}_{k}\Big\lvert  \geq \frac{\textsf{B}\Lr}{\omega} \Big) +  P\Big( \cup_{\sfc \in [\Lc]} \Big\{  \Big \lvert [(\mathbf{\Madjmat}^{\sfc}_t)^{-1}]_{t k} - [(\brCmat^{t, \sfc})^{-1}]_{t k}\Big \lvert \geq \frac{\kappa \e \omega}{2t \textsf{B} \Lr} \Big\} \Big)  \Big] \\
&\leq  \sum_{k=1}^t  \Big[\sum_{\sfc \in[\Lc]} \hspace{-3pt} P\Big(  \Big \lvert [(\mathbf{\Madjmat}^{\sfc}_t)^{-1}]_{t k} - [(\brCmat^{t, \sfc})^{-1}]_{t k}\Big \lvert \geq \frac{\kappa \e\omega}{2t \textsf{B} \Lr} \Big) + P\Big( \sum_{\sfc \in [\Lc]}   \sqrt{W_{\sfr \sfc}} \Big\lvert \frac{L}{n}\tv^{\sfr,\sfc}_{k}\Big\lvert  \geq \frac{\textsf{B} \Lr}{\omega} \Big) \Big] \\
&\leq t \Lc K K_{t-1}   \PC_{t-2}  \exp\Big\{  \frac{-\kappa \kappa_{t-1} (\omega/\Lr)^2  \pc_{t-2}  \e^2 }{t^2 \textsf{B}^2 } \Big\}+ t^2 K K_{t-1}   \PC'_{t-2}   \exp\Big\{ \frac{-\kappa \kappa_{t-1}  (\omega/\Lr)^{2} \pc'_{t-2}  \e^2}{t^2(\log M)^2}\Big\} .
\label{eq:Mv_termT2_bound}
\end{split}
\ee
The last inequality  follows by induction hypothesis $\mathcal{B}_{t-1} (g)$ and \eqref{eq:T2b}. The final result follows since $\Lc  \PC_{t-2}  = \PC'_{t-2}$ and $  \pc_{t-2} =   \pc'_{t-2} $.

We next consider the bound \eqref{eq:conc_j}. First, using Lemma \ref{sums},
\be
\begin{split}
&P\Big(\Big \lvert \sum_{\sfc \in [\Lc]} S^{j-1}_{\sfr \sfc} \, \sqrt{W_{\sfr \sfc}} \, [(\mathbf{\Madjmat}^{\sfc}_t)^{-1} \, \mathbf{v}^{\sfr,\sfc}]_{j} - \gamma^{t,\sfr}_{j} \frac{\sigma^{j}_{\sfr}}{\phi^{j-1}_{\sfr}}\Big \lvert \geq \e \Big) \\
&\leq P\Big( \lvert  \gamma^{t,\sfr}_{j}  - \hat{\gamma}^{t,\sfr}_{j}    \lvert  
\frac{\sigma^{j}_{\sfr}}{\phi^{j-1}_{\sfr}} \geq \frac{\kappa \e}{2} \Big)  +P\Big( \Big \lvert \sum_{\sfc \in [\Lc]} S^{j-1}_{\sfr \sfc} \, \sqrt{W_{\sfr \sfc}} \, \sum_{k=1}^{t}   [(\mathbf{\Madjmat}^{\sfc}_t)^{-1}]_{j k} \, \tv^{\sfr,\sfc}_{k}  -  \frac{ \hat{\gamma}^{t,\sfr}_{j} \sigma^{j}_{\sfr}}{\phi^{j-1}_{\sfr}}    \Big \lvert \geq \frac{\kappa \e}{2} \Big).
\label{eq:concj_toprove2}
\end{split}
\ee
The first term on the RHS of \eqref{eq:concj_toprove2} is upper bounded by $K K_{t-1}   \PC_{t-1}'   \exp\{{-\kappa \kappa_{t-1} (\omega/\Lr)^{2} \pc'_{t-1}  \e^2}/{(\log M)^{2} }\}$ using  $\mathcal{H}_t (f)$.  Using result \eqref{eq:concj_equality}, the second term on the RHS of \eqref{eq:concj_toprove2} can be written as 
\ben
\begin{split}
%
%
P\Big( \Big \lvert  \sum_{\sfc \in [\Lc]} S^{j-1}_{\sfr \sfc} \, \sqrt{W_{\sfr \sfc}} \sum_{k=1}^{t}  \Big[ [(\mathbf{\Madjmat}^{\sfc}_t)^{-1}]_{j k} \, \tv^{\sfr,\sfc}_{k}  - \frac{n}{L}    [(\brCmat^{t, \sfc})^{-1}]_{j k} \, \mathbb{E}_{k, \sfc}\Big]   \Big \lvert \geq \frac{\kappa \e}{2} \Big),
%
\end{split}
\een
which can be  bounded using steps similar to that in \eqref{eq:toprove_num22}.

To complete the proof, it remains to prove the bounds in \eqref{eq:T2a} and \eqref{eq:T2b}.   Note that for $ i \in [t]$,
\be
 \frac{L}{n}    \sqrt{W_{\sfr \sfc}}  \tv^{\sfr,\sfc}_{i} = \frac{   \sqrt{W_{\sfr \sfc}}}{L}(\hvec^{i}_{\sfc} + \brqvec^{i-1}_{\sfc})^* \, \qadj_{\perp, \sfc}^{t, \sfr}  = \frac{1}{L}    \Big(\frac{\sqrt{W_{\sfr \sfc}}}{L} \hvec^{i }_{\sfc} - \qadj^{i-1, \sfr}_{\sfc} \Big)^* \, (\qadj_{\sfc}^{t, \sfr} -\sum_{j=0}^{ t-1} \gamma^{t,\sfr}_{j} \qadj^{j,\sfr}),
\label{eq:vk_def}
\ee
where we have used the fact that $\qadj_{\perp, \sfc}^{t, \sfr} = \qadj_{\sfc}^{t, \sfr} - \qadj_{\parallel, \sfc}^{t, \sfr} = \qadj_{\sfc}^{t, \sfr} - \sum_{j=0}^{ t-1} \gamma^{t,\sfr}_{j} \qadj^{j,\sfr}$.  We first prove result \eqref{eq:T2a}. Using \eqref{eq:Ek_def} and  \eqref{eq:vk_def}, we can bound for the probability in \eqref{eq:T2a} as:
\be
\begin{split}
&P\Big(\Big \lvert  \sum_{\sfc \in [\Lc]}    \sqrt{W_{\sfr \sfc}} \Big( \frac{L}{n} \tv^{\sfr,\sfc}_i -  \mathbb{E}_{i, \sfc} \Big) \geq \e \Big) \\ 
&{\leq}  P\Big ( \Big \lvert \sum_{\sfc \in [\Lc]}   \Big[  \frac{\sqrt{W_{\sfr \sfc}} }{L }  (\hvec^{i }_{\sfc})^*\,  (\qadj_{\sfc}^{t, \sfr} - \sum_{j=0}^{ t-1} \gamma^{t,\sfr}_{j} \qadj^{j,\sfr}) + \frac{W_{\sfr \sfc}}{\Lc}  \Big( \Psi^{t}_{\sfc} -  \hat{\gamma}^{t, \sfr}_{t-1}   \Psi^{t-1}_{\sfc} \frac{ \tau^{\max\{i-1, t-2\}}_{\sfc} }{\tau^{t-2}_{\sfc} }\Big) \Big]  \Big \lvert \geq \frac{\e}{2} \Big ) \\
& \quad + P\Big (  \Big  \lvert \sum_{\sfc \in [\Lc]}   \Big[ \frac{ 1 }{L } (\qadj^{i-1, \sfr}_{\sfc})^*\,  (\qadj_{\sfc}^{t, \sfr} - \sum_{j=0}^{ t-1} \gamma^{t,\sfr}_{j} \qadj^{j,\sfr}) - \frac{W_{\sfr \sfc}}{\Lc} \Big( \Psi^{t}_{\sfc} - \hat{\gamma}^{t, \sfr}_{t-1} \Psi^{t-1}_{\sfc}\Big) \Big] \Big \lvert \geq \frac{\e}{2} \Big ).
\label{eq:v_conc_eq1}
\end{split}
\ee
%
%
Label the terms on the RHS of \eqref{eq:v_conc_eq1} as $T_1$ and $T_2$, and we bound both.   First, for any $i \in [t]$,
\begin{align}
T_1 &\overset{(a)}{\leq}  P\Big( \Big \lvert   \sum_{\sfc \in [\Lc]}      \sqrt{W_{\sfr \sfc}} \Big[ \frac{  (\hvec^{i}_{\sfc})^* \qadj^{t, \sfr}_{\sfc}}{L}  +\frac{\sqrt{W_{\sfr \sfc}}}{\Lc}  \Psi^{t}_{\sfc} \Big] \Big  \lvert \geq \frac{\e}{6} \Big ) + \sum_{j=0}^{t-2} P\Big (\Big  \lvert  \gamma^{t,\sfr}_j   \sum_{\sfc \in [\Lc]}   \frac{    \sqrt{W_{\sfr \sfc}} (\hvec^{i}_{\sfc})^* \qadj^{j, \sfr}_{\sfc}} {L}  \Big  \lvert \geq  \frac{ \e}{6(t-1)}\Big )  \nonumber \\
&+  P\Big ( \Big  \lvert  \sum_{\sfc \in [\Lc]}    \sqrt{W_{\sfr \sfc}} \Big[  \gamma^{t,\sfr}_{t-1}    \frac{ (\hvec^{i}_{\sfc})^*  \qadj_{\sfc}^{t-1, \sfr}}{L}  + \hat{\gamma}^{t, \sfr}_{t-1} \frac{ \sqrt{W_{\sfr \sfc}} \Psi^{t-1}_{\sfc}  \tau^{\max\{i-1, t-2\}}_{\sfc} }{\Lc  \tau^{t-2}_{\sfc} } \Big] \Big  \lvert \geq  \frac{\e}{6} \Big )  \label{eq:v_conc_eq1_bound1} \\
& \overset{(b)}{ \leq} 
t K K_{t-1}   \PC'_{t-2}   \exp\Big\{\frac{-\kappa \kappa_{t-1}  (\omega/\Lr)^{2} \pc'_{t-2}  \e^2}{t^2(\log M)^2}\Big\}.
\nonumber 
\end{align}
%
Step $(a)$ follows from  Lemma \ref{sums}, and step $(b)$ uses $\mathcal{H}_t (d)$ and $\mathcal{H}_t (f)$: the bound for the first term follows directly from $\mathcal{H}_t (d)$, while the bound for the other two terms uses  $\mathcal{H}_t (d)$  and  $\mathcal{H}_t (f)$, along with Lemma \ref{products} and Lemma \ref{products_0}.  Noting that $\hat{\gamma}^{t, \sfr}_{t-1}$, $({ \tau^{\max\{i-1, t-2\}}_{\sfc} }/{ \tau^{t-2}_{\sfc} })$, and $ \Psi^{t-1}_{\sfc}$ are all $\Theta(1)$ terms,  we observe  from \eqref{eq:Wrc_avgs} that  
$\frac{1}{\Lc} \sum_{\sfc}    W_{\sfr \sfc} \Psi^{t-1}_{\sfc} ({ \tau^{\max\{i-1, t-2\}}_{\sfc} }/\tau^{t-2}_{\sfc})$ is also $\Theta(1)$.

The second term  $T_2$ on the RHS of \eqref{eq:v_conc_eq1} can be bounded similarly using Lemma~\ref{sums}, $\mathcal{H}_t (c)$ with $\sfv = 1$,$\mathcal{H}_t (f)$, Lemma \ref{products}, and Lemma \ref{products_0}.
%

Next we prove result \eqref{eq:T2b}.  Using \eqref{eq:vk_def} we have for $k \in [t]$,
\begin{align}
&P\Big( \sum_{\sfc \in [\Lc]}  \sqrt{W_{\sfr \sfc}} \Big \lvert  \frac{L}{n} \tv^{\sfr,\sfc}_k \Big \lvert \geq \frac{\textsf{B}\Lr}{\omega} \Big) =  P\Big( \sum_{\sfc \in [\Lc]}  \frac{1}{L} \Big \lvert  \Big( \sqrt{W_{\sfr \sfc}}\hvec^k_{\sfc} - \qadj^{k-1, \sfr}_{\sfc} \Big)^* \, (\qadj_{\sfc}^{t, \sfr} - \sum_{j=0}^{t-1} \gamma^{t,\sfr}_j \qadj^{j, \sfr}_{\sfc}) \Big \lvert \geq \frac{\textsf{B}\Lr}{\omega} \Big) \nonumber \\
& \leq  P\Big (\frac{ 1  }{L } \sum_{\sfc \in [\Lc]} \Big \lvert    \sqrt{W_{\sfr \sfc}} (\hvec^k_{\sfc})^*\, \qadj^{t, \sfr}_{\sfc} \Big \lvert \geq \frac{\textsf{B}\Lr}{4\omega} \Big ) +   P\Big (\frac{ 1  }{L }  \sum_{j=0}^{t-1} \abs{  \gamma^{t,\sfr}_j } \sum_{\sfc \in [\Lc]}   \Big \lvert \sqrt{W_{\sfr \sfc}} (\hvec^k_{\sfc})^*\, \qadj^{j, \sfr}_{\sfc}   \Big \lvert \geq  \frac{\textsf{B}\Lr}{4\omega} \Big ) \nonumber  \\
& \quad + P\Big (\frac{ 1  }{L } \sum_{\sfc \in [\Lc]}  \Big  \lvert   (\qadj^{k-1, \sfr}_{\sfc})^*\, \qadj^{t, \sfr}_{\sfc}  \Big \lvert \geq  \frac{\textsf{B}\Lr}{4\omega}\Big ) + P\Big ( \frac{ 1  }{L } \sum_{j=0}^{t-1}  \abs{\gamma^{t,\sfr}_j}  \sum_{\sfc \in [\Lc]}  \Big  \lvert  (\qadj^{k-1, \sfr}_{\sfc})^*\, \qadj^{j, \sfr}_{\sfc} \Big \lvert \geq  \frac{\textsf{B}\Lr}{4\omega} \Big ).
\label{eq:v_conc_eqA}
\end{align}
We now provide upper bounds for each term of \eqref{eq:v_conc_eqA}, labeled as $A_1 - A_4$.  We note that these results don't follow directly from the induction hypothesis as they involve the  sums  of the \emph{absolute values} of the inner products over the column blocks $\sfc$.  
First notice that by \eqref{eq:Wrc_avgs}, for any $0 \leq s \leq t$,
\be
\begin{split}
&   \frac{\norm{\qadj^{s, \sfr}}^2}{L} = \sum_{\sfc \in [\Lc]} \frac{W_{\sfr \sfc} \norm{\brqvec^{s}_{\sfc}}^2}{L} \leq   \sum_{\sfc \in [\Lc]} \frac{4 W_{\sfr \sfc}}{\Lc}  \leq 4 \kappa_1, \qquad  \sum_{i \in \text{sec}(\ell)} \abs{[\qadj^{s, \sfr}_{\sfc}]_i} =  \sqrt{W_{\sfr \sfc}} \sum_{i \in \text{sec}(\ell)} \abs{[\brqvec^{s}_{\sfc}]_i} \leq 2\sqrt{W_{\sfr \sfc}}.
\label{eq:qbound_facts}
\end{split}
\ee
%
For reasons that will become clear in a few steps, we take 
\be
\textsf{B} = 4\Big(1+ (\hat{\gamma}^{t,\sfr}_{t-1}+1) \max\Big\{ 4 \kappa_1,  \, \sqrt{24 \tau^0_{\sfc} \log M +1}\Big\}\Big).
\label{eq:sfB_def}
\ee 
(Note that $\tau^0_{\sfc} \log M \in \Theta(1)$). First, we claim that  third term, $A_3$, in \eqref{eq:v_conc_eqA} equals $0$. Indeed, using the Cauchy-Schwarz inequality and \eqref{eq:qbound_facts}, 
\begin{align}
 \frac{1 }{L }   \sum_{\sfc \in [\Lc]}   \lvert(\qadj^{k-1, \sfr}_{\sfc})^*\, \qadj^{j, \sfr}_{\sfc} \lvert \leq  \frac{1 }{L }   \sum_{\sfc \in [\Lc]}   \| \qadj^{k-1, \sfr}_{\sfc} \|  
 \| \qadj^{j, \sfr}_{\sfc} \| \leq   \frac{1 }{L }  \| \qadj^{k-1, \sfr} \|  \| \qadj^{j, \sfr} \| \leq 4\kappa_1.
\label{eq:q_inner_prod_bnd}
\end{align}
Since $\frac{\textsf{B}\Lr}{4\omega} > \frac{4 \kappa_1\Lr}{\omega} \geq 4 \kappa_1$, it follows that $A_3=0$. To bound $A_4$, from \eqref{eq:sfB_def} we  note that  $\textsf{B} \geq  4+ 16 \kappa_1 |\hat{\gamma}^{t,\sfr}_{t-1}| =  4+ 16  \kappa_1  \sum_{j=0}^{t-1} \hat{\gamma}^{t,\sfr}_{j}$ since $\hat{\gamma}^{t,\sfr}_{0} = \ldots = \hat{\gamma}^{t,\sfr}_{t-2} = 0$. Then, using \eqref{eq:q_inner_prod_bnd}, we have %
\begin{align*}
A_4 \leq  P\Big ( \sum_{j=0}^{t-1} \abs{\gamma^{t,\sfr}_j} &\geq  \frac{\textsf{B} \Lr}{16\kappa_1 \omega} \Big ) \leq   P\Big ( \sum_{j=0}^{t-1} \abs{\gamma^{t,\sfr}_j} \geq \sum_{j=0}^{t-1}\abs{\hat{\gamma}^{t,\sfr}_{j }} + \frac{1}{4\kappa_1} \Big ) \\
&\leq  \sum_{j=0}^{t-1} P\Big ( \lvert \gamma^{t,\sfr}_j - \hat{\gamma}^{t,\sfr}_{j } \lvert \geq  \frac{1}{t(4\kappa_1)} \Big ) \leq t K K_{t-1}   \PC'_{t-2}   \exp\Big\{\frac{-\kappa \kappa_{t-1}  (\omega/\Lr)^{2} \pc'_{t-2}  \e^2}{t^2(\log M)^2}\Big\}.
%
\end{align*}
The last inequality uses inductive hypothesis $\mathcal{H}_t(f)$.  Now consider $A_1$. Using \eqref{eq:qbound_facts}, for  $0 \leq j,k \leq t$,
\be
\begin{split}
\sum_{\sfc \in [\Lc]} \Big \lvert   \frac{\sqrt{W_{\sfr \sfc}}}{L }  (\hvec^k_{\sfc})^*\, \qadj^{j, \sfr}_{\sfc} \Big \lvert  
&\leq \sum_{\sfc \in [\Lc]}  \sum_{\ell \in \sfc}  \frac{\sqrt{W_{\sfr \sfc}} }{L } \max_{i \in \sec(\ell)} \abs{[\hvec^k_{\sfc}]_{i}} \sum_{i \in \sec(\ell)} \abs{ [\qadj^{j, \sfr}_{\sfc}]_{i}}   \leq \sum_{\sfc \in [\Lc]}   \sum_{\ell \in \sfc}  \frac{2W_{\sfr \sfc} }{L } \max_{i \in \sec(\ell)} \abs{[\hvec^k_{\sfc}]_{i}}. 
\label{eq:abs_h_upper}
\end{split}
\ee
From \eqref{eq:sfB_def}, we have 
\be
 \frac{\textsf{B}\Lr}{4 \omega } \geq  \frac{\Lr}{\omega} (\hat{\gamma}^{t,\sfr}_{t-1}+1)  \max\Big\{ 4 \kappa_1,  \, \sqrt{24 \tau^0_{\sfc} \log M +1}\Big\}\Big) > (\hat{\gamma}^{t,\sfr}_{t-1}+1) \sqrt{24 (\max_{\sfc} W^2_{\sfr \sfc}) \tau^0_{\sfc} \log M + 1},
\label{eq:BR4w_bnd}
\ee
where the last inequality holds when the constant $\kappa_1$ is chosen to be large enough since $\max_{\sfc} W_{\sfr \sfc} = \Theta(\Lr/\omega)$. Therefore, using Lemma \ref{lem:squaredsums} and $\mathcal{H}_t (e)$ with $\sfq =1$ since $k \in [t]$, we obtain 
%
\begin{align*}
&A_1 \leq  P\Big (\sum_{\sfc \in [\Lc]}   \sum_{\ell \in \sfc}  \frac{2W_{\sfr \sfc} }{L } \max_{i \in \sec(\ell)} \abs{[\hvec^k_{\sfc}]_{i}}  \geq \frac{\textsf{B}\Lr}{4 \omega} \Big ) \\
&\leq   P\Big ( \frac{1}{L} \sum_{\sfc \in [\Lc]}   \sum_{\ell \in \sfc}  4W^2_{\sfr \sfc} \max_{i \in \sec(\ell)} \abs{[\hvec^k_{\sfc}]_{i}}^2 \geq 24 (\max_{\sfc} W^2_{\sfr \sfc}) \tau^0_{\sfc} \log M + 1 \Big )    \leq     K K_{t-1}   \PC'_{t-2}  e^{-\kappa \kappa_{t-1}  (\omega/\Lr)^{2} \pc'_{t-2}  \e^2}.
%
\end{align*}
Finally, using \eqref{eq:abs_h_upper}, Lemma \ref{lem:squaredsums}, and \eqref{eq:BR4w_bnd}, we bound term $A_2$:
\begin{align*}
&A_2\leq P\Big ( \sum_{\sfc \in [\Lc]}   \sum_{\ell \in \sfc}  \frac{2W_{\sfr \sfc} }{L } \max_{i \in \sec(\ell)} \abs{[\hvec^k_{\sfc}]_{i}}  \sum_{j=0}^{t-1} \abs{  \gamma^{t,\sfr}_j }  \geq \frac{\textsf{B}\Lr}{4\omega} \Big ) \\
& \leq P\Big ( \sum_{\sfc \in [\Lc]}   \sum_{\ell \in \sfc}  \frac{4W^2_{\sfr \sfc} }{L } \max_{i \in \sec(\ell)} \abs{[\hvec^k_{\sfc}]_{i}}^2     \geq 24 (\max_{\sfc}W^2_{\sfr \sfc})  \tau^0_{\sfc} \log M + 1\Big )  + P\Big (  \sum_{j=0}^{t-1} \abs{  \gamma^{t,\sfr}_j }  \geq  \sum_{j=0}^{t-1}\abs{\hat{\gamma}^{t,\sfr}_{j }} + 1\Big ). 
\end{align*}
Then the upper bound follows along the same lines as that of $A_1$ and $A_4$.  This completes the proof of \eqref{eq:T2b}, and the lemma.
\end{proof}


\noindent \textbf{(a)}  Recall the definition of $\Dvec_{t,t, \sfr}$ from Lemma \ref{lem:hb_cond} Eq.\ \eqref{eq:Dtt}.
Consider the term $\frac{1}{\sqrt{L}}\sum_{\sfc } \|  \qadj^{t, \sfr}_{\perp, \sfc} \|  \, [\proj^{\parallel}_{\Madjmat_t^{\sfc}} \,   \Zprimesupc_{t} ]_{\sfr}$, where $\Zprimesupc_{t} \sim \mathcal{N}(0, \mathbb{I}_n)$.
Using Lemma \ref{lem:gauss_p0}, $\proj^{\parallel}_{\Madjmat_t^{\sfc}}\Zprimesupc_t \, \overset{d}{=} \, \sum_{j=0}^{t-1} \frac{\mvec^{j, \sfc}_{\perp, \sfr}}{\| \mvec^{j, \sfc}_\perp\|} Z^{\sfc}_j$ for i.i.d.\ $Z^{\sfc}_j \sim \normal(0,1)$.
 Then 
\be
\sum_{\sfc }  \frac{ \|\qadj^{t, \sfr}_{\perp, \sfc} \|}{\sqrt{L}}  \, [\proj^{\parallel}_{\Madjmat_t^{\sfc}} \,   \Zprimesupc_{t} ]_{\sfr} 
\overset{d}{=} 
\sum_{\sfc \in [\Lc]}  \sum_{j=0}^{t-1} \frac{\| \qadj^{t, \sfr}_{\perp, \sfc} \|}{\sqrt{L}} 
\frac{\mvec^{j, \sfc}_{\perp, \sfr}}{\| \mvec^{j, \sfc}_\perp\|} Z^{\sfc}_j, \qquad   
\{ Z^{\sfc}_j \}_{\sfc \in [\Lc], \, 0 \leq j \leq (t-1)}  \sim^{\text{i.i.d.}} \normal(0,1).
\label{eq:finalrep_deltaterm1}
\ee
Noe we simplify the final three terms of $\Dvec_{t,t, \sfr}$ in \eqref{eq:Dtt}.  Using the Lemma \ref{lem:Mv_conc} notation 
$\innerM^{\sfc}_t, \mathbf{v}^{\sfr,\sfc}$,
\ben
 \sum_{\sfc \in [\Lc]} \Madjmat_{t,\sfr}^{\sfc} ((\Madjmat_t^{\sfc})^* \Madjmat_t^{\sfc})^{-1}(\Xmat_{t,\sfc})^* \qadj_{\perp, \sfc}^{t, \sfr}
 = \sum_{\sfc \in [\Lc]} \Madjmat_{t,\sfr}^{\sfc}  (\innerM^{\sfc}_t)^{-1} \mathbf{v}^{\sfr, \sfc}
 = \sum_{j=1}^{t} \sum_{\sfc \in [\Lc]}  \madj_{\sfr}^{j-1, \sfc}  [(\innerM^{\sfc}_t)^{-1} \mathbf{v}^{\sfr,\sfc}]_{j}.
 \een
 Therefore,
 \begin{align}
 & \sum_{\sfc \in [\Lc]} \Madjmat_{t,\sfr}^{\sfc} ((\Madjmat_t^{\sfc})^* \Madjmat_t^{\sfc})^{-1}(\Xmat_{t,\sfc})^* \qadj_{\perp, \sfc}^{t, \sfr} -   \sum_{i=1}^{t-1}\gamma^{t,\sfr}_{i} \upsilon^{i}_{\sfr} \brmvec^{i-1}_{\sfr}    +  \upsilon^t_{\sfr}  \brmvec^{t-1}_{\sfr}  \label{eq:final_three_tt} \\
&= \sum_{j=1}^{t-1} \brmvec_{\sfr}^{j-1} [ \sum_{\sfc \in [\Lc]}S^{j-1}_{\sfr \sfc} \, \sqrt{W_{\sfr \sfc}}  [(\innerM^{\sfc}_t)^{-1} \mathbf{v}^{\sfr,\sfc}]_{j} - \gamma^{t,\sfr}_{j}  \upsilon^{j}_{\sfr} ]   +  \brmvec_{\sfr}^{t-1} [ \sum_{\sfc \in [\Lc]} S^{t-1}_{\sfr \sfc}  \sqrt{W_{\sfr \sfc}}  [(\innerM^{\sfc}_t)^{-1} \mathbf{v}^{\sfr,\sfc}]_{t}  + \upsilon^t_{\sfr}]. \nonumber
\end{align}

Using the expressions in \eqref{eq:finalrep_deltaterm1} and \eqref{eq:final_three_tt} in the definition of $\Dvec_{t,t, \sfr}$ in \eqref{eq:Dtt}, by Lemma \ref{lem:squaredsums},
\be
\begin{split}
&\frac{\norm{\Dvec_{t,t, \sfr}}^2}{2(t+1)} \leq \sum_{i=0}^{t-1}  (\gamma^{t,\sfr}_{i} - \hat{\gamma}^{t,\sfr}_{i})^2\, \norm{\bvec^i_{\sfr} }^2  \, +\, \norm{\Zprime_{t,\sfr} }^2 \Big[\frac{ \| \qadj^{t,\sfr}_{\perp} \|}{\sqrt{L}} - \sqrt{\sigma_{\perp, \sfr}^{t}} \Big]^2 \, + \, Z^2 \sum_{j=0}^{t-1} \sum_{\sfc \in [\Lc]}   \frac{
\| \qadj^{t, \sfr}_{\perp, \sfc}\|^2 \| \madj_{\perp, \sfr}^{j, \sfc} \|^2}{L \| \madj_{\perp}^{j, \sfc} \|^2}   \\
& +  \sum_{j=1}^{t-1} \| \brmvec_{\sfr}^{j-1}\|^2 \Big[ \hspace{-5pt} \sum_{\sfc \in [\Lc]} \hspace{-5pt} S^{j-1}_{\sfr \sfc}\sqrt{W_{\sfr \sfc}}  [(\innerM^{\sfc}_t)^{-1} \mathbf{v}^{\sfr,\sfc}]_{j} - \gamma^{t,\sfr}_{j} \upsilon^{j}_{\sfr}  \Big]^2  +  \|\brmvec_{\sfr}^{t-1}\|^2 \Big[ \hspace{-3pt} \sum_{\sfc \in [\Lc]} S^{t-1}_{\sfr \sfc}\sqrt{W_{\sfr \sfc}}  [(\innerM^{\sfc}_t)^{-1} \mathbf{v}^{\sfr,\sfc}]_{t} + \upsilon^{t}_{\sfr}  \Big]^2
\label{eq:delttt_sq_bound}
\end{split}
\ee

Applying Lemma \ref{sums}, with $\tilde{\e}_t = \frac{\e}{4(t+1)^2}$, we obtain
\begin{align}
& P\Big(\frac{1}{n}\sum_{\sfr \in [\Lr]} W^{\sfp}_{\sfr \sfc'} \norm{\Dvec_{t,t, \sfr}}^2  \geq \epsilon \Big) \leq \sum_{i=0}^{t-1} P\Big(\frac{1}{\Lr}\sum_{\sfr \in [\Lr]}  W^{\sfp}_{\sfr \sfc'}  \lvert\gamma^{t,\sfr}_i - \hat{\gamma}^{t,\sfr}_i \lvert^2 \frac{ \norm{\bvec^i_{\sfr} }^2}{\Mr} \geq  \tilde{\e}_t \Big)  \nonumber \\
& \ + P\Big(\frac{1}{\Lr}\sum_{\sfr \in [\Lr]} W^{\sfp}_{\sfr \sfc'} \Big \vert 
\frac{ \| \qadj^{t,\sfr}_{\perp} \|}{\sqrt{L}} - \sqrt{\sigma_{\perp, \sfr}^{t}} \Big \lvert^2 \frac{\| \Zprime_{t,\sfr} \|^2}{\Mr} \geq  \tilde{\e}_t \Big) + P\Big( \frac{Z^2}{n}\sum_{\sfr \in [\Lr]}  \sum_{j=0}^{t-1} \sum_{\sfc \in [\Lc]}   \frac{W^{\sfp}_{\sfr \sfc'}  
\|\qadj^{t, \sfr}_{\perp, \sfc} \|^2 \| \madj_{\perp, \sfr}^{j, \sfc} \|^2}{L \| \madj_{\perp}^{j, \sfc} \|^2}  \geq \tilde{\e}_t\Big)  \nonumber \\
& \ + \sum_{j=1}^{t-1} P\Big(\frac{1}{\Lr}\sum_{\sfr \in [\Lr]} \frac{W^{\sfp}_{\sfr \sfc'} \|\brmvec_{\sfr}^{j-1} \|^2}{\Mr} \Big \lvert \sum_{\sfc \in [\Lc]} S^{j-1}_{\sfr \sfc}\sqrt{W_{\sfr \sfc}}  [(\innerM^{\sfc}_t)^{-1} \mathbf{v}^{\sfr,\sfc}]_{j} - \gamma^{t,\sfr}_{j} \upsilon^{j}_{\sfr}  \Big \lvert^2 \geq \tilde{\e}_t \Big) \nonumber\\
&\ + P\Big(\frac{1}{\Lr}\sum_{\sfr \in [\Lr]} \frac{W^{\sfp}_{\sfr \sfc'}  \|\brmvec_{\sfr}^{t-1}\|^2}{\Mr} \Big \lvert \sum_{\sfc \in [\Lc]} S^{t-1}_{\sfr \sfc}\sqrt{W_{\sfr \sfc}}  [(\innerM^{\sfc}_t)^{-1} \mathbf{v}^{\sfr,\sfc}]_{t} + \upsilon^{t}_{\sfr}  \Big \lvert^2 \geq \tilde{\e}_t \Big).
\label{eq:delttt_sq_conc}
\end{align}
We  label the terms in \eqref{eq:delttt_sq_conc} $T_1 - T_5$, and show that each  has the desired upper bound.  

First, using Lemma \ref{products_0} and induction hypotheses $\mathcal{B}_{0} (c)-\mathcal{B}_{t-1} (c)$ and $\mathcal{H}_t (f)$:
\be
\begin{split}
&T_1 \leq \sum_{i=0}^{t-1} P\Big(\max_{\sfr' \in [\Lr]}  \lvert\gamma^{t,\sfr'}_i - \hat{\gamma}^{t,\sfr'}_i \lvert^2 \cdot\sum_{\sfr \in [\Lr]} W^{\sfp}_{\sfr \sfc'} \frac{ 
\|\bvec^i_{\sfr} \|^2}{n} \geq  \tilde{\e}_t \Big) \\
&\leq \sum_{i=0}^{t-1} \Big[P\Big( \Big \lvert  \sum_{\sfr \in [\Lr]} W^{\sfp}_{\sfr \sfc'} \Big[\frac{ \| \bvec^i_{\sfr} \|^2}{n} -  \frac{\sigma^i_{\sfr}}{\Lr} \Big] \Big \lvert  \geq  \sqrt{\tilde{\e}_t} \Big)
+  \sum_{\sfr \in [\Lr]} P\Big(  \lvert \gamma^{t,\sfr}_i - \hat{\gamma}^{t,\sfr}_i \lvert^2  \geq  \tilde{\e}_t/ \Big( 2\max \Big\{1,  \sum_{\sfr \in [\Lr]}   \frac{W^{\sfp}_{\sfr \sfc'}\sigma^i_{\sfr}}{\Lr}  \Big\} \Big) \Big) \Big] \\
&\leq t K K_{t-1}  \PC_{t-2} \exp\Big\{ \frac{-\kappa \kappa_{t-1}  (\omega/\Lr)^{2 (\sfp-1)_+} \pc_{t-2} \e }{t^2} \Big\}  +  t \Lr K K_{t-1}  \PC'_{t-2}     \exp\Big\{\frac{-\kappa \kappa_{t-1}  (\omega/\Lr)^{2+(\sfp-1)_+} \pc'_{t-2}  \e}{t^2 (\log M)^2} \Big\}.
\label{eq:termT1}
\end{split}
\ee
For the second term we have used that  $\sum_{\sfr \in [\Lr]} W^{\sfp}_{\sfr \sfc}/\Lr = 1$ if $\sfp = 0$ and for $\sfp \in \{1,2\}$ we have $\sum_{\sfr \in [\Lr]} W^{\sfp}_{\sfr \sfc}/\Lr  \leq \kappa (\Lr/\omega)^{\sfp-1}$ (see \eqref{eq:Wrc_avgs}). We have also used the fact that  $\sigma^i_{\sfr}$ is bounded above and below by positive constants for $0 \leq i \leq t-1$ and $\sfr \in [\Lr]$.

The second term in \eqref{eq:delttt_sq_conc} is bounded using Lemma \ref{products_0} along with induction hypothesis $\mathcal{H}_t (f)$, Lemma \ref{sqroots}, and Lemma \ref{lem:max_abs_normals},
\begin{align}
& T_2  \leq P\Big( \frac{1}{n} \sum_{\sfr \in [\Lr]}  W^{\sfp}_{\sfr \sfc'} \| \Zprime_{t, \sfr}\|^2 \max_{\sfr' \in [\Lr]}  \Big \vert \frac{ \| \qadj^{t,\sfr'}_{\perp} \| }{\sqrt{L}} - \sqrt{\sigma_{\perp, \sfr'}^{t}} \Big \lvert^2   \geq  \tilde{\e}_t \Big) \nonumber  \\
 &\leq \sum_{\sfr' \in [\Lr]} P\Big( \Big \vert \frac{ \|\qadj^{t,\sfr'}_{\perp} \|}{\sqrt{L}} - \sqrt{\sigma_{\perp, \sfr'}^{t}} \Big \lvert^2 \geq  {\tilde{\e}_t}/\big( {2} \max\{1,\sum_{\sfr \in [\Lr]}  \frac{W^{\sfp}_{\sfr \sfc'}}{\Lr}\}  \big)    \Big)  + P\Big(\Big \lvert \frac{1}{\Lr}\sum_{\sfr \in [\Lr]} W^{\sfp}_{\sfr \sfc'} \Big[\frac{\| \Zprime_{t, \sfr} \|^2}{\Mr} -1\Big] \Big \lvert \geq \sqrt{\tilde{\e}_t}\Big) \nonumber \\
& \leq   \Lr   K K_{t-1}  \PC'_{t-2}   \exp\Big\{   \frac{- \kappa \kappa_{t-1} (\omega/\Lr)^{2 + (\sfp-1)_+} \pc'_{t-2} \e}{t^2 (\log M)^2}  \Big\} + 2 \exp\Big\{ \frac{-\kappa n \epsilon (\omega/\Lr)^{2 \sfp-1}}{t^2} \Big\}.\label{eq:termT3}
\end{align}
We  note that $\sigma_{\perp, \sfr}^{t}$ is bounded below for all $\sfr$ and we have used that  $\sum_{\sfr \in [\Lr]} W^{\sfp}_{\sfr \sfc}/\Lr = 1$ if $\sfp = 0$ and for $\sfp \in \{1,2\}$ we have $\sum_{\sfr \in [\Lr]} W^{\sfp}_{\sfr \sfc}/\Lr  \leq \kappa (\Lr/\omega)^{\sfp-1}$ (see \eqref{eq:Wrc_avgs}).


The third term in \eqref{eq:delttt_sq_conc} is bounded as follows.  For all $\sfr \in [\Lr]$ and $\sfc \in [\Lc]$, using Lemma \ref{lem:squaredsums},
\[ \frac{\| \qadj^{t, \sfr}_{\perp, \sfc} \|^2}{L} \hspace{-1pt} \leq  \hspace{-1pt} \frac{W_{\sfr \sfc} }{L} \Big\| \brqvec^{t}_{\sfc} - \sum_{i=0}^{t-1} \gamma^{t, \sfr}_i \brqvec^{i}_{\sfc} \Big\|^2  \hspace{-1pt} \leq  \hspace{-1pt} \frac{W_{\sfr \sfc} (t+1) }{L} \Big[ \hspace{-1pt} \norm{\brqvec^{t}_{\sfc}}^2 + \sum_{i=0}^{t-1} (\gamma^{t, \sfr}_i)^2 \norm{\brqvec^{i}_{\sfc} }^2  \hspace{-1pt} \Big]  \hspace{-1pt} \leq  \hspace{-1pt} \frac{4 W_{\sfr \sfc} (t+1) }{\Lc} \Big[1 + \sum_{i=0}^{t-1} (\gamma^{t, \sfr}_i)^2\Big],\]
where the final inequality follows as $ \| \brqvec^{i}_{\sfc} \|^2 \leq 4L/\Lc$ for $0 \leq i \leq t$.   Therefore,
\be
\begin{split}
&T_3 \leq P\Big( \frac{Z^2}{n}\sum_{\sfr \in [\Lr]}  \sum_{j=0}^{t-1} \sum_{\sfc \in [\Lc]}  \frac{4   W^{\sfp}_{\sfr \sfc'} W_{\sfr \sfc} (t+1) }{\Lc} \Big[1 + \sum_{i=0}^{t-1} (\gamma^{t, \sfr}_i)^2\Big] \frac{\| \madj_{\perp, \sfr}^{j, \sfc} \|^2}{\| \madj_{\perp}^{j, \sfc} \|^2}  \geq \tilde{\e}_t\Big).
\label{eq:T3_bound1}
\end{split}
\ee
Now considering the RHS of \eqref{eq:T3_bound1}, note that if $\Big\{ (\gamma^{t, \sfr}_i)^2 \leq (\hat{\gamma}^{t, \sfr}_i)^2 + 1 \Big\}$ for  $\sfr \in [\Lr]$ and $0 \leq i \leq (t-1)$, 
then since $ (\hat{\gamma}^{t, \sfr}_i)^2 + 1 \in \Theta(1)$ for $\sfr \in [\Lr]$ (meaning $1 + \sum_{i=0}^{t-1} (\gamma^{t, \sfr}_i)^2 \leq \kappa t$ for a constant $\kappa > 0$),
\begin{align*}
&\frac{Z^2}{n}\sum_{\sfr \in [\Lr]}  \sum_{j=0}^{t-1} \sum_{\sfc \in [\Lc]}  \frac{4  W^{\sfp}_{\sfr \sfc'} W_{\sfr \sfc}  (t+1) }{\Lc} \Big[1 + \sum_{i=0}^{t-1} (\gamma^{t, \sfr}_i)^2\Big] \frac{\| \madj_{\perp, \sfr}^{j, \sfc} \|^2}{\| \madj_{\perp}^{j, \sfc} \|^2} \\
&\leq \frac{\kappa Z^2 t^2 \max_{\sfr \sfc}\{  W^{\sfp}_{\sfr \sfc'} W_{\sfr \sfc} \}}{n}\sum_{\sfr \in [\Lr]}  \sum_{j=0}^{t-1} \sum_{\sfc \in [\Lc]}  \frac{1 }{\Lc} \frac{\| \madj_{\perp, \sfr}^{j, \sfc} \|^2}{\| \madj_{\perp}^{j, \sfc} \|^2} = \frac{\kappa Z^2 t^3 \max_{\sfr \sfc}\{ W^{\sfp}_{\sfr \sfc'} W_{\sfr \sfc}  \}}{n}.
%
\end{align*}
Using this argument in \eqref{eq:T3_bound1} and noting that $\max_{\sfr \sfc}\{ W^{\sfp}_{\sfr \sfc'} W_{\sfr \sfc} \} \leq  (P \Lr/\omega)^{\sfp+1}$ we obtain
\be
\begin{split}
T_3 &\leq  \sum_{i=0}^{t-1}  \sum_{\sfr \in [\Lr]} P\Big((\gamma^{t, \sfr}_i)^2 \geq (\hat{\gamma}^{t, \sfr}_i)^2 + 1\Big) + P\Big(Z^2 \geq  \kappa {n \tilde{\e}_t (\omega/\Lr)^{\sfp+1}}/ t^3 \Big) \\
&\leq   t \Lr  K K_{t-1}  \PC'_{t-2}    \exp\Big\{-\frac{\kappa \kappa_{t-1} (\omega/\Lr)^{2} \pc'_{t-2}}{ (\log M)^2}  \Big\}+ 2 \exp\Big\{  - \frac{1}{t^5}\kappa \Mr \omega (\omega/\Lr)^{\sfp} \e\Big\}.
\end{split}
\label{eq:DeltaB_T3_bound}
\ee
The final inequality in \eqref{eq:DeltaB_T3_bound} is obtained using $\mathcal{H}_{t} (f)$ and Lemma \ref{powers} for the first term, and Lemma \ref{lem:max_abs_normals} for the second term. 

Now consider the fourth term of \eqref{eq:delttt_sq_conc}:
\begin{align}
& T_4  \leq \sum_{j=1}^{t-1} P\Big( \sum_{\sfr' \in [\Lr]} \frac{ W^{\sfp}_{\sfr' \sfc'}  \| \brmvec_{\sfr'}^{j-1} \|^2}{n} \max_{\sfr \in [\Lr]}  \Big\lvert \sum_{\sfc \in [\Lc]} S^{j-1}_{\sfr \sfc} W_{\sfr \sfc}  [(\innerM^{\sfc}_t)^{-1} \mathbf{v}^{\sfr,\sfc}]_{j} - \gamma^{t,\sfr}_{j} \upsilon^{j}_{\sfr} \Big \lvert^2\geq  \tilde{\e}_t \Big) \nonumber \\
&\overset{(a)}\leq  \sum_{j=1}^{t-1}  P\Big(\Big\lvert \frac{1}{\Lr} \sum_{\sfr \in [ \Lr ]} W^{\sfp}_{\sfr \sfc'}  
\Big( \frac{ \| \brmvec_{\sfr}^{j-1} \|^2}{\Mr} - \phi_{\sfr}^{j-1} \Big) \Big\lvert \geq  \sqrt{\tilde{\e}_t} \Big) \label{eq:termT4} \\
&\quad + \sum_{j=1}^{t-1}  \sum_{\sfr \in [\Lr]}  P\Big(\Big\lvert \sum_{\sfc \in [\Lc]} S^{j-1}_{\sfr \sfc} W_{\sfr \sfc}  [(\innerM^{\sfc}_t)^{-1} \mathbf{v}^{\sfr,\sfc}]_{j} - \gamma^{t,\sfr}_{j} \upsilon^{j}_{\sfr} \Big \lvert^2 \geq  \tilde{\e}_t/\Big( 2 \max\{ 1, \big( \frac{1}{\Lr} \sum_{\sfr \in [\Lr]} W^{\sfp}_{\sfr \sfc'} \phi_{\sfr}^{j-1} \big) \} \Big) \Big) \nonumber \\
 &\overset{(b)}\leq  t   K K_{t-1}  \PC_{t-2}\hspace{-1pt} \exp\hspace{-1pt}\Big\{ \frac{- \kappa  \kappa_{t-1}(\omega/\Lr)^{2(\sfp-1)_+}  \pc_{t-2}   \e}{t^2} \Big\} +  t^3 \Lr K K_{t-1}   \PC'_{t-2} \hspace{-1pt}  \exp\hspace{-1pt}\Big\{\frac{-\kappa \kappa_{t-1}  (\omega/\Lr)^{2 + (\sfp-1)_+}  \pc'_{t-2}  \e}{t^6(\log M)^2}\Big\}. \nonumber
\end{align}
Step $(a)$ follows by Lemma \ref{products_0} and $(b)$ by the induction hypothesis $\mc{B}_{t-1}(e)$ and Lemma \ref{lem:Mv_conc}, noting $\sum_{\sfr \in [\Lr]} W^{\sfp}_{\sfr \sfc}/\Lr = 1$ if $\sfp = 0$ and $\sum_{\sfr \in [\Lr]} W^{\sfp}_{\sfr \sfc}/\Lr  \leq P(\Lr/\omega)^{\sfp-1}$ for $\sfp \in \{1,2\}$. In particular, $\mc{B}_{t-1}(e)$ is used as follows.   First, recall $\|  \madj^{j-1, \sfc}_{\sfr} \|^2  = (S^{j-1}_{\sfr \sfc})^2 W_{\sfr \sfc} \| \brmvec^{j-1}_{\sfr}\|^2$ and $\frac{n}{L} S^{j-1}_{\sfr \sfc} \in \Theta(1)$.  Therefore,
\begin{align*}
& P\Big(\Big \lvert \frac{1}{\Lr} \sum_{\sfr \in [\Lr]} W^{\sfp}_{\sfr \sfc'}  \Big[ \frac{ \| \brmvec_{\sfr}^{j-1} \|^2}{\Mr} - \phi_{\sfr}^{j-1} \Big] \Big \lvert \geq  \sqrt{\tilde{\e}_t} \Big) \leq   P\Big(\Big \lvert \frac{1}{\Lr} \sum_{\sfr \in [\Lr]} W^{\sfp}_{\sfr \sfc'} \Big(\frac{n}{L} S^{j-1}_{\sfr \sfc}\Big)^2 \Big[ \frac{ \| \brmvec_{\sfr}^{j-1} \|^2}{\Mr} -  \phi_{\sfr}^{j-1} \Big] \Big \lvert \geq  \kappa \sqrt{\tilde{\e}_t} \Big) \\
 & =  P\Big(\Big \lvert \frac{n}{L^2} \sum_{\sfr \in [\Lr]} W^{\sfp-1}_{\sfr \sfc'}   \| \madj_{\sfr}^{j-1, \sfc} \|^2 -  \frac{n^2}{\Lr L^2} \sum_{\sfr \in [\Lr]} W^{\sfp}_{\sfr \sfc'} (S^{j-1}_{\sfr \sfc})^2   \phi_{\sfr}^{j-1}\Big \lvert \geq  \kappa \sqrt{\tilde{\e}_t} \Big).
\end{align*}
Now the result from  $\mc{B}_{t-1}(e)$ can be applied directly.

Finally, the last term in \eqref{eq:delttt_sq_conc} can be bounded using the same arguments as for the fourth term.  

To see the overall concentration result, notice that $\PC_{t-2} = (\Lr \Lc)^{t-1} \leq \Lr \PC'_{t-2} = \Lr \Lc \PC_{t-2} = \PC_{t-1}$ and $(\omega/\Lr)^{2(\sfp-1)_+}  \pc_{t-2} \geq  (\omega/\Lr)^{2 + (\sfp-1)_+}  \pc'_{t-2} /(\log M)^2$ since $  \pc_{t-2} =   \pc'_{t-2}$ and $\sfp \in \{0, 1, 2\}$. Finally, $(\omega/\Lr)^{2 + (\sfp-1)_+}  \pc'_{t-2} /(\log M)^2 = (\omega/\Lr)^{(\sfp-1)_+}  \pc_{t-1}$.  Therefore we have the bound in \eqref{eq:Ba1}.


\noindent  \textbf{(b)}  Using $\bvec^{t}_{\sfr} \lvert_{\mscrs_{t, t}}\stackrel{d}{=} \sqrt{\sigma^t_{\sfr}} \breve{\Z}_{t, {\sfr}}+ \breve{\Dvec}_{t, {\sfr}},$ by Lemma \ref{lem:ideal_cond_dist}, where $  \breve{\Z}_{t, {\sfr}} \sim \mathcal{N}(0, \mathbb{I}_{\Mr})$, and $\wvec_{\sfr} \overset{d}{=} \sigma \mathbf{U}_{\sfr}$ for $\mathbf{U}_{\sfr} \sim \mathcal{N}(0, \mathbb{I}_{\Mr})$ independent of  $ \breve{\Z}_{t, {\sfr}} $, we have by Lemma~\ref{sums},
\begin{align}
& P\Big(\Big \lvert \frac{1}{\Lr}\sum_{\sfr \in [\Lr]} W^{\sfp}_{\sfr \sfc} \Big[\frac{(\bvec^t_{\sfr})^*\wvec_{\sfr}}{\Mr} \Big]\Big \lvert \geq \epsilon \Big) = P\Big(\Big \lvert \frac{1}{\Lr}\sum_{\sfr \in [\Lr]} W^{\sfp}_{\sfr \sfc}\Big(\sqrt{\sigma^t_{\sfr}}  \Big[\frac{ (\breve{\Z}_{t, {\sfr}})^* \mathbf{U}_{\sfr}}{\Mr}  \Big]+ \frac{ \breve{\Dvec}_{t, {\sfr}}^*\mathbf{U}_{\sfr}}{\Mr} \Big)\Big \lvert \geq \frac{\epsilon}{\sigma} \Big) \nonumber \\
&\leq P\Big(\Big \lvert \frac{1}{\Lr}\sum_{\sfr \in [\Lr]} W^{\sfp}_{\sfr \sfc} \sqrt{\sigma^t_{\sfr}}  \Big[\frac{ (\breve{\Z}_{t, {\sfr}})^* \mathbf{U}_{\sfr}}{\Mr}  \Big] \Big \lvert \geq \frac{\epsilon}{2\sigma } \Big) + P\Big(\Big \lvert \frac{1}{\Lr}\sum_{\sfr \in [\Lr]} W^{\sfp}_{\sfr \sfc}   \Big[ \frac{ \breve{\Dvec}_{t, {\sfr}}^*\mathbf{U}_{\sfr}}{\Mr}  \Big] \Big \lvert \geq \frac{\epsilon}{2\sigma} \Big). \label{eq:breveDelUr}
\end{align}
To bound the first term in the above, we recall that $\breve{\Z}_{t, {\sfr}}$ is independent of $\mathbf{U}_{\sfr}$. Hence,  using Lemma \ref{lem:max_abs_normals} and the same argument as in \eqref{eq:Bb2_1}, this term is  bounded by $2\exp\{-\kappa n \e^2(\omega/\Lr)^{\max\{2\sfp - 1, 0\}}\}$.  

For the second term in \eqref{eq:breveDelUr}, we first obtain a concentration result on the norm of $\breve{\Dvec}_{t} = \sum_{i=0}^{t} ({\sigma^{t}_{\sfr}}/{\sigma^{i}_{\sfr}}) \Dvec_{i, i, {\sfr}}$, as defined in Lemma \ref{lem:ideal_cond_dist}. We have
\begin{align}
&P\Big( \frac{1}{n} \sum_{\sfr \in [\Lr]} W^{\sfp}_{\sfr \sfc}   \|  \breve{\Dvec}_{t, \sfr} \|^2 \geq \e^2 \Big) 
\overset{(a)}{\leq}  P\Big( \frac{1}{n}\sum_{\sfr \in [\Lr]}   \sum_{i=0}^{t} \Big(\frac{\sigma^{t}_{\sfr}}{\sigma^{i}_{\sfr}}\Big)^2 W^{\sfp}_{\sfr \sfc} \norm{\Dvec_{i, i, {\sfr}}}^2 \geq \frac{\e^2}{t+1} \Big)  \label{eq:breve_delta_bound}\\
&\overset{(b)}{\leq} %
\sum_{i=0}^{t}  P\Big( \frac{1}{n} \sum_{\sfr \in [\Lr]} W^{\sfp}_{\sfr \sfc}   \norm{\Dvec_{i, i, {\sfr}}}^2 \geq \frac{\kappa \e^2}{(t+1)^2} \Big) \overset{(c)}{\leq} 
 t^4   K K_{t-1}   \PC_{t-1}   \exp\Big\{-\frac{1}{t^8  } \kappa \kappa_{t-1}   (\omega/\Lr)^{(\sfp-1)_{+}}  \pc_{t-1}  \e^2\Big\}. \nonumber
%
\end{align}
Step $(a)$ follows from Lemma  \ref{lem:squaredsums} and step $(b)$ by Lemma \ref{sums}, using that $\sigma^{i}_{\sfr} \in \Theta(1)$ for all $0 \leq i \leq t$. Finally step $(c)$ uses the result from $\mathcal{B}_t (a)$ above. 

Next, using $\abs{\breve{\Dvec}_{t, {\sfr}}^*\mathbf{U}_{\sfr}} \leq \| \breve{\Dvec}_{t, {\sfr}} \| \| \mathbf{U}_{\sfr} \|$, the second term can be bounded as
\begin{align}
&P\Big(\Big \lvert \frac{1}{\Lr}\sum_{\sfr \in [\Lr]}W^{\sfp}_{\sfr \sfc}  \Big[ \frac{ \breve{\Dvec}_{t, {\sfr}}^*\mathbf{U}_{\sfr}}{\Mr}  \Big] \Big \lvert \geq \frac{\epsilon}{2\sigma} \Big) \leq P\Big(\sum_{\sfr \in [\Lr]}W^{\sfp}_{\sfr \sfc} \frac{\| \breve{\Dvec}_{t, {\sfr}} \| \| \mathbf{U}_{\sfr} \|}{n} \geq \kappa \epsilon \Big)  \label{eq:stepb_T3bound} \\
& \overset{(a)}{\leq} P\Big(\frac{1}{n}\sum_{\sfr \in [\Lr]}W^{\sfp}_{\sfr \sfc} \| \breve{\Dvec}_{t, {\sfr}} \|^2  \cdot \frac{1}{n}\sum_{\sfr \in [\Lr]}W^{\sfp}_{\sfr \sfc} \| \mathbf{U}_{\sfr} \|^2 \geq \kappa \epsilon^2 \Big) \nonumber \\
&\overset{(b)}{\leq} P\Big(\frac{1}{n}\sum_{\sfr \in [\Lr]}W^{\sfp}_{\sfr \sfc} \| \breve{\Dvec}_{t, {\sfr}} \|^2 \geq {\kappa \epsilon^2}/ \Big( 2 \max\big\{1, \, \frac{1}{\Lr}\sum_{\sfr \in [\Lr]} W^{\sfp}_{\sfr \sfc}   \big\}\Big)  \Big) + P\Big( \Big \lvert \frac{1}{\Lr}\sum_{\sfr \in [\Lr]} W^{\sfp}_{\sfr \sfc}  \Big[\frac{\| \mathbf{U}_{\sfr} \|^2}{\Mr} - 1\Big] \Big \lvert \geq \e \Big)  \nonumber  \\
&\overset{(c)}{\leq}  
 t^4   K K_{t-1}   \PC_{t-1}   \exp\{-\frac{1}{t^8 } \kappa \kappa_{t-1}   (\omega/\Lr)^{2(\sfp-1)_{+}}  \pc_{t-1}  \e^2\}. + 2\exp\{-\kappa n \e^2(\omega/\Lr)^{\max\{2\sfp - 1, 0\}}\}. \nonumber 
\end{align}
Step $(a)$ follows by Cauchy-Schwarz, $(b)$ by Lemma \ref{products_0}, and $(c)$ by Lemma \ref{lem:max_abs_normals} and \eqref{eq:breve_delta_bound}. 


\noindent  \textbf{(c)} Let $0 \leq s \leq t$.  By Lemma \ref{lem:ideal_cond_dist}, $\bvec^{t}_{\sfr} \lvert_{\mscrs_{t, t}}\stackrel{d}{=} \sqrt{\sigma^t_{\sfr}} \breve{\Z}_{t, {\sfr}}+ \breve{\Dvec}_{t, {\sfr}}$ and $\bvec^{s}_{\sfr} \lvert_{\mscrs_{t, t}}\stackrel{d}{=} \sqrt{\sigma^s_{\sfr}} \breve{\Z}_{s, {\sfr}}+ \breve{\Dvec}_{s, {\sfr}},$ where $  \breve{\Z}_{t, {\sfr}} \sim \mathcal{N}(0, \mathbb{I}_{\Mr})$ and $ \breve{\Z}_{s, {\sfr}} \sim \mathcal{N}(0, \mathbb{I}_{\Mr})$, such that for any $i \in [\Mr]$, the entries 
$[\breve{Z}_{s, {\sfr}}]_i, \, [\breve{Z}_{t, {\sfr}}]_i$ are jointly Gaussian with covariance $\expec\{ [\breve{Z}_{s, {\sfr}}]_i [\breve{Z}_{t, {\sfr}}]_i \} = \sqrt{{\sigma^t_{\sfr}}/{\sigma^{s}_{\sfr}}}$. Now we use Lemma \ref{sums} and the fact that $\sigma^s_{\sfr} \in \Theta(1)$ to write
\be
\begin{split}
&P\Big(\Big \lvert\frac{1}{n} \sum_{\sfr \in [\Lr]} W^{\sfp}_{\sfr \sfc}  (\bvec^s_{\sfr})^*\bvec^t_{\sfr}  - \frac{1}{\Lr} \sum_{\sfr \in [\Lr]} W^{\sfp}_{\sfr \sfc}   \sigma^t_{\sfr} \Big \lvert \geq \epsilon \Big) \\
&= P\Big(\Big \lvert  \frac{1}{\Lr} \sum_{\sfr \in [\Lr]}W^{\sfp}_{\sfr \sfc}  \Big[ \frac{(\sqrt{\sigma^t_{\sfr}} \breve{\Z}_{t, {\sfr}}+ \breve{\Dvec}_{t, {\sfr}})^* (\sqrt{\sigma^s_{\sfr}} \breve{\Z}_{s, {\sfr}}+ \breve{\Dvec}_{s, {\sfr}})}{\Mr}  -   \sigma^t_{\sfr} \Big] \Big \lvert \geq \epsilon \Big)  \\
&\leq P\Big(\Big \lvert  \frac{1}{\Lr}\sum_{\sfr \in [\Lr]} W^{\sfp}_{\sfr \sfc}  \sqrt{\sigma^t_{\sfr} \sigma^s_{\sfr}} \Big[   \frac{ \breve{\Z}_{t, {\sfr}}^*\breve{\Z}_{s, {\sfr}}}{\Mr} -  \sqrt{\sigma^t_{\sfr}/\sigma^s_{\sfr}} \Big] \Big \lvert  \geq \frac{\epsilon}{4} \Big) + P\Big(\sum_{\sfr \in [\Lr]}  W^{\sfp}_{\sfr \sfc}  \frac{ \vert \breve{\Dvec}_{t, \sfr}^*\breve{\Dvec}_{s, \sfr} \vert}{n} \geq \frac{ \epsilon}{4} \Big)  \\\
&\qquad + P\Big(\frac{1}{\Lr} \sum_{\sfr \in [\Lr]} W^{\sfp}_{\sfr \sfc}  \frac{ \vert  \breve{\Dvec}_{s, {\sfr}}^* \,  \breve{\Z}_{t, {\sfr}} \vert }{\Mr} 
  \geq \frac{ \epsilon}{4} \Big) + P\Big( \frac{1}{\Lr} \sum_{\sfr \in [\Lr]} W^{\sfp}_{\sfr \sfc}  \frac{\vert  \breve{\Dvec}_{t, {\sfr}}^* \,  \breve{\Z}_{s, {\sfr}} \vert  }{\Mr} 
  \geq \frac{ \epsilon}{4} \Big).
\label{eq:Bct8}
\end{split}
\ee
Label the terms of the above as $T_1 - T_4$.  

For $T_1$ recall from Lemma \ref{lem:ideal_cond_dist} that $\breve{\Z}_{t, \sfc} \overset{d}{=} \breve{\Z}_{s, \sfr}  \sqrt{{\sigma^{t}_{\sfr}}/{\sigma^{s}_{\sfr}}}  +\breve{\mathbf{U}}_{t, \sfr}\sqrt{1  -  {\sigma^{t}_{\sfr}}/{\sigma^{s}_{\sfr}}}$ where $\breve{\mathbf{U}}_{t, \sfr}$ and $ \breve{\Z}_{s, \sfr}$ are independent. Therefore, using Lemma \ref{sums},
\be
\begin{split}
&T_1 \leq  P\Big(\Big \lvert  \frac{1}{\Lr}\sum_{\sfr \in [\Lr]} W^{\sfp}_{\sfr \sfc} \sigma^t_{\sfr} \Big[  \frac{\|\breve{\Z}_{s, {\sfr}} \|^2}{\Mr} - 1\Big] \Big \lvert  \geq \frac{\epsilon}{8} \Big) + P\Big(\Big \lvert  \frac{1}{\Lr}\sum_{\sfr \in [\Lr]} W^{\sfp}_{\sfr \sfc}  \sqrt{\sigma^t_{\sfr} ( \sigma^s_{\sfr}  - 1)} \Big[   \frac{\breve{\mathbf{U}}_{t, \sfr}^*\breve{\Z}_{s, {\sfr}}}{\Mr}  \Big] \Big \lvert  \geq \frac{\epsilon}{8} \Big) \\ 
& \leq K \exp\{-\kappa n \e^2 (\omega/\Lr)^{\max\{2\sfp - 1, 0\}} \},
\end{split}
\label{eq:Bct8_T1}
\ee
where the last inequality is obtained by using Lemma \ref{lem:max_abs_normals} to bound each of the two probabilities. We use \eqref{eq:Wrc_avgs} and the fact that $\sigma^s_{\sfr} \in \Theta(1)$.

 Next for $T_2$, by two applications of Cauchy-Schwarz,
\begin{align}
&T_2 \leq P\Big(\frac{1}{n}\sum_{\sfr \in [\Lr]} W^{\sfp}_{\sfr \sfc}  \| \breve{\Dvec}_{t, \sfr} \| \| \breve{\Dvec}_{s, \sfr} \| \geq \frac{\kappa \epsilon}{4} \Big)  \leq P\Big( \frac{1}{n} \sum_{\sfr \in [\Lr]}   W^{\sfp}_{\sfr \sfc} \| \breve{\Dvec}_{t, \sfr} \|^2 \cdot \frac{1}{n} \sum_{\sfr \in [\Lr]} W^{\sfp}_{\sfr \sfc} \| \breve{\Dvec}_{s, \sfr} \|^2  \geq \kappa \epsilon^2 \Big) \label{eq:Bct8_eq2} \\
&\leq\Big(\frac{1}{n} \sum_{\sfr \in [\Lr]}  W^{\sfp}_{\sfr \sfc}  \| \breve{\Dvec}_{t, \sfr} \|^2 \geq \kappa \epsilon \Big)  + \Big(\frac{1}{n} \sum_{\sfr \in [\Lr]}  W^{\sfp}_{\sfr \sfc} \| \breve{\Dvec}_{s, \sfr} \|^2 \geq \kappa \epsilon \Big)  \leq   t^4   K K_{t-1}   \PC_{t-1}  e^{-\frac{1}{t^8 } \kappa \kappa_{t-1}   (\omega/\Lr)^{(\sfp-1)_{+}}  \pc_{t-1}  \e}. \nonumber
%
\end{align}
The final inequality follows by \eqref{eq:breve_delta_bound}. 
Finally terms $T_3, T_4$ can be shown to have the desired upper bounded by work similar to that in \eqref{eq:stepb_T3bound}.


\noindent  \textbf{(d)} 
Let $0 \leq \tilde{s}, s \leq t$ where either $\tilde{s} = t$, $s = t$, or both $\tilde{s} = s = t$. Since $\madj^{s,\sfc}_{\sfr}  =  S^s_{\sfr \sfc} \sqrt{W_{\sfr \sfc}}  (\bvec^{s}_{\sfr} - \wvec_{\sfr})$,
\begin{align}
&P \Big( \frac{\Mr}{L}  \Big \lvert  \sum_{\sfr \in [\Lr]}  \Big( W^{\sfp - 1/2}_{\sfr \sfc}\frac{ (\bvec^{\tilde{s}}_{\sfr})^*\madj^{s, \sfc}_{\sfr}}{\Mr} - S^s_{\sfr \sfc} W^{\sfp}_{\sfr \sfc} \sigma^{\max(\tilde{s},s)}_{\sfr} \Big)  \Big \lvert  \geq \epsilon \Big) \label{eq:Btd_firstbound} \\
%
%
&\leq P \Big( \frac{1}{\Lr}  \Big \lvert \sum_{\sfr \in [\Lr]}  W^{\sfp}_{\sfr \sfc} \Big(\frac{n}{L} S^s_{\sfr \sfc}\Big)  \Big[ \frac{(\bvec^{\tilde{s}}_{\sfr})^*\bvec^{s}_{\sfr}}{\Mr}   - \sigma^{\max(\tilde{s},s)}_{\sfr} \Big]  \Big \lvert  \geq \kappa \epsilon  \Big) + P \Big(   \frac{1}{\Lr}  \Big \lvert  \sum_{\sfr \in [\Lr]}  W^{\sfp}_{\sfr \sfc}  \Big(\frac{n}{L} S^s_{\sfr \sfc}\Big)  \Big[\frac{(\bvec^{\tilde{s}}_{\sfr}) ^*   \wvec_{\sfr}}{\Mr}\Big]
  \Big \lvert  \geq \kappa \epsilon \Big). \nonumber
\end{align}
Now the result follows from $\mathcal{B}_t(b)$ and $\mathcal{B}_t(c)$ using that $ (\frac{n}{L} S^s_{\sfr \sfc})  \in \Theta(1).$


\noindent  \textbf{(e)} Since  $\madj^{s,\sfc}_{\sfr}  \overset{d}{=} S^s_{\sfr, \sfc} \sqrt{W_{\sfr \sfc}}  ( \bvec^s_{\sfr}  - \wvec_{\sfr})$, we have  
$(\madj^{s, \sfc}_{\sfr})^* \madj^{t, \sfc}_{\sfr} = S^s_{\sfr, \sfc} S^t_{\sfr, \sfc} W_{\sfr \sfc} [ (\bvec^s_{\sfr})^* \bvec^t_{\sfr} -  \wvec_{\sfr}^*  \bvec^s_{\sfr}  -  \wvec_{\sfr}^*  \bvec^t_{\sfr}   +   \norm{\wvec_{\sfr}}^2]$.
Therefore, using Lemma \ref{sums} and that $ \phi^t_{\sfr} = \sigma^2 + \sigma^t_{\sfr}$,  we obtain
\begin{align*}
&P \Big(\frac{n }{L^2} \Big\lvert \sum_{\sfr \in [\Lr]} W^{\sfp-1}_{\sfr \sfc} (\madj^{s, \sfc}_{\sfr})^* \madj^{t, \sfc}_{\sfr}   -\Mr   \sum_{\sfr \in [\Lr]} S^s_{\sfr, \sfc} \, S^t_{\sfr, \sfc} \, W^{\sfp}_{\sfr \sfc}  \, \phi^t_{\sfr}  \Big \lvert  \geq \epsilon \Big) \\
&= P \Big(\frac{n^2 }{\Lr L^2} \Big\lvert \sum_{\sfr \in [\Lr]}  S^s_{\sfr, \sfc} S^t_{\sfr, \sfc} W^{\sfp}_{\sfr \sfc} \Big[ \frac{(\bvec^s_{\sfr})^* \bvec^t_{\sfr}}{\Mr} -  \frac{ \wvec_{\sfr}^*  \bvec^t_{\sfr}}{\Mr}-  \frac{ \wvec_{\sfr}^*  \bvec^s_{\sfr}}{\Mr} +   \frac{\norm{\wvec_{\sfr}}^2}{\Mr}  -  (\sigma^2 + \sigma^t_{\sfr})\Big] \Big \lvert  \geq \epsilon \Big) \\
&\leq P \Big( \Big\lvert \frac{1}{\Lr} \sum_{\sfr \in [\Lr]} (\frac{n^2}{L^2} S^s_{\sfr, \sfc} S^t_{\sfr, \sfc})W^{\sfp}_{\sfr \sfc}  \Big[ \frac{(\bvec^s_{\sfr})^* \bvec^t_{\sfr}}{\Mr}  - \sigma^t_{\sfr} \Big] \Big \lvert  \geq   \frac{\epsilon}{4}  \Big) + P \Big( \Big\lvert \frac{1}{\Lr} \sum_{\sfr \in [\Lr]}   (\frac{n^2}{L^2} S^s_{\sfr, \sfc} S^t_{\sfr, \sfc})W^{\sfp}_{\sfr \sfc} \Big[\frac{\wvec_{\sfr}^*  \bvec^t_{\sfr} }{\Mr} \Big] \Big \lvert  \geq \frac{\epsilon}{4}  \Big) \\
& + P \Big( \Big\lvert \frac{1}{\Lr} \sum_{\sfr \in [\Lr]}   (\frac{n^2}{L^2} S^s_{\sfr, \sfc} S^t_{\sfr, \sfc}) W^{\sfp}_{\sfr \sfc}  \Big[\frac{\wvec_{\sfr}^*  \bvec^s_{\sfr}}{\Mr} \Big] \Big \lvert  \geq \frac{\epsilon}{4}  \Big) + P \Big( \Big\lvert\frac{1}{\Lr} \sum_{\sfr \in [\Lr]}   (\frac{n^2}{L^2} S^s_{\sfr, \sfc} S^t_{\sfr, \sfc}) W^{\sfp}_{\sfr \sfc}  \Big[   \frac{\norm{\wvec_{\sfr}}^2}{\Mr} - \sigma^2\Big] \Big \lvert  \geq \frac{\epsilon}{4} \Big).
\end{align*}
Now the result follows from Lemma \ref{lem:max_abs_normals}, $\mathcal{B}_t(b)$, and $\mathcal{B}_t(c)$ using that $ (\frac{n}{L} S^s_{\sfr \sfc})  \in \Theta(1)$.

\noindent  \textbf{(f)} We first prove \eqref{eq:Bf}.  Recall, $\alphavec^{t,\sfc} = \frac{n}{L^2}(\innerM^{\sfc}_t)^{-1} (\Madjmat^{\sfc}_t)^* \madj^{t,\sfc}$ where $\innerM^{\sfc}_t := \frac{n}{L^2} (\Madjmat^{\sfc}_t)^* \Madjmat^{\sfc}_t$, so 
 for $1 \leq k \leq t$, we have $\alpha^{t,\sfc}_{k-1} = \frac{n}{L^2} \sum_{i=1}^{t} [(\innerM^{\sfc}_t)^{-1}]_{k i} (\madj^{i-1, \sfc})^* \madj^{t,\sfc}$.  From the definition of $\hat{\alpha}^{t, \sfc}$ in \eqref{eq:hatalph_hatgam_def},
\begin{align*}
&P\Big(\lvert \alpha^{t,\sfc}_{k-1} - \hat{\alpha}^{t,\sfc}_{k-1}\lvert \geq \epsilon \Big) = P\Big(\Big \lvert \sum_{i=1}^{t}\Big[ \frac{n(\madj^{i-1, \sfc})^* \madj^{t,\sfc}}{L^2}[(\innerM^{\sfc}_t)^{-1}]_{k i}   - \frac{n \tau^{t}_{\sfc}}{L} [(\brCmat^{t, \sfc})^{-1}  ]_{k i}\Big] \Big \lvert \geq \e \Big) \\
&\overset{(a)}{\leq} \sum_{i=1}^{t} P\Big(\Big \lvert \frac{n(\madj^{i-1, \sfc})^* \madj^{t,\sfc}}{L^2}[(\innerM^{\sfc}_t)^{-1}]_{k i}   - \frac{n \tau^{t}_{\sfc}}{L} [(\brCmat^{t, \sfc})^{-1}  ]_{k i} \Big \lvert \geq \frac{\e}{t} \Big)\overset{(b)}{\leq}     t^5   K K_{t-1}   \PC_{t-1}   e^{-\frac{1}{t^{10}} \kappa \kappa_{t-1}     \pc_{t-1}  \e^2}.
%
\end{align*}
Step $(a)$ follows by Lemma \ref{sums} and $(b)$ by Lemmas \ref{products}, \ref{products_0} using $\mathcal{B}_{t}(e)$ with $\sfp =1$  and $\mathcal{B}_{t-1}(g)$ result \eqref{eq:Bg}.  Note that $ {n \tau^{t}_{\sfc}}/{L}$  and the absolute values of the non-zero entries of $(\brCmat^{t, \sfc})^{-1}$ are $\Theta(1)$.

Next we prove \eqref{eq:Bf1}.  First, note that 
$\| \madj^{t,\sfc}_{\perp} \|^2 = \| \madj^{t,\sfc} \|^2 - \| \madj^{t,\sfc}_{\parallel} \|^2 = \|\madj^{t,\sfc}\|^2 - \| \Madjmat^{\sfc}_t \alpha^{t,\sfc} \|^2$. 
Using the definition of $\tau^{t}_{\perp, \sfc}$ in \eqref{eq:sigperp_defs} and Lemma \ref{sums}, we  have
\begin{align}
P&\Big(\frac{n}{L} \Big \lvert \frac{1}{L} \| \madj^{t,\sfc}_{\perp} \|^2 - \tau^{t}_{\perp, \sfc} \Big \lvert \geq \epsilon\Big) = P\Big(\frac{n}{L} \Big \lvert \frac{1}{L}  \norm{\madj^{t,\sfc}}^2 - \frac{1}{L} \|  \Madjmat^{\sfc}_t \alphavec^{t,\sfc} \|^2  - \tau^{t}_{\sfc}  + \frac{(\tau^{t}_{\sfc})^2}{\tau^{t-1}_{\sfc}} \Big \lvert \geq \epsilon\Big) \nonumber \\
&\leq P\Big(\frac{n}{L}\Big \lvert \frac{1}{L} \norm{\madj^{t,\sfc}}^2 -  \tau^{t}_{\sfc}  \Big \lvert \geq \frac{\epsilon}{2} \Big) + P\Big(\frac{n}{L} \Big \lvert \frac{1}{L} \|  \Madjmat^{\sfc}_t \alphavec^{t,\sfc} \|^2 - \frac{(\tau^{t}_{\sfc})^2}{\tau^{t-1}_{\sfc}} \Big \lvert \geq \frac{\epsilon}{2}\Big).
\label{eq:Bgt1}
\end{align}
The first term has the desired bound by $\mathcal{B}_{t} (e)$ with $\sfp=1$.  For the second term, using $\madj^{t,\sfc}_{\parallel} = \Madjmat^{\sfc}_t \alphavec^{t,\sfc} =  \sum_{i=0}^{t-1} \alpha^{t,\sfc}_i \madj^{i,\sfc} $, we have 
$\|  \Madjmat^{\sfc}_t \alphavec^{t,\sfc} \|^2 = (\Madjmat^{\sfc}_t \alphavec^{t,\sfc})^*\Madjmat^{\sfc}_t \alphavec^{t,\sfc} =  \sum_{i=0}^{t-1} \alpha^{t,\sfc}_i (\madj^{t,\sfc}_{\parallel})^*\madj^{i,\sfc}   = \sum_{i=0}^{t-1} \alpha^{t,\sfc}_i (\madj^{i,\sfc} )^* \madj^{t,\sfc} .$
Hence, recalling the definition of $\hat{\alpha}^{t, \sfc}$ in \eqref{eq:hatalph_hatgam_def},
\begin{align*}
 &P\Big(\frac{n}{L} \Big \lvert  \frac{1}{L}  \|  \Madjmat^{\sfc}_t \alphavec^{t,\sfc} \|^2 -  \frac{(\tau^{t}_{\sfc})^2}{\tau^{t-1}_{\sfc}} \Big \lvert \geq \frac{\epsilon}{2}\Big) = P\Big(\frac{n}{L} \Big \lvert \sum_{i=0}^{t-1} \Big(\frac{1}{L} \alpha^{t,\sfc}_i  (\madj^{i,\sfc} )^* \madj^{t,\sfc} -  \hat{\alpha}^{t, \sfc}_i \tau^t_{\sfc} \Big)  \Big \lvert \geq \frac{\e}{2}\Big) \nonumber \\
& \leq \sum_{i=0}^{t-1}  P\Big(\frac{n}{L} \Big \lvert \frac{1}{L}  \alpha^{t,\sfc}_i (\madj^{i,\sfc} )^* \madj^{t,\sfc} -   \hat{\alpha}^{t, \sfc}_i \tau^t_{\sfc} \Big \lvert \geq \frac{\e}{2t} \Big)
\overset{(a)}{\leq}    t^6   K K_{t-1}   \PC_{t-1}   \exp\Big\{-\frac{1}{t^{12}} \kappa \kappa_{t-1}     \pc_{t-1}  \e^2\Big\}.
%
\end{align*}
Step $(a)$ is obtained using Lemma \ref{products} when $i=t-1$ and Lemma \ref{products_0} otherwise, along with the results $\mathcal{B}_{t} (e)$ with $\sfp =1$ and $\mathcal{B}_{t} (f)$ proved in \eqref{eq:Bf} above.  


\textbf{(g)} We first show \eqref{eq:Msing}.  For $0 \leq \tilde{s},s \leq t$, $[\innerM^{\sfc}_{t+1}]_{\tilde{s}+1,s+1}= \frac{n}{L^2}(\madj^{\tilde{s}, \sfc})^*\madj^{s, \sfc}$ and therefore, $[\textbf{M}^{\sfc}_{t+1} ]_{\tilde{s}+1,s+1}$  concentrates on $\frac{n}{L} \tau^{\max\{\tilde{s}, s\}}_{\sfc}$ by $\mc{B}_{t}(e)$ with $\sfp=1$.    By Fact \ref{fact:eig_proj}, if $\frac{n}{L^2}\norm{\madj^{s, \sfc}_{\perp}}^2 \geq \kappa >0$ for all $0 \leq s \leq t$, then $\innerM^{\sfc}_{t+1}$ is invertible.  Note from  $\mc{B}_{0}(f)- \mc{B}_{t}(f)$ that $\frac{n}{L^2}\norm{\madj^{s, \sfc}_{\perp}}^2$ concentrates on $\frac{n}{L}\tau^{s}_{\perp, \sfc}$, and $\frac{n}{L} \tau^{s}_{\perp, \sfc} > 0$ by Lemma \ref{lem:sigmatperp}. Choosing $\kappa = \frac{1}{2} \min \{2, \frac{n}{L}\tau^{0}_{\perp, \sfc}, \ldots, \frac{n}{L}\tau^{t}_{\perp, \sfc}\}$, we therefore have
\be
\begin{split}
P\Big(\innerM^{\sfc}_t \text{ singular}\Big)   \leq \sum_{s=0}^{t} P\Big(\frac{n}{L}\Big \lvert \frac{1}{L} \norm{\madj^{s, \sfc}_{\perp}}^2 - \tau^s_{\perp, \sfc} \Big \lvert \geq  \kappa \Big) &\leq    t^7   K K_{t-1}   \PC_{t-1}   e^{-\frac{1}{t^{12}} \kappa \kappa_{t-1}     \pc_{t-1}  \e^2},
%
\end{split}
\label{eq:mr_perp_conc}
\ee
where the second inequality follows from  $\mc{B}_0 (f)-\mc{B}_{t} (f)$.

Next, we show \eqref{eq:Bg}.  We first note that each non-zero element of $(\brCmat^{t+1, \sfc})^{-1}$ is  $\Theta(1)$.  To see this, recall the definition of $\brCmat^{t+1, \sfc}$, from which it follows with work as in \eqref{eq:Cinverse_mat} that if $\brCmat^{t, \sfc}$ is invertible, by the block inversion formula we have
\be
 (\brCmat^{t+1, \sfc})^{-1} = \Big( \begin{array}{cc}
(\brCmat^{t, \sfc})^{-1} + \frac{L}{n} (\tau^{t}_{\perp, \sfc})^{-1} \hat{\alphavec}^{t,\sfc}   (\hat{\alphavec}^{t, c})^*  & - \frac{L}{n}(\tau^{t}_{\perp, \sfc})^{-1} \hat{\alphavec}^{t,\sfc}  \\
-\frac{L}{n} (\tau^{t}_{\perp, \sfc})^{-1}   (\hat{\alphavec}^{t, c})^* &  \frac{L}{n}(\tau^{t}_{\perp, \sfc})^{-1}  \end{array} \Big).
\label{eq:Cinverse_mat_new}
\ee
Each non-zero element of \eqref{eq:Cinverse_mat_new} is $ \in \Theta(1)$ since $\tau^s_{\sfc} \in \Theta(L/n)$, for $0 \leq s \leq T$, and $\tau^{t}_{\perp, \sfc}$ is bounded below (by Lemma \ref{lem:sigmatperp}).    
We can similarly represent $ [\innerM^{\sfc}_{t+1}]^{-1}$ by block inversion.  Noting that
\be
\innerM^{\sfc}_{t+1} = \frac{n}{L^2} \Big( \begin{array}{cc}
\frac{n}{L^2} \innerM^{\sfc}_{t} & (\Madjmat^{\sfc}_{t})^* \madj^{t, \sfc} \\
 ((\Madjmat^{\sfc}_{t})^* \madj^{t, \sfc})^* & \norm{\madj^{t, \sfc}}^2 \end{array} \Big),
 \label{eq:Mblocks}
 \ee
 if $\innerM^{\sfc}_{t}$ is invertible, by the block inversion formula we have
\be
 (\innerM^{\sfc}_{t+1})^{-1} = \Big( \begin{array}{cc}
(\innerM^{\sfc}_{t})^{-1} + \frac{n}{L^2} \| \madj^{t, \sfc}_{\perp} \|^{-2} \alphavec^{t, \sfc}(\alphavec^{t, \sfc})^* & -\frac{n}{L^2} \| \madj^{t, \sfc}_{\perp} \|^{-2}  \alphavec^{t, \sfc} \\
-\frac{n}{L^2}\| \madj^{t, \sfc}_{\perp} \|^{-2}  (\alphavec^{t, \sfc})^* &\frac{n}{L^2}\| \madj^{t, \sfc}_{\perp} \|^{-2}  \end{array} \Big),
\label{eq:Mt1_inverse0}
\ee
where we have used $\alphavec^{t, \sfc} = \frac{n}{L^2} (\innerM^{\sfc}_{t})^{-1} (\Madjmat^{\sfc}_{t})^* \madj^{t, \sfc}$ and $((\Madjmat^{\sfc}_{t})^* \madj^{t, \sfc})^* \alphavec^{t, \sfc} = (\madj^{t, \sfc})^* \madj^{t, \sfc}_{\parallel}$.  

In what follows we prove concentration for each of the elements in \eqref{eq:Mt1_inverse0} to the corresponding element of \eqref{eq:Cinverse_mat_new}. First, by $\mathcal{B}_{t} (f)$ and Lemma \ref{inverses},
\be
P\Big(\Big  \lvert \frac{L^2}{n \| \madj^{t,\sfc}_{\perp} \|^2} - \frac{L}{n}( \tau^{t}_{\perp, \sfc})^{-1} \Big \lvert \geq \e \Big)  \leq   t^6  K K_{t-1}  \PC_{t-1}    \exp\{-\frac{1}{t^{12}}  \kappa \kappa_{t-1} \pc_{t-1} \e^2\}.
\label{eq:ar_conc0_v2}
\ee
Next, consider the $i^{th}$ element of $-\frac{L^2}{n}  \| \madj^{t, \sfc}_{\perp} \|^{-2}  \alphavec^{t, \sfc} $.  For $i \in [t]$,
\begin{align}
 P \Big(\Big \lvert \frac{L^2 \alpha^{t, \sfc}_{i-1} }{n \| \madj^{t, \sfc}_{\perp}\|^2} - \frac{L \hat{\alpha}^{t, \sfc}_{i-1}}{n \tau^{t}_{\perp,  \sfc}}  \Big \lvert \geq \e\Big)  \leq  
t^6  K K_{t-1}  \PC_{t-1}    \exp\{-\frac{1}{t^{12}}  \kappa \kappa_{t-1} \pc_{t-1} \e^2\}.
 %
 \label{eq:matlem_bound1}
\end{align}
The above follows from \eqref{eq:ar_conc0_v2}, Lemma \ref{products} for $i=t$ or Lemma \ref{products_0} when $i \in [t-1]$, and $\mc{B}_{t} (f)$.  Finally,  consider element $(i,j)$ of $(\innerM^{\sfc}_{t})^{-1} + (L^2 /n) \| \madj^{t, \sfc}_{\perp} \|^{-2} \alphavec^{t, \sfc}(\alphavec^{t, \sfc})^*$ for $i,j \in [t]$.
\begin{align}
P &\Big(\Big \lvert [(\innerM^{\sfc}_{t})^{-1}]_{i j} + \frac{L^2}{n} 
\| \madj^{t, \sfc}_{\perp} \|^{-2} \alpha^{t, \sfc}_{i-1} \alpha^{t, \sfc}_{j-1} - [(\brCmat^{t, \sfc})^{-1}]_{i j} -\frac{L}{n}(\tau^{t}_{\perp, \sfc})^{-2} \hat{\alpha}^{t, \sfc}_{i-1} \hat{\alpha}^{t, \sfc}_{j-1} \Big \lvert \geq \e\Big) \nonumber \\
&\overset{(a)}{\leq} P \Big(\Big \lvert [(\innerM^{\sfc}_{t})^{-1}]_{ij}  - [(\brCmat^{t, \sfc})^{-1}]_{i j} \Big \lvert \geq \frac{\e}{2}\Big) + P \Big(\Big \lvert \alpha^{t, \sfc}_{j-1} -  \hat{\alpha}^{t, \sfc}_{j-1} \Big \lvert \geq \frac{\e'}{2}\Big) + P \Big(\Big \lvert \frac{L^2 \alpha^{t, \sfc}_{i-1} }{n \| \madj^{t, \sfc}_{\perp} \|^{2}}  - \frac{ L \hat{\alpha}^{t, \sfc}_{i-1}}{n \tau^{t}_{\perp, \sfc}}  \Big \lvert \geq \frac{\e'}{2}\Big) \nonumber \\
&\overset{(b)}{\leq}  t^6  K K_{t-1}  \PC_{t-1}    \exp\{-\frac{1}{t^{12}}  \kappa \kappa_{t-1} \pc_{t-1} \e^2\}. \label{eq:Minverse_conc}
%
\end{align}
Step $(a)$ follows from Lemma \ref{sums} and Lemma \ref{products}/\ref{products_0} with
$
\e' = \min (\sqrt{\e/3},  \, \frac{n \tau^{t}_{\perp, \sfc}}{ 3 L \hat{\alpha}^{t, \sfc}_{i-1}},  \, \frac{\e}{3  \hat{\alpha}^{t, \sfc}_{j-1}}).
$
Step $(b)$ follows from the inductive hypothesis $\mc{B}_{t-1}(g)$ (Eq.\ \eqref{eq:Bg}), $\mc{B}_{t}(f)$, and \eqref{eq:matlem_bound1}.


\subsection{Step 4: Showing $\mathcal{H}_{t+1}$ holds} \label{subsub:step4}

We prove the statements in $\mc{H}_{t+1}$ assuming that $\mc{B}_{0}, \ldots, \mc{B}_{t}$, and $\mc{H}_1, \ldots, \mc{H}_t$ hold due to the induction hypothesis.  We begin with a lemma that is used to prove $\mc{H}_{t+1} (a)$.  The lemma as well as other parts of $\mc{H}_{t+1}$ assume the invertibility of $\innerQ^{\sfr}_{1}, \ldots, \innerQ^{\sfr}_{t}$, but for the  sake of brevity, we do not explicitly specify the conditioning.

\begin{lem}
\label{lem:Qv_conc}
Let $\mathbf{v}^{\sfr, \sfc} := \frac{1}{L} (\Ymat_{t+1, \sfr} )^* \madj^{t, \sfc}_{\perp, \sfr}$ and $\innerQ^{\sfr}_{t+1} =  \frac{1}{L} (\Qadjmat^{\sfr}_{t+1})^* \Qadjmat^{\sfr}_{t+1}$.   If $\innerQ^{\sfr}_{1} , \ldots, \innerQ^{\sfr}_{t+1}$ are invertible, for $0 \leq j \leq t-1$,
\begin{align}
\label{eq:Hconc_j} 
&P\Big(\Big \lvert  \sum_{\sfr \in [\Lr]} \sqrt{W_{\sfr \sfc}} [(\innerQ^{\sfr}_{t+1})^{-1} \mathbf{v}^{\sfr, \sfc}]_{j+1} + \alpha^{t,\sfc}_{j} \Big \lvert \geq \e \Big) \leq   t^2 K K'_{t-1}   \PC_{t-1}   \exp\{-\frac{1}{t^4} \kappa \kappa'_{t-1}  \pc_{t-1}  \e^2\}  \\
%
&P\Big(\Big \lvert \sum_{\sfr \in [\Lr]} \sqrt{W_{\sfr \sfc}} [(\innerQ^{\sfr}_{t+1})^{-1} \mathbf{v}^{\sfr, \sfc}]_{t+1}   - 1 \Big \lvert \geq \e \Big) \leq  t^2 K K'_{t-1}   \PC_{t-1}   \exp\{-\frac{1}{t^4} \kappa \kappa'_{t-1}  \pc_{t-1}  \e^2\}.
%
\label{eq:Hconc_t}
\end{align}
\end{lem}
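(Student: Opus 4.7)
The plan is to mirror the proof of Lemma \ref{lem:Mv_conc} with the roles of the $\mathcal{B}$- and $\mathcal{H}$-type concentrations exchanged. First, using the representations
\[
\madj^{t,\sfc}_{\perp,\sfr}=\madj^{t,\sfc}_\sfr-\sum_{i=0}^{t-1}\alpha^{t,\sfc}_i\,\madj^{i,\sfc}_\sfr,
\qquad
\Ymat_{t+1,\sfr}=\Bmat_{t+1,\sfr}-[\bzero\mid\Mmat_{t,\sfr}]\Upmat^{\sfr}_{t+1},
\]
I would express each coordinate $(\mathbf{v}^{\sfr,\sfc})_k=\frac{1}{L}[(\Ymat_{t+1,\sfr})^*\madj^{t,\sfc}_{\perp,\sfr}]_k$ as a linear combination of the bilinear quantities $\frac{1}{L}(\bvec^{k-1}_\sfr)^*\madj^{s,\sfc}_\sfr$ and $\frac{1}{L}(\brmvec^{k-2}_\sfr)^*\madj^{s,\sfc}_\sfr$ for $0\le s\le t$. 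The first family is controlled by $\mathcal{B}_t(d)$, and the second by rewriting $\madj^{s,\sfc}_\sfr=S^s_{\sfr\sfc}\sqrt{W_{\sfr\sfc}}(\bvec^s_\sfr-\wvec_\sfr)$ and invoking $\mathcal{B}_t(b),(c)$. This identifies scalars $\mathbb{E}_{k,\sfr}$ on which $\sqrt{W_{\sfr\sfc}}(L/n)(\mathbf{v}^{\sfr,\sfc})_k$ concentrates after the $\sfr$-sum; the key feature, analogous to \eqref{eq:Ek_def}, is that $\mathbb{E}_{k,\sfr}$ is non-trivial only for $k=t+1$.

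Next, I would verify the deterministic identities analogous to \eqref{eq:bt_r_rep}--\eqref{eq:concj_equality}: expanding $(\tCmat^{t+1,\sfr})^{-1}$ via the block-inversion recursion as in \eqref{eq:Cinverse_mat_new} and using $\hat\alpha^{t,\sfc}_{t-1}=\tau^t_\sfc/\tau^{t-1}_\sfc$ together with the state-evolution relations, one should obtain
\[
\sum_{\sfr}\sqrt{W_{\sfr\sfc}}\sum_{k=1}^{t+1}[(\tCmat^{t+1,\sfr})^{-1}]_{t+1,k}\,\mathbb{E}_{k,\sfr}=1,
\qquad
\sum_{\sfr}\sqrt{W_{\sfr\sfc}}\sum_{k=1}^{t+1}[(\tCmat^{t+1,\sfr})^{-1}]_{j+1,k}\,\mathbb{E}_{k,\sfr}=-\hat\alpha^{t,\sfc}_j.
\]
A single application of $\mathcal{H}_t(f)$ exchanges $\hat\alpha^{t,\sfc}_j$ for $\alpha^{t,\sfc}_j$ in \eqref{eq:Hconc_j}, at the cost of an additive probability within the target bound.

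The remainder of the argument follows the template \eqref{eq:toprove_num22}--\eqref{eq:Mv_termT2_bound}. I would split the target probability into a ``value'' part, of the form $\bigl|\sum_{\sfr}\sqrt{W_{\sfr\sfc}}[(\tCmat^{t+1,\sfr})^{-1}]_{i,k}((L/n)(\mathbf{v}^{\sfr,\sfc})_k-\mathbb{E}_{k,\sfr})\bigr|$, handled by $\mathcal{B}_t(d),(e)$ combined with Lemma \ref{sums}; and a ``matrix'' part involving $\bigl|[(\innerQ^{\sfr}_{t+1})^{-1}-(\tCmat^{t+1,\sfr})^{-1}]_{i,k}\bigr|\cdot\sqrt{W_{\sfr\sfc}}\,|(L/n)(\mathbf{v}^{\sfr,\sfc})_k|$, where $\mathcal{H}_t(g)$ controls the inverse discrepancy element-wise and the last factor is bounded in probability by a constant of order $\Lr/\omega$.

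The principal technical obstacle, exactly as in the derivation of \eqref{eq:T2b}, will be establishing this uniform-in-$\sfr$ bound on $\sum_{\sfr}\sqrt{W_{\sfr\sfc}}|(L/n)(\mathbf{v}^{\sfr,\sfc})_k|$ with a deviation probability of the required strength. I plan to use Cauchy--Schwarz together with the elementary inequality $\|\madj^{t,\sfc}_{\perp,\sfr}\|\le\|\madj^{t,\sfc}_\sfr\|$ and the order estimate $\frac{n}{L^2}\|\madj^{s,\sfc}_\sfr\|^2\in\Theta(1)$ (via $\mathcal{B}_t(e)$ with $\sfp=1$), the crude high-probability bound $|\alpha^{t,\sfc}_i|\le|\hat\alpha^{t,\sfc}_i|+1$ from $\mathcal{H}_t(f)$, and $\|\bvec^s_\sfr\|$, $\upsilon^{s-1}_\sfr\|\brmvec^{s-2}_\sfr\|$ controls through $\mathcal{B}_t(c),(e)$. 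The genuine probabilistic content is already in the induction hypothesis; the delicate work is the bookkeeping of the powers of $t$, $\log M$, and $(\omega/\Lr)$ needed to land on the prefactor $t^2 K K'_{t-1}\PC_{t-1}$ and the exponent $\kappa\kappa'_{t-1}\pc_{t-1}\e^2/t^4$ in \eqref{eq:Hconc_j}--\eqref{eq:Hconc_t}.
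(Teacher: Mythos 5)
Your proposed structure matches the paper's proof: identify the concentrating constants $\mathbb{E}_{k,\sfr}$ (here non-trivial only for $k=t+1$), verify the deterministic identities via block inversion of $\tCmat^{t+1,\sfr}$ as in \eqref{eq:tildeCinverse_mat}, split the target into a "value" part and a "matrix-discrepancy" part, and establish the uniform-in-$\sfr$ bound on $\sum_\sfr \sqrt{W_{\sfr\sfc}}\abs{\tv^{\sfr,\sfc}_k}$ to control the latter. However, there are two issues.

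First, the inequality $\|\madj^{t,\sfc}_{\perp,\sfr}\| \le \|\madj^{t,\sfc}_\sfr\|$ is not elementary — it is false. The projection $\proj^\perp_{\Madjmat^\sfc_t}$ acts on $\reals^n$ and mixes row blocks: $\|\madj^{t,\sfc}_\perp\|\le\|\madj^{t,\sfc}\|$ holds for the full vectors, but there is no block-wise monotonicity (a rank-one projection can increase the norm of a coordinate block). The paper avoids this entirely by expanding $\madj^{t,\sfc}_{\perp,\sfr}=\madj^{t,\sfc}_\sfr-\sum_{i=0}^{t-1}\alpha^{t,\sfc}_i\madj^{i,\sfc}_\sfr$ inside \eqref{eq:vk_def2}, applying the triangle inequality over $i$, and then using the full-vector Cauchy--Schwarz $\sum_\sfr\abs{(\madj^{k-1,\sfc}_\sfr)^*\madj^{j,\sfc}_\sfr}\le\|\madj^{k-1,\sfc}\|\|\madj^{j,\sfc}\|$ together with the crude high-probability bound $\sum_i\abs{\alpha^{t,\sfc}_i}\le\sum_i\abs{\hat\alpha^{t,\sfc}_i}+1$; this is the route you should follow for \eqref{eq:term2}.

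Second, a citation slip: the $\alpha$-quantities are controlled by $\mathcal{B}_t(f)$ (Eq.\ \eqref{eq:Bf}), not $\mathcal{H}_t(f)$, which governs $\gamma$. This appears both in your exchange of $\hat\alpha^{t,\sfc}_j$ for $\alpha^{t,\sfc}_j$ and in your crude bound on $\abs{\alpha^{t,\sfc}_i}$. Similarly, $\frac{n}{L^2}\|\madj^{s,\sfc}\|^2$ (full vector, not the block $\madj^{s,\sfc}_\sfr$) is $\Theta(1)$ via $\mathcal{B}_t(e)$ with $\sfp=1$; the single-block quantity has an additional $1/\Lr$-type scaling. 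With these corrections your argument goes through and coincides with the paper's.
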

\begin{proof}
By induction hypothesis $\mathcal{H}_t (g)$, element-wise $(\innerQ^{\sfr}_{t+1})^{-1}$  concentrates to $(\tCmat^{t+1, \sfr})^{-1}$, so before proving \eqref{eq:Hconc_j} and  \eqref{eq:Hconc_t}, we prove intermediary results about element-wise concentration of $\mathbf{v}^{\sfr,\sfc}$.  For $0\leq k \leq t$, define
\be
\begin{split}
\mathbb{E}_{k+1, \sfr}  &=  \frac{\Mr  \sqrt{W_{\sfr \sfc}}}{L} \Big(S^t_{\sfr \sfc} \sigma^{t}_{\sfr}  -  \hat{\alpha}^{t,\sfc}_{t-1}  S^{t-1}_{\sfr \sfc}  \sigma^{\max(k,t-1)}_{\sfr} \Big) =  \frac{\Mr \tau^t_{\sfc}  \sqrt{W_{\sfr \sfc}}}{L}  \Big(\frac{\sigma^{t}_{\sfr}}{\phi^t_{\sfr}} - \frac{\sigma^{\max(k,t-1)}_{\sfr}}{\phi^{t-1}_{\sfr}}\Big).
\label{eq:Ekr_def}
\end{split}
\ee
For $\textsf{B} > 0$, a universal constant, we will prove that
\begin{align}
P\Big(\Big \lvert   \sum_{\sfr \in [\Lr]}  \sqrt{W_{\sfr \sfc}} ( \tv^{\sfr,\sfc}_{k+1} -  \mathbb{E}_{k+1, \sfr}) \Big \lvert \geq \e \Big) &\leq  t K K'_{t-1}   \PC_{t-1}   \exp\{-\frac{1}{t^2} \kappa \kappa'_{t-1}  \pc_{t-1}  \e^2\},
\label{eq:term1} \\
P\Big(  \sum_{\sfr \in [\Lr]} \sqrt{W_{\sfr \sfc}} \abs{\tv^{\sfr,\sfc}_{k+1}} \geq \textsf{B} \Big) &\leq
 t K K'_{t-1}   \PC_{t-1}   \exp\{-\frac{1}{t^2} \kappa \kappa'_{t-1} \pc_{t-1}  \}.
\label{eq:term2}
\end{align}

We prove the lemma using \eqref{eq:term1} and \eqref{eq:term2}, and then prove  \eqref{eq:term1}. We first claim that
\begin{align}
\label{eq:E_simplification_2_v1}
& \sum_{\sfr \in [\Lr]}  \sqrt{W_{\sfr \sfc}}  \sum_{k=1}^{t+1} [(\tCmat^{t+1, \sfr})^{-1}]_{(j+1), k} \, \mathbb{E}_{k, \sfr} = -\hat{\alpha}^{t,\sfc}_{t-1}, \qquad  0 \leq j \leq (t-1), \\
&\sum_{\sfr \in [\Lr]}  \sqrt{W_{\sfr \sfc}}  \sum_{k=1}^{t+1} [(\tCmat^{t+1, \sfr})^{-1}]_{(t+1), k} \, \mathbb{E}_{k, \sfr} = 1.
\label{eq:E_simplification_v1}
\end{align}
To show \eqref{eq:E_simplification_v1}, noting from \eqref{eq:Ekr_def} that $\mathbb{E}_{1, \sfr} = \ldots = \mathbb{E}_{t, \sfr}$, we have
\begin{align}
&\sum_{\sfr \in [\Lr]}  \sqrt{W_{\sfr \sfc}}  \sum_{k=1}^{t+1} [(\tCmat^{t+1, \sfr})^{-1}]_{(t+1) k} \mathbb{E}_{k, \sfr} = \sum_{\sfr \in [\Lr]}  \sqrt{W_{\sfr \sfc}}  \Big[ \mathbb{E}_{t+1, \sfr}  [(\tCmat^{t+1, \sfr})^{-1}]_{(t+1) (t+1)} + \mathbb{E}_{t, \sfr} \sum_{k=1}^{t} [(\tCmat^{t+1, \sfr})^{-1}]_{(t+1) k}\Big] \nonumber \\
&\overset{(a)}{=} \sum_{\sfr \in [\Lr]} \frac{ \sqrt{W_{\sfr \sfc}} }{\sigma^{t}_{\perp, \sfr}} \Big[ \mathbb{E}_{t+1, \sfr} -\mathbb{E}_{t, \sfr}   \hat{\gamma}^{t, \sfr}_{t-1}  \Big]=  \sum_{\sfr \in [\Lr]}  \frac{ \sqrt{W_{\sfr \sfc}} }{\sigma^{t}_{\perp, \sfr}} \Big[ \mathbb{E}_{t+1, \sfr} - \mathbb{E}_{t, \sfr}  \hat{\gamma}^{t, \sfr}_{t-1} \Big] \overset{(b)}{=} \frac{\Mr \tau^t_{\sfc}}{L}  \sum_{\sfr \in [\Lr]}  \frac{W_{\sfr \sfc} }{ \phi^t_{\sfr}} = 1.  \label{eq:E_simplification}
\end{align}
In the above, step $(a)$ is obtained as follows using the definition of $\tCmat^{t+1, \sfr}$ in \eqref{eq:Ct_def}.  Using block inversion as we did for $\brCmat^{t, \sfc}$ in  \eqref{eq:Cinverse_mat}, it is straightforward to show that if $\tCmat^{t, \sfr}$ is invertible,  
\be
 (\tCmat^{t+1, \sfr})^{-1} = \Big( \begin{array}{cc}
(\tCmat^{t, \sfr})^{-1} +  (\sigma^{t}_{\perp, \sfr})^{-1} \hat{\gammavec}^{t,\sfr}   (\hat{\gammavec}^{t, \sfr})^*  & - (\sigma^{t}_{\perp, \sfr})^{-1} \hat{\gammavec}^{t,\sfr}  \\
-(\sigma^{t}_{\perp, \sfr})^{-1}   (\hat{\gammavec}^{t, \sfr})^* &  (\sigma^{t}_{\perp, \sfr})^{-1}  \end{array} \Big).
\label{eq:tildeCinverse_mat}
\ee
Step  $(a)$ then follows since $\hat{\gamma}^{t, \sfr}_{0}  = \ldots = \hat{\gamma}^{t, \sfr}_{t-2} =0$.
Step $(b)$ of \eqref{eq:E_simplification} then follows by using the definitions of $\mathbb{E}_{t+1, \sfr}, \mathbb{E}_{t, \sfr}$ in \eqref{eq:Ekr_def} and  of $\hat{\gamma}^{t, \sfr}_{t-1}$ in \eqref{eq:hatalph_hatgam_def}. The result in \eqref{eq:E_simplification_2_v1} can be shown similarly, using
that for $0 \leq j \leq (t-1)$, 
$\sum_{k=1}^{t} [(\tCmat^{t, \sfr})^{-1}]_{(j+1) k} =  [(\tCmat^{t, \sfr})^{-1}  (1, \ldots, 1)^*]_{j+1} =(\sigma^t_{\sfr})^{-1} \hat{\gamma}^{t,\sfr}_{j},$ where the last equality follows from the definition of $\hat{\gamma}^{t,\sfr}$ in \eqref{eq:hatalph_hatgam_def}.

We now use these results to prove \eqref{eq:Hconc_t}. Using \eqref{eq:E_simplification_v1} and Lemma \ref{sums}, the LHS of \eqref{eq:Hconc_t} can be bounded as follows:
\begin{align}
&P\Big(\Big \lvert \sum_{\sfr \in [\Lr]} \sqrt{W_{\sfr \sfc}}  [(\innerQ^{\sfr}_{t+1})^{-1} \mathbf{v}^{\sfr, \sfc}]_{t+1}   - 1 \Big \lvert \geq \e \Big) = P\Big(\Big \lvert \sum_{\sfr \in [\Lr]} \sqrt{W_{\sfr \sfc}} \sum_{k=1}^{t+1} [(\innerQ^{\sfr}_{t+1})^{-1}]_{(t+1) k} \tv^{\sfr,\sfc}_{k}   - 1\Big \lvert \geq \e \Big) \nonumber \\
&= P\Big(\Big \lvert \sum_{\sfr \in [\Lr]}  \sqrt{W_{\sfr \sfc}}  \sum_{k=1}^{t+1}   \Big[ [(\innerQ^{\sfr}_{t+1})^{-1}]_{(t+1) k} \tv^{\sfr,\sfc}_{k} - [(\tCmat^{t+1, \sfr})^{-1}]_{(t+1) k} \, \mathbb{E}_{k, \sfr} \Big]\Big \lvert  \geq \e \Big) \nonumber \\
&\leq  \sum_{k=1}^{t+1}  P\Big(\Big \lvert \sum_{\sfr \in [\Lr]} \sqrt{W_{\sfr \sfc}} [(\tCmat^{t+1, \sfr})^{-1}]_{(t+1) k}    [ \tv^{\sfr,\sfc}_{k} - \mathbb{E}_{k, \sfr}]\Big \lvert  \geq \frac{\e}{2(t+1)} \Big) \label{eq:toprove_num2}  \\
& \qquad +  \sum_{k=1}^{t+1}  P\Big(\Big \lvert \sum_{\sfr \in [\Lr]}  \sqrt{W_{\sfr \sfc}} \tv^{\sfr,\sfc}_{k} \Big[[(\innerQ^{\sfr}_{t+1})^{-1}]_{(t+1) k} -  [(\tCmat^{t+1, \sfr})^{-1}]_{(t+1) k}\Big]\Big \lvert  \geq \frac{\e}{2(t+1)} \Big). \nonumber
\end{align}
Label the terms on the RHS of \eqref{eq:toprove_num2} as $T_1$ and $T_2$.  Note that term $T_1$ has the desired upper bound using \eqref{eq:term1} and the fact that the non-zero elements of $(\tCmat^{t+1, \sfr})^{-1} \in \Theta(1)$.   Now consider  $T_2$.
\begin{align}
&T_2 \leq  \sum_{k=1}^{t+1}  P\Big(\sum_{\sfr \in [\Lr]}  \sqrt{W_{\sfr \sfc}} \abs{\tv^{\sfr,\sfc}_{k}}  \Big \lvert [(\innerQ^{\sfr}_{t+1})^{-1}]_{(t+1) k} -  [(\tCmat^{t+1, \sfr})^{-1}]_{(t+1) k} \Big \lvert  \geq \frac{\e}{2(t+1)} \Big) \nonumber \\
&\leq \sum_{k=1}^{t+1}  \Big[P\Big( \cup_{\sfr \in [\Lr]}  \Big \lvert [(\innerQ^{\sfr}_{t+1})^{-1}]_{(t+1) k} -  [(\tCmat^{t+1, \sfr})^{-1}]_{(t+1) k} \Big \lvert \leq \frac{\kappa \e}{2t \textsf{B}}  \Big) + P\Big( \sum_{\sfr \in [\Lr]}  \sqrt{W_{\sfr \sfc}}  \abs{\tv^{\sfr,\sfc}_{k}}  \geq  \textsf{B} \Big) \Big]   \nonumber \\
& \leq \sum_{k=1}^{t+1}   \Big[  \sum_{\sfr \in [\Lr]} P\Big(  \Big \lvert [(\innerQ^{\sfr}_{t+1})^{-1}]_{(t+1) k} -[(\tCmat^{t+1, \sfr})^{-1}]_{(t+1) k}  \Big \lvert \geq \frac{\kappa \e}{2t \textsf{B}} \Big)
+ P\Big( \sum_{\sfr \in [\Lr]}  \sqrt{W_{\sfr \sfc}}  \abs{\tv^{\sfr,\sfc}_{k}}  \geq  \textsf{B} \Big) \Big]  \nonumber \\
& \leq  t \Lr K K'_{t-1}   \PC'_{t-2}   \exp\Big\{-\frac{ \kappa \kappa'_{t-1} (\omega/\Lr) ^2\pc'_{t-2}  \e^2}{t^2(\log M)^2}\Big\}  +  t^2 K K'_{t-1}   \PC_{t-1}   \exp\{-\frac{1}{t^2} \kappa \kappa'_{t-1}  \pc_{t-1} \}. \label{eq:Qv_termT2_bound}
\end{align}
The upper bound for the two terms of \eqref{eq:Qv_termT2_bound} follows by \eqref{eq:term2} and $\mathcal{H}_{t} (g)$.  The bound in \eqref{eq:Hconc_t} follows since $\Lr  \PC'_{t-2} =  \PC_{t-1}$ and  $ (\omega/\Lr) ^2\pc'_{t-2}  (\log M)^2 = \pc_{t-1}.$


We now show the bound in \eqref{eq:Hconc_j} for $0 \leq j \leq t-1$.   By Lemma \ref{sums},
\begin{align*}
&P\Big(\Big \lvert  \sum_{\sfr \in [\Lr]} \sqrt{W_{\sfr \sfc}} [(\innerQ^{\sfr}_{t+1})^{-1} \mathbf{v}^{\sfr, \sfc}]_{j+1} + \alpha^{t,\sfc}_{j} \Big \lvert \geq \e \Big) \\
&\leq P\Big(\lvert \alpha^{t, \sfc}_{j} - \hat{\alpha}^{t, \sfc}_{j} \lvert  \geq \frac{\e}{2} \Big) + P\Big(\Big \lvert  \sum_{\sfr \in [\Lr]} \sqrt{W_{\sfr \sfc}} [(\innerQ^{\sfr}_{t+1})^{-1} \mathbf{v}^{\sfr, \sfc}]_{j+1} + \hat{\alpha}^{t,\sfc}_{j} \Big \lvert \geq \frac{\e}{2} \Big).
\end{align*}
The first term of the above is upper bounded with $\mathcal{B}_t (f)$.  Using \eqref{eq:E_simplification_2_v1},  the second term above is
\begin{align}
&P\Big(\Big \lvert  \sum_{\sfr \in [\Lr]} \sqrt{W_{\sfr \sfc}} [(\innerQ^{\sfr}_{t+1})^{-1} \mathbf{v}^{\sfr, \sfc}]_{j+1} + \hat{\alpha}^{t,\sfc}_{j} \Big \lvert \geq \frac{\e}{2} \Big) \nonumber  \\
&= P\Big(\Big \lvert  \sum_{\sfr \in [\Lr]} \sqrt{W_{\sfr \sfc}} \sum_{k=1}^{t+1}\Big( [(\innerQ^{\sfr}_{t+1})^{-1}]_{(j+1), k} \tv^{\sfr, \sfc}_{k}-[(\tCmat^{t+1, \sfr})^{-1}]_{(j+1) k} \, \mathbb{E}_{k, \sfr} \Big) \Big \lvert \geq \frac{\e}{2} \Big).
\label{eq:QV_toprove_num3}
\end{align}
Now we can bound the term in \eqref{eq:QV_toprove_num3} with work similar to that used  for  \eqref{eq:toprove_num2}.


To complete the proof, it remains to show \eqref{eq:term1}, \eqref{eq:term2}. Using the definition of $\Ymat_{t+1}$ in \eqref{eq:Ytr_Xtc}, 
\be
\sqrt{W_{\sfr \sfc}} \tv^{\sfr,\sfc}_{k+1} = \frac{\sqrt{W_{\sfr \sfc}}  (\bvec^{k}_{\sfr} + \upsilon^{k}_\sfr  \brmvec^{k-1}_{\sfr})^*\, \madj^{t, \sfc}_{\perp, \sfr}}{L} =\Big(\frac{\sqrt{W_{\sfr \sfc}} \bvec^{k}_{\sfr}}{L} + \frac{ \upsilon^{k}_\sfr  \madj^{k-1, \sfc}_{\sfr} }{L S^{k-1}_{\sfr \sfc}}\Big)^* \, (\madj^{t, \sfc}_{\sfr} - \sum_{i=0}^{t-1} \alpha^{t,\sfc}_i \madj_{\sfr}^{i, \sfc}),
\label{eq:vk_def2}
\ee
where we have used $ \upsilon^{k}_\sfr= \sigma^k_\sfr/\phi^{k-1}_\sfr$ and $ \madj^{t, \sfc}_{\perp, \sfr} =  \madj^{t, \sfc}_{\sfr} - \madj^{t, \sfc}_{\parallel, \sfr} = \madj^{t, \sfc}_{\sfr} - \sum_{i=0}^{t-1} \alpha^{t,\sfc}_i \madj_{\sfr}^{i, \sfc}$. 

 We first prove \eqref{eq:term1}.  Using \eqref{eq:vk_def2}, we have the following bound for the probability in \eqref{eq:term1}:
\begin{align}
&P\Big( \Big  \lvert  \sum_{\sfr \in [\Lr]} \sqrt{W_{\sfr \sfc}}  ( \tv^{\sfr,\sfc}_{k+1} -  \mathbb{E}_{k+1, \sfr} )\Big \lvert \geq \e \Big)  \nonumber \\ 
&\overset{(a)}{\leq}  P\Big ( \Big \lvert \sum_{\sfr \in [\Lr]} \frac{ \sqrt{W_{\sfr \sfc}} (\bvec^{k}_{\sfr})^* (\madj^{t, \sfc}_{\sfr} - \sum_{i=0}^{t-1} \alpha^{t,\sfc}_i \madj_{\sfr}^{i, \sfc}) }{L} -   \frac{\Mr}{L} \sum_{\sfr \in [\Lr]}  W_{\sfr \sfc}  (S^t_{\sfr \sfc}  \sigma^{t}_{\sfr}  - \hat{\alpha}^{t,\sfc}_{t-1} S^{t-1}_{\sfr \sfc}    \sigma^{\max(k,t-1)}_{\sfr} ) \Big \lvert \geq \frac{\e}{2} \Big ) \nonumber \\
& \quad + P\Big ( \Big  \lvert \sum_{\sfr \in [\Lr]}  \frac{ \upsilon^{k}_{\sfr} (  \madj^{k-1, \sfc}_{\sfr})^*\, (\madj^{t, \sfc}_{\sfr} - \sum_{i=0}^{t-1} \alpha^{t,\sfc}_i \madj_{\sfr}^{i, \sfc}) }{L S^{k-1}_{\sfr \sfc}}  \Big \lvert \geq \frac{\e}{2} \Big ). \label{eq:Qv_conc_eq1}
\end{align}
The terms on the RHS of \eqref{eq:Qv_conc_eq1} are bounded via the inductive hypotheses $\mathcal{B}_t (d)$ and $\mathcal{B}_t (e)$ with $\sfp = 1$ and $\mathcal{B}_t (f)$, similarly to the first term on the RHS of \eqref{eq:v_conc_eq1} (see \eqref{eq:v_conc_eq1_bound1}).


We now prove \eqref{eq:term2} using \eqref{eq:vk_def} and $\upsilon^{k}_{\sfr}/S^{k-1}_{\sfr \sfc} = \sigma^{k}_{\sfr}/\tau^{k-1}_{\sfc}$. Recall that $\tau^{k-1}_{\sfc} \in \Theta(L/n)$, then
\begin{align}
&P\Big(  \sum_{\sfr \in [\Lr]} \sqrt{W_{\sfr \sfc}} \abs{v^{\sfr,\sfc}_{k+1}} \geq \textsf{B} \Big) =  P\Big(  \sum_{\sfr \in [\Lr]} \Big \lvert \Big(\frac{\sqrt{W_{\sfr \sfc}} \bvec^{k}_{\sfr}}{L} + \frac{ \upsilon^{k}_r  \madj^{k-1, \sfc}_{\sfr} }{L S^{k-1}_{\sfr \sfc}}\Big)^* \, (\madj^{t, \sfc}_{\sfr} - \sum_{i=0}^{t-1} \alpha^{t,\sfc}_i \madj_{\sfr}^{i, \sfc}) \Big \lvert \geq \textsf{B} \Big) \nonumber \\
&\leq  P\Big ( \sum_{\sfr \in [\Lr]} \Big \lvert   \frac{ 1  }{L } \sqrt{W_{\sfr \sfc}} ( \bvec^{k}_{\sfr})^*\, \madj^{t, \sfc}_{\sfr} \Big \lvert \geq \frac{\textsf{B}}{4} \Big ) +   P\Big ( \sum_{i=0}^{t-1}  \abs{ \alpha^{t,\sfc}_i }  \sum_{\sfr \in [\Lr]}  \Big \lvert \frac{ 1}{L }\sqrt{W_{\sfr \sfc}} ( \bvec^{k}_{\sfr})^*\,\madj_{\sfr}^{i, \sfc}   \Big \lvert \geq \frac{\textsf{B}}{4} \Big ) \nonumber \\
& \quad + P\Big (  \frac{n}{L^2} \sum_{\sfr \in [\Lr]} \Big  \lvert   ( \madj^{k-1, \sfc}_{\sfr})^*\,  \madj_{\sfr}^{i, \sfc} \Big \lvert \geq \frac{\textsf{B}}{4} \Big ) + P\Big ( \frac{n}{L^2}  \sum_{i=0}^{t-1}  \abs{\alpha^{t,\sfc}_i } \sum_{\sfr \in [\Lr]}  \Big  \lvert  ( \madj^{k-1, \sfc}_{\sfr})^*\,  \madj_{\sfr}^{i, \sfc} \Big \lvert \geq \frac{\textsf{B}}{4} \Big). \label{eq:Qv_conc_eqA}
\end{align}
 We now provide upper bounds for the terms in \eqref{eq:Qv_conc_eqA}, labelled $T_1 - T_4$, by taking 
\be
\textsf{B} = 4 \max\Big\{
(\frac{n}{L} \tau^{0}_{\sfc} + 1)(\hat{\alpha}^{t,\sfc}_{t-1} + 1), \, 
(\frac{n}{L} \tau^{0}_{\sfc} + 1)^{1/2}(\frac{1}{\Lr} \sum_{\sfr \in [\Lr]} W_{\sfr \sfc} \sigma^t_{\sfr} + 1)^{1/2}\Big\}.
\ee   
First consider $T_3$. For any $0 \leq j \leq t$,  using Cauchy-Schwarz 
$\sum_{\sfr \in [\Lr]}  \lvert  (\madj^{k-1, \sfc}_{\sfr})^*\, \madj^{j, \sfc}_{\sfr}  \lvert \leq   \| \madj^{k-1, \sfc} \|  \| \madj^{j, \sfc} \|$, and  therefore using the fact that $\textsf{B}^2 \geq 16 (\frac{n}{L} \tau^{0}_{\sfc} + 1)^2 \geq  16 (\frac{n}{L} \tau^{k-1}_{\sfc} + 1) (\frac{n}{L} \tau^j_{\sfc} + 1)$, we have
\begin{align}
T_3 &  \leq P\Big ( \frac{n }{L^2 }  \| \madj^{k-1, \sfc} \|  \norm{\madj^{j, \sfc}} \geq \frac{\textsf{B}}{4} \Big ) 
 \leq  P\Big ( \frac{n }{L^2 } \| \madj^{k-1, \sfc} \|^2 \geq \frac{n}{L} \tau^{k-1}_{\sfc} + 1 \Big ) + P\Big (\frac{n }{L^2 }\norm{\madj^{j, \sfc}}^2 \geq \frac{n}{L} \tau^j_{\sfc} + 1 \Big )  \nonumber \\
& \leq P\Big ( \frac{n }{L}  \abs{ \frac{1}{L} \| \madj^{k-1, \sfc} \|^2 - \tau^{k-1}_{\sfc}} \geq  1 \Big ) + P\Big ( \frac{n }{L}  \abs{ \frac{1}{L}\| \madj^{j, \sfc} \|^2 - \tau^{j}_{\sfc}} \geq 1 \Big ).
\label{eq:T3_bound}
\end{align}
This has the desired bound by $\mathcal{B}_t (e)$ with $\sfp =1$.  Next we bound  $T_4$ as follows using  \eqref{eq:T3_bound} and the fact that $\textsf{B} > 4(\frac{n}{L} \tau^{0}_{\sfc} + 1)(\hat{\alpha}^{t,\sfc}_{t-1} + 1) > 4(\hat{\alpha}^{t,\sfc}_{t-1} + 1)\sqrt{(\frac{n}{L} \tau^{k-1}_{\sfc} + 1)(\frac{n}{L} \tau^j_{\sfc} + 1)}$:
\begin{align*}
&T_4 \leq P\Big ( \frac{n }{L^2 }  \| \madj^{k-1, \sfc} \| \norm{\madj^{j, \sfc}}  \sum_{i=0}^{t-1}  \abs{\alpha^{t,\sfc}_i }  \geq \frac{\textsf{B}}{4} \Big ) \\
& \leq P\Big (\frac{n }{L^2 }  \| \madj^{k-1, \sfc} \| \norm{\madj^{j, \sfc}}  \geq \sqrt{(\frac{n}{L} \tau^{k-1}_{\sfc} + 1)(\frac{n}{L} \tau^j_{\sfc} + 1)}\Big )  + P\Big (  \sum_{i=0}^{t-1}  \abs{\alpha^{t,\sfc}_i } \geq   \sum_{i=0}^{t-1}  \abs{\hat{\alpha}^{t,\sfc}_i } + 1\Big ). 
\end{align*}
The first term in the above can be bounded as in \eqref{eq:T3_bound} and the second as follows by $\mathcal{B}_t (f)$:
\begin{align}
&P\Big (  \sum_{i=0}^{t-1}  \abs{\alpha^{t,\sfc}_i } \geq   \sum_{i=0}^{t-1}  \abs{\hat{\alpha}^{t,\sfc}_i } + 1\Big )\leq  \sum_{i=0}^{t-1}  P\Big ( \lvert \alpha^{t,\sfc}_i - \hat{\alpha}^{t,\sfc}_i \lvert \geq  \frac{1}{t} \Big ) \leq t K K'_{t-1}   \PC_{t-1}   e^{-\frac{1}{t^2} \kappa \kappa'_{t-1}  \pc_{t-1}}.
\label{eq:T2alpha_bound}
\end{align}

Now we study term $T_1$.  Using Cauchy-Schwarz, for any $0 \leq j \leq t$,
\begin{align}
 \sum_{\sfr \in [\Lr]}  \frac{ 1  }{L } \sqrt{W_{\sfr \sfc}} \lvert   ( \bvec^{k}_{\sfr})^*\, \madj^{j, \sfc}_{\sfr} \lvert \leq  \sum_{\sfr \in [\Lr]}  \sqrt{\frac{W_{\sfr \sfc}}{n}}  \| \bvec^{k}_{\sfr}\| \cdot  \frac{ \sqrt{n} }{L }   \norm{\madj^{j, \sfc}_{\sfr}}  \leq   \sqrt{\frac{1}{n}\sum_{\sfr \in [\Lr]} W_{\sfr \sfc} \| \bvec^{k}_{\sfr}\|^2} \cdot  \frac{ \sqrt{n}}{L}  \norm{\madj^{j, \sfc}} .
\label{eq:bmadj_bound}
\end{align}
Using the above in term $T_1$ we have,
\begin{align*}
&T_1 \leq  P\Big ( \frac{1}{n}\sum_{\sfr \in [\Lr]} W_{\sfr \sfc} \| \bvec^{k}_{\sfr}\|^2 \cdot  \frac{n}{L^2}  \norm{\madj^{j, \sfc}}^2 \geq \frac{\textsf{B}^2}{16} \Big )  \\
& \stackrel{(a)}{\leq} P\Big ( \frac{1}{n}\sum_{\sfr \in [\Lr]} W_{\sfr \sfc} \| \bvec^{k}_{\sfr}\|^2 \geq \frac{1}{\Lr} \sum_{\sfr \in [\Lr]} W_{\sfr \sfc} \sigma^t_{\sfr} + 1\Big )  +P\Big ( \frac{n}{L^2}  \norm{\madj^{j, \sfc}}^2 \geq \frac{n}{L} \tau^{j}_{\sfc} + 1\Big ) \\
& \leq P\Big ( \Big \lvert \frac{1}{\Lr}\sum_{\sfr \in [\Lr]} W_{\sfr \sfc} \Big( \frac{1}{\Mr}\| \bvec^{k}_{\sfr}\|^2 - \sigma^t_{\sfr}\Big) \Big \lvert  \geq 1\Big )  +P\Big (\frac{n}{L} \Big \lvert \frac{1}{L}  \norm{\madj^{j, \sfc}}^2 -  \tau^{j}_{\sfc} \Big \lvert \geq 1\Big )\\
&\stackrel{(b)}{\leq} K K'_{t-1}   \PC_{t-1}   \exp\{- \kappa \kappa'_{t-1}  \pc_{t-1}\} + K K'_{t-1}   \PC_{t-1}   \exp\{-\kappa \kappa'_{t-1}  \pc_{t-1}\}.
\end{align*}
Step $(a)$ uses $\frac{\textsf{B}^2}{16} \geq (\frac{n}{L} \tau^{j}_{\sfc} + 1)(\frac{1}{\Lr} \sum_{\sfr \in [\Lr]} W_{\sfr \sfc} \sigma^t_{\sfr} + 1)$ and $(b)$ uses $\mathcal{B}_t (c)$ and $\mathcal{B}_t (e)$ both with $\sfp =1$.

Finally, using \eqref{eq:bmadj_bound}, we bound $T_2$:
\begin{align*}
&T_2 \leq P\Big ( \sqrt{\frac{1}{n}\sum_{\sfr} W_{\sfr \sfc} \| \bvec^{k}_{\sfr}\|^2} \cdot  \frac{ \sqrt{n}}{L}  \norm{\madj^{j, \sfc}} \sum_{i=0}^{t-1}  \abs{\alpha^{t,\sfc}_i }  \geq \frac{\textsf{B}}{4} \Big ) \\
& \leq P\Big ( \frac{1}{n}\sum_{\sfr } W_{\sfr \sfc} \| \bvec^{k}_{\sfr}\|^2 \cdot  \frac{n}{L^2}  \norm{\madj^{j, \sfc}}^2 \geq (\frac{1}{\Lr} \sum_{\sfr} W_{\sfr \sfc} \sigma^t_{\sfr} + 1) (\frac{n}{L} \tau^{j}_{\sfc} + 1)\Big )  + P\Big (  \sum_{i=0}^{t-1}  \abs{\alpha^{t,\sfc}_i } \geq   \sum_{i=0}^{t-1}  \abs{\hat{\alpha}^{t,\sfc}_i } + 1\Big ). 
\end{align*}
Now the first term in the above can be bounded as the work in $T_1$ and the second term as in \eqref{eq:T2alpha_bound}. This completes the proof of the lemma.
\end{proof}


\noindent  \textbf{(a)}  
 To show the upper bound in \eqref{eq:Ha}, recall the definition of $\Dvec_{t+1,t, \sfc}$ from  Lemma \ref{lem:hb_cond} Eq.\ \eqref{eq:Dt1t}. First, using arguments similar to \eqref{eq:finalrep_deltaterm1}, we can show for a single entry $i \in \sfc$,
\[\sum_{\sfr' \in [\Lr]}  \frac{\| \madj^{t, \sfc}_{\perp, \sfr'} \|}{\sqrt{L}}    \, [\proj^{\parallel}_{\Qadjmat^{\sfr'}_{t+1}} \Z^{\sfr'}_{t}]_{i \in \sfc}   \overset{d}{=} Z\sqrt{\sum_{j=0}^{t} \sum_{\sfr' \in [\Lr]}  \frac{\|  \madj^{t, \sfc}_{\perp, \sfr'} \|^2 ([\qadj_{\perp, \sfc}^{j, \sfr'}]_i)^2}{L \| \qadj_{\perp}^{j, \sfr'} \|^2}}, \qquad Z \sim \normal(0,1). \]
Therefore if we consider the maximum squared element for a section $\ell \in \sfc$, we have
\begin{align}
 \max_{i \in sec(\ell); \ell \in \sfc} \Big\{ \Big(\sum_{\sfr' \in [\Lr]}  \frac{\| \madj^{t, \sfc}_{\perp, \sfr'} \|}{\sqrt{L}}    \, [\proj^{\parallel}_{\Qadjmat^{\sfr'}_{t+1}} \Z^{\sfr'}_{t}]_{i \in \sfc}\Big)^2 \Big\}   = &Z^2 \sum_{j=0}^{t} \sum_{\sfr' \in [\Lr]}  
 \frac{\| \madj^{t, \sfc}_{\perp, \sfr'} \|^2}{L\| \qadj_{\perp}^{j, \sfr'} \|^2} \max_{i \in sec(\ell)}  ([\qadj_{\perp, \sfc}^{j, \sfr'}]_i)^2  
\label{eq:final_Delta_T2bound}
\end{align}
Now we simplify the final three terms of $\Dvec_{t+1,t, \sfc}$ in \eqref{eq:Dt1t}, using the Lemma \ref{lem:Qv_conc}  notation $\innerQ^{\sfr}_{t+1}, \mathbf{v}^{\sfr \sfc}$.
\begin{align*}
&\sum_{\sfr'} \Qadjmat^{\sfr'}_{t+1, \sfc}  (\innerQ^{\sfr'}_{t+1})^{-1} \mathbf{v}^{\sfr', \sfc} \, + \,  \sum_{i=0}^{t-1}\alpha^{t,\sfc}_{i} \brqvec^i_{\sfc}    \,- \, \brqvec^t_{\sfc} = \sum_{j=1}^{t+1} \sum_{\sfr'}  \qadj_{\sfc}^{j-1, \sfr'} [(\innerQ^{\sfr'}_{t+1})^{-1} \mathbf{v}^{\sfr', \sfc}]_{j} \, + \,  \sum_{i=0}^{t-1}\alpha^{t,\sfc}_{i} \brqvec^i_{\sfc}    \,- \, \brqvec^t_{\sfc} \\
&= \sum_{j=1}^{t} \brqvec_{\sfc}^{j-1} \Big[ \sum_{\sfr'} \sqrt{W_{\sfr' \sfc}} [(\innerQ^{\sfr'}_{t+1})^{-1} \mathbf{v}^{\sfr', \sfc}]_{j} + \alpha^{t,\sfc}_{j-1}  \Big]   +  \brqvec^t_{\sfc} \Big[ \sum_{\sfr'} \sqrt{W_{\sfr' \sfc}} [(\innerQ^{\sfr'}_{t+1})^{-1} \mathbf{v}^{\sfr', \sfc}]_{t+1}   - 1\Big].
\end{align*}

Therefore for section $\ell \in \sfc$, using the triangle inequality, Lemma \ref{lem:squaredsums}, the fact that $  \max_{j \in sec(\ell)} \abs{[\brqvec^s_{\sfc}]_j}^2 \leq 4$ for $0 \leq s \leq t $, and the definition of $\Dvec_{t+1,t, \sfc}$ from  \eqref{eq:Dt1t}, we have 
\begin{align}
&\frac{1}{2(t+1)} \max_{j \in sec(\ell)} \abs{[\Dvec_{t+1,t, \sfc}]_j}^2    \leq \sum_{i=0}^{t-2} \lvert \alpha^{t,\sfc}_{i} - \hat{\alpha}^{t,\sfc}_{i}\lvert^2   \max_{j \in sec(\ell)} \abs{[\hvec^{i+1}_{\sfc}]_j}^2   \nonumber \\
& + \, \Big[\frac{\| \madj^{t, \sfc}_{\perp} \| }{\sqrt{L}}   - \sqrt{\tau^{t}_{\perp, \sfc}} \Big]^2   \max_{j \in sec(\ell)} \abs{[\Z_{t,\sfc}]_j}^2 + Z^2 \sum_{j=0}^{t} \sum_{\sfr' \in [\Lr]}  \frac{\|  \madj^{t, \sfc}_{\perp, \sfr'} \|^2}{L \| \qadj_{\perp}^{j, \sfr'}\|^2} \max_{i \in sec(\ell)}  ([\qadj_{\perp, \sfc}^{j, \sfr'}]_i)^2 \nonumber \\
&  + \, 4 \sum_{j=1}^{t} \Big[ \sum_{\sfr' \in [\Lr]} \sqrt{W_{\sfr' \sfc}} [(\innerQ^{\sfr'}_{t+1})^{-1}  \mathbf{v}^{\sfr', \sfc}]_{j} + \alpha^{t,\sfc}_{j-1} \Big]^2   + 4 \Big[ \sum_{\sfr' \in [\Lr]} \sqrt{W_{\sfr' \sfc}} [(\innerQ^{\sfr'}_{t+1})^{-1} \mathbf{v}^{\sfr', \sfc}]_{t+1}   - 1 \Big]^2.
 \label{eq:normDelt1t0}
\end{align}

Using  \eqref{eq:normDelt1t0} and  Lemma \ref{sums}, we  have the following bound, where $\tilde{\e}_t = \frac{\e}{4(t+1)^2}$,
\begin{align}
& P  \Big(\frac{1}{L} \sum_{\sfc \in [\Lc]} \sum_{\ell \in \sfc}  \max_{j \in sec(\ell)}  W^{2 \sfq}_{\sfr \sfc}  \abs{[\Dvec_{t+1,t, \sfc}]_{j}}^2 \geq \epsilon \Big)  \nonumber\\
& \overset{(a)}{\leq}   \sum_{i=0}^{t-1}  P \Big( \frac{1}{L} \sum_{\sfc \in [\Lc]}   W^{2 \sfq}_{\sfr \sfc}  \Big \lvert \alpha^{t,\sfc}_{i} - \hat{\alpha}^{t,\sfc}_{i} \Big \lvert^2  \sum_{\ell \in \sfc} \max_{j \in sec(\ell)} \abs{[ \hvec_{\sfc}^{i+1}]_j}^2  \geq \tilde{\e}_t \Big) \nonumber \\
& + P \Big( \frac{1}{L}  \sum_{\sfc \in [\Lc]} \frac{L}{n}  W^{2 \sfq}_{\sfr \sfc}   \Big \lvert \sqrt{\frac{n \| \madj^{t, \sfc}_{\perp} \|^2}{L^2}}  - \sqrt{\frac{n\tau^t_{\perp, c}}{L}} \Big \lvert^2 \sum_{\ell \in \sfc} \max_{j \in sec(\ell)}  \abs{ [\Z_{t, \sfc}]_j}^2  \geq \tilde{\e}_t \Big) \nonumber \\
& + P \Big(\frac{Z^2}{L}  \sum_{\sfc \in [\Lc]}   W^{2 \sfq}_{\sfr \sfc}  \sum_{j=0}^{t} \sum_{\sfr' \in [\Lr]} \  \frac{\| \madj^{t, \sfc}_{\perp, \sfr'}\|^2}{L 
\| \qadj_{\perp}^{j, \sfr'} \|^2} \sum_{\ell \in \sfc}  \max_{i \in sec(\ell)}  ([\qadj_{\perp, \sfc}^{j, \sfr'}]_i)^2 \geq \tilde{\e}_t \Big) \nonumber \\
& + P \Big( \frac{4}{\Lc} \sum_{\sfc \in [\Lc]}  W^{2 \sfq}_{\sfr \sfc}  \Big \lvert \sum_{\sfr' \in [\Lr]} \sqrt{W_{\sfr' \sfc}} [(\innerQ^{\sfr'}_{t+1})^{-1} \mathbf{v}^{\sfr', \sfc}]_{t}   - 1\Big \lvert^2 \geq  \tilde{\e}_t \Big) \nonumber  \\
&+ \sum_{i=1}^{t} P \Big(  \frac{4}{\Lc} \sum_{\sfc \in [\Lc]}  W^{2 \sfq}_{\sfr \sfc}   \Big \lvert  \sum_{\sfr' \in [\Lr]} \sqrt{W_{\sfr' \sfc}} [(\innerQ^{\sfr'}_{t+1})^{-1} \mathbf{v}^{\sfr', \sfc}]_{i} + \alpha^{t,\sfc}_{i-1} \Big \lvert^2 \geq\tilde{\e}_t \Big) .
\label{eq:sum_delt1t2_bnd1}
\end{align}
Label the terms on the RHS of \eqref{eq:sum_delt1t2_bnd1} as $T_1- T_5$.  We show that each has the desired upper bound.


First consider $T_1$.  Let $\kappa_0$ be defined via $(\max_{\sfc' \in [\Lc]}   W^{2 \sfq}_{\sfr \sfc'})  \tau^t_{\sfc} \log M = \kappa_0 (\Lr/\omega)^{2\sfq}$, noting that $\kappa_0 = \Theta(1)$.  Then,
\begin{align*}
&T_1 \leq  \sum_{i=0}^{t-1} \Big[ P \Big( \cap_{\sfc \in [\Lc]}  \Big\{   \lvert \alpha^{t,\sfc}_{i} - \hat{\alpha}^{t,\sfc}_{i} \lvert^2    \leq \frac{\tilde{\e}_t}{(6 \kappa_0 (\Lr/\omega)^{2\sfq} + \e)} \Big \}   \Big)  \\
&\qquad +  P \Big(  \frac{1}{L}  \sum_{\sfc \in [\Lc]}  \sum_{\ell \in \sfc} W^{2 \sfq}_{\sfr \sfc}  \max_{j \in sec(\ell)}  ([\hvec_{\sfc}^{i+1}]_{j})^2 \leq 6 (\max_{\sfc' \in [\Lc]}   W^{2 \sfq}_{\sfr \sfc'})  \tau^t_{\sfc} \log M + \e  \Big)\Big] \\
& \leq  \sum_{i=0}^{t-1} \Big[ \sum_{\sfc \in [\Lc]}  P \Big(   \lvert \alpha^{t,\sfc}_{i} - \hat{\alpha}^{t,\sfc}_{i} \lvert^2    \geq \frac{\tilde{\e}_t}{(6 \kappa_0 (\Lr/\omega)^{2\sfq} + \e)}  \Big) \\
& \qquad +   P\Big( \frac{1}{L}  \sum_{\sfc \in [\Lc]}  \sum_{\ell \in \sfc} W^{2 \sfq}_{\sfr \sfc} \max_{j \in sec(\ell)}  ([\hvec_{\sfc}^{i+1}]_{j})^2 \geq 6(\max_{\sfc' \in [\Lc]}   W^{2 \sfq}_{\sfr \sfc'})  \tau^t_{\sfc} \log M + \e \Big) \Big]. \\
&\leq t \Lc K K'_{t-1}   \PC_{t-1} \exp\{-\frac{1}{t^2}\kappa \kappa'_{t-1}  (\omega/\Lr)^{2 \sfq}  \pc_{t-1}  \e^2 \} +   t K K'_{t-1}   \PC'_{t-2}   \exp\{- \kappa \kappa'_{t-1} (\omega/\Lr)^{2\sfv} \pc'_{t-2}  \e^2\},
\end{align*}
The final inequality follows by $\mathcal{B}_{t} (f)$ and $\mathcal{H}_{t} (e)$.

Term $T_2$ of \eqref{eq:sum_delt1t2_bnd1} has the desired upper bound by work similar to that used to bound the corresponding term in the $\mathcal{H}_1 (a)$ step in equations \eqref{eq:H1F_def} - \eqref{eq:H1T1_bound}.

Next consider  term $T_5$ in \eqref{eq:sum_delt1t2_bnd1}.  Using the union bound,
\be
\begin{split} 
T_5 &\leq  \sum_{i=1}^{t} \sum_{\sfc \in [\Lc]} P \Big(   \Big \lvert  \sum_{\sfr' \in [\Lr]} \sqrt{W_{\sfr' \sfc}} [(\innerQ^{\sfr'}_{t+1})^{-1} \mathbf{v}^{\sfr', \sfc}]_{i} + \alpha^{t,\sfc}_{i-1} \Big \lvert \geq. \sqrt{\frac{\tilde{\e}_t}{4(\sum_{\sfc' \in [\Lc]} W^{2 \sfq}_{\sfr \sfc'}/\Lc)}} \Big) \\
& \leq    t^3 \Lc K K'_{t-1}   \PC_{t-1}   \exp\Big\{\frac{-1}{t^6} \kappa \kappa'_{t-1}  (\omega/\Lr)^{(2\sfq-1)_+}\pc_{t-1}  \e \Big\} ,
\label{eq:Ha_T5}
\end{split}
\ee
where the second inequality follows from  Lemma \ref{lem:Qv_conc} and \eqref{eq:Wrc_avgs}, which gives $\sum_{\sfc' \in [\Lc]} W^{2 \sfq}_{\sfr \sfc'}/\Lc \leq \kappa (\Lr/\omega)^{(2 \sfq-1)_+}$  for $\sfq \in \{ 0,1\}$.  The bound for $T_4$ is obtained similarly to $T_5$.

Finally, we bound the third term $T_3$ in \eqref{eq:sum_delt1t2_bnd1}.  
For $T_3$, if all three of the following events hold: 
\begin{align*}
&\cap_{\sfc \in [\Lc]} \cap_{\sfr' \in [\Lr]} \cap_{j=0}^t \Big\{ \sum_{\ell \in \sfc} \max_{i \in sec(\ell)}  \frac{([\qadj_{\perp, \sfc}^{j, \sfr'}]_i)^2}{\| \qadj_{\perp}^{j, \sfr'} \|^2} \leq \frac{\kappa t \Lr}{\Lc \omega}\Big\},  \quad   \quad \Big\{Z^2 \leq \frac{\Mr \omega  (\Lr/\omega)^{2 \sfq} \, \tilde{\e}_t}{t^2}\Big\},
 \\
 & 
\text{ and } \Big\{ \frac{(\omega/\Lr)^{2\sfq}}{\Lc} \sum_{\sfc \in [\Lc]}  \frac{n  W^{2 \sfq}_{\sfr \sfc}   \|  \madj^{t, \sfc}_{\perp} \|^2}{L^2} \leq    \frac{n (\omega/\Lr)^{2\sfq}}{L  \Lc} \sum_{\sfc \in [\Lc]}  W^{2 \sfq}_{\sfr \sfc}  \tau^{t}_{\perp, \sfc} + 1 \Big\}, 
\end{align*}
then
\begin{align*}
&\frac{Z^2}{L}  \sum_{\sfc \in [\Lc]}   W^{2 \sfq}_{\sfr \sfc}  \sum_{j=0}^{t} \sum_{\sfr' \in [\Lr]} \  \frac{\| \madj^{t, \sfc}_{\perp, \sfr'} \|^2}{L \| \qadj_{\perp}^{j, \sfr'} \|^2} \sum_{\ell \in \sfc} \max_{i \in sec(\ell)}  ([\qadj_{\perp, \sfc}^{j, \sfr'}]_i)^2  \leq   \kappa' \tilde{\e}_t,
\end{align*}
where we have used that $ \frac{n (\omega/\Lr)^{2\sfq}}{L \Lc} \sum_{\sfc \in [\Lc]}  W^{2 \sfq}_{\sfr \sfc}  \tau^{t}_{\perp, \sfc}\leq \kappa''$ using \eqref{eq:Wrc_avgs}.
Here, $\kappa, \kappa', $ and $\kappa''$ are suitable universal positive constants. 
Therefore,
\begin{align}
T_3 &\leq  \sum_{\sfc \in [\Lc]}  \sum_{j=0}^{t} \sum_{\sfr' \in [\Lr]} P\Big(\sum_{\ell \in \sfc} \max_{i \in sec(\ell)}  \frac{([\qadj_{\perp, \sfc}^{j, \sfr'}]_i)^2}{\| \qadj_{\perp}^{j, \sfr'} \|^2} \geq \frac{\kappa t \Lr}{\Lc \omega}\Big) + P\Big(Z^2 \geq   \frac{\Mr \omega (\omega/\Lr)^{2 \sfq} \tilde{\e}_t}{t^2} \Big)  \nonumber \\
& \ +P\Big( \frac{(\omega/\Lr)^{2\sfq}}{ \Lc} \sum_{\sfc \in [\Lc]}  \frac{n  W^{2 \sfq}_{\sfr \sfc}   \|  \madj^{t, \sfc}_{\perp} \|^2}{L^2} \geq    \frac{n (\omega/\Lr)^{2\sfq}}{L\Lc} \sum_{\sfc \in [\Lc]}  W^{2 \sfq}_{\sfr \sfc}  \tau^{t}_{\perp, \sfc} + 1 \Big).
\label{eq:DeltaH_T3_bound}
\end{align}
We consider each of the terms of \eqref{eq:DeltaH_T3_bound} separately. For the first term, using Lemma \ref{lem:squaredsums} and the fact that $  \max_{j \in sec(\ell)} \abs{[\brqvec^s_{\sfc}]_j}^2 \leq 4$ for $0 \leq s \leq t $,  for any $\sfc \in [\Lc], \sfr' \in [\Lr]$ and $0 \leq j \leq t$,
\begin{align*}
& \sum_{\ell \in \sfc}  \max_{i \in sec(\ell)}  ([\qadj_{\perp, \sfc}^{j, \sfr'}]_i)^2 =  W_{\sfr' \sfc}    \sum_{\ell \in \sfc} \max_{i \in sec(\ell)}  ([\brqvec_{\sfc}^{j}]_i - \sum_{k=0}^{j-1} \gamma^{j,\sfr'}_k [\brqvec^{k}_{\sfc}]_i)^2 \\
& \leq  2 t  W_{\sfr' \sfc}   \sum_{\ell \in \sfc}  \max_{i \in sec(\ell)} \Big\{  ([\brqvec_{\sfc}^{j}]_i)^2 + \sum_{k=0}^{j-1} (\gamma^{j,\sfr'}_k [\brqvec^{k}_{\sfc}]_i)^2 \Big\}
\leq \frac{4P t  (1 + \sum_{k=0}^{j-1} (\gamma^{j,\sfr'}_k)^2)  L \Lr}{\Lc \omega} ,
\end{align*}
Therefore letting $\kappa:= \max_j \kappa_j$ with $\kappa_j :=  4P (2 + (\hat{\gamma}^{j,\sfr'}_{j-1})^2)((\sigma^{j+1}_{\perp, \sfr'})^{-1} + 1)$ we have,
\be
\begin{split}
& \sum_{\sfc \in [\Lc]}  \sum_{j=0}^{t} \sum_{\sfr' \in [\Lr]} P\Big(\sum_{\ell \in \sfc}  \max_{i \in sec(\ell)}  \frac{([\qadj_{\perp, \sfc}^{j, \sfr'}]_i)^2}{\| \qadj_{\perp}^{j, \sfr'} \|^2} \geq \frac{\kappa t \Lr}{\Lc \omega}\Big)  \\
& \leq \sum_{\sfc \in [\Lc]}  \sum_{j=0}^{t} \sum_{\sfr' \in [\Lr]}   P\Big( \frac{  (1 + \sum_{k=0}^{j-1} (\gamma^{j,\sfr'}_k)^2)L}{\| \qadj_{\perp}^{j, \sfr} \|^2} \geq (2 + (\hat{\gamma}^{j,\sfr'}_{j-1})^2)\Big(\frac{1}{\sigma^{j+1}_{\perp, \sfr'}} + 1\Big) \Big) \\
&\leq \sum_{\sfc \in [\Lc]} \sum_{j=0}^{t}  \sum_{\sfr' \in [\Lr]}  \Big[ P\Big(  1 + \sum_{k=0}^{j-1} (\gamma^{j,\sfr'}_k)^2 \geq 2 + (\hat{\gamma}^{j,\sfr'}_{j-1})^2 \Big) +  P\Big( \frac{L}{\| \qadj_{\perp}^{j, \sfr'} \|^2} \geq  \frac{1}{\sigma^{j+1}_{\perp, \sfr'}} + 1 \Big) \Big]. \label{eq:DeltaH_T3_bound_2}
\end{split}
\ee
For the second term on the RHS of \eqref{eq:DeltaH_T3_bound_2}, by Lemma \ref{inverses} and $\mathcal{H}_t (f)$,
\begin{align*}
 \sum_{\sfc \in [\Lc]}  \sum_{j=0}^{t}  \sum_{\sfr \in [\Lr]}  P\Big( \frac{L}{\| \qadj_{\perp}^{j, \sfr} \|^2} \geq \frac{1}{\sigma^{j+1}_{\perp, \sfr}} + 1 \Big) & \leq  \sum_{\sfc \in [\Lc]}  \sum_{j=0}^{t}  \sum_{\sfr \in [\Lr]}  P\Big(\Big \lvert \frac{L}{\| \qadj_{\perp}^{j, \sfr} \|^2} - \frac{1}{\sigma^{j+1}_{\perp, \sfr}} \Big \lvert \geq 1\Big)\\
& \leq  t \Lr \Lc K K'_{t-1}   \PC'_{t-2}   \exp\Big\{-\frac{ \kappa \kappa'_{t-1} (\omega/\Lr)^{2} \pc'_{t-2} }{(\log M)^2}\Big\}.
%
\end{align*}
For the first term on the RHS of \eqref{eq:DeltaH_T3_bound_2}, using Lemma \ref{sums}, the fact that $ \hat{ \gamma}^{t,\sfr}_{k} = 0$ for $0 \leq k \leq t-2$, along with Lemma \ref{powers}, and $\mathcal{H}_t (f)$,
\begin{align*}
&  \sum_{\sfc \in [\Lc]} \sum_{j=0}^{t}  \sum_{\sfr \in [\Lr]}   P\Big(  1 + \sum_{k=0}^{j-1} (\gamma^{j,\sfr}_k)^2 \geq 2 + (\hat{\gamma}^{j,\sfr}_{j-1})^2 \Big)  \leq  \sum_{\sfc \in [\Lc]}\sum_{j=0}^{t}  \sum_{\sfr \in [\Lr]}   P\Big(   \Big \lvert \sum_{k=0}^{j-1} (\gamma^{j,\sfr}_k)^2 - (\hat{\gamma}^{j,\sfr}_{j-1})^2 \Big \lvert  \geq 1 \Big)   \\
&\leq  \sum_{\sfc \in [\Lc]} \sum_{j=0}^{t}  \sum_{\sfr \in [\Lr]}   \sum_{k=0}^{j-1} P\Big( \Big \lvert (\gamma^{j,\sfr}_k)^2 - (\hat{\gamma}^{j,\sfr}_{j})^2 \Big \lvert \geq \frac{1}{t} \Big)  \leq  t \Lr \Lc K K'_{t-1}   \PC'_{t-2}   \exp\Big\{ \frac{- \kappa \kappa'_{t-1} (\omega/\Lr)^{2} \pc'_{t-2} }{t^4(\log M)^2}\Big\}.
%
\end{align*}
 The second term of \eqref{eq:DeltaH_T3_bound} is upper bounded by 
 $K \exp\{- {\kappa \Mr \omega (\omega/\Lr)^{2 \sfq} \e^2}/{t^4}\}$ using Lemma \ref{lem:max_abs_normals}. Finally, we bound the second term of \eqref{eq:DeltaH_T3_bound}. 
\begin{align*}
&P\Big( \frac{(\omega/\Lr)^{2\sfq}}{ \Lc} \sum_{\sfc \in [\Lc]}  \frac{n  W^{2 \sfq}_{\sfr \sfc}   \|  \madj^{t, \sfc}_{\perp} \|^2}{L^2} \geq    \frac{n (\omega/\Lr)^{2\sfq}}{L \Lc} \sum_{\sfc \in [\Lc]}  W^{2 \sfq}_{\sfr \sfc}  \tau^{t}_{\perp, \sfc} + 1 \Big)\\
& \leq P\Big( \frac{n (\omega/\Lr)^{2\sfq}}{L  \Lc}  \Big\lvert \sum_{\sfc \in [\Lc]}  W^{2 \sfq}_{\sfr \sfc}  \Big(\frac{   \|  \madj^{t, \sfc}_{\perp} \|^2}{L} -  \tau^{t}_{\perp, \sfc}  \Big)\Big\lvert  \geq  1 \Big) \overset{(a)}{\leq} P\Big(  \frac{n }{L \Lc}   \sum_{\sfc \in [\Lc]}  \Big \lvert  \frac{   \|  \madj^{t, \sfc}_{\perp} \|^2}{L} -  \tau^{t}_{\perp, \sfc} \Big \lvert \geq 1 \Big)  \\
&\overset{(b)}{\leq}  \sum_{\sfc \in [\Lc]}  P\Big( \frac{n}{ L} \Big \lvert  \frac{ \|  \madj^{t, \sfc}_{\perp} \|^2}{L} -   \tau^{t}_{\perp, \sfc}  \Big \lvert \geq 1 \Big)  \overset{(c)}{\leq}   
  \Lc K K'_{t-1}   \PC_{t-1} \exp\{-\kappa \kappa'_{t-1}    \pc_{t-1}   \} .
%
\end{align*}
In the above, step $(a)$ follows since $(\omega W_{\sfr \sfc}/\Lr)^{2 \sfq} \leq ((1- \rho) P)^{2 \sfq} \leq \kappa$.  Step $(b)$ uses  Lemma \ref{sums} and step $(c)$ from $\mathcal{B}_t (f)$.  Note that $\sum_{\sfc' \in [\Lc]} W_{\sfr \sfc'}/\Lr \leq \kappa$.

We finally note that the desired upper bound in \eqref{eq:Ha} follows since 
$\PC'_{t-2} \leq  \PC'_{t-1}$ and $\Lr \Lc \PC'_{t-2} = \Lc \PC_{t-1} =  \PC'_{t-1}$ along with the fact that $(\omega/\Lr)^{(2\sfq-1)_+}\pc_{t-1}$, $(\omega/\Lr)^{2 \sfv} \pc'_{t-2}$, and $(\omega/\Lr)^{2} \pc'_{t-2} /(\log M)^2$ are all lower bounded by $(\omega/\Lr)^{2 \sfv} \pc_{t-1} = (\omega/\Lr)^{2 \sfv} \pc'_{t-1}$.

\noindent  \textbf{(b)} Using the conditional distribution of $\hvec^{t+1}$ from  Lemma \ref{lem:hb_cond} Eq.\ \eqref{eq:Ha_dist}, we have
\begin{align*}
&P\Big(\frac{1}{L}  \Big \lvert \sum_{\sfc \in [\Lc]} \sqrt{W_{\sfr \sfc}} (\hvec_{\sfc}^{t+1})^* \qadj^{0, \sfr}_{\sfc} \Big \lvert \geq \epsilon \Big) \nonumber \\
& = P\Big(\frac{1}{L} \Big \lvert  \sum_{\sfc \in [\Lc]}  \sqrt{W_{\sfr \sfc}} \Big(\frac{\tau^t_{\sfc} (\hvec_{\sfc}^{t})^*  \qadj^{0, \sfr}_{\sfc}}{\tau^{t-1}_{\sfc}}  + \sqrt{ \tau^{t}_{\perp, \sfc}} \,\Z_{t, \sfc}^* \,  \qadj^{0, \sfr}_{\sfc} + \Dvec_{t+1,t,\sfc}^* \,  \qadj^{0, \sfr}_{\sfc} \Big) \Big \lvert \geq \epsilon \Big) \nonumber \\
& \leq P\Big(\frac{1}{L}  \Big \lvert  \sum_{\sfc \in [\Lc]} \frac{ \sqrt{W_{\sfr \sfc}} \tau^t_{\sfc}(\hvec_{\sfc}^{t})^*  \qadj^{0, \sfr}_{\sfc} }{\tau^{t-1}_{\sfc}}  \Big  \lvert \geq \frac{\epsilon}{2} \Big) + P\Big(\frac{1}{L}  \Big \lvert  \sum_{\sfc \in [\Lc]}  \sqrt{W_{\sfr \sfc}} \Big[ \sqrt{ \tau^{t}_{\perp, \sfc}} \, \Z_{t, \sfc}^* \,  \qadj^{0, \sfr}_{\sfc} +  \Dvec_{t+1,t,\sfc}^* \,  \qadj^{0, \sfr}_{\sfc} \Big] \Big \lvert \geq \frac{\epsilon}{2} \Big). 
\end{align*}
The bound for first term above follows from the induction hypothesis $\mathcal{H}_t (b)$.
The second term has the desired upper bound using work as in the proof of $\mathcal{H}_1(b)$ (Eqs.\ \eqref{eq:H1b_split} - \eqref{eq:H1b_T2}).
%


\noindent  \textbf{(c)} We show \eqref{eq:Hc_adj} when $0 \leq s \leq t+1$.
For brevity,  we write $\eta^{s-1}_{\sfc}(\betavec_0 - \hvec^s)$ to mean $\eta^{s-1}_{\sfc}(\betavec_{0, {\sfc}} - \hvec^s_{\sfc})$, noting that $\eta^{s-1}_{\sfc}(\cdot)$ depends only on the elements of its input in column block ${\sfc}$.  Using the conditional distribution of $\hvec^{t+1}$ in Lemma \ref{lem:ideal_cond_dist} Eq.\ \eqref{eq:htil_rep}, we have
\be
\begin{split}
&P \Big(  \Big \lvert \sum_{\sfc \in [\Lc]}  \Big[  \frac{W^{\sfq}_{\sfr \sfc}(\brqvec^{s}_{\sfc})^* \brqvec^{t+1}_{\sfc} }{L}   - \frac{W^{\sfq}_{\sfr \sfc} \psi^{t+1}_{\sfc}}{\Lc}  \Big] \Big \lvert  \geq \epsilon \Big) \\
&= P\Big( \frac{1}{\Lc}\Big \lvert\sum_{\sfc \in [\Lc]}  W^{\sfq}_{\sfr \sfc}  \Big[\frac{\Lc}{L}  ( \eta^{s-1}_{\sfc}(\betavec_0 -  \tilde{\hvec}^{s} - \tilde{\Dvec}_{s}) - \betavec_{0, \sfc})^*( \eta^t_{\sfc}(\betavec_0 - \tilde{\hvec}^{t+1} -  \tilde{\Dvec}_{t+1}) - \betavec_{0, \sfc})  - \psi^{t+1}_{\sfc}  \Big] \Big \lvert \geq \epsilon \Big) \\
& \leq P\Big( \frac{1}{L}\Big \lvert\sum_{\sfc \in [\Lc]}  W^{\sfq}_{\sfr \sfc}  ( \eta^{s-1}_{\sfc}(\betavec_0 -  \tilde{\hvec}^{s} - \tilde{\Dvec}_{s}) - \betavec_{0, \sfc})^*\Big[  \eta^t_{\sfc}(\betavec_0 - \tilde{\hvec}^{t+1} -  \tilde{\Dvec}_{t+1}) -  \eta^t_{\sfc}(\betavec_0 - \tilde{\hvec}^{t+1})  \Big] \Big \lvert \geq \frac{\epsilon}{3} \Big)\\
&+ P\Big( \frac{1}{L}\Big \lvert\sum_{\sfc \in [\Lc]} W^{\sfq}_{\sfr \sfc} \Big[\eta^{s-1}_{\sfc}(\betavec_0 -  \tilde{\hvec}^{s} - \tilde{\Dvec}_{s}) - \eta^{s-1}_{\sfc}(\betavec_0 -  \tilde{\hvec}^{s})  \Big] ^*( \eta^t_{\sfc}(\betavec_0 - \tilde{\hvec}^{t+1}) - \betavec_{0, \sfc}) \Big \lvert \geq \frac{\epsilon}{3} \Big) \\
& + P\Big( \frac{1}{\Lc}\Big \lvert\sum_{\sfc \in [\Lc]} W^{\sfq}_{\sfr \sfc}  \Big[\frac{\Lc}{L}  ( \eta^{s-1}_{\sfc}(\betavec_0 -  \tilde{\hvec}^{s}) - \betavec_{0, \sfc})^*( \eta^t_{\sfc}(\betavec_0 - \tilde{\hvec}^{t+1}) - \betavec_{0, \sfc})  - \psi^{t+1}_{\sfc}  \Big] \Big \lvert \geq \frac{\epsilon}{3} \Big).
\label{eq:Ht1_c_eq1}
\end{split}
\ee
Label the three terms on the RHS of \eqref{eq:Ht1_c_eq1} as $T_1-T_3$ and provide bounds for each.  First note the following bound for $\tilde{\Dvec}_{s, \sfc}$ (defined in Lemma \ref{lem:ideal_cond_dist}) that will be used repeatedly. For 
$1 \leq s \leq t+1$,
\begin{align}
&P\Big(\frac{1}{L} \sum_{\sfc \in [\Lc]} \sum_{\ell \in \sfc} W^{\sfq}_{\sfr \sfc} \max_{j \in sec(\ell)}  \lvert [\tilde{\Dvec}_{s, \sfc}]_j \lvert \geq\e\Big) = P\Big(\frac{1}{L} \sum_{\sfc \in [\Lc]} \sum_{\ell \in \sfc} W^{\sfq}_{\sfr \sfc}\max_{j \in sec(\ell)} \Big \lvert   \sum_{i=0}^{s-1} \Big(\frac{\tau^{s-1}_{\sfc}}{\tau^{i}_{\sfc}}\Big) [\Dvec_{i+1,i, \sfc}]_j \Big \lvert \geq \e \Big) \nonumber \\
&\overset{(a)}{\leq}  \hspace{-1pt}  \sum_{i=0}^{s-1} \hspace{-1pt}  P\Big(\frac{1}{L}  \sum_{\sfc \in [\Lc]} \sum_{\ell \in \sfc} W^{2 \sfq}_{\sfr \sfc}  \hspace{-1pt}  \max_{j \in sec(\ell)}  \hspace{-1pt}  \abs{[\Dvec_{t+1,t, \sfc}]_{j}}^2  \hspace{-1pt}  \geq \hspace{-1pt}  \frac{\e^2}{s^2} \Big)   \hspace{-1pt} \overset{(b)}{\leq}    \hspace{-1pt} t^4 K K'_{t-1}   \PC_{t-1}' e^{-\frac{1}{t^8}\kappa \kappa'_{t-1} (\omega/\Lr)^{2\sfq}  \pc_{t-1}'  \e^2 }. \label{eq:maxtildedelta0} 
%
\end{align}
In the above, step $(a)$ follows by using the triangle inequality, and then Cauchy-Schwarz, noting that ${\tau^{s-1}_{\sfc}}/{\tau^{i}_{\sfc}} \leq 1$ for $i \leq s-1$. 
Step $(b)$ follows from $\mc{H}_{t+1}(a)$ and $s \leq (t+1)$.

Consider $T_1$, the first term on the RHS of \eqref{eq:Ht1_c_eq1}.  Since $\max_j \lvert \eta^{s-1}_{j}(\cdot) - \beta_{0, j}\lvert \leq 2$, we have
\begin{align}
& T_1 \leq P\Big( \frac{2}{L}\sum_{\sfc \in [\Lc]} \sum_{\ell \in \sfc} \sum_{j \in \ell} W^{\sfq}_{\sfr \sfc} \Big \lvert  \eta^t_{j}(\betavec_0 - \tilde{\hvec}^{t+1} -  \tilde{\Dvec}_{t+1}) -  \eta^t_{j}(\betavec_0 - \tilde{\hvec}^{t+1})  \Big \lvert \geq \kappa \epsilon \Big) \nonumber \\
&\overset{(a)}{\leq}  P\Big( \frac{4}{L} \sum_{\sfc \in [\Lc]} \frac{W^{\sfq}_{\sfr \sfc}}{\tau_{\sfc}^{t}}  \sum_{\ell \in \sfc} \max_{j \in sec({\ell})} \lvert [\tilde{\Dvec}_{t+1, \sfc}]_j \lvert \geq \kappa \e \Big) \overset{(b)}{\leq} P\Big( \frac{1}{L} \sum_{\sfc \in [\Lc]} \sum_{\ell \in \sfc} W^{\sfq}_{\sfr \sfc}   \max_{j \in sec({\ell})} \lvert [\tilde{\Dvec}_{t+1, \sfc}]_j \lvert \geq \frac{\kappa \e }{\log M} \Big) \nonumber \\
&\overset{(c)}{\leq}  t^4 K K'_{t-1}   \PC_{t-1}'  \exp\Big\{-\frac{\kappa \kappa'_{t-1} (\omega/\Lr)^{2\sfq}  \pc_{t-1}'  \e^2}{t^8(\log M)^2}\Big \}. \label{eq:T1_new} 
\end{align}
Step $(a)$ follows by Lemma \ref{lem:BC9} applied to each section, $(b)$ by $\tau^t_{\sfc} = \Theta(L/n) = \Theta(1/\log M)$, and $(c)$ by \eqref{eq:maxtildedelta0}.  The bound for $T_1$ in \eqref{eq:T1_new} also holds for $T_2$ of \eqref{eq:Ht1_c_eq1} and is shown similarly.

Finally, consider term $T_3$ of \eqref{eq:Ht1_c_eq1}.  Recalling from Lemma  \ref{lem:ideal_cond_dist} that  
$\tilde{\hvec}^{s}_{\sfc}  \stackrel{d}{=} \sqrt{ \tau^{s-1}_{\sfc} }\tilde{\Z}_{s-1, {\sfc}}$ where $\tilde{\Z}_{s-1, {\sfc}} \sim \mathcal{N}(0, \mathbb{I}_{\Mc})$.  Then, (dropping the $\sfc$ subscript when used inside the $\eta^t_{\sfc}(\cdot)$ function),
\begin{align*}
&T_3   =  P\Big( \frac{1}{L}\Big \lvert \sum_{\sfc \in [\Lc]} \sum_{\ell  \in \sfc} W^{\sfq}_{\sfr \sfc} \Big[ ( \eta^{s-1}_{\ell}(\betavec_0 -  \sqrt{ \tau^{s-1}_{\sfc} }\tilde{\Z}_{s-1}) - \beta_{0, \ell})^*( \eta^t_{\ell}(\betavec_0 - \sqrt{ \tau^t_{\sfc} }\tilde{\Z}_{t}) - \beta_{0, \ell})  -  \psi^{t+1}_{\sfc}  \Big] \Big \lvert \geq \kappa \e \Big),  \nonumber \\
& \stackrel{(a)}{\leq} 2 \exp\Big\{\frac{-\kappa L^2 \e^2}{(\sum_{\sfc \in [\Lc]}   \sum_{\ell \in \sfc}   W^{2\sfq}_{\sfr \sfc} )}\Big\}  = 2 \exp\Big\{\frac{-\kappa L \e^2}{(\sum_{\sfc \in [\Lc]}    W^{2\sfq}_{\sfr \sfc} /\Lc)}\Big\} \stackrel{(b)}{\leq} 2 \exp\Big\{\frac{-\kappa n (\omega/\Mr)^{(2 \sfq -1)_+}\e^2}{\log M}\Big\},
%
\end{align*}
where $(a)$ is obtained using Hoeffding's inequality (Lemma \ref{lem:hoeff_lem}) and $(b)$ uses $n R = L \log M$. Verifying that the expectation of the random variable in $T_3$ is $\psi_\sfc^{t+1}$ is similar to \eqref{eq:Hoeffdings2}.

\noindent  \textbf{(d)}  Using the representation of $\hvec^{s+1}$ in Lemma \ref{lem:ideal_cond_dist},  we have
\be
\begin{split}
&(\hvec^{\tilde{s}+1}_{\sfc})^* \qadj^{s+1, \sfr}_{\sfc} = \sqrt{W_{\sfr \sfc}} (\hvec^{\tilde{s}+1}_{\sfc})^* [\eta^s_{\sfc}(\betavec_0 - \hvec^{s+1}) - \betavec_{0, \sfc}]  \\
& = \sqrt{W_{\sfr \sfc}} (\tilde{\hvec}^{\tilde{s} + 1}_{\sfc} + \tilde{\Dvec}_{\tilde{s} + 1, \sfc})^* \eta^s_{\sfc}(\betavec_0 - \tilde{\hvec}^{s+1} - \tilde{\Dvec}_{s+1}) + (\hvec^{\tilde{s}+1}_{\sfc})^*\qadj^{0, \sfr}_{\sfc}.
\label{eq:Hdt1}
\end{split}
\ee
Using \eqref{eq:Hdt1} and Lemma \ref{sums}, 
\be
\begin{split}
&P\Big(  \Big \lvert  \sum_{\sfc \in [\Lc]} \sqrt{W_{\sfr \sfc}}  \Big[  \frac{(\hvec^{\tilde{s}+1}_{\sfc})^*\qadj^{s+1, \sfr}_{\sfc}}{L} +  \frac{ \sqrt{W_{\sfr \sfc}} \, \psi^{s+1}_{\sfc} \tau^{\max(\tilde{s}, s)}_{\sfc}}{\Lc \tau^{s}_{\sfc}} \Big] \Big \lvert  \geq \epsilon\Big) \\
 &\leq    P\Big(  \Big \lvert  \sum_{\sfc \in [\Lc]}  W_{\sfr \sfc} \Big[\frac{(\tilde{\hvec}^{\tilde{s} + 1}_{\sfc} + \tilde{\Dvec}_{\tilde{s} + 1, \sfc})^* \eta^s_{\sfc}(\betavec_0 - \tilde{\hvec}^{s+1} - \tilde{\Dvec}_{s+1})}{L} +  \frac{\psi^{s+1}_{\sfc} \tau^{\max(\tilde{s}, s)}_{\sfc}}{\Lc \tau^{s}_{\sfc}} \Big] \Big \lvert  \geq \frac{\epsilon}{2} \Big) \\
 &\qquad + P\Big(  \Big \lvert  \sum_{\sfc \in [\Lc]} \sqrt{W_{\sfr \sfc}} \Big[ \frac{ (\hvec^{\tilde{s}+1}_{\sfc})^*\qadj^{0, \sfr}_{\sfc}}{L}  \Big] \Big \lvert  \geq \frac{\epsilon}{2}\Big).
  \label{eq:Hdt1a}
\end{split}
\ee
The second term on the RHS of \eqref{eq:Hdt1a} has the desired upper bound by $\mathcal{H}_{t+1} (b)$.  We now bound the first term of \eqref{eq:Hdt1a}, denoted by $T_1$.  Using Lemma \ref{sums},
\begin{align}
T_1 &\leq P\Big(\frac{1}{L} \Big \lvert  \sum_{\sfc \in [\Lc]}  W_{\sfr \sfc} \Big[(\tilde{\hvec}^{\tilde{s} + 1}_{\sfc})^* \eta^s_{\sfc}(\betavec_0 - \tilde{\hvec}^{s+1}) +  \frac{L \psi^{s+1}_{\sfc} \tau^{\max(\tilde{s}, s)}_{\sfc}}{\Lc \tau^{s}_{\sfc}} \Big] \Big \lvert  \geq \e/6 \Big)  \nonumber\\
& +  P\Big( \frac{1}{L}  \sum_{\sfc \in [\Lc]}   W_{\sfr \sfc} \Big \lvert  (\tilde{\Dvec}_{\tilde{s} + 1, \sfc})^* \eta^s_{\sfc}(\betavec_0 - \tilde{\hvec}^{s+1} - \tilde{\Dvec}_{s+1}) \Big \lvert  \geq  \e/6 \Big)   \nonumber \\
&  +  P\Big( \frac{1}{L}  \sum_{\sfc \in [\Lc]}  W_{\sfr \sfc} \Big \lvert (\tilde{\hvec}^{\tilde{s} + 1}_{\sfc})^* [\eta^s_{\sfc}(\betavec_0 - \tilde{\hvec}^{s+1} - \tilde{\Dvec}_{s+1}) - \eta^s_{\sfc}(\betavec_0 - \tilde{\hvec}^{s+1} )] \Big \lvert  \geq \e/6 \Big).
\label{eq:Hdt13_b}
\end{align}
Label the terms  of \eqref{eq:Hdt13_b} as $T_{1a}, T_{1b}, T_{1c}$, and we bound each separately.  First consider $T_{1a}$,
\begin{align}
& \sqrt{\tau^{\tilde{s}}_{\sfc}} \, \mathbb{E}_{\tilde{\Z}_{\tilde{s}}, \tilde{\Z}_{s}}\Big\{(\tilde{\Z}_{\tilde{s}, \sfc})^* \eta^s_{\sfc}(\betavec_0 - \sqrt{\tau^{s}_{\sfc}} \tilde{\Z}_{s})\Big\}  \overset{(a)}{=}  \sqrt{\tau^{\tilde{s}}_{\sfc}} \, \mathbb{E}_{\tilde{\Z}_{\tilde{s}}, \tilde{\Z}_{s}, \betavec_0}\Big\{(\tilde{\Z}_{\tilde{s}, \sfc})^* \eta^s_{\sfc}(\betavec_0 - \sqrt{\tau^{s}_{\sfc}} \tilde{\Z}_{s})\Big\}    \nonumber \\
& \quad  \overset{(b)}{=} \frac{\tau^{\max\{\tilde{s}, s\}}_{\sfc}}{\tau^{s}_{\sfc}} \Big[ \mathbb{E}_{\tilde{\Z}_{s}, \betavec}\Big\{\| \eta^s_{\sfc}(\betavec - \sqrt{\tau^{s}_{\sfc}} \tilde{\Z}_{s}) \|^2\Big\}  - \frac{L}{\Lc}\Big] \overset{(c)}{=}  \frac{-L \psi^{s+1}_{\sfc} \tau^{\max\{\tilde{s}, s\}}_{\sfc}}{\Lc\tau^{s}_{\sfc}}.
\label{eq:Hdt9a}
\end{align}
Step $(a)$ follows for each $\betavec_0 \in \mathcal{B}_{M,L}$ because of the uniform distribution of the non-zero entry in each section of $\betavec_0$ over the $M$ possible locations and the entry-wise i.i.d.\ distributions of $\tilde{\Z}_{\tilde{s}}$ and $\tilde{\Z}_{s}$, step $(b)$ by Stein's Lemma (see  \cite[p.1491, Eqs.\ (102)--(104)]{rush2017} for details), and step $(c)$ from Lemma \ref{lem:expect_etar_etas}.  Now recall from Lemma \ref{lem:ideal_cond_dist} that  $\tilde{\hvec}^{s+1}_{\sfc}  \stackrel{d}{=} \sqrt{ \tau^{s}_{\sfc} }\tilde{\Z}_{s, {\sfc}}$ where $\tilde{\Z}_{s, {\sfc}} \sim \mathcal{N}(0, \mathbb{I}_{\Mc})$ with $\expec\{[\tilde{\Z}_{\tilde{s}, {\sfc}}]_j [\tilde{\Z}_{s, {\sfc}}]_j\} = \sqrt{{\tau^{\max\{s, \tilde{s}\}}_{\sfc}}/{\tau^{\min\{s, \tilde{s}\}}_{\sfc}}}$, for $j \in [\Mc]$. Using this representation of of $\tilde{\hvec}^s$ and  \eqref{eq:Hdt9a}, the first term in \eqref{eq:Hdt13_b} can be  bounded using Lemma \ref{lem:Hd_convergence}:
\ben
\begin{split}
T_{1a} &= P\Big(\frac{1}{L}  \Big \lvert  \sum_{\sfc \in [\Lc]} \sum_{\ell \in \sfc}  \sqrt{\tau^{\tilde{s}}_{\sfc}} W_{\sfr \sfc}  \Big[ (\tilde{\Z}_{\tilde{s}, \ell})^* \eta^s_{\ell}(\betavec_0 - \sqrt{ \tau^s_{\sfc} }\tilde{\Z}_{s}) - \mathbb{E}_{\tilde{\Z}_{\tilde{s}}, \tilde{\Z}_{s}, \betavec_0}\{(\tilde{\Z}_{\tilde{s}, \ell})^* \eta^s_{\ell}(\betavec_0 - \sqrt{\tau^{s}_{\sfc}} \tilde{\Z}_{s})\} \Big] \Big \lvert  \geq  \e/6 \Big) \\
&\leq  \exp\{-\kappa_1 L (\omega/\Lr) \e^2 \} = \exp\{-\kappa \Mr \omega  \e^2/ \log M\}.
\end{split}
\een
For  $T_{1b}$, we recall that $\sum_{j \in sec(\ell)} \eta^s_{j}( \betavec_0 - \tilde{\hvec}^{s+1} - \tilde{\Dvec}_{s+1} ) = 1$ and use \eqref{eq:maxtildedelta0} with $\sfq =1$ to obtain
\ben
\begin{split}
&T_{1b} \leq  P\Big( \frac{1}{L} \sum_{\sfc \in [\Lc]} \sum_{\ell \in \sfc} W_{\sfr \sfc} \Big \lvert  (\tilde{\Dvec}_{\tilde{s} + 1, \ell})^* \eta^s_{\ell}(\betavec_0 - \tilde{\hvec}^{s+1} - \tilde{\Dvec}_{s+1}) \Big \lvert  \geq  \epsilon/6 \Big) \\
& \leq  P\Big( \frac{1}{L} \sum_{\sfc \in [\Lc]} \sum_{\ell \in \sfc} W_{\sfr \sfc} \max_{j \in sec(\ell)} \abs{[\tilde{\Dvec}_{\tilde{s} + 1, \sfc}]_{j}}  \geq \kappa \epsilon \Big)  \leq    t^4 K K'_{t-1}   \PC_{t-1}'  \exp\{-\frac{1}{t^8} \kappa \kappa'_{t-1} (\omega/\Lr)^{2}  \pc_{t-1}'  \e^2 \} .
\end{split}
\een
Finally for $T_{1c}$, using Lemma \ref{lem:ideal_cond_dist} Eq.\ \eqref{eq:htilde_def} to write $\tilde{\hvec}^{\tilde{s} + 1}_{\sfc} =  \sqrt{\tau^{\tilde{s}}_{\sfc}} \tilde{\Z}_{\tilde{s}, \sfc}$, we can prove the bound as in the $\mc{H}_1(d)$ step in \eqref{eq:H1c_T3_eq1} - \eqref{eq:H1c_T3_eq2}, appealing to Lemma \ref{lem:BC9}, Lemma \ref{lem:max_abs_normals}, and \eqref{eq:maxtildedelta0}.


\noindent  \textbf{(e)}  Using the conditional distribution of $\hvec^{t+1}$ from Lemma \ref{lem:ideal_cond_dist} Eq.\ \eqref{eq:htil_rep} and Eq.\ \eqref{eq:htilde_def} we write,
%
%
$\hvec^{t + 1}_{\sfc} \overset{d}{=} \tilde{\hvec}^{t + 1}_{\sfc} + \tilde{\Dvec}_{t+1, \sfc} \overset{d}{=}  \sqrt{\tau^{t}_{\sfc}} \tilde{\Z}_{t, \sfc} + \tilde{\Dvec}_{t+1, \sfc}.$
%
%
Then we have as in the $\mc{H}_1(e)$ proof, namely \eqref{eq:H1e_eq1},
\begin{align*}
&P\Big( \frac{1}{L}  \sum_{\sfc \in [\Lc]}  \sum_{\ell \in \sfc} W^{2\sfq}_{\sfr \sfc}  \max_{j \in sec(\ell)}  ([\hvec_{\sfc}^{t+1}]_{j})^2 \geq 6 (\max_{\sfc} W^{2\sfq}_{\sfr \sfc}) \tau^t_{\sfc} \log M + \e \Big) \leq \\
& P\Big(  \frac{1}{L}  \sum_{\sfc \in [\Lc]}  \sum_{\ell \in \sfc} W^{2\sfq}_{\sfr \sfc}  \max_{j \in sec(\ell)}   ([\tilde{\Z}_{t, \sfc}]_{j})^2 \geq 3  (\max_{\sfc} W^{2\sfq}_{\sfr \sfc}) \log M \Big) \hspace{-3pt} + \hspace{-3pt}  P\Big(  \frac{1}{L}  \sum_{\sfc \in [\Lc]}  \sum_{\ell \in \sfc} W^{2\sfq}_{\sfr \sfc}  \max_{j \in sec(\ell)}   ([\tilde{\Dvec}_{t+1, \sfc}]_{j})^2 \geq \frac{\e}{2} \Big).
\end{align*}
Then the desired bound follows by Lemma \ref{lem:max_abs_normals}  and \eqref{eq:maxtildedelta0}.


\textbf{(f)} The proof of \eqref{eq:Hf} follows similarly to the corresponding $\mc{B}_t (f)$ proof of \eqref{eq:Bf} using  inductive hypothesis $\mathcal{H}_{t} (g)$ result \eqref{eq:Hg} and $\mathcal{H}_{t+1} (c)$.   The proof of \eqref{eq:Hf1} follows similarly to the corresponding $\mc{B}_t (f)$ proof of \eqref{eq:Bf1} using $\mathcal{H}_{t+1} (c)$ and $\mathcal{H}_{t+1} (f)$ result \eqref{eq:Hf}.


\textbf{(g)}  The proof of \eqref{eq:Qsing} follows similarly to the corresponding $\mc{B}_t (g)$ proof of \eqref{eq:Msing} using Fact \ref{fact:eig_proj} and inductive hypotheses $\mc{H}_1 (g) -\mc{H}_{t+1} (g)$.   Result \eqref{eq:Hg} follows similarly to the corresponding $\mc{B}_t (g)$ proof of \eqref{eq:Bf}: using block inversion we can represent $(\tCmat^{t+2, \sfr})^{-1}$ and $(\innerQ^{\sfr}_{t+2} )^{-1} $.  From the representation of $(\tCmat^{t+2, \sfr})^{-1}$, we can see that each non-zero element is $\Theta(1)$, and then we can show the element-wise concentration with inductive hypothesis $\mc{H}_{t}(g)$ and $\mathcal{H}_{t+1} (f)$.

\newpage
\appendices


\section{Proofs of bounds on state evolution parameters}

\subsection{Proof of Lemma \ref{lem:psi_nonasymp}} \label{app:psi_nonasymp}

 Recall that $\psi_\sfc^{t+1} = 1 - \mc{E}(\tau_\sfc^t)$ where $\mc{E}(\tau_\sfc^t)$ is defined  in \eqref{eq:E_tau}.   It is shown in \cite[Appendix A]{RushV19} that for sufficiently large $\M$ and any $\delta\in(0,\frac{1}{2})$,
\be
\mc{E}(\tau_\sfc^t) \geq \left(1- \frac{\M^{-k \delta^2}}{\delta\sqrt{\log{M}}}   \right) \indic\{\nu_\sfc^t > 2+\delta\}    \qquad \sfc \in [\Lc].
\label{eq:Etau_bound}
\ee
The upper bound on $\psi_\sfc^t$ directly follows from the lower bound in \eqref{eq:Etau_bound}.

We now lower bound $\psi_\sfc^t$ by obtaining an upper bound for $\mc{E}(\tau_\sfc^t)$. We will use the following concentration inequality the maximum of  $M$ i.i.d. standard Gaussian random variables  $U_1, \ldots, U_M$.  For any $\e \in (0,1)$,
\be P\Big( \max_{1 \leq j \leq M} U_j < \sqrt{2 \ln M}(1-\e) \Big) \leq \exp\Big( \frac{-\kappa  M^{\e(2-\e)}}{\sqrt{\ln M}} \Big), 
\label{eq:max_std_Gaussians}
\ee
where $\kappa >0$ is a universal constant.

Consider a column block $\sfc$ such that $\nu_\sfc^t \leq 2- \delt$. Recall from \eqref{eq:nu_c^t} that $\nu_\sfc^t = (\tau_\sfc^t \ln M)^{-1}$ and due to the assumption on the base matrix,  $\kappa_1 \leq \nu_\sfc^t \leq \kappa_2$ for some strictly positive constants $\kappa_1, \kappa_2$. Then, with positive constants $a, a^{\prime} \in (0,1)$ to be specified later,  we have 
\begin{align}
& \mc{E}(\tau_\sfc^t)   = \expec \left[  \frac{1}{1 +   
 e^{- \nu_\sfc^t \ln M }\sum_{j=2}^{\M} e^{U_j \sqrt{\nu_\sfc^t \ln M} } \cdot e^{-U_1 \sqrt{\nu_\sfc^t \ln M} }  } \right] \nonumber \\
& \leq P\left( \max_{2 \leq j \leq M} U_j   <  \sqrt{2 \ln M}(1 - a\delt) \right) \cdot 1  \nonumber \\
& \ +   P\left( \max_{2 \leq j \leq M} U_j   \geq   \sqrt{2 \ln M}(1 - a\delt) \right) 
 \expec \left(  {1 + \exp\left(-\sqrt{\nu_\sfc^t \ln M} \, U_1  \right)  M^{-\nu_\sfc^t} \exp(\sqrt{2\nu_\sfc^t}(1-a\delt) \ln M ) } \right)^{-1} \nonumber  \\
 & \stackrel{(a)}{\leq} \exp\left( \frac{-\kappa  M^{a \delt(2-a \delt)}}{\sqrt{\ln M}} \right) +  
 \expec \left( { 1+ \exp\left(-\sqrt{\nu_\sfc^t \ln M} \, U_1  \right) M^{\sqrt{2\nu_\sfc^t}(1-a\delt) -\nu_\sfc^t} } \right)^{-1}  \nonumber \\
 & \stackrel{(b)}{\leq}  \exp\left( \frac{-\kappa  M^{a \delt(2-a \delt)}}{\sqrt{\ln M}} \right) +  P\left(  U_1  \leq \sqrt{2 \ln M}a^{\prime} \delt \right) \cdot   \frac{1}{1 +  M^{\sqrt{2\nu_\sfc^t}(1- (a + a^{\prime})\delt) -\nu_\sfc^t} } + P\left(  U_1  > \sqrt{2 \ln M}a^{\prime} \delt \right) \nonumber \\
  &  \stackrel{(c)}{\leq}    \exp\left( \frac{-\kappa  M^{a \delt(2-a \delt)}}{\sqrt{\ln M}} \right)  
  + M^{-\left(\sqrt{2(2 - \delt)}(1 - (a + a^{\prime}) \delt) - 2 + \delt \right)} + M^{-(a^{\prime})^2 \delt^2}. \label{eq:elub}
\end{align}
In the above sequence of inequalities, step $(a)$ uses \eqref{eq:max_std_Gaussians}; step $(b)$ holds because
 \[  \exp\left(-\sqrt{\nu_\sfc^t \ln M} \, U_1  \right)  \geq M^{-\sqrt{2 \nu_\sfc^t} a' \delt} \quad  \text{ when } \quad U_1  \leq  \sqrt{2 \ln M}a^{\prime} \delt.  \]
 In step $(c)$, the last term is obtained using a standard Gaussian tail bound for $U_1$ (Lemma \ref{lem:normalconc}); the second  term is obtained by noting that  $\sqrt{2\nu_\sfc^t}(1- (a + a^{\prime})\delt) -\nu_\sfc^t$ is a concave function of $\nu_\sfc^t$ which (for $a, a^{\prime}$ specified below) takes its minimum value in $\nu_\sfc^t \in [\kappa_1, 2-\delt]$ at the endpoint $\nu_\sfc^t = (2-\delt)$.   
 
 Using the inequality $(1-x)^{1/2} \geq 1 - \frac{5x}{8}$ for $x \in [0,1/2]$, we have the following lower bound for the exponent of the second term in \eqref{eq:elub}:
\begin{align*}
\sqrt{2(2 - \delt)}(1 - (a + a^{\prime}) \delt) -2 +\delt & \geq 2 \Big(1 - \frac{5\delt}{16} \Big)
 \left( 1 - (a + a^{\prime})\delt \right) -2 + \delt \\
& = \delt \Big( \frac{3}{8} - 2(a + a^{\prime}) \Big) + \frac{5}{8}(a + a^{\prime})\delt^2.
\end{align*}
Taking $a=1/64$ and $a^{\prime} = 5/32$, we get the desired upper bound.

\subsection{Proof of Proposition \ref{prop:nonasymp_se}} \label{app:nonasymp_se}

Using the definition in \eqref{eq:W_rc}, the state evolution equations  \eqref{eq:se_phi}--\eqref{eq:tau_ct_def}  for the $(\omega, \Lambda, \rho)$ base matrix  are as follows. With $\psi_\sfc^0 =1$ for $\sfc \in [\Lc]$, for $t \geq 0$:
\begin{align}
\phi_\sfr^t &= \sigma^2 \left[ 1 + \vth \, \snr \,\left(
\frac{1-\rho}{\omega} \sum_{\sfc= \underline{\sfc}_\sfr }^{\overline{\sfc}_\sfr} \psi_\sfc^t 
\,+\,\frac{\rho}{\Lambda-1} \sum_{\sfc\in[\Lambda]\setminus \{\underline{\sfc}_\sfr,\ldots, \overline{\sfc}_\sfr \}} \psi_\sfc^t 
\right)\right], \quad \sfr \in [ \Lambda + \omega -1], \label{eq:se_phi_omega_rho} \\
\nu_\sfc^t & =  \frac{1}{\tau_\sfc^t \ln M} = \frac{1}{R} \left[\frac{(1-\rho)\cdot P}{\omega}\sum_{\sfr=\sfc}^{\sfc+\omega-1} \frac{1}{\phi_\sfr^t} 
\, + \, \frac{\rho \cdot P}{\Lambda-1}\sum_{\sfr\in [\Lambda+\omega-1]\setminus\{\sfc,\ldots,\sfc+\omega-1\}} \frac{1}{\phi_\sfr^t}\right], \label{eq:nu_c^t_def_omega_rho} \\
\psi_\sfc^{t+1} &= 1 - \mathcal{E}(\tau_\sfc^t). \label{eq:se_psi_omega_rho}
\end{align}
Here $\mathcal{E}(\tau_\sfc^t)$ is defined in \eqref{eq:E_tau}, and  $\underline{\sfc}_\sfr, \overline{\sfc}_\sfr $  are defined  \eqref{eq:c_r}. 

Since the variables $\psi_\sfc^t$ for $\sfc\in[\Lambda]$ and $t\geq0$ are symmetric about the center column index, i.e. $\psi_{\sfc}^t =  \psi_{\Lambda - \sfc + 1}^t$ for $\sfc\leq\left\lceil \frac{\Lambda}{2} \right\rceil$, we carry out the analysis for  $\sfc\leq\left\lceil \frac{\Lambda}{2} \right\rceil$; the result for the other half then holds by symmetry. We will upper bound $\psi_\sfc^t$ using Lemma \ref{lem:psi_nonasymp}.
Using \eqref{eq:psi_c_bound}, for  the first iteration  we will have $\psi_\sfc^1 \leq f_{M, \delta}$ for indices $\sfc$ for which $\{ \nu_\sfc^0 > 2 + \delta \}$. Letting $F_\sfc^0 \ := \ \nu_\sfc^0 \, R$, this corresponds to finding indices $\sfc$ for which $F_\sfc^0 > (2 + \delta)R$. We now obtain a lower bound on $F_\sfc^0$ for indices $\sfc < \omega$. 

Using \eqref{eq:nu_c^t_def_omega_rho} we have
\begin{align}
F_\sfc^0 
&= \frac{(1-\rho) P}{\omega}\sum_{\sfr=\sfc}^{\sfc+\omega-1} \frac{1}{\phi_\sfr^0} 
\, + \, \frac{\rho P}{\Lambda-1}\sum_{\sfr\in [\Lambda+\omega-1]\setminus \{\sfc,\ldots,\sfc+\omega-1\}} \frac{1}{\phi_\sfr^0} \label{eq:Fc_def} \\
&\geq \frac{(1-\rho) P}{\omega}\sum_{\sfr=\sfc}^{\sfc+\omega-1} \frac{1}{\phi_\sfr^0} \nonumber \\
&\stackrel{\text{(i)}}{=} \frac{(1-\rho)\,\snr}{\omega} \left(
\sum_{\sfr=\sfc}^{\omega-1} \frac{1}{1 + \frac{(1-\rho)\,\vth \,\snr}{\omega}\,\sfr + \frac{\rho\,\vth\, \snr}{\Lambda-1} \,(\Lambda-\sfr)} 
+ \frac{\sfc}{1+(1-\rho)\,\vth \,\snr + \rho\,\vth\,\snr \, \frac{\Lambda-\omega}{\Lambda-1}}\right) \nonumber \\
&\geq \frac{(1-\rho)\,\snr}{\omega} 
\sum_{\sfr=\sfc}^{\omega-1} \frac{1}{1 + \rho\,\vth\, \snr + (1-\rho)\,\vth \,\snr \, \frac{\sfr}{\omega}} 
+ \frac{\sfc}{\omega}\,\frac{(1-\rho)\,\snr}{1+ \vth \,\snr} \nonumber \\
&\stackrel{\text{(ii)}}{\geq} \frac{1}{\vth} \left(\ln(1+\vth\,\snr) - \ln\Big(1+\rho\,\vth\,\snr + (1-\rho)\,\vth\,\snr\,\frac{\sfc}{\omega} \Big)\right)
+ \frac{\sfc}{\omega}\,\frac{(1-\rho)\,\snr}{1+ \vth \,\snr} \nonumber \\
&\stackrel{\text{(iii)}}{\geq} \frac{1}{\vth} \ln(1+\vth\,\snr) - \rho\,\snr - (1-\rho)\,\snr\,\frac{\sfc}{\omega}
+ \frac{\sfc}{\omega}\,\frac{(1-\rho)\,\snr}{1+ \vth \,\snr} \nonumber \\
%
%
&\geq \frac{1}{\vth} \ln(1+\vth\,\snr) - \rho\, \snr - \frac{\sfc}{\omega}\,\frac{\vth\, \snr^2}{1+\vth\,\snr}, \nonumber
\end{align}
where the labelled steps are obtained as follows: (i) using the expression for $\phi_\sfr^t$ in \eqref{eq:se_phi_omega_rho} and the fact that $\sfc < \omega$,  (ii) using a definite integral to lower bound the left Riemann sum of a positive decreasing function: 
\be
\frac{1}{\omega} \sum_{\sfr=\sfc}^{\omega-1} \frac{1}{ \frac{1 + \rho\,\vth\, \snr}{(1-\rho)\,\vth \,\snr}  + \, \frac{\sfr}{\omega}}   \geq \int_{\sfc/\omega}^1 \,  \frac{1}{ \frac{1 + \rho\,\vth\, \snr}{(1-\rho)\,\vth \,\snr}  + \, x} \, \textrm{d} x.
\label{eq:Reimann_bnd}
 \ee
Inequality (iii) is obtained using $\ln (1+x) \leq x$. 

Therefore, the condition $F_\sfc^0 > (2 + \delta)R$ will be satisfied if
\begin{align}\label{eq:Fc_lb_condition}
\frac{1}{\vth} \ln(1+\vth\,\snr) - \rho\, \snr - \frac{\sfc}{\omega}\,\frac{\vth\, \snr^2}{1+\vth\,\snr}
\,> \, (2+\delta)R.
\end{align}
Rearranging \eqref{eq:Fc_lb_condition} gives
\be\label{eq:c_omega_cond}
\frac{\sfc}{\omega} \, < \,
2\left(1+\frac{1}{\vth\,\snr} \right) \frac{1}{\snr} \left[\frac{1}{2\vth} \ln\left(1+\vth\,\snr \right) 
- \frac{\rho\,\snr}{2}  - R - \frac{\delta R}{2}\right].
\ee
Note that the RHS of \eqref{eq:c_omega_cond} is smaller than or equal to 1 if $R \geq \frac{(1-\rho)\snr}{(2+\delta)(1+\vth\,\snr)}$.
Using $\rho \leq  \frac{\Delta}{3\snr}$ and $\delta <  \frac{\Delta}{2R}$, the sufficient condition in \eqref{eq:c_omega_cond} for $\psi_\sfc^1 \leq f_{M, \delta}$ can be weakened to $\sfc \leq g$ where $g$ is defined in \eqref{eq:g_def}. Note that the condition \eqref{eq:omega_thresh} on $\omega$ guarantees that $g > 1$.

Notice from \eqref{eq:Fc_def}  that $F_\sfc^0$ is decreasing in $\sfc$ for $\sfc\in [1, \omega]$ and is then constant for  $\sfc\in [\,\omega, \ceil{\frac{\Lambda}{2}}]$. Therefore, for any $\delta \in (0,1)$, if  $F_\sfc^0>(2+\delta)R$ is satisfied for $\sfc=\omega$ then $\psi_\sfc^1 \leq f_{M,\delta}$ for all $\sfc\in[\Lambda]$.
By using a similar analysis as above for lower bounding $F_\sfc^0$, one can show that a sufficient condition for $F_\omega^0>(2+\delta)R$ is $R < \frac{(1-\rho)\snr}{(2+\delta)(1+\vth\,\snr)}$.

Next we consider subsequent iterations $t>1$. Assume towards induction that 
\be 
\psi_\sfc^t \leq f_{M,\delta}, \ \text{ for }  \sfc \leq g_t, 
\label{eq:ind_assump}
\ee 
where $g_t \geq  t g $. We will prove that \eqref{eq:ind_assump} implies  $\psi_\sfc^{t+1} \leq f_{M,\delta}$ for $\sfc \leq g_t + g$. We prove the result for $g_t \geq \omega$, with the other case being similar.  We wish to find column indices $\sfc\in(g_t,g_t+\omega)$ for which $\psi_\sfc^{t+1} \leq f_{M,\delta}$, or equivalently $F_\sfc^t > (2+\delta)R$. For brevity, we will use the shorthand $f := f_{M,\delta}$.  Using the induction assumption \eqref{eq:ind_assump} in \eqref{eq:se_phi_omega_rho}, we deduce 
\begin{align}\label{eq:phi_over_sigma_2}
	 \frac{\phi^t_\sfr}{\sigma^2} \leq 
 	\begin{cases}  
		1+ f\,(1-\rho)\,\vth\,\snr \, \frac{\sfr}{\omega} + f\, \rho\,\vth\,\snr \,\frac{g_t-\sfr}{\Lambda-1} + \rho\,\vth\,\snr \,\frac{\Lambda-g_t}{\Lambda-1}, & 1 \leq \sfr \leq \omega,\\
		1+ f\,(1-\rho)\,\vth\,\snr + f\, \rho\,\vth\,\snr \,\frac{g_t-\omega}{\Lambda-1} + \rho\,\vth\,\snr \,\frac{\Lambda-g_t}{\Lambda-1}, &  \omega \leq \sfr \leq g_t, \\
		1+\frac{(1-\rho)\,\vth\,\snr}{\omega}\left[f (\omega - (\sfr-g_t)) + (\sfr-g_t) \right] + f\, \rho\,\vth\,\snr \,\frac{\sfr-\omega}{\Lambda-1} + \rho\,\vth\,\snr \,\frac{\Lambda-\sfr}{\Lambda-1} ,  &  g_t \leq \sfr < g_t + \omega, \\
		1+ (1-\rho)\,\vth\,\snr + f\,\rho\,\vth\,\snr \,\frac{\sfr-\omega}{\Lambda-1} + \rho\,\vth\,\snr \,\frac{\Lambda-\sfr}{\Lambda-1}, & \sfr \geq g_t + \omega.
	\end{cases}  
\end{align}
For $M$ sufficiently large (i.e., $f$ sufficiently small), noting that $g_t \geq \omega$ we can simplify \eqref{eq:phi_over_sigma_2} to
\begin{align}\label{eq:phi_over_sigma_3}
	 \frac{\phi^t_\sfr}{\sigma^2} \leq 
 	\begin{cases}  
		1+ f\,(1-\rho)\,\vth\,\snr \, \frac{\sfr}{\omega} + \rho\,\vth\,\snr \,\frac{\Lambda-\sfr}{\Lambda-1}, & 1 \leq \sfr \leq \omega,\\
		1+ f\,(1-\rho)\,\vth\,\snr + \rho\,\vth\,\snr \,\frac{\Lambda-\omega}{\Lambda-1}, &  \omega \leq \sfr \leq g_t, \\
		1+\frac{(1-\rho)\,\vth\,\snr}{\omega}\left[f (\omega - (\sfr-g_t)) + (\sfr-g_t) \right] + \rho\,\vth\,\snr \,\frac{\Lambda-\omega}{\Lambda-1},  &  g_t \leq \sfr < g_t + \omega, \\
		1+ (1-\rho)\,\vth\,\snr + \rho\,\vth\,\snr \,\frac{\Lambda-\omega}{\Lambda-1}, & \sfr \geq g_t + \omega.
	\end{cases}  
\end{align}

We now obtain a lower bound on $F_\sfc^t := \nu^t_\sfc R$ for $g_t < \sfc < g_t+\omega$. Using \eqref{eq:nu_c^t_def_omega_rho} we have
\begin{align}
F_\sfc^t &\geq \frac{(1-\rho) P}{\omega}\sum_{\sfr=\sfc}^{\sfc+\omega-1} \frac{1}{\phi_\sfr^t} \nonumber   \\
&\stackrel{\text{(i)}}{\geq} \frac{(1-\rho)\,\snr}{\omega} 
\sum_{\sfr=\sfc}^{g_t+\omega-1} \frac{1}{1 + \frac{(1-\rho)\,\vth \,\snr}{\omega}\,[f(\omega-(r-g_t))+(r-g_t)] + \frac{\rho\,\vth\, \snr}{\Lambda-1} \,(\Lambda-\omega)}  \nonumber   \\
& \hspace{2in} + \frac{\sfc-g_t}{\omega} \, \frac{(1-\rho)\,\snr}{1+(1-\rho)\,\vth \,\snr + \rho\,\vth\,\snr \, \frac{\Lambda-\omega}{\Lambda-1}}  \nonumber    \\
&\geq \frac{(1-\rho)\,\snr}{\omega} 
\sum_{\sfr=\sfc}^{g_t+\omega-1} \frac{1}{1 + \rho\,\vth\, \snr + f\,(1-\rho)\,\vth\,\snr +  (1-f)(1-\rho)\,\vth \,\snr\, \frac{(\sfr-g_t)}{\omega}} 
+ \frac{\sfc-g_t}{\omega} \, \frac{(1-\rho)\,\snr}{1+\vth\,\snr} \nonumber    \\
&\geq \frac{(1-\rho)\,\snr}{\omega} 
\sum_{\sfr'=\sfc-g^t}^{\omega-1} \frac{1}{1 + \rho\,\vth\, \snr + f\,(1-\rho)\,\vth\,\snr + (1-f)(1-\rho)\,\vth \,\snr \, \frac{\sfr'}{\omega}} 
+ \frac{\sfc-g_t}{\omega} \, \frac{(1-\rho)\,\snr}{1+\vth\,\snr} \nonumber   \\
&\stackrel{\text{(ii)}}{\geq} \frac{1}{1-f}\, \frac{1}{\vth} \left(\ln(1+\vth\,\snr) - \ln\Big(1 + \rho\,\vth\, \snr + f\,(1-\rho)\,\vth\,\snr + (1-f)(1-\rho)\,\vth \,\snr\,\frac{\sfc-g_t}{\omega}\Big)\right) \nonumber  \\
&\qquad + \frac{\sfc -g_t}{\omega}\,\frac{(1-\rho)\,\snr}{1+ \vth \,\snr} \nonumber \\
&\stackrel{\text{(iii)}}{\geq} \frac{1}{\vth} \ln(1+\vth\,\snr) - \rho\,\snr - f\,(1-\rho)\,\snr - (1-\rho)\,\snr\,\frac{\sfc -g_t}{\omega}
+ \frac{\sfc-g_t}{\omega}\,\frac{(1-\rho)\,\snr}{1+ \vth \,\snr} \nonumber \\
%
%
& \geq \frac{1}{\vth} \ln(1+\vth\,\snr) - \rho\, \snr - f\,\snr - \frac{\sfc-g_t}{\omega}\,\frac{\vth\, \snr^2}{1+\vth\,\snr},
\label{eq:Fct_lb} 
\end{align}
where the labelled steps are obtained as follows: (i) using the bounds for $\phi_\sfr^t$ given in \eqref{eq:phi_over_sigma_3}, (ii) using a definite integral to lower bound the left Riemann sum of a decreasing function, similar to \eqref{eq:Reimann_bnd}, and (iii) using the inequalities $\ln (1+x) \leq x$ and $\frac{1}{1-f} \geq 1$.

Recall from Lemma \ref{lem:psi_nonasymp} that $\psi_\sfc^{t+1} \leq f_{M,\delta}$ if $F_\sfc^t > (2 + \delta)R$. From \eqref{eq:Fct_lb}, this condition will be satisfied if
\begin{align}\label{eq:Fct_lb_condition}
\frac{1}{\vth} \ln(1+\vth\,\snr) - \rho\, \snr - f\,\snr - \frac{\sfc-g_t}{\omega}\,\frac{\vth\, \snr^2}{1+\vth\,\snr}
\,> \, (2+\delta)R.
\end{align}
Rearranging \eqref{eq:Fct_lb_condition} gives
\be\label{eq:c_gt_cond}
\frac{\sfc-g_t}{\omega}   < 
2\left(1+\frac{1}{\vth\,\snr} \right) \frac{1}{\snr} \left[\frac{1}{2\vth} \ln\left(1+\vth\,\snr \right) 
- \frac{\rho\,\snr}{2} - \frac{f\,\snr}{2} - R - \frac{\delta R}{2}\right]. 
\ee
Note that the RHS of \eqref{eq:c_gt_cond} is smaller than or equal to 1 if $R \geq \frac{(1-\rho)\snr}{(2+\delta)(1+\vth\,\snr)}$.
Using  $\rho \leq \frac{\Delta}{3\snr}$ and $\delta < \frac{\Delta}{2R}$ in \eqref{eq:c_gt_cond}, we obtain that a sufficient condition for $\psi_\sfc^{t+1} \leq f_{M,\delta}$ is
\be
\label{eq:c_gt_cond2}
\frac{\sfc-g_t}{\omega}  <  2\left(1+\frac{1}{\vth\,\snr} \right) \frac{1}{\snr}\left[ \frac{7 \Delta}{12} - \frac{f \snr}{2} \right]
\ee
 For $M$ sufficiently large, $ f < \Delta/(6 \snr)$.  Thus we conclude from \eqref{eq:c_gt_cond2} that $\psi_\sfc^{t+1} \leq f_{M,\delta}$ for $\sfc -g_t \leq g$, and hence for $\sfc \leq (t+1)g$ (since $g_t \geq tg$). 
%


\subsection{Proof of Lemma \ref{lem:sigmatperp}} \label{app:sigmatperp}

We will prove the lemma by showing that the following inequalities hold $1 \leq t <T$:
\begin{align}
 \sigma_\sfr^{t-1} - \sigma_\sfr^{t}  \geq C_1  \left(\frac{\omega}{\Lambda}\right),  \label{eq:sig_gap} \\ 
\nu_\sfc^t -\nu_\sfc^{t-1} \geq C_2 \left( \frac{\omega}{\Lambda}\right), \label{eq:nu_gap}
\end{align}
where we recall that $\nu_\sfc^t = \frac{1}{\tau_\sfc^t \ln M}$.  Here the constant $C_1$ is defined in \eqref{eq:diff_t1_proof} below, and $C_2 = P C_1/(R \sigma^4)$. From these inequalities it follows that 
\begin{align}
\sigma_{\perp, \sfr}^t & = \frac{\sigma_\sfr^t}{\sigma_\sfr^{t-1}}\left( \sigma_\sfr^{t-1} - \sigma_\sfr^{t} \right) \geq \frac{C_1^2}{\vth P}\left(\frac{\omega}{\Lambda}\right)^2, \label{eq:sig_perp_bnd} \\
\frac{n}{L} \tau_{\perp, \sfc}^t   & =  \frac{(\nu_\sfc^{t-1} - \nu_\sfc^t)}{R (\nu_\sfc^t)^2} \geq \frac{R C_2}{\snr^2}\left(\frac{\omega}{\Lambda}\right).  \label{eq:tau_perp_bnd}
\end{align}
The inequality in \eqref{eq:sig_perp_bnd} is obtained   using $ \sigma_\sfr^{t} \geq C_1\omega/\Lambda$ and the fact that 
$\sigma_\sfr^{t-1} \leq \vth P$ (from \eqref{eq:se_phi_omega_rho} since $\psi_{\sfc}^{t-1} \leq 1$). The inequality in \eqref{eq:tau_perp_bnd} is obtained by using  $\nu_\sfc^t \geq \snr/R$ (from \eqref{eq:nu_c^t_def_omega_rho} since $\phi_\sfr^t \geq \sigma^2$).

We will prove  \eqref{eq:sig_gap} via induction. The result \eqref{eq:nu_gap} can then be obtained from \eqref{eq:sig_gap} as follows. 
Using \eqref{eq:nu_c^t_def_omega_rho}, we write 
\begin{align*}
\nu_\sfc^{t} -\nu_\sfc^{t-1} = \frac{1}{R} \left[\frac{(1-\rho)\cdot P}{\omega}\sum_{\sfr=\sfc}^{\sfc+\omega-1}\left( \frac{1}{\phi_\sfr^{t}} -   \frac{1}{\phi_\sfr^{t-1}}  \right) 
\, + \, \frac{\rho \cdot P}{\Lambda-1}\sum_{\sfr\in [\Lambda+\omega-1]\setminus\{\sfc,\ldots,\sfc+\omega-1\}} \left( \frac{1}{\phi_\sfr^t} -   \frac{1}{\phi_\sfr^{t-1}}  \right)  \right].
\end{align*}
Using  $\phi^{t-1}_\sfr \geq \phi^t_\sfr \geq \sigma^2$ and
   $(\phi_\sfr^{t-1} -  \phi_\sfr^t) = (\sigma_\sfr^{t-1} -  \sigma_\sfr^t) \geq C_1 \omega /\Lambda$ by the induction assumption,  the above equation yields
\be
\nu_\sfc^{t} -\nu_\sfc^{t-1} \geq \frac{PC_1}{R \sigma^4} \frac{ \omega}{\Lambda} = \frac{C_2 \omega}{\Lambda}.
\ee
Next  we show the lower bound for $ \sigma_\sfr^{t} - \sigma_\sfr^{t+1}$ in \eqref{eq:sig_gap}.  For $t=1$, noting that 
$\psi_\sfc=1$ for $\sfc \in [\Lc]$, we have 
\be
\sigma_\sfc^{1} -\sigma_\sfc^{0} =  \frac{1}{\Lc}\sum_{c=1}^{\Lc}W_{\sfr \sfc} ( 1 - \psi_\sfc^1).
\ee
From Proposition \ref{prop:nonasymp_se}, we know that $ \psi_\sfc^1 \leq f_{M, \delta}$, for $1 \leq \sfc \leq g$ and
$\Lambda - g + 1 \leq \sfc  \leq \Lambda$.  Since $W_{\sfr \sfc}  \geq \rho P \cdot \frac{\Lambda + \omega -1}{\Lambda-1} $ and $\Lc =\Lambda$, we have
\be
\sigma_\sfc^{1} -\sigma_\sfc^{0} \geq \frac{1}{\Lambda} 2 g \rho P \,  \frac{\Lambda + \omega -1}{\Lambda -1} (1 - f_{M, \delta} )\geq
\underbrace{ \frac{ \rho P (1 + \vth \, \snr)\Delta }{\vth \, \snr^2}}_{C_1}  \, \frac{\omega}{\Lambda-1},
\label{eq:diff_t1_proof}
\ee
where the last inequality holds for $M$ sufficiently large. Assume towards induction that \eqref{eq:sig_gap} holds for $ \sigma_\sfr^{t-1} - \sigma_\sfr^{t}$, for $1 \leq t \leq (T-2)$. From \eqref{eq:se_phi_omega_rho} we have
\begin{align}
\label{eq:sig_diff_lb}
 \sigma_\sfr^{t} - \sigma_\sfr^{t+1}  &=  \vth \, P \,
\left( \frac{1-\rho}{\omega} \sum_{\sfc= \underline{\sfc}_\sfr }^{\overline{\sfc}_\sfr} (\psi_\sfc^{t}  - \psi_\sfc^{t+1})
\,+\,\frac{\rho}{\Lambda-1} \sum_{\sfc\in[\Lambda]\setminus \{\underline{\sfc}_\sfr,\ldots, \overline{\sfc}_\sfr \}} 
(\psi_\sfc^{t}  - \psi_\sfc^{t+1}) \right).
\end{align}

For $\delta, \delt \in (0, \, \min\{ \frac{1}{2}, \, \frac{\Delta}{2R} \})$, define the set
\be
\mc{I}_t :=  \left\{  \sfc \in [\Lambda]: \ \nu_\sfc^{t-1} \leq 2- \delt \text{ and }  \nu_\sfc^{t} > 2+ \delta \right\}
\ee

For each $\sfc  \in \mc{I}_t$,  from Lemma \ref{lem:psi_nonasymp} and Proposition \ref{prop:nonasymp_se} we have
\be
\label{eq:psic_tt1}
\psi_\sfc^t \geq 1 - M^{-k_1 \tilde{\delta}^2}, \qquad  \psi_\sfc^{t+1} \leq  \frac{\M^{-k \delta^2}}{\delta\sqrt{\log{M}}}, \qquad \sfc \in \mc{I}_t.
\ee

We prove below that for sufficiently small choices of $\delta, \delt$, we have $\abs{\mc{I}_t} \geq 2(g-2)$, where $g$ is defined in \eqref{eq:g_def}. Using this and \eqref{eq:psic_tt1} in \eqref{eq:sig_diff_lb},
we conclude that 
 \be
 \begin{split}
 \sigma_\sfr^{t} - \sigma_\sfr^{t+1}   & \geq  \frac{\vth  P  \rho}{\Lambda-1} 2(g-2) \left( 1 -   M^{-k_1 \tilde{\delta}^2} -   \frac{\M^{-k \delta^2}}{\delta\sqrt{\log{M}}} \right) \\ 
  & \stackrel{(a)}{\geq}   \frac{ P  \rho \, g}{\Lambda-1} \, = \,  \underbrace{\frac{ P  \rho (1 + \vth \, \snr)\Delta }{\vth \, \snr^2}}_{C_1} \cdot \frac{\omega}{\Lambda-1},
 \end{split}
 \ee
where the inequality $(a)$ holds for $M$ sufficiently large. It remains to show that  $\abs{\mc{I}_t} \geq 2(g-2)$ for suitably chosen $\delta, \delt$. Let $\sfc_* $ denote the largest index $\sfc \leq  \Lambda/2$ such that $\nu_{\sfc}^{t-1} > 2$. That is, 
\be
\nu_{\sfc}^{t-1} >  2 \ \text{ for } \  \sfc \leq \sfc_*, \qquad  \nu_{\sfc_*+1}^{t-1}  \leq 2 , \qquad \text{ and } \  \nu_{\sfc}^{t-1}  < 2 \  \text{ for }  \ 
\sfc_* +2 \leq \sfc \leq \lfloor \Lambda/2 \rfloor.
\ee
Let $\delta =  \min\{ \frac{(\nu_{\sfc_*}^{t-1} -2)}{2}, \, \frac{\Delta}{2R}, \, \frac{1}{2}\}$ and $\delt = \min\{ 
\frac{(2-  \nu_{\sfc_*+2}^{t-1})}{2}, \, \frac{1}{3} \}$. For these choices, Lemma \ref{lem:psi_nonasymp} guarantees that
\begin{align}
\psi_\sfc^t  < f_{M, \delta} := \frac{\M^{-k \delta^2}}{\delta\sqrt{\log{M}}} \, \text{ for } \, \sfc \leq \sfc_*,  \quad \text{ and } \quad
\psi_\sfc^t > 1-   M^{-k_1 \tilde{\delta}^2} \, \text{ for } \, \sfc \geq \sfc_*+2,  
\end{align}
Furthermore, Proposition \ref{prop:nonasymp_se} guarantees that  $\psi_\sfc^{t+1} \leq f_{M, \delta}$ for column indices 
$\sfc_* +2 \leq \sfc \leq \sfc_* +g$. Therefore, all these indices belong to the set $\mc{I}_t$. Therefore, $\mc{I}_t$ contains  at least $(g-2)$ column indices $\sfc \leq \Lambda/2$, and by symmetry,  contains an equal number of indices $\Lambda/2 < \sfc  \leq \Lambda$. This completes the proof of the lemma.

\section{Concentration lemmas} \label{app:conc_lemmas}

In the following $\e >0$ is assumed to be a generic constant, with additional conditions specified whenever needed.  
%

\begin{applem}[Hoeffding's inequality  {\cite[Thm.\ 2.8]{BLMconcbook}}]
\label{lem:hoeff_lem}
If $X_1, \ldots, X_n$ are independent, bounded random variables such that $a_i \leq X_i \leq b_i$, then for $\nu = 2[\sum_{i} (b_i -a_i)^2]^{-1}$
\begin{align*}
P\Big( \frac{1}{n}\sum_{i=1}^n (X_i - \expec X_i) \geq \e \Big) &\leq e^{ -\nu n^2 \e^2}, \quad P\Big( \Big \lvert \frac{1}{n}\sum_{i=1}^n (X_i - \expec X_i) \Big \lvert \geq \e \Big) \leq 2e^{ -\nu n^2 \e^2}.
\end{align*}
\end{applem}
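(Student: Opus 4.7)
The plan is to follow the classical Cramér--Chernoff approach, which is the standard route for this result. First I would reduce to the centered case by setting $Y_i := X_i - \expec X_i$, so that $Y_i$ is independent across $i$ with $\expec Y_i = 0$ and $a_i - \expec X_i \leq Y_i \leq b_i - \expec X_i$ (an interval of length $b_i - a_i$). Let $S_n := \sum_{i=1}^n Y_i$. For any $\lambda > 0$, Markov's inequality gives
\[
P\Big(\tfrac{1}{n} S_n \geq \e\Big) = P(e^{\lambda S_n} \geq e^{\lambda n \e}) \leq e^{-\lambda n \e}\, \expec[e^{\lambda S_n}] = e^{-\lambda n \e} \prod_{i=1}^n \expec[e^{\lambda Y_i}],
\]
where the last equality uses independence.

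The core estimate I need is the standard Hoeffding lemma: if $Y$ is a mean-zero random variable supported in an interval of length $L$, then $\expec[e^{\lambda Y}] \leq e^{\lambda^2 L^2 / 8}$. The usual proof of this is by convexity of $y \mapsto e^{\lambda y}$ on the supporting interval, which gives a pointwise upper bound by the linear interpolant; taking expectations reduces the problem to bounding a one-variable function $\varphi(\lambda) = \log(\text{interpolant mean})$, whose second derivative is at most $L^2/4$ via an AM--GM style estimate, yielding $\varphi(\lambda) \leq \lambda^2 L^2 / 8$ by Taylor expansion around $\lambda = 0$.

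Applying this lemma to each $Y_i$ with $L = b_i - a_i$ yields $\expec[e^{\lambda Y_i}] \leq \exp(\lambda^2 (b_i - a_i)^2 / 8)$, so
\[
P\Big(\tfrac{1}{n} S_n \geq \e\Big) \leq \exp\Big(-\lambda n \e + \tfrac{\lambda^2}{8} \sum_{i=1}^n (b_i - a_i)^2\Big).
\]
Optimizing over $\lambda > 0$ gives $\lambda^* = 4 n \e / \sum_i (b_i - a_i)^2$, and substituting back yields the upper-tail bound $\exp(-2 n^2 \e^2 / \sum_i (b_i - a_i)^2) = e^{-\nu n^2 \e^2}$ with $\nu = 2[\sum_i (b_i - a_i)^2]^{-1}$, matching the stated constant. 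The two-sided bound then follows by applying the same argument to $-Y_i$ and a union bound, producing the factor of $2$.

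The only non-routine step is Hoeffding's lemma itself, and that is the standard obstacle in any self-contained presentation; since the excerpt cites \cite{BLMconcbook} Thm.~2.8, I would in practice just cite that reference rather than reprove the lemma, and the proposal above is really only needed if a self-contained proof is desired.
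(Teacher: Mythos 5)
Your proof is correct and is the standard Cramér--Chernoff argument for Hoeffding's inequality. The paper itself does not prove this lemma---it simply cites \cite[Thm.\ 2.8]{BLMconcbook}---so there is no alternative derivation to compare against; your self-contained sketch (Markov's inequality plus Hoeffding's lemma for the MGF, then optimize over $\lambda$) recovers exactly the stated constant $\nu = 2[\sum_i(b_i-a_i)^2]^{-1}$, and your closing remark that one would just cite the reference in practice matches what the paper does.
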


%
\begin{applem}[Concentration of sums]
\label{sums}
If random variables $X_1, \ldots, X_M$ satisfy $P(\abs{X_i} \geq \e) \leq e^{-r_i \e^2}$ for $1 \leq i \leq M$ with $r_i$ indicating the `rate' of concentration of random variable $X_i$, then 
\ben
P\Big(\Big \vert \sum_{i=1}^M X_i \Big \vert \geq \e\Big) \leq \sum_{i=1}^M P\Big(|X_i| \geq \frac{\e}{M}\Big) \leq M e^{- (\min_i r_i) \e^2/M^2}.
\een
Moreover, for constants $\kappa_1, \ldots, \kappa_M > 0$,
\ben
P\Big(\Big \vert  \sum_{i=1}^M \kappa_i X_i \Big \vert  \geq \e\Big) \leq \sum_{i=1}^M P\Big(|X_i| \geq \frac{\e}{\sum_{j=1}^M \kappa_{j}}\Big) \leq M e^{- (\min_i r_i) \e^2/(\sum_{j=1}^M \kappa_{j})^2}.
\een
\end{applem}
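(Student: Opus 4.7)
The plan for proving Lemma \ref{sums} is essentially a two-line union-bound argument, so the proposal is short. The key observation is a deterministic containment of events via the triangle inequality, after which the hypothesis on each $X_i$ is applied termwise.

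First I would establish the inclusion
\[
\Big\{\Big|\sum_{i=1}^M X_i\Big|\geq \e\Big\}\ \subseteq\ \bigcup_{i=1}^M \Big\{|X_i|\geq \tfrac{\e}{M}\Big\}
\]
by the contrapositive: if $|X_i|<\e/M$ for every $i$, then the triangle inequality gives $|\sum_i X_i|\leq \sum_i |X_i|<\e$. Then a union bound yields
\[
P\Big(\Big|\sum_{i=1}^M X_i\Big|\geq \e\Big)\ \leq\ \sum_{i=1}^M P\Big(|X_i|\geq \tfrac{\e}{M}\Big),
\]
which is the first inequality claimed in the lemma. Applying the hypothesis $P(|X_i|\geq \e')\leq e^{-r_i (\e')^2}$ with $\e'=\e/M$ to each summand, and then upper-bounding each rate $r_i$ by $\min_j r_j$, gives the second inequality $M\,e^{-(\min_i r_i)\e^2/M^2}$.

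For the weighted version, the same triangle-inequality argument gives
\[
\Big\{\Big|\sum_{i=1}^M \kappa_i X_i\Big|\geq \e\Big\}\ \subseteq\ \bigcup_{i=1}^M \Big\{|X_i|\geq \tfrac{\e}{\sum_{j=1}^M \kappa_j}\Big\},
\]
since if $|X_i|<\e/\sum_j\kappa_j$ for all $i$ then $\sum_i \kappa_i|X_i|<\e$. Union bound and the hypothesis then yield the stated bound with $\e/\sum_j\kappa_j$ in place of $\e/M$, and the final exponential form follows by taking $\min_i r_i$. There is no real obstacle here; the only thing to be slightly careful about is that the constants $\kappa_i$ must be positive (as stated) so that the triangle inequality step is valid without sign manipulation, and that the bound $e^{-r_i(\e')^2}$ is monotone in $r_i$ so replacing each $r_i$ by $\min_j r_j$ only weakens the bound.
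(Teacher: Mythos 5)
Your proof is correct and is exactly the standard argument one would expect: the triangle-inequality event inclusion, a union bound, the pointwise hypothesis with $\e' = \e/M$ (or $\e' = \e/\sum_j\kappa_j$ in the weighted case), and monotonicity in the rates to replace each $r_i$ by $\min_j r_j$. The paper states this lemma without proof precisely because it is this elementary; your write-up supplies what was omitted and matches the only reasonable approach.
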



\begin{applem}[Concentration of Products {\cite[Lemma 15]{rush2018finite}}]
\label{products} 
For random  variables $X,Y$ and non-zero constants $c_X, c_Y$, if
$
P(  | X- c_X |  \geq \e ) \leq K e^{- r_X \e^2}$ and $P( | Y- c_Y  |  \geq \e ) \leq K e^{-r_Y \e^2},
$
then the probability  $P(  | XY - c_Xc_Y  |  \geq \e )$ is bounded by 
\begin{align*}  
&  P(  | X- c_X  |  \geq \min\Big \{ \sqrt{\frac{\e}{3}}, \frac{\e}{3 c_Y} \Big \} )  +  
P( | Y- c_Y  |  \geq \min \Big \{ \sqrt{\frac{\e}{3}}, \frac{\e}{3 c_X} \Big \} )  \leq 2K \exp\Big\{\frac{-\min\{r_X, r_Y\} \e^2}{9\max(1, c_X^2, c_Y^2)}\Big\}.
\end{align*}
\end{applem}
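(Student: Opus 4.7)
The plan is to establish this lemma by the standard algebraic decomposition followed by a union bound. I would begin with the identity
\[
XY - c_X c_Y \;=\; (X-c_X)(Y-c_Y) \,+\, c_Y(X-c_X) \,+\, c_X(Y-c_Y),
\]
which by the triangle inequality implies that $\{\abs{XY - c_X c_Y} \geq \e\}$ is contained in the union of the three events $\{\abs{(X-c_X)(Y-c_Y)} \geq \e/3\}$, $\{\abs{c_Y}\abs{X-c_X} \geq \e/3\}$, and $\{\abs{c_X}\abs{Y-c_Y} \geq \e/3\}$.

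Next, I would reduce the product event $\{\abs{X-c_X}\abs{Y-c_Y} \geq \e/3\}$ to the union $\{\abs{X-c_X} \geq \sqrt{\e/3}\} \cup \{\abs{Y-c_Y} \geq \sqrt{\e/3}\}$, since if both factors were strictly less than $\sqrt{\e/3}$ in absolute value, their product would fall short of $\e/3$. Grouping the resulting four events by whether they constrain $X$ or $Y$, and using that $\{\abs{X-c_X} \geq a\} \cup \{\abs{X-c_X} \geq b\} = \{\abs{X-c_X} \geq \min(a,b)\}$, yields
\[
P(\abs{XY - c_X c_Y} \geq \e) \;\leq\; P\!\Big(\abs{X - c_X} \geq \min\!\Big\{\sqrt{\tfrac{\e}{3}},\,\tfrac{\e}{3\abs{c_Y}}\Big\}\Big) + P\!\Big(\abs{Y - c_Y} \geq \min\!\Big\{\sqrt{\tfrac{\e}{3}},\,\tfrac{\e}{3\abs{c_X}}\Big\}\Big),
\]
which is precisely the first inequality displayed in the lemma.

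Finally, I would substitute the hypothesized sub-Gaussian-type tail bounds on $X$ and $Y$, obtaining two exponential terms with exponents $-r_X \min\{\e/3,\,\e^2/(9 c_Y^2)\}$ and $-r_Y \min\{\e/3,\, \e^2/(9 c_X^2)\}$ (using that squaring commutes with the minimum of nonnegative quantities). A short case analysis on whether $c_X^2$ and $c_Y^2$ are bounded above by one shows that, in the regime $\e \in (0,1)$ of interest throughout the paper, each of these minima is bounded below by $\e^2/(9 \max\{1, c_X^2, c_Y^2\})$. Weakening the two rates to $\min\{r_X, r_Y\}$ and the two denominators to the common maximum then collapses the sum into the clean bound $2K \exp\{-\min\{r_X,r_Y\}\e^2/(9\max(1, c_X^2, c_Y^2))\}$, as claimed.

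There is no substantive technical obstacle here; the lemma is elementary and essentially a textbook result about products of concentrated random variables. The only mildly fiddly step is the final simplification of the exponent, where one must verify in each regime of $\e$ and $c_X, c_Y$ that $\min\{\e/3, \e^2/(9 c^2)\}$ dominates the target $\e^2/(9 \max(1,c^2))$, but this is routine bookkeeping rather than a genuine hurdle.
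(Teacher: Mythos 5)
Your proof is correct and follows exactly the standard argument one expects (and which the cited Lemma~15 of Rush--Venkataramanan uses): the algebraic decomposition $XY - c_Xc_Y = (X-c_X)(Y-c_Y) + c_Y(X-c_X) + c_X(Y-c_Y)$, a union bound across three events, the further splitting of the product event via $\sqrt{\e/3}$, grouping by variable, and a final case analysis to lower-bound $\min\{\e/3,\,\e^2/(9c^2)\}$ by $\e^2/(9\max(1,c_X^2,c_Y^2))$. You are also right to flag the two small defects in the statement as printed: the thresholds should involve $\abs{c_X}$, $\abs{c_Y}$ rather than $c_X$, $c_Y$ (otherwise the minima can be negative when the constants are), and the final exponential bound implicitly requires $\e<3$ (in practice $\e\in(0,1)$, consistent with the blanket convention announced at the top of the appendix), which is what makes $\e/3 \geq \e^2/(9\max(\cdots))$ hold in the relevant case.
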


\begin{applem}[Concentration of Products]
\label{products_0} 
For random variables $X,Y,$ and constant $c_X \neq 0$, if
$
P( | X- c_X  |  \geq \e) \leq K e^{- r_X \e^2}$ and $P(  | Y  |  \geq \e) \leq K e^{-r_Y \e^2},
$
then,
\begin{align*}  
& P(  | XY  |  \geq \sqrt{\e}) \leq P( \abs{X - c_X}   \geq \sqrt{\e} ) +  P\Big( \abs{Y}  \geq 
\frac{\e}{2 \max\{1, |c_X|\} } \Big)  
\leq 
2 K \exp\Big\{- \frac{\e^2 \min\{r_Y, r_X \}}{4 \max\{1, c_X^{2}\}}\Big\}.
\end{align*}
\end{applem}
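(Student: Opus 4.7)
My plan is to reduce the event $\{|XY| \geq \sqrt{\e}\}$ to a union of two more elementary events, one controlled by the tail of $|X - c_X|$ and the other by the tail of $|Y|$, and then apply the given sub-Gaussian estimates together with a union bound. The entire argument is a short deterministic set-inclusion followed by algebra on the exponents, so I do not expect any substantive obstacle. The one place to be careful is the regime $\e \in (0,1)$ (consistent with the rest of the appendix), which is what makes the set inclusion go through.

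The first step will be to establish, for $\e \in (0,1)$, the inclusion
\[
\{|XY| \geq \sqrt{\e}\} \;\subseteq\; \{|X - c_X| \geq \sqrt{\e}\} \;\cup\; \left\{|Y| \geq \tfrac{\e}{2\max\{1,|c_X|\}}\right\},
\]
which I would verify by contraposition. Assuming both complementary events occur, we have $|X| \leq |c_X| + \sqrt{\e}$; combining $\sqrt{\e} < 1 \leq \max\{1,|c_X|\}$ yields $|X| \leq 2 \max\{1,|c_X|\}$, and multiplying by the bound on $|Y|$ gives $|XY| \leq \e$. Since $\e < 1$ implies $\e < \sqrt{\e}$, the product event fails. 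Applying the union bound to this inclusion produces the first inequality stated in the lemma.

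The second step is to consolidate the two exponents. The hypothesis on $X$ gives $P(|X - c_X| \geq \sqrt{\e}) \leq K e^{-r_X \e}$, and the hypothesis on $Y$ gives $P(|Y| \geq \e/(2\max\{1,|c_X|\})) \leq K \exp\{-r_Y \e^2/(4\max\{1,c_X^2\})\}$. Since $\e \leq 1$, I have $\e \geq \e^2$, and since $4 \max\{1, c_X^2\} \geq 1$, I also have $r_X \e \geq r_X \e^2/(4\max\{1, c_X^2\})$, so the $X$-tail is itself bounded by $K \exp\{-r_X \e^2/(4\max\{1,c_X^2\})\}$. Replacing each rate by $\min\{r_X, r_Y\}$ and absorbing the union-bound factor into the prefactor $2K$ yields the final bound in the lemma statement. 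The whole proof should take only a few lines.
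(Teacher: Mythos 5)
Your proof is correct and follows the same route the paper implicitly takes: the lemma statement already records the intermediate union-bound step, so the argument is exactly the set inclusion you verify (valid for $\e \in (0,1)$, as in the surrounding lemmas of the appendix) followed by substituting the two tail hypotheses and loosening the $X$-exponent via $\e \geq \e^2$ and $4\max\{1,c_X^2\} \geq 1$.
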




\begin{applem}[Concentration of square roots, {\cite[Lemma 16]{rush2018finite}}]
\label{sqroots}
Let $c \neq 0$. Then
\ben
\text{If } P( \lvert X_n^2 - c^2 \lvert \geq \epsilon ) \leq e^{-\kappa n \epsilon^2},
\text{ then }
P ( \lvert \abs{X_n} - \abs{c}  \lvert \geq \epsilon) \leq e^{-\kappa n \abs{c}^2 \epsilon^2}.
\een
\end{applem}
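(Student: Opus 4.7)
The plan is to reduce the concentration of $|X_n| - |c|$ to the hypothesized concentration of $X_n^2 - c^2$ via the elementary algebraic identity
\[
|X_n^2 - c^2| \;=\; \bigl||X_n|-|c|\bigr| \cdot \bigl(|X_n|+|c|\bigr).
\]

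First I would observe that since $|X_n| \geq 0$, we always have $|X_n| + |c| \geq |c|$, and hence
\[
\bigl||X_n|-|c|\bigr| \;\leq\; \frac{|X_n^2 - c^2|}{|c|}.
\]
This is where the hypothesis $c\neq 0$ is used: we need $|c|>0$ to divide. The inclusion of events
\[
\bigl\{\, \bigl||X_n|-|c|\bigr| \geq \epsilon \,\bigr\} \;\subseteq\; \bigl\{\, |X_n^2 - c^2| \geq \epsilon\,|c| \,\bigr\}
\]
is then immediate.

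Applying the assumed concentration bound with the threshold $\epsilon|c|$ in place of $\epsilon$ gives
\[
P\bigl(\bigl||X_n|-|c|\bigr| \geq \epsilon\bigr) \;\leq\; P\bigl(|X_n^2 - c^2| \geq \epsilon|c|\bigr) \;\leq\; e^{-\kappa n (\epsilon|c|)^2} \;=\; e^{-\kappa n |c|^2 \epsilon^2},
\]
which is exactly the claim. There is no real obstacle here; the lemma is a one-line deterministic inequality followed by a substitution into the tail bound. The only minor subtlety worth flagging is that the stated conclusion does not require $\epsilon \leq |c|$ or any such range restriction, because the event inclusion above holds for all $\epsilon>0$.
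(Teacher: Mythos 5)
Your proof is correct and is the standard argument for this kind of statement: factor $X_n^2 - c^2 = (|X_n|-|c|)(|X_n|+|c|)$, lower bound the second factor by $|c|$, deduce the event inclusion $\{\,||X_n|-|c|| \geq \e\,\} \subseteq \{\,|X_n^2-c^2| \geq \e|c|\,\}$, and substitute $\e|c|$ into the hypothesis. The paper itself defers to \cite[Lemma 16]{rush2018finite} rather than reproving the lemma, and the cited proof proceeds by exactly this factorization, so there is nothing genuinely different to compare. Your closing remark that no range restriction on $\e$ is needed is also right, consistent with the fact that the lemma statement in the paper imposes none (unlike the neighboring Lemmas on powers and inverses, which do).
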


%
\begin{applem}[Concentration of powers, {\cite[Lemma 17]{rush2018finite}}]
\label{powers}
Assume $c \neq 0$ and $0 < \e \leq 1$.  Then for any integer $k \geq 2$,
\ben
\text{if } P( \lvert X_n - c\lvert \geq \epsilon) \leq e^{-\kappa n \epsilon^2},
\text{ then }
P( \lvert X_n^k - c^k  \lvert \geq \epsilon) \leq e^{ {-\kappa n \e^2}/[{(1+\abs{c})^k -\abs{c}^k}]^2}.
\een
\end{applem}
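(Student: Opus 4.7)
The plan is to reduce concentration of $X_n^k$ to the hypothesized concentration of $X_n$ via the algebraic factorization
\[
X_n^k - c^k = (X_n - c)\sum_{i=0}^{k-1} X_n^i\, c^{k-1-i},
\]
combined with the elementary identity $a^k - b^k = (a-b)\sum_{i=0}^{k-1} a^i b^{k-1-i}$ applied with $a = 1+|c|$, $b = |c|$, which gives $\sum_{i=0}^{k-1}(1+|c|)^i|c|^{k-1-i} = (1+|c|)^k - |c|^k$.

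First I would observe that on the event $\{|X_n - c| \leq 1\}$ we have $|X_n| \leq 1+|c|$, hence
\[
\Big|\sum_{i=0}^{k-1} X_n^i\, c^{k-1-i}\Big| \;\leq\; \sum_{i=0}^{k-1}(1+|c|)^i|c|^{k-1-i} \;=\; (1+|c|)^k - |c|^k.
\]
On this event, $|X_n^k - c^k| \leq |X_n - c|\cdot\big[(1+|c|)^k - |c|^k\big]$. Consequently, the event $\{|X_n^k - c^k| \geq \epsilon\}$ is contained in $\{|X_n - c| > 1\} \cup \big\{|X_n - c| \geq \epsilon/[(1+|c|)^k - |c|^k]\big\}$.

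The next step is a union bound together with the hypothesis $P(|X_n - c| \geq \delta) \leq e^{-\kappa n \delta^2}$. Since $k \geq 2$ implies $(1+|c|)^k - |c|^k \geq 1$ and $0 < \epsilon \leq 1$, the threshold $\epsilon/[(1+|c|)^k - |c|^k] \leq 1$, so the ``$|X_n - c| > 1$'' branch gives at most $e^{-\kappa n}$, which is dominated by $e^{-\kappa n \epsilon^2/[(1+|c|)^k - |c|^k]^2}$. Combining the two branches yields
\[
P(|X_n^k - c^k| \geq \epsilon) \;\leq\; 2\exp\!\Big\{-\kappa n \epsilon^2/\big[(1+|c|)^k - |c|^k\big]^2\Big\},
\]
and absorbing the factor of $2$ into the generic constant $\kappa$ (as is done throughout the appendix) produces exactly the stated bound.

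There is no real obstacle here; the entire argument is algebraic, with the only small point being the need to dispose of the ``large deviation'' branch $\{|X_n - c| > 1\}$, which is handled cleanly by the assumption $\epsilon \leq 1$ and the observation that $(1+|c|)^k - |c|^k \geq 1$ for $k \geq 2$.
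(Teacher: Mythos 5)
Your argument is correct, and it is essentially the standard proof of this lemma (the paper itself gives no proof, deferring to \cite[Lemma 17]{rush2018finite}, which uses the same factorization $X_n^k - c^k = (X_n-c)\sum_{i=0}^{k-1}X_n^i c^{k-1-i}$ together with $\sum_{i=0}^{k-1}(1+\abs{c})^i\abs{c}^{k-1-i} = (1+\abs{c})^k - \abs{c}^k$). One small streamlining: since $(1+\abs{c})^k - \abs{c}^k \geq 1$ and $\e \leq 1$, the threshold $\e/[(1+\abs{c})^k - \abs{c}^k]$ is itself at most $1$, so the event $\{\abs{X_n - c} < \e/[(1+\abs{c})^k - \abs{c}^k]\}$ already forces $\abs{X_n}\leq 1+\abs{c}$ and hence $\abs{X_n^k - c^k} < \e$; thus $\{\abs{X_n^k - c^k}\geq \e\}\subseteq\{\abs{X_n - c}\geq \e/[(1+\abs{c})^k - \abs{c}^k]\}$ and a single application of the hypothesis gives the stated bound with no union bound, no factor of $2$, and no need to absorb constants.
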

%

\begin{applem}[Concentration of Scalar Inverses, {\cite[Lemma 18]{rush2018finite}}]
\label{inverses} Assume $c \neq 0$ and $0<\e <1$. 
\ben
\text{If } P( \lvert X_n - c \lvert \geq \epsilon ) \leq e^{-\kappa n \epsilon^2},
\text{ then }
P( \lvert X_n^{-1} - c^{-1} \lvert \geq \epsilon ) \leq 2 e^{-n \kappa \e^2 c^2 \min\{c^2, 1\}/4}.
\een
\end{applem}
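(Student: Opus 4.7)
The plan is to bound $P(|X_n^{-1}-c^{-1}|\geq \epsilon)$ by reducing deviations of the reciprocal to deviations of $X_n$ itself, starting from the algebraic identity
\[
X_n^{-1}-c^{-1} \;=\; \frac{c-X_n}{X_n\, c}.
\]
The only obstruction to directly converting the hypothesis into the conclusion is that $|X_n|$ appears in the denominator and could in principle be arbitrarily small. I would therefore decompose on the ``good event'' $\mathcal G=\{|X_n-c|<|c|/2\}$, on which the reverse triangle inequality gives $|X_n|>|c|/2$ and hence $|X_n^{-1}-c^{-1}|\leq 2|X_n-c|/c^2$.

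Using this decomposition, I would write
\[
P\bigl(|X_n^{-1}-c^{-1}|\geq \epsilon\bigr)\;\leq\; P(\mathcal G^c)+P\bigl(\{|X_n^{-1}-c^{-1}|\geq \epsilon\}\cap \mathcal G\bigr)\;\leq\; P\bigl(|X_n-c|\geq |c|/2\bigr)+P\bigl(|X_n-c|\geq \epsilon c^2/2\bigr),
\]
where the second inequality uses that on $\mathcal G$, the event $|X_n^{-1}-c^{-1}|\geq \epsilon$ forces $|X_n-c|\geq \epsilon c^2/2$. Each of the two resulting probabilities is then bounded directly by the hypothesis:
\[
P\bigl(|X_n-c|\geq |c|/2\bigr)\leq e^{-\kappa n c^2/4},\qquad P\bigl(|X_n-c|\geq \epsilon c^2/2\bigr)\leq e^{-\kappa n \epsilon^2 c^4/4}.
\]

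The last step is to massage these two exponents into the single uniform bound stated in the lemma. Using $0<\epsilon<1$ and the elementary inequality $\epsilon^2\min\{c^2,1\}\leq 1$, the first exponent satisfies $c^2\geq \epsilon^2 c^2\min\{c^2,1\}$; and since trivially $c^2\geq \min\{c^2,1\}$, the second exponent satisfies $\epsilon^2 c^4\geq \epsilon^2 c^2\min\{c^2,1\}$. Both terms are therefore bounded by $\exp\{-n\kappa \epsilon^2 c^2\min\{c^2,1\}/4\}$, and summing them gives the factor of $2$ in the claim. The only subtlety I anticipate is bookkeeping the case split $|c|\lessgtr 1$ so that the $\min\{c^2,1\}$ emerges naturally rather than as an afterthought; once the two exponents above are in hand, this is purely arithmetic.
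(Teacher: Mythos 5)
Your argument is correct and is the standard proof of this fact (the paper itself simply cites rush2018finite for it rather than reproducing a proof). The decomposition on the event $\{|X_n-c|<|c|/2\}$, the resulting two applications of the hypothesis at thresholds $|c|/2$ and $\epsilon c^2/2$, and the final comparison of the two exponents against $\epsilon^2 c^2\min\{c^2,1\}/4$ all check out, and — contrary to your worry at the end — no explicit case split on $|c|\lessgtr 1$ is needed, since the two inequalities $c^2\geq\epsilon^2 c^2\min\{c^2,1\}$ and $\epsilon^2c^4\geq\epsilon^2c^2\min\{c^2,1\}$ hold uniformly. One small point worth being explicit about: you invoke the hypothesis at deviation levels $|c|/2$ and $\epsilon c^2/2$, which presumes the hypothesis bound holds at those thresholds (not merely at the $\epsilon$ appearing in the conclusion); this is the intended reading throughout the paper's Appendix and is how the lemma is actually applied, but stating it removes any ambiguity.
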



\begin{applem}
\label{lem:normalconc}
For a standard Gaussian random variable $Z$ and  $\e > 0$,
$P( \abs{Z} \geq \e ) \leq 2e^{-\frac{1}{2}\e^2}$.
\end{applem}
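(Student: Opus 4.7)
The result is the standard sub-Gaussian tail bound for a standard normal random variable, so the plan is to give the textbook Chernoff argument. The key observation is that the moment generating function of $Z \sim \mathcal{N}(0,1)$ is $\mathbb{E}[e^{tZ}] = e^{t^2/2}$, which is finite for every $t \in \mathbb{R}$. This MGF identity does all the work; the only thing to do is choose the optimal exponential tilt.

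Concretely, for any $t > 0$, Markov's inequality applied to the random variable $e^{tZ}$ gives
\[
P(Z \geq \e) = P(e^{tZ} \geq e^{t\e}) \leq e^{-t\e}\,\mathbb{E}[e^{tZ}] = e^{-t\e + t^2/2}.
\]
Optimizing the right-hand side over $t > 0$ leads to $t = \e$, so that $P(Z \geq \e) \leq e^{-\e^2/2}$. By the symmetry $-Z \stackrel{d}{=} Z$, the same bound holds for $P(Z \leq -\e)$, and a union bound produces the factor of $2$:
\[
P(|Z| \geq \e) \leq P(Z \geq \e) + P(-Z \geq \e) \leq 2e^{-\e^2/2},
\]
which is exactly the claim.

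There is no real obstacle here; the only thing one has to verify (which is a one-line Gaussian integral completion-of-the-square computation) is the MGF identity $\mathbb{E}[e^{tZ}] = e^{t^2/2}$. An alternative route, if one wanted to avoid even naming the MGF, would be to estimate $\int_\e^\infty e^{-x^2/2}\,dx$ directly by the change of variable $x = \e + u$ and bounding $e^{-(\e+u)^2/2} \leq e^{-\e^2/2} e^{-\e u}$, but the Chernoff version above is cleaner and gives the stated constants immediately.
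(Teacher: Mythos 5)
Your proof is correct and is the standard Chernoff-bound argument for the Gaussian tail. The paper states this lemma without proof (it is a textbook fact collected in an appendix for reference), so there is nothing to compare against; your argument is exactly the one a reader would be expected to supply.
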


\begin{applem}
\label{lem:max_abs_normals}
Let $Z_1, Z_2, \ldots, Z_N$ and $\tilde{Z}_1, \tilde{Z}_2, \ldots, \tilde{Z}_N$  be i.i.d.\ standard Gaussian random variables and  $0 \leq \e \leq 1$ and $\sigma_1, \sigma_2, \ldots, \sigma_N$ be positive constants.  Let $\sigma_{\max} = \max(\sigma_1, \sigma_2, \ldots, \sigma_N)$.  Then the following concentration results hold for $\e \in (0,1]$.
\begin{align}
&P\Big(\Big \lvert \frac{1}{N}\sum_{i=1}^N \sigma_i (Z_i^2 -  1) \Big \lvert \geq  \e \Big) \leq 2\exp\Big\{\frac{-N \e^2}{4  \max\{\sum_{i=1}^N 2\sigma_i^2/N, \, \sigma_{\max}, \, 1 \}}\Big\},  \label{res2} \\
& P\Big(\Big \lvert \frac{1}{N}\sum_{i=1}^N \sigma_i Z_i \tilde{Z}_i \Big \lvert \geq \e \Big) \leq 2\exp\Big\{\frac{-N \e^2}{4  \max\{\sum_{i=1}^N \sigma_i^2/N, \sigma_{\max}, \, 1 \}}\Big\},  \label{res3} \\
&P\Big(\frac{1}{L}\sum_{\ell = 1}^L \sigma_{\ell} \max_{j \in sec(\ell)}  Z_{j}^2 \geq  3\sigma_{\max} \log M + \e \Big) \leq \exp\Big\{ \frac{-L }{5 } \Big(2\e + \log \frac{M}{70} \Big) \Big\}. \label{res1} 
\end{align} 
\end{applem}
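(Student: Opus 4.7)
The plan is to prove each of the three inequalities via a standard Chernoff / moment-generating function (MGF) argument, with the third part requiring an extra union bound to handle the maximum over each section. All three statements are essentially Bernstein-type concentration inequalities for sub-exponential random variables, and no new probabilistic machinery beyond the basic Chernoff bound and elementary MGF computations is needed.

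For \eqref{res2}, I would use the classical identity $\expec[e^{t(Z_i^2 - 1)}] = e^{-t}(1-2t)^{-1/2}$, valid for $t < 1/2$, together with the elementary inequality $-\log(1-x) \leq x + x^2$ on $|x|\leq 1/2$, to obtain $\log \expec[e^{t\sigma_i(Z_i^2-1)}] \leq 2t^2\sigma_i^2$ whenever $|t\sigma_i|\leq 1/4$. Independence then gives the joint MGF bound $\exp(2t^2\sum_i\sigma_i^2)$ in this range, and optimizing the Chernoff bound over $t$ in two regimes---the sub-Gaussian regime $t = N\e/(4\sum_i\sigma_i^2)$ when this value satisfies $t\sigma_{\max}\leq 1/4$, and the sub-exponential tail regime $t = 1/(4\sigma_{\max})$ otherwise---produces precisely the $\max\{2\sum_i\sigma_i^2/N, \sigma_{\max}, 1\}$ term in the denominator. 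The two-sided bound then follows by applying the same argument to the negation. The argument for \eqref{res3} is structurally identical: conditioning on $\tilde Z_i$ and integrating twice yields $\expec[e^{t\sigma_i Z_i \tilde Z_i}] = (1-t^2\sigma_i^2)^{-1/2}$ for $|t\sigma_i|<1$, and the bound $-\tfrac12\log(1-t^2\sigma_i^2)\leq t^2\sigma_i^2$ for $|t\sigma_i|\leq 1/2$ puts us in the same sub-exponential framework, with the same two-regime Chernoff optimization delivering the claim.

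For \eqref{res1}, let $V_\ell := \sigma_\ell \max_{j\in sec(\ell)} Z_j^2$ and note that the $V_\ell$'s are mutually independent because the sections partition the index set. A union bound on the max combined with the Gaussian-square MGF gives
\[
\expec[e^{tV_\ell}]\ =\ \expec\Big[\max_{j\in sec(\ell)} e^{t\sigma_\ell Z_j^2}\Big]\ \leq\ \sum_{j\in sec(\ell)}\expec[e^{t\sigma_\ell Z_j^2}]\ =\ M(1-2t\sigma_\ell)^{-1/2},
\]
valid for $t\sigma_\ell<1/2$; each log-MGF therefore picks up a $\log M$ contribution, totalling $L\log M$ across all sections. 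Choosing $t$ proportional to $1/\sigma_{\max}$ (e.g.\ $t = 2/(5\sigma_{\max})$) so that $2t\sigma_\ell$ is uniformly bounded away from $1$, and applying Chernoff to the threshold $s = L(3\sigma_{\max}\log M + \e)$, the quantity $ts$ contains a $\tfrac{6}{5}L\log M$ term that more than cancels the $L\log M$ coming from the MGFs, leaving the claimed exponential decay. The main (entirely minor) obstacle is calibrating the universal constants so the exponent takes the precise form $-(L/5)(2\e + \log(M/70))$; this is purely a matter of bookkeeping numerical constants in the choice of $t$ and the slack in the Taylor bound on $-\log(1-x)$, and does not require any new idea beyond the three-line Chernoff calculation.
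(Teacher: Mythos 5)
Your treatment of \eqref{res2} and \eqref{res3} is essentially identical to the paper's: both use the chi-squared (respectively product-Gaussian) MGF, an elementary bound on $-\log(1-u)$ for $u$ away from $1$ to get $\log\expec[e^{t\sigma_i(Z_i^2-1)}]\lesssim t^2\sigma_i^2$ in a restricted range, and then the two-regime Chernoff optimization; the paper packages the last step as the sub-exponential tail bound (Wainwright, Prop.~2.9) rather than writing out the optimization, but this is the same argument. The paper uses $-\log(1-u)\leq u+u^2/(2(1-u))$ where you use $-\log(1-x)\leq x+x^2$ on $|x|\leq 1/2$; these give the same sub-exponential parameters up to constants, and both land within the slack of the stated bounds.

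For \eqref{res1} the paper gives no proof and simply cites \cite[Lemma 16]{RushV19}, so there is no in-paper argument to compare against, but your sketch is the natural route. However, the constant calibration is more delicate than "purely bookkeeping." With $t = 2/(5\sigma_{\max})$, the $-ts$ term contributes $-\tfrac{6}{5}L\log M - \tfrac{2\e L}{5\sigma_{\max}}$, while the per-section MGF factors contribute $L\log M + \tfrac{L}{2}\log 5$. The net per-section exponent is
\[
-\tfrac{1}{5}\log M - \tfrac{2\e}{5\sigma_{\max}} + \tfrac{1}{2}\log 5,
\]
and matching the target $-\tfrac15\log M - \tfrac{2\e}{5} + \tfrac15\log 70$ requires
\[
\tfrac{2\e}{5}\Big(1 - \tfrac{1}{\sigma_{\max}}\Big) \leq \tfrac15\log 70 - \tfrac12\log 5 \approx 0.045.
\]
This holds trivially when $\sigma_{\max}\leq 1$, but fails when $\sigma_{\max}$ exceeds roughly $1.13$ and $\e$ is near $1$, since the left side can approach $2/5$. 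The lemma as stated allows arbitrary positive $\sigma_\ell$, so to close the gap you must either increase $t$ toward $1/(2\sigma_{\max})$ (spending the MGF slack on a larger $-3t\sigma_{\max}\log M$ surplus, which grows with $\log M$) and require $M$ to exceed a small absolute constant, or optimize $t$ exactly; your fixed choice $t = 2/(5\sigma_{\max})$ does not suffice on its own. This is a genuine, if repairable, gap in the third part of the proposal.
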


\begin{proof}
Recall that a random variable $X$ is sub-exponential with non-negative parameters $(\nu, b)$ if
\be
\expec[\exp(\lambda(X - \expec X))] \leq \exp(\nu^2 \lambda^2/2), \quad \text{ for all } \abs{\lambda} < {1}/{b}.
\ee
Furthermore, if $X$ is sub-exponential with  parameters $(\nu, b)$, then \cite[Proposition 2.9]{wainwright2019high}
\be
P(\abs{X- \expec X} \geq t) \leq 
\begin{cases}
2\, e^{- \frac{t^2}{2 \nu^2}} \quad & \text{ for } 0 < t \leq \frac{\nu^2}{b}, \\ 
2 \, e^{-\frac{t}{b}} \quad & \text{ for } t > \frac{\nu^2}{b}.
\end{cases}
\label{eq:exp_tail_bound}
\ee
Taking $X = \sum_{i=1}^N \sigma_i Z_i^2$, we will show that $X$ is sub-exponential with 
$
\nu^2 = 4 \sum_{i=1}^N \sigma_i^2$ and $b = \max\{ 4 \sigma_{\max}, \, 1 \},$
from which it follows that (plugging in $t = N \e$ in \eqref{eq:exp_tail_bound})
\ben
P\Big(\Big \lvert \frac{1}{N} \sum_{i=1}^N \sigma_i (Z_i^2- 1) \Big \lvert \geq  \e \Big) \leq 
\begin{cases}
2\, \exp\Big\{- \frac{N \e^2}{ \frac{8}{N}\sum_{i=1}^N \sigma_i^2}\Big\} \quad & \text{ for } 0 <  \e \leq \frac{\frac{4}{N} \sum_{i=1}^N \sigma_i^2}{\max\{ 4 \sigma_{\max}, \, 1 \}}, \\ 
2 \,  \exp\Big\{-\frac{N \e}{\max\{ 4 \sigma_{\max}, \, 1 \}}\Big\} \quad & \text{ for }  \e > \frac{\frac{4}{N} \sum_{i=1}^N \sigma_i^2}{\max\{ 4 \sigma_{\max}, \, 1 \}}.
\end{cases}
\een
To finish the proof of \eqref{res2}, we now show that $X = \sum_{i=1}^N \sigma_i Z_i^2$ has sub-exponential parameters given above.  Using the moment generating function of a chi-squared random variable, we have
\ben
\expec[\exp(\lambda X)] = \prod_{i=1}^n \frac{1}{\sqrt{1- 2\lambda \sigma_i}}, \quad \text{ for } \lambda< \frac{1}{2 \sigma_{\max}}.
\een
Therefore, for  $\lambda< \min \{ \frac{1}{4 \sigma_{\max}}, \, 1 \}$, we show the desired sub-exponential parameters as follows:
\begin{align}
\expec[\exp(\lambda(X - \expec X))] 
&  = \exp\Big( - \frac{1}{2} \sum_{i=1}^N \ln(1 - 2 \lambda \sigma_i) \, - \, 
\lambda \sum_{i=1}^N \sigma_i  \Big)  \nonumber \\
&\overset{(a)}{ \leq} \exp\Big(  \sum_{i=1}^N \Big[ \lambda \sigma_i + \frac{(\lambda \sigma_i)^2}{1 - 2 \lambda \sigma_i} \Big]
- \lambda \sum_{i=1}^N \sigma_i     \Big) 
\overset{(b)}{ \leq} \exp\Big(   {2 \lambda^2} \sum_{\i=1}^N \sigma_i^2 \Big).
\label{eq:MGF_bound2}
\end{align}
In step $(a)$, we use that $-\log(1-u) \leq u + \frac{u^2}{2(1-u)}$ for $u \in [0,1)$ and step $(b)$ holds since
for $\lambda<  \frac{1}{4 \sigma_{\max}}$, we have $1 - 2 \lambda \sigma_i \geq 1 - \sigma_i/(2 \sigma_{\max}) \geq 1/2.$

We prove \eqref{res3} similarly.  For $X = \sum_{i=1}^N \sigma_i Z_i \tilde{Z}_i$,  the moment generating function is
\be
\expec[\exp(\lambda X)] = \prod_{i=1}^n (1- \lambda^2 \sigma_i^2)^{-1/2}, \quad \text{ for } \lambda^2 \sigma_i^2 < 1.
\ee
Using this and steps similar to \eqref{eq:MGF_bound2}, we can show that $X$ is sub-exponential with parameters 
$\nu^2 = 2 \sum_{i=1}^N \sigma_i^2$ and $b =  \max\{ \sigma_{\max} \sqrt{{3}/{2}} , \, 1\}.$
 Then, using $t = N \e$ in  \eqref{eq:exp_tail_bound}, we obtain
\ben
P\Big(\Big \lvert \frac{1}{N} \sum_{i=1}^N \sigma_i Z_i \tilde{Z}_i \Big \lvert \geq  \e \Big) \leq 
\begin{cases}
2\, \exp\Big\{- \frac{N \e^2}{\frac{4}{N} \sum_{i=1}^N \sigma_i^2} \Big\} \quad & \text{ for } 0 < \e \leq \frac{\frac{2}{N} \sum_{i=1}^N \sigma_i^2}{  \max\{ \sigma_{\max} \sqrt{{3}/{2}} , \, 1\}.}, \\ 
2 \,  \exp\Big\{-\frac{N \e}{  \max\{ \sigma_{\max} \sqrt{{3}/{2}} , \, 1\}. } \Big\} \quad & \text{ for } \e > \frac{\frac{2}{N}  \sum_{i=1}^N \sigma_i^2}{  \max\{ \sigma_{\max} \sqrt{{3}/{2}} , \, 1\}. }.
\end{cases}
\een
This proves \eqref{res3}. The inequality \eqref{res1} is shown in \cite[Lemma 16]{RushV19}. 
\end{proof}


\section{Other useful lemmas}  \label{app:useful_lemmas}

\begin{applem} \cite[Lemma 8]{bayati2011}
Let $\mathbf{v}_1, \ldots, \mathbf{v}_t$ be a sequence of vectors in $\mathbb{R}^n$ such that  for $i \in [t]$,  $\frac{1}{n} \|  \mathbf{v}_i - \proj^{\parallel}_{i-1}(\mathbf{v}_i) \|^2 \geq c$, where $c$ is a positive constant that does not depend on $n$, and $\proj^{\parallel}_{i-1}$ is the orthogonal projection onto the span of $\mathbf{v}_1, \ldots, \mathbf{v}_{i-1}$. Then the matrix $\mathbf{C} \in \mathbb{R}^{t \times t}$ with $C_{ij} = \mathbf{v}^*_i \mathbf{v}_j / n$ has minimum eigenvalue $\lambda_{\min} \geq c'_t$, where $c'_t$ is a  positive constant (not depending on $n$).
\label{fact:eig_proj}
\end{applem}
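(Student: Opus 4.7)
The plan is to reduce the minimum-eigenvalue bound to a determinant lower bound via the QR decomposition, and then combine it with a trace upper bound. Form the matrix $V = [\mathbf{v}_1 \mid \cdots \mid \mathbf{v}_t] \in \mathbb{R}^{n\times t}$ and take its QR factorization $V = QR$, with $Q \in \mathbb{R}^{n\times t}$ having orthonormal columns and $R \in \mathbb{R}^{t\times t}$ upper triangular. By the Gram--Schmidt construction underlying QR, the diagonal entries satisfy $R_{ii} = \|\mathbf{v}_i - \proj^{\parallel}_{i-1}(\mathbf{v}_i)\|$, so the hypothesis gives $R_{ii}^2 \geq nc$. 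Since $\mathbf{C} = V^*V/n = R^*R/n$, this yields the determinant estimate $\det(\mathbf{C}) = n^{-t}\prod_{i=1}^t R_{ii}^2 \geq c^t$, a positive constant independent of $n$.

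For any positive semidefinite $t\times t$ matrix with eigenvalues $\lambda_1 \geq \cdots \geq \lambda_t$, the inequality $\det(\mathbf{C}) \leq \lambda_t \lambda_1^{t-1}$ together with $\lambda_{\max}(\mathbf{C}) \leq \operatorname{tr}(\mathbf{C})$ gives $\lambda_{\min}(\mathbf{C}) \geq \det(\mathbf{C})/\operatorname{tr}(\mathbf{C})^{t-1} \geq c^t / \bigl(\sum_{i=1}^t \|\mathbf{v}_i\|^2/n\bigr)^{t-1}$. This produces the claimed constant $c'_t$ as soon as there is an $n$-independent upper bound $\|\mathbf{v}_i\|^2/n \leq K$ for each $i$.

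Supplying that upper bound will be the main obstacle, because the stated hypothesis controls only the perpendicular components: an adversarial $\mathbf{v}_i$ with an arbitrarily large piece in $\mathrm{span}(\mathbf{v}_1,\ldots,\mathbf{v}_{i-1})$ would still satisfy the assumption while making $R$ arbitrarily ill-conditioned, driving $\lambda_{\min}(\mathbf{C})$ to zero. In every invocation within the induction of Lemma \ref{lem:main_lem}, however, the vectors are blocks of $\Qadjmat_{t+2}^{\sfr}$ or $\Madjmat_{t+1}^{\sfc}$ whose normalized squared norms have already been shown in the earlier concentration results $\mathcal{B}_{t}(e)$ and $\mathcal{H}_{t+1}(c)$ to concentrate on $\Theta(1)$ state-evolution quantities; those results supply the required $K$ with high probability, and the display above then yields $c'_t \geq c^t/(tK)^{t-1}$.

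An alternative self-contained route is an induction on $t$ using the Schur complement: partition $\mathbf{C}$ into its leading $(t-1)\times(t-1)$ block $\mathbf{C}^{(t-1)}$, a column $b$ with $b_i = \mathbf{v}_i^*\mathbf{v}_t/n$, and the Schur complement $S = \|\mathbf{v}_t - \proj^{\parallel}_{t-1}(\mathbf{v}_t)\|^2/n \geq c$, then apply the block inverse formula to bound $\|\mathbf{C}^{-1}\|$ in terms of the inductive bound $\|(\mathbf{C}^{(t-1)})^{-1}\|$, $S^{-1}$, and $\|b\|$. Controlling $\|b\|$ via Cauchy--Schwarz still requires the same companion bound on $\|\mathbf{v}_i\|^2/n$, so this variant does not eliminate the implicit hypothesis but makes its recursive use transparent and yields an explicit expression for $c'_t$ at each layer.
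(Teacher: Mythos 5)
The paper does not prove this statement — it is cited directly from Bayati and Montanari — so there is no in-paper argument to compare against. Your QR/determinant approach is clean and correct as far as it goes: writing $V=QR$ gives $\det(\mathbf{C}) = n^{-t}\prod_i R_{ii}^2 \geq c^t$, and then $\lambda_{\min}(\mathbf{C}) \geq \det(\mathbf{C})/\operatorname{tr}(\mathbf{C})^{t-1}$ transfers this to an eigenvalue bound.

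More importantly, your observation that the statement is false without a companion upper bound on $\|\mathbf{v}_i\|^2/n$ is genuinely correct, and your implicit counterexample is sound: with $\mathbf{v}_1$ fixed and $\mathbf{v}_2 = M\mathbf{v}_1 + \mathbf{w}$ for $\mathbf{w}\perp\mathbf{v}_1$, $\|\mathbf{w}\|^2 = nc$, one finds $\det(\mathbf{C}) = c^2$ but $\operatorname{tr}(\mathbf{C}) = (M^2+2)c$, so $\lambda_{\min}(\mathbf{C}) \to 0$ as $M\to\infty$ even though the hypothesis holds with the same $c$. The original Bayati--Montanari formulation carries (or is applied under) a matching normalized-norm bound, and the paper's transcription here silently drops it; the constant $c_t'$ necessarily depends on $\max_i \|\mathbf{v}_i\|^2/n$ as well as on $c$. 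You are also right that in every invocation inside the induction of Lemma \ref{lem:main_lem} (to establish invertibility of $\innerM^{\sfc}_{t+1}$ and $\innerQ^{\sfr}_{t+2}$) the relevant vectors are $\madj^{s,\sfc}$ and $\qadj^{s,\sfr}$, whose normalized squared norms are shown by $\mathcal{B}_t(e)$ and $\mathcal{H}_t(c)$ to concentrate on $\Theta(1)$ state-evolution quantities, so the missing hypothesis is always available with high probability where the lemma is actually used. Your Schur-complement alternative is also valid and would make the recursion explicit, but as you note it still needs the same norm bound via Cauchy--Schwarz on the off-diagonal column, so it does not remove the gap in the statement — it just relocates it.
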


\begin{applem}
For any scalars $a_1, ..., a_t$ and positive integer $m$, we have  $(\abs{a _1} + \ldots + \abs{a_t} )^m \leq t^{m-1} \sum_{i=1}^t \abs{a_i}^m$.
Consequently, for any vectors $\mathbf{u}_1, \ldots, \mathbf{u}_t \in \mathbb{R}^N$, $\norm{\sum_{k=1}^t \mathbf{u}_k}^2 \leq t \sum_{k=1}^t \norm{\mathbf{u}_k}^2$.
\label{lem:squaredsums}
\end{applem}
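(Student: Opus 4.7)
The plan is to prove the scalar inequality by a direct application of Jensen's inequality (equivalently, convexity of $x \mapsto x^m$ on $[0, \infty)$), and then derive the vector consequence from the triangle inequality combined with the $m=2$ case of the scalar inequality.

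For the scalar part, I would write
\[
\Big( \sum_{i=1}^t |a_i| \Big)^m = t^m \Big( \frac{1}{t} \sum_{i=1}^t |a_i| \Big)^m \leq t^m \cdot \frac{1}{t} \sum_{i=1}^t |a_i|^m = t^{m-1} \sum_{i=1}^t |a_i|^m,
\]
where the inequality is Jensen's inequality applied to the convex function $\varphi(x) = x^m$ on $[0, \infty)$ (valid since $m \geq 1$ is a positive integer) with the uniform probability measure on $\{|a_1|, \ldots, |a_t|\}$. The case $m=1$ is trivial equality.

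For the vector consequence, I would apply the triangle inequality in $\reals^N$ to obtain $\|\sum_{k=1}^t \mathbf{u}_k\| \leq \sum_{k=1}^t \|\mathbf{u}_k\|$, square both sides, and then invoke the scalar inequality with $m=2$ and $a_i = \|\mathbf{u}_i\|$ to conclude
\[
\Big\| \sum_{k=1}^t \mathbf{u}_k \Big\|^2 \leq \Big( \sum_{k=1}^t \|\mathbf{u}_k\| \Big)^2 \leq t \sum_{k=1}^t \|\mathbf{u}_k\|^2.
\]
Since both steps are standard, there is no substantive obstacle; the only care needed is to note that $\varphi(x) = x^m$ is convex on $[0,\infty)$ for every positive integer $m$, so Jensen's inequality applies in the required direction.
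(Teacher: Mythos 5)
Your proof is correct, and the paper itself omits a proof of this standard fact, so there is nothing to compare against: the Jensen/convexity argument for the scalar inequality and the triangle-inequality-plus-$m=2$ step for the vector consequence are exactly the expected route.
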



%
\begin{applem}[Stein's lemma]
For zero-mean jointly Gaussian random variables $Z_1, Z_2$, and any function $f:\mathbb{R} \to \mathbb{R}$ for which $\expec[Z_1 f(Z_2)]$ and $\expec[f'(Z_2)]$  both exist, we have $\expec[Z_1 f(Z_2)] = \expec[Z_1Z_2] \expec[f'(Z_2)]$.
\label{lem:stein}
\end{applem}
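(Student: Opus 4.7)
The plan is to prove Stein's lemma by first reducing to the scalar Gaussian case via a linear decomposition, and then applying integration by parts against the Gaussian density.

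First I would handle the degenerate case $\mathbb{E}[Z_2^2] = 0$ separately: there $Z_2 = 0$ almost surely, so $f(Z_2) = f(0)$ is constant, giving $\mathbb{E}[Z_1 f(Z_2)] = f(0) \mathbb{E}[Z_1] = 0$, while $\mathbb{E}[Z_1 Z_2] = 0$ as well, so both sides vanish trivially. Thereafter I may assume $\sigma^2 := \mathbb{E}[Z_2^2] > 0$.

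Next, using the joint Gaussianity of $(Z_1, Z_2)$ with zero means, I would write the orthogonal decomposition
\begin{equation*}
Z_1 = \alpha Z_2 + W, \qquad \alpha = \frac{\mathbb{E}[Z_1 Z_2]}{\sigma^2},
\end{equation*}
where $W$ is zero-mean Gaussian, independent of $Z_2$. Independence and $\mathbb{E}[W] = 0$ give
\begin{equation*}
\mathbb{E}[Z_1 f(Z_2)] = \alpha\, \mathbb{E}[Z_2 f(Z_2)] + \mathbb{E}[W]\, \mathbb{E}[f(Z_2)] = \alpha\, \mathbb{E}[Z_2 f(Z_2)],
\end{equation*}
so the problem reduces to showing $\mathbb{E}[Z_2 f(Z_2)] = \sigma^2 \mathbb{E}[f'(Z_2)]$.

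The core step is then the one-dimensional identity. Let $\varphi(z) = (2\pi \sigma^2)^{-1/2} \exp(-z^2/(2\sigma^2))$ denote the density of $Z_2$, and note the key relation $z\,\varphi(z) = -\sigma^2 \varphi'(z)$. Integration by parts gives
\begin{equation*}
\mathbb{E}[Z_2 f(Z_2)] = \int_{-\infty}^{\infty} z\, f(z)\, \varphi(z)\, dz = -\sigma^2 \int_{-\infty}^{\infty} f(z)\, \varphi'(z)\, dz = \sigma^2 \int_{-\infty}^{\infty} f'(z)\, \varphi(z)\, dz = \sigma^2 \mathbb{E}[f'(Z_2)],
\end{equation*}
provided the boundary terms $[f(z)\varphi(z)]_{-\infty}^{\infty}$ vanish. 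Combining with the previous display yields $\mathbb{E}[Z_1 f(Z_2)] = \alpha \sigma^2 \mathbb{E}[f'(Z_2)] = \mathbb{E}[Z_1 Z_2]\, \mathbb{E}[f'(Z_2)]$.

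The one real obstacle is justifying the vanishing of the boundary terms in the integration by parts under only the stated hypothesis that $\mathbb{E}[Z_1 f(Z_2)]$ and $\mathbb{E}[f'(Z_2)]$ exist. The standard remedy, which I would invoke, is that existence of $\mathbb{E}[Z_2 f(Z_2)]$ together with existence of $\mathbb{E}[f'(Z_2)]$ (both as absolutely convergent integrals against the Gaussian density, which decays super-polynomially) forces $f(z)\varphi(z) \to 0$ along a subsequence $|z| \to \infty$; combined with monotonicity of the tail integrals one obtains the vanishing of the boundary terms outright. This is a routine but nontrivial measure-theoretic step and is where almost all the technical care in the proof lives.
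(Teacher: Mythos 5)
The paper states this lemma without proof, treating it as a classical fact, so there is no in-paper argument to compare against; the relevant question is simply whether your proof is correct. It is, and it is the canonical one: the Gaussian regression decomposition $Z_1 = \alpha Z_2 + W$ with $W$ independent of $Z_2$, followed by integration by parts against the Gaussian density to establish the scalar identity $\expec[Z_2 f(Z_2)] = \sigma^2 \expec[f'(Z_2)]$.

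The one place the argument is thin — and you correctly flag it as the technical crux — is the vanishing of the boundary terms $[f(z)\varphi(z)]_{-\infty}^{\infty}$. Your subsequence argument gives $f(z_n)\varphi(z_n) \to 0$ along some $z_n \to \infty$, but "monotonicity of the tail integrals" does not by itself upgrade this to a full limit. The missing observation is that $(f\varphi)' = f'\varphi - (z/\sigma^2)f\varphi$ is integrable under your hypotheses, so $f(z)\varphi(z) = f(0)\varphi(0) + \int_0^z (f\varphi)'(t)\,dt$ converges as $z \to \pm\infty$; combined with the subsequence, the limit must be $0$. Alternatively, one can bypass boundary terms altogether by writing $f(z) = f(0) + \int_0^z f'(t)\,dt$, applying Fubini to $\int z f(z)\varphi(z)\,dz$, and using $\int_t^\infty z\varphi(z)\,dz = \sigma^2\varphi(t)$ for $t \ge 0$ (and the mirrored identity for $t \le 0$) to land directly on $\sigma^2\expec[f'(Z_2)]$. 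Either repair completes a correct proof of a result the paper itself leaves as a citation-level fact.
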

%


\begin{applem} \label{lem:expect_etar_etas}
 Let $\Z_{\tilde{s}, \sfc}, \Z_{s, \sfc} \in \reals^{\Mc}$ be random vectors such that the pairs
$ (Z_{\tilde{s}, i}, Z_{s,i}), \ i \in [M_C]$, are i.i.d.\  bivariate Gaussian with covariance $\mathbb{E}[Z_{\tilde{s},i} Z_{s,i}] = (\tau^s_{\sfc}/\tau^{\tilde{s}}_{\sfc})$.   Then for $0 \leq \tilde{s} \leq s \leq T$,
\begin{align}
&\frac{\Lc}{L}\expec\{ [\eta^{\tilde{s}}_{\sfc}(\betavec_{0, \sfc} - \sqrt{\tau^{\tilde{s}}_{\sfc}} \Z_{\tilde{s}, \sfc})]^* [\eta^{s}_{\sfc}(\betavec_{0, \sfc} - \sqrt{\tau^s_{\sfc}} \Z_{s, \sfc}) ] \}  =  (1 - \psi^{\tilde{s}+1}_{\sfc}), \label{eq:etar_etas} \\ 
&\frac{\Lc}{L} \mathbb{E}\{[\eta^{\tilde{s}}_{\sfc}(\betavec_{0, \sfc} - \sqrt{\tau^{\tilde{s}}_{\sfc}} \Z_{\tilde{s}, \sfc}) - \betavec_{0, \sfc}]^*  [\eta^{s}_{\sfc}(\betavec_{0, \sfc} - \sqrt{\tau^s_{\sfc}} \Z_{s, \sfc}) - \betavec_{0, \sfc}] \} = \psi^{s+1}_{\sfc}. \label{eq:full}
\end{align}
\end{applem}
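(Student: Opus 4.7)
The plan is to exploit the fact that $\eta^t_\sfc$ is, by construction (see \eqref{eq:MMSE_etaj}), the posterior-mean (MMSE) estimator of $\betavec_{0,\sfc}$ from the observation $Y_t := \betavec_{0,\sfc}+\sqrt{\tau^t_\sfc}\,\Z_{t,\sfc}$, and that within any section its entries form a probability distribution summing to $1$. Because $\tilde{s}\le s$ and state evolution together with $\phi^t_\sfr \geq \sigma^2$ yields $\tau^{\tilde{s}}_\sfc \geq \tau^s_\sfc$, the prescribed joint law of $(\Z_{\tilde{s}},\Z_s)$ (read with the natural normalization consistent with Lemma~\ref{lem:ideal_cond_dist}, i.e.\ correlation $\sqrt{\tau^s_\sfc/\tau^{\tilde{s}}_\sfc}$) corresponds to the coupling
\[
\sqrt{\tau^{\tilde{s}}_\sfc}\,\Z_{\tilde{s},\sfc} \;=\; \sqrt{\tau^s_\sfc}\,\Z_{s,\sfc} + \sqrt{\tau^{\tilde{s}}_\sfc-\tau^s_\sfc}\,\mathbf{V},
\]
with $\mathbf{V}\sim\normal(\bzero,\iden_{\Mc})$ independent of $\Z_{s,\sfc}$. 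In particular, $Y_{\tilde{s}} = Y_s + \sqrt{\tau^{\tilde{s}}_\sfc-\tau^s_\sfc}\,\mathbf{V}$, so the Markov chain $\betavec_{0,\sfc}\to Y_s\to Y_{\tilde{s}}$ holds.

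From this Markov chain and the tower property,
\[
\eta^{\tilde{s}}_\sfc(Y_{\tilde{s}}) \;=\; \expec[\betavec_{0,\sfc}\mid Y_{\tilde{s}}] \;=\; \expec\!\bigl[\expec[\betavec_{0,\sfc}\mid Y_s]\bigm| Y_{\tilde{s}}\bigr] \;=\; \expec[\eta^s_\sfc(Y_s)\mid Y_{\tilde{s}}].
\]
Since $\eta^{\tilde{s}}_\sfc(Y_{\tilde{s}})$ is $\sigma(Y_{\tilde{s}})$-measurable, pulling it inside a conditional expectation collapses the cross term to a squared norm: $\expec[\eta^{\tilde{s}}_\sfc(Y_{\tilde{s}})^{*}\eta^s_\sfc(Y_s)] = \expec\|\eta^{\tilde{s}}_\sfc(Y_{\tilde{s}})\|^{2}$. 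Next I would identify this squared norm with $(L/\Lc)(1-\psi^{\tilde{s}+1}_\sfc)$. MMSE orthogonality gives $\expec\|\eta^t_\sfc(Y_t)\|^2 = \expec\langle \eta^t_\sfc(Y_t),\betavec_{0,\sfc}\rangle$. On each section $\ell\subset\sfc$, the inner product on the right equals the value of $\eta^t$ at the true (unique non-zero) index; since that index is uniform over the section and the noise is i.i.d.\ Gaussian, by re-labelling one may assume the non-zero is at position $1$, whence the per-section expectation equals the quantity $\mathcal{E}(\tau^t_\sfc)$ defined in \eqref{eq:E_tau}. Summing over the $L/\Lc$ sections in column block $\sfc$ and invoking \eqref{eq:se_psi} yields $\expec\|\eta^t_\sfc(Y_t)\|^2 = (L/\Lc)(1-\psi^{t+1}_\sfc)$, which applied at $t=\tilde{s}$ proves \eqref{eq:etar_etas}.

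For \eqref{eq:full}, I would simply expand
\[
\langle \eta^{\tilde{s}}_\sfc(Y_{\tilde{s}})-\betavec_{0,\sfc},\,\eta^s_\sfc(Y_s)-\betavec_{0,\sfc}\rangle = \langle\eta^{\tilde{s}},\eta^s\rangle - \langle\eta^{\tilde{s}},\betavec_{0,\sfc}\rangle - \langle\betavec_{0,\sfc},\eta^s\rangle + \|\betavec_{0,\sfc}\|^2
\]
and take expectations. The cross term equals $(L/\Lc)(1-\psi^{\tilde{s}+1}_\sfc)$ by \eqref{eq:etar_etas}, each $\expec\langle\eta^t,\betavec_{0,\sfc}\rangle$ equals $(L/\Lc)(1-\psi^{t+1}_\sfc)$ by the same MMSE-orthogonality identity, and $\expec\|\betavec_{0,\sfc}\|^2 = L/\Lc$ since each of the $L/\Lc$ sections contributes exactly $1$. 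The $(1-\psi^{\tilde{s}+1}_\sfc)$ contributions cancel, leaving $(L/\Lc)\psi^{s+1}_\sfc$, which after the $\Lc/L$ normalization is \eqref{eq:full}.

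The whole argument is a two-step manipulation: the tower property to convert a cross-term into a squared norm, followed by MMSE orthogonality to relate squared norms to $\mathcal{E}(\tau)$ and hence to $\psi$. The only delicate point is the coupling in the first step, which requires $(Y_{\tilde{s}},Y_s)$ to form a forward Gaussian channel $\betavec_{0,\sfc}\to Y_s \to Y_{\tilde{s}}$; this is precisely the setting of Lemma~\ref{lem:ideal_cond_dist} with the roles of $s,\tilde{s}$ reversed. No induction, tail bound, or concentration inequality is needed.
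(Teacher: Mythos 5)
Your proposal is correct and follows essentially the same route as the paper: the paper establishes $\expec[\betavec_{0,\sfc}\mid \mathbf{u}^s,\mathbf{u}^{\tilde s}]=\expec[\betavec_{0,\sfc}\mid\mathbf{u}^s]$ (your Markov chain $\betavec_{0,\sfc}\to Y_s\to Y_{\tilde s}$), then uses conditional-expectation orthogonality to collapse the cross term to $\expec\langle\eta^{\tilde s},\betavec_{0,\sfc}\rangle=(L/\Lc)(1-\psi^{\tilde s+1}_\sfc)$ — which is exactly your tower-plus-MMSE-orthogonality argument rearranged, with the paper invoking the $\betavec$–$\eta$ inner-product identity as a cited fact from \cite[Prop.~1]{rush2017} rather than re-deriving it via $\mathcal{E}(\tau)$. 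Your observation that the covariance in the lemma statement should be read as the correlation $\sqrt{\tau^s_\sfc/\tau^{\tilde s}_\sfc}$ (consistent with Lemma~\ref{lem:ideal_cond_dist}, and required for the additive-noise coupling to exist) is a correct and worthwhile clarification.
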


\begin{proof}
We will use the following fact, adapted from \cite[Proposition 1]{rush2017}:
\be
 \expec\Big\{   \betavec_{0, \sfc} ^{*}\eta^{s}_{\sfc}\Big(\betavec_{0, \sfc} - \sqrt{\tau^{s}_{\sfc}} \Z_{s, \sfc}\Big)\Big \} = \frac{L}{\Lc}  (1 - \psi^{s+1}_{\sfc}), \quad \text{ for } 0 \leq s < T.
 \label{eq:beta_etar}
 \ee

 Let $\mathbf{u}^{\tilde{s}} = \betavec_{0, \sfc} - \sqrt{\tau^{\tilde{s}}_{\sfc}} \Z_{\tilde{s}, \sfc}$ and $\mathbf{u}^s = \betavec_{0, \sfc} - \sqrt{\tau^s_{\sfc}} \Z_{s, \sfc}$. Then,  $\eta^{\tilde{s}}_{\sfc}(\betavec_{0, \sfc} - \sqrt{\tau^{\tilde{s}}_{\sfc}} \Z_{\tilde{s}, \sfc}) = \expec[\betavec_{0, \sfc} \mid \mathbf{u}^{\tilde{s}}]$ and $\eta^s_{\sfc}(\betavec_{0, \sfc} - \sqrt{\tau^s_{\sfc}} \Z_{s, \sfc}) = \expec[\betavec_{0, \sfc} \mid \mathbf{u}^s]$, and therefore, for   $\tilde{s} \leq s$,
\begin{align*}
\expec\{ [\eta^{\tilde{s}}_{\sfc}(\betavec_{0, \sfc} - \sqrt{\tau^{\tilde{s}}_{\sfc}} \Z_{\tilde{s}, \sfc})]^* [\eta^s_{\sfc}(\betavec_{0, \sfc} - \sqrt{\tau^s_{\sfc}} \Z_{s, \sfc}) ] \} &  = \expec\{ [\expec[\betavec_{0, \sfc} \mid \mathbf{u}^{\tilde{s}}]]^* \expec[\betavec_{0, \sfc} \mid \mathbf{u}^s] \} \\ 
&  \overset{(a)}{=} \expec\{ [\expec[\betavec_{0, \sfc} \mid \mathbf{u}^{\tilde{s}}]]^* [\expec[\betavec_{0, \sfc} \mid \mathbf{u}^s, \mathbf{u}^{\tilde{s}}] - \betavec_{0, \sfc} + \betavec_{0, \sfc}] \}  \\
&\overset{(b)}{=} \expec\{ [\expec[\betavec_{0, \sfc} \mid \mathbf{u}^{\tilde{s}}]]^* \betavec_{0, \sfc} \} \overset{(c)}{=} \frac{L}{\Lc}  (1 - \psi^{\tilde{s}+1}_{\sfc}). 
\end{align*}
In the above, step $(a)$ holds since $\expec[\betavec_{0, \sfc} \mid \mathbf{u}^s, \mathbf{u}^r]= \expec[\betavec_{0, \sfc} \mid \mathbf{u}^s]$, which can be shown using steps similar to those in \cite[Lemma 22]{RushV19}. Step $(b)$ follows from the orthogonality property of conditional expectation:  $ \expec\{(\expec[ \betavec_{0, \sfc} | \mathbf{u}^s , \mathbf{u}^{\tilde{s}}] - \betavec_{0, \sfc})^* \expec[ \betavec_{0, \sfc} | \mathbf{u}^{\tilde{s}} ] \} =0$ due to the orthogonality principle, and step $(c)$ by \eqref{eq:beta_etar}.  The result \eqref{eq:full} follows from \eqref{eq:etar_etas} and \eqref{eq:beta_etar}, noting that $\norm{\betavec_{0, \sfc}}^2= {L}/{L_C}$. 
\end{proof}

\begin{applem}{}
\label{lem:BC9}
For the function $\eta^t:\mathbb{R}^{ML} \rightarrow \mathbb{R}^{ML}$ defined \eqref{eq:eta_function},  $\mathbf{s}, \Dvec \in \mathbb{R}^{ML}$, and  $sec(\ell) \in \sfc$,
\ben
\sum_{i \in sec(\ell)} \abs{\eta^t_i(\mathbf{s}) - \eta^t_i(\mathbf{s} + \Dvec)}  \leq  (2/\tau^t_{\sfc}) \max_{i \in sec(\ell)} \abs{\Delta_i}. 
\een
\end{applem}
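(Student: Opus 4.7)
My plan is to prove this as a standard Lipschitz bound for the softmax restricted to a single section. Because $\eta^t_i(\mathbf{s})$ depends only on the coordinates $\{s_{j'} : j' \in \mathrm{sec}(\ell)\}$ whenever $i \in \mathrm{sec}(\ell)$, the estimate reduces to a statement about the $M$-dimensional softmax $p_i(\mathbf{u}) := e^{u_i/\tau}/\sum_{j' \in \mathrm{sec}(\ell)} e^{u_{j'}/\tau}$ with $\tau := \tau^t_\sfc$. I will write $\Delta_{\max} := \max_{i \in \mathrm{sec}(\ell)}|\Delta_i|$.

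The first step is to apply the fundamental theorem of calculus along the line segment from $\mathbf{s}$ to $\mathbf{s}+\Dvec$, yielding
\[
\eta^t_i(\mathbf{s}+\Dvec)-\eta^t_i(\mathbf{s}) \;=\; \int_0^1 \sum_{j \in \mathrm{sec}(\ell)} \frac{\partial p_i}{\partial u_j}(\mathbf{s}+u\Dvec)\, \Delta_j \, du .
\]
A direct calculation for the softmax gives $\partial p_i/\partial u_j = \tau^{-1}(p_i\delta_{ij}-p_i p_j)$, and summing absolute values over $i$ (with $j$ fixed) yields
\[
\sum_{i \in \mathrm{sec}(\ell)} \Big|\frac{\partial p_i}{\partial u_j}\Big| \;=\; \tau^{-1}\Big(p_j(1-p_j) + \sum_{i \neq j} p_i p_j\Big) \;=\; \frac{2 p_j (1-p_j)}{\tau}.
\]

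Combining these two facts with the triangle inequality and $|\Delta_j| \leq \Delta_{\max}$, and using that $p$ is a probability vector on the section so that $\sum_j p_j(1-p_j) = 1-\sum_j p_j^2 \leq 1$ uniformly in $u$, I conclude
\[
\sum_{i \in \mathrm{sec}(\ell)} \big|\eta^t_i(\mathbf{s}) - \eta^t_i(\mathbf{s}+\Dvec)\big| \;\leq\; \int_0^1 \frac{2\,\Delta_{\max}}{\tau}\Big(1-\sum_j p_j(\mathbf{s}+u\Dvec)^2\Big)\, du \;\leq\; \frac{2\,\Delta_{\max}}{\tau},
\]
which is exactly the claimed inequality. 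There is no substantive obstacle: the only slightly delicate point is writing out the softmax Jacobian cleanly, and the key quantitative input is that the column sum $\sum_i |\partial p_i/\partial u_j|$ picks up an extra factor of $2$ (from the diagonal and off-diagonal contributions separately) but is then tempered by $p_j(1-p_j)$, which averages to at most $1$ over $j$.
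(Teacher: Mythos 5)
Your proof is correct and takes essentially the same approach as the paper: both reduce the claim to a bound on the softmax Jacobian $\partial \eta^t_i/\partial s_j = \tau^{-1}\eta^t_i(\delta_{ij}-\eta^t_j)$ and then sum absolute values using $\sum_i \eta^t_i = 1$. The only cosmetic differences are that you use the integral form of the fundamental theorem of calculus rather than the mean-value (Taylor remainder) form, and you organize the double sum by summing over $i$ first to get $2p_j(1-p_j)/\tau$ and then over $j$, whereas the paper pushes the absolute value through to separate the diagonal and off-diagonal contributions directly; both yield the same factor of $2$ and the same final bound.
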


\begin{proof}
From the multivariate version of Taylor's theorem, for any $i \in [ML]$ and for some $\kappa \in (0,1),$
\be
 \eta^t_i(\mathbf{s} + \Dvec) =  \eta^t_i(\mathbf{s}) + \Delta^T \nabla \eta^t_i(\mathbf{s} + \kappa \Dvec).
 \label{eq:eta_taylor}
 \ee
For  $i \in sec(\ell)$, as $\eta^t_i$ depends only on the subset of its input also belonging to section $\ell$,  using \eqref{eq:eta_taylor},
\ben
\begin{split}
&\sum_{i \in sec(\ell)} \abs{\eta^t_i(\mathbf{s}) - \eta^t_i(\mathbf{s} + \Dvec)} =  \sum_{i \in sec(\ell) } \Big \lvert \sum_{ j \in sec(\ell)} \Delta_j \frac{\partial}{\partial s_j} \eta^t_i(\mathbf{s}+ \kappa \Dvec) \Big \lvert \\
&\overset{(a)}{\leq} \frac{1}{\tau^t_{\sfc}}  \sum_{i \in sec(\ell) } \Big \lvert \Delta_i \eta^t_i(\mathbf{s} + \kappa \Dvec)  \Big \lvert +  \frac{1}{\tau^t_{\sfc}}  \sum_{i \in sec(\ell) }  \Big \lvert \eta^t_i(\mathbf{s}+ \kappa \Dvec)  \sum_{ j \in sec(\ell)}  \Delta_j  \eta^t_j(\mathbf{s}+ \kappa \Dvec) \Big \lvert    \overset{(b)}{\leq}  \frac{2}{\tau^t_{\sfc}} \max_{i \in \text{sec}(\ell) } \abs{\Delta_i}, 
\end{split}
\een
where inequality $(a)$ uses the fact that for $i,j \in [ML]$,
\[\frac{\partial}{\partial s_j} \eta^t_i(\mathbf{s}) = \frac{\eta^t_i(\mathbf{s})}{\tau^t_{\sfc}}  [ \mathbf{1}\{j = i\} - \eta^t_j(\mathbf{s})]\mathbf{1}\{i, j \in sec(\ell), \ell \in \sfc\}.\] 
Inequality $(b)$ uses the fact that $\sum_{j\in sec(\ell)}  | \eta^t_j(\mathbf{s}+ \kappa \Dvec) | = \sum_{j\in sec(\ell)} \eta^t_j(\mathbf{s}+ \kappa \Dvec) =1$. 
\end{proof}


\begin{applem}
Let $\mc{W}$  be a $d$-dimensional subspace of $\reals^n$ for $d \leq n$ and let $\bZ \sim \normal(0, \iden_n)$ be a standard Gaussian random vector. Let $(\wvec_{1}, ..., \wvec_{d})$ be an orthonormal basis of $\mc{W}$ with $\norm{\wvec_{i}}^2 = 1$ for
$i \in [d]$, and let  $\proj_\mc{W}^{\parallel}$ denote the orthogonal projection operator onto $\mc{W}$.  Then for $\mathbf{D} = [ \wvec_1\mid \ldots \mid \wvec_d]$, we have $ \proj^{\parallel}_\mc{W} \bZ  \overset{d}{=}  \mathbf{D} \tilde{\Z}$ where $\tilde{\Z} \sim \normal(0, \iden_d)$ is independent of 
$\mathbf{D}$. 
\label{lem:gauss_p0}
\end{applem}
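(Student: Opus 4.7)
The plan is to exploit the fact that the columns of $\mathbf{D}$ are orthonormal, so $\mathbf{D}^*\mathbf{D} = \iden_d$ and the projection operator admits the clean representation $\proj^{\parallel}_\mc{W} = \mathbf{D}\mathbf{D}^*$. Given this, I would define $\tilde{\Z} := \mathbf{D}^* \bZ \in \reals^d$ and show directly that (i) $\tilde{\Z} \sim \normal(0, \iden_d)$, and (ii) $\proj^{\parallel}_\mc{W}\bZ = \mathbf{D}\tilde{\Z}$.

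For (i), since $\tilde{\Z}$ is an affine (linear) function of the standard Gaussian vector $\bZ$, it is itself jointly Gaussian with mean $\mathbf{D}^* \expec[\bZ] = \mathbf{0}$ and covariance
\[
\expec[\tilde{\Z}\tilde{\Z}^*] \;=\; \mathbf{D}^* \, \expec[\bZ \bZ^*] \, \mathbf{D} \;=\; \mathbf{D}^* \iden_n \mathbf{D} \;=\; \mathbf{D}^*\mathbf{D} \;=\; \iden_d,
\]
where the final equality uses the orthonormality of $\wvec_1,\ldots,\wvec_d$. Hence $\tilde{\Z} \sim \normal(0, \iden_d)$. For (ii), simply unfold the definitions: $\mathbf{D}\tilde{\Z} = \mathbf{D}\mathbf{D}^*\bZ = \proj^{\parallel}_\mc{W}\bZ$, since $\mathbf{D}\mathbf{D}^*$ is the orthogonal projector onto the column span of $\mathbf{D}$, which is precisely $\mc{W}$.

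The only remaining point is the claimed independence of $\tilde{\Z}$ from $\mathbf{D}$. As stated in the lemma $\mathbf{D}$ is deterministic, so independence is vacuous. However, the application in $\mc{B}_t(a)$ of Section~\ref{subsub:step3} uses this lemma in a conditional sense: the matrix $\mathbf{D}$ is built from the normalized vectors $\{ \madj^{j,\sfc}_{\perp,\sfr} / \| \madj^{j,\sfc}_{\perp} \| \}$, which are measurable with respect to the conditioning sigma-algebra $\mscrs_{t,t}$, while $\Zprimesupc_t$ is independent of this sigma-algebra. Thus the statement should be read as: conditionally on the data determining $\mathbf{D}$, the vector $\tilde{\Z}$ remains $\normal(0,\iden_d)$; equivalently, the \emph{distributional equality} $\proj^{\parallel}_{\mc{W}} \bZ \stackrel{d}{=} \mathbf{D}\tilde{\Z}$ with an $\iden_d$-Gaussian $\tilde{\Z}$ that is independent of $\mathbf{D}$ holds.

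There is no real technical obstacle here; the lemma is a routine consequence of the rotational invariance of the standard Gaussian measure. The only care needed is to state explicitly (or handle via the conditioning hierarchy used throughout Section~\ref{sec:cond_conc_lemmas}) that when this lemma is invoked inside the conditional distribution arguments, the basis vectors generating $\mathbf{D}$ lie in the conditioning sigma-algebra, so that the independence claim is meaningful and correctly applied.
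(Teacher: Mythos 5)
Your proof is correct. The paper actually states Lemma~\ref{lem:gauss_p0} without supplying a proof, so there is nothing to compare against; your argument via $\proj^{\parallel}_{\mc{W}} = \mathbf{D}\mathbf{D}^*$ and $\tilde{\Z} := \mathbf{D}^*\bZ$ is the standard one, and your remark about how the independence claim should be read conditionally (with $\mathbf{D}$ measurable with respect to the conditioning sigma-algebra and $\bZ$ independent of it, as in the application inside $\mc{B}_t(a)$) correctly captures the way the lemma is actually invoked in Section~\ref{subsub:step3}.
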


\begin{applem}[$\mathcal{H} (d)$ concentration]
\label{lem:Hd_convergence}
Let $\Z \sim \mathcal{N}(0, \mathbb{I}_{ML})$ and $\tilde{\Z} \sim \mathcal{N}(0, \mathbb{I}_{ML})$ such that $(Z_i, \tilde{Z}_i)$ are i.i.d.\ bivariate Gaussian, for $1 \leq i \leq ML$.  For $\ell \in [L]$, let $\mathbf{Y}_{\ell} =\Z_{\ell}^*\eta_{\ell}(\betavec_0 - \sqrt{\tau} \tilde{\Z})$. Then, for a universal positive constant, $\kappa$, and $ \lambda_{\sfc} \in \Theta(1)$ for each $\sfc \in [\Lc]$,
\ben
P\Big( \frac{1}{L} \Big \lvert \sum_{\sfc \in [\Lc]}  \frac{\lambda_{\sfc}  W_{\sfr \sfc} }{\sqrt{\log M}}    \sum_{\ell \in \sfc}  ( \mathbf{Y}_{\ell} - \mathbb{E}[\mathbf{Y}_{\ell}])\Big \lvert \geq \e \Big) \leq \exp\{-\kappa_1 L (\omega/\Lr) \e^2 \}.
\een
\end{applem}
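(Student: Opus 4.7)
\textbf{Proof plan for Lemma \ref{lem:Hd_convergence}.}
The key structural observation is that $\mathbf{Y}_\ell = \Z_\ell^* \eta_\ell(\betavec_0 - \sqrt{\tau}\tilde\Z)$ depends only on the entries of $\Z$ and $\tilde\Z$ indexed by $\mathrm{sec}(\ell)$, because $\eta_\ell(\svec)$ is a function of the coordinates of $\svec$ in section $\ell$ only. Hence the family $\{\mathbf{Y}_\ell\}_{\ell\in [L]}$ consists of $L$ independent random variables, and the quantity to be controlled is a weighted sum of $L$ independent centered variables with weights $\lambda_\sfc W_{\sfr\sfc}/\sqrt{\log M}$ that are constant within each column block. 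The plan is to combine a truncation of the $Z$-entries with Hoeffding's inequality (Lemma \ref{lem:hoeff_lem}), exploiting \eqref{eq:Wrc_avgs} to produce the $(\omega/\Lr)$ factor in the exponent.

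First I would show the trivial a.s.\ bound $|\mathbf{Y}_\ell| \le \max_{j \in \mathrm{sec}(\ell)} |Z_j|$, which follows from $\eta_{\ell}(\cdot) \in [0,1]^M$ with entries summing to $1$. Fix a truncation level $T_M := c\sqrt{\log M}$ for a large enough universal $c>0$, and let $\mathcal{A}$ be the event $\{\max_{j\in [ML]} |Z_j| \le T_M\}$. A Gaussian tail bound together with a union bound yields $P(\mathcal{A}^c) \le 2ML \, e^{-T_M^2/2} = 2L M^{1 - c^2/2}$, which is dominated by the target bound once $c$ is chosen sufficiently large. On $\mathcal{A}$ the truncated variables $\widetilde{\mathbf{Y}}_\ell := \mathbf{Y}_\ell \mathbf{1}\{\max_{j \in \mathrm{sec}(\ell)}|Z_j| \le T_M\}$ coincide with $\mathbf{Y}_\ell$ and are bounded by $T_M$ in absolute value, and a direct tail estimate shows $|\mathbb{E}\mathbf{Y}_\ell - \mathbb{E}\widetilde{\mathbf{Y}}_\ell|$ decays faster than any polynomial in $M$, hence can be absorbed into $\varepsilon$ after choosing $c$ large.

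Next I would apply Hoeffding's inequality to the centered weighted sum
\[
S \;:=\; \frac{1}{L}\sum_{\sfc\in[\Lc]}\frac{\lambda_\sfc W_{\sfr\sfc}}{\sqrt{\log M}}\sum_{\ell\in\sfc}\bigl(\widetilde{\mathbf{Y}}_\ell - \mathbb{E}\widetilde{\mathbf{Y}}_\ell\bigr),
\]
whose summands are independent across $\ell$ and bounded by $\frac{|\lambda_\sfc| W_{\sfr\sfc}}{L\sqrt{\log M}}\cdot 2T_M \le \kappa\, W_{\sfr\sfc}/L$ in absolute value. The Hoeffding exponent is
\[
\frac{L^2\,\varepsilon^2}{\sum_{\sfc\in[\Lc]}\sum_{\ell\in\sfc}\kappa^2 W_{\sfr\sfc}^2} \;=\; \frac{L\,\varepsilon^2}{\kappa^2\,(L/\Lc)\sum_{\sfc}W_{\sfr\sfc}^2/L}\cdot L.
\]
Using the identity $\frac{1}{\Lc}\sum_{\sfc}W_{\sfr\sfc}^2 = \Theta(\Lr/\omega)$ from \eqref{eq:Wrc_avgs}, this simplifies to $\Theta\!\left(L(\omega/\Lr)\varepsilon^2\right)$, which is precisely the claimed rate. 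Combining with the bound on $P(\mathcal{A}^c)$ and on the truncation bias finishes the proof.

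The main obstacle is bookkeeping rather than any deep difficulty: the weights $\lambda_\sfc W_{\sfr\sfc}$ vary across blocks but not within a block, so care is needed to pass from Hoeffding's raw bound (which depends on $\sum_\ell w_\ell^2$) to the announced $(\omega/\Lr)$ factor via \eqref{eq:Wrc_avgs}. A secondary subtlety is choosing the truncation constant $c$ so that the tail probability $P(\mathcal{A}^c)$ and the truncation bias are both dominated by $\exp\{-\kappa_1 L(\omega/\Lr)\varepsilon^2\}$ uniformly over the relevant range of $\varepsilon$; this is handled by picking $c$ larger than a constant depending on $\omega/\Lr$ (harmless, since the latter ratio is of order $1/\Lambda$ in our regime and $L \to \infty$ drives the exponent).
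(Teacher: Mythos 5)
The idea of truncating $Z$ at level $T_M = c\sqrt{\log M}$ and applying Hoeffding's inequality is natural, but the step where you dispose of the truncation error is wrong, and this is a genuine gap rather than a gap in bookkeeping. You argue that $P(\mathcal{A}^c) \leq 2LM^{1-c^2/2}$ is ``dominated by the target bound once $c$ is chosen sufficiently large.'' This cannot be right: the claimed bound $\exp\{-\kappa_1 L(\omega/\Lr)\e^2\}$ decays \emph{exponentially in $L$}, while $2LM^{1-c^2/2}$ \emph{grows} linearly in $L$ for any fixed $M$ and $c$. In the SPARC regime $L\ln M = nR$, requiring $2LM^{1-c^2/2}\le \exp\{-\kappa_1 L(\omega/\Lr)\e^2\}$ forces
$
(c^2/2 - 1)\log M \gtrsim \kappa_1 L(\omega/\Lr)\e^2 + \log L,
$
so the truncation level must satisfy $c^2 \gtrsim L(\omega/\Lr)\e^2/\log M \to \infty$. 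But then each truncated summand is bounded by $\Theta(c\,W_{\sfr\sfc}/L)$, and the Hoeffding exponent you compute is $\Theta\!\left(L(\omega/\Lr)\e^2/c^2\right)$, which is destroyed by exactly the growth in $c$ you just demanded. Your parenthetical that ``$L\to\infty$ drives the exponent'' has the sign backwards: large $L$ makes the target probability \emph{smaller}, hence \emph{harder} to dominate by a union-bound-based truncation error.

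The correct route, and the one used in the paper's reference \cite[Lemma 20]{RushV19}, does not pass through a union bound over all $ML$ Gaussian entries. Instead one exploits the section-wise independence at the level of moment generating functions. For instance, conditionally on $\tilde\Z_\ell$, the residual $\mathbf{Y}_\ell - \rho\tilde\Z_\ell^*\eta_\ell(\cdot)$ is Gaussian with variance at most $1$ (so its scaled version is sub-Gaussian with parameter $O(1/\sqrt{\log M})$), while the conditional mean $\rho\tilde\Z_\ell^*\eta_\ell(\cdot)$ is bounded in absolute value by $\abs{\rho}\max_{j\in\sec(\ell)}\abs{\tilde Z_j}$, whose normalized version $\max_j\abs{\tilde Z_j}/\sqrt{\log M}$ is sharply concentrated around $\sqrt{2}$ (the fluctuations around the mean are sub-Gaussian with parameter $O(1)$ even though the mean itself is $\Theta(1)$). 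Bounding $\expec\exp\{t(\mathbf{Y}_\ell/\sqrt{\log M}-\expec\mathbf{Y}_\ell/\sqrt{\log M})\}$ directly along these lines gives a sub-Gaussian MGF bound with parameter $O(1)$ per section, and then independence across $\ell$ and \eqref{eq:Wrc_avgs} yield the stated exponent $\kappa_1 L(\omega/\Lr)\e^2$ without ever paying the $L \cdot M^{1-c^2/2}$ union-bound penalty. Equivalently, one can still truncate, but then the remainder $\sum_\ell W_\ell\,\mathbf{Y}_\ell\mathbf{1}_{\mathcal{A}_\ell^c}/(L\sqrt{\log M})$ must be handled by Chernoff/Bernstein using $\prod_\ell(1+O(M^{1-c^2/2}))\le \exp\{L\cdot O(M^{1-c^2/2})\}$ rather than by the crude $P(\mathcal{A}^c)\le Lp$, since it is the product structure — not the union bound — that makes the correction term negligible after multiplying by $e^{-t\e}$.

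Your observations that the $\mathbf{Y}_\ell$ are independent across $\ell$, that $\abs{\mathbf{Y}_\ell}\le\max_{j\in\sec(\ell)}\abs{Z_j}$, and that $\frac{1}{\Lc}\sum_\sfc W_{\sfr\sfc}^2 = \Theta(\Lr/\omega)$ delivers the $(\omega/\Lr)$ factor, are all correct and are indeed the ingredients one needs; it is only the handling of the unbounded tails that must be redone.
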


\begin{proof}
The proof is along the same lines as that of Lemma 20 in \cite{RushV19}, and is hence omitted.
\end{proof}

\bibliographystyle{ieeetr}
{\small{
\begin{spacing}{0.9}
\bibliography{sc_sparcs}
\end{spacing}
}}

\end{document}